\documentclass[11pt]{amsart}
\usepackage[
  left=2.5cm,
  right=2.5cm,
  top=3cm,
  bottom=3cm
]{geometry}

\usepackage{amsmath,amssymb,amsthm,graphicx,mathrsfs,url,bbm}
\usepackage[usenames,dvipsnames]{color}
\usepackage[colorlinks=true,linkcolor=Red,citecolor=Green]{hyperref}
\usepackage{amsxtra}
\usepackage{wasysym} 
\usepackage{graphicx}
\usepackage{subcaption}

\usepackage{framed}

\usepackage{braket}

\def\?[#1]{\textbf{[#1]}\marginpar{\Large{\textbf{??}}}}

\def\smallsection#1{\smallskip\noindent\textbf{#1}.}
\let\epsilon=\varepsilon 

\newcommand{\1}{\mathbbm{1}}
\newcommand{\Tr}{\operatorname{Tr}}

\newcommand{\kb}[1]{|#1\rangle\!\langle#1|}

\usepackage{amsthm}

\newcommand{\R}{\mathbb{R}}

\newcommand{\N}{\mathbb{N}}

\newcommand{\cH}{\mathcal{H}}

\newcommand{\cL}{\mathcal{L}}

\newcommand{\spec}{\operatorname{Sp}}

\newcommand{\eps}{\varepsilon}
\newcommand{\NA}{N_{\mathcal{A}}}
\renewcommand{\epsilon}{\varepsilon}
\renewcommand{\phi}{\varphi}

\newcommand{\cA}{\mathcal{A}}

\newcommand{\Ntot}{N_{\operatorname{tot}}}
\newcommand{\bin}[1]{}

\theoremstyle{plain}
\newtheorem{theorem}{Theorem}[section]
\newtheorem{lemma}[theorem]{Lemma}
\newtheorem{proposition}[theorem]{Proposition}
\newtheorem{corollary}[theorem]{Corollary}

\newcommand{\gap}{\mathrm{gap}}

\theoremstyle{definition}

\theoremstyle{remark}

\numberwithin{equation}{section}

\begin{document}
\title[Gibbs sampling for Coulomb quantum gases]{Computing the free energy of quantum Coulomb gases and molecules via quantum Gibbs sampling}

\author{Simon Becker}
\email{simon.becker@unibocconi.it}
\address{Uni Bocconi, 20136, Milan, Italy}

\author{Cambyse Rouz\'e}
\email{cambyse.rouze@inria.fr}
\address{Inria, T\'el\'ecom Paris -- LTCI, Institut Polytechnique de Paris, 91120 Palaiseau, France}

\author{Robert Salzmann}
\email{robert.salzmann@rwth-aachen.de}
\address{
Institute for Quantum Information, RWTH Aachen University, Germany}


\begin{abstract}
We develop a quantum algorithm for estimating the free energy as well as the total Gibbs state of interacting quantum Coulomb gases and molecular systems in dimensions $d \in \{2,3\}$ at finite temperature. These systems lie beyond the reach of existing methods due to their singular interactions and infinite-dimensional Hilbert space structure. First, we show that the free energy of the full many-body Hamiltonian can be approximated by that of the same Hamiltonian with a finite-rank low-energy truncation of the interaction, with an explicit error bound polynomial in the particle number. This reduces the problem to a controlled finite-rank perturbation problem. Second, we introduce a quantum Gibbs sampling scheme tailored to this truncated system, based on a class of quantum Markov semigroups. Our main analytical result establishes that the associated generator has a strictly positive spectral gap for every truncation, implying exponential convergence to the target Gibbs state. This provides, to our knowledge, the first rigorous mixing-time guarantee for Gibbs sampling in a Coulomb interacting continuous-variable quantum system. Finally, we give an explicit quantum circuit implementation of the dynamics and derive an end-to-end complexity bound for approximating the free energy and the Gibbs state itself. Our results provide a mathematically rigorous route to quantum algorithms for free energy estimation in interacting quantum systems, without relying on classical approximations such as the Born–Oppenheimer reduction.

\end{abstract}

\
\maketitle

\vspace{-1,2cm}

 \tableofcontents

\section{Introduction}

\noindent Free energy plays a foundational role in thermodynamics and has become an essential tool for describing complex chemical, biological, and physical phenomena.
In practice, the free-energy landscape connecting an initial configuration to a final one governs the likelihood and timescale of transitions between states: large free-energy barriers strongly suppress such transitions, whereas lower barriers make them more accessible. A familiar example is provided by chemical reactions in solution, including processes occurring inside living cells, where reaction rates are often controlled by the need to overcome intermediate free-energy obstacles. Similar ideas also arise in quantum settings, where free-energy landscapes help characterize thermal states of many-body systems, metastability, or the difficulty of steering a quantum system from one phase or configuration to another.

\subsection{General setting} This paper is concerned with the estimation of the free energy of a class of Schr\"{o}dinger operators modeling a system of $n$ $d$-dimensional quantum particles, $d\in\{2,3\}$, interacting via Coulomb interactions and each experiencing a trapping potential. The Hamiltonian of the system takes the form:
\begin{align*}
H_n=H_{0,n}+W_n,\qquad \text{ with }\qquad  H_{0,n}:=-\Delta+|x|^2,
\end{align*}
where $\Delta$ stands for $d\cdot n$-dimensional Laplacian, $|x|^2=\sum_{i\in[n]}\sum_{j\in [d]}x_{ij}^2$ and 
\[
W_n:=\sum_{1\le i<j\le n}\alpha_{n,i,j}\,W_{ij}\]
is defined as a form on $L^2((\mathbb{R}^d)^n)$, with $\alpha_{n,i,j}\in\R$   and 
\begin{align*}
W_{ij}[\psi]:=\int_{\R^{nd}}w_d(x_i-x_j)|\psi(x)|^2 \ dx,\qquad
w_d(y):=
\begin{cases}
-\log|y|, & d=2,\\[0.3em]
|y|^{-1}, & d=3.
\end{cases}
\end{align*}
We allow for the possibility that the interaction strengths \(\alpha_{n,i,j}\) scale with the number of particles \(n\). In principle, however, one may also choose \(\alpha_{n,i,j}\) to be independent of \(n\) and account only for the particle charges.

The quadratic trap provides a coercive external potential that grows at infinity and suppresses
the escape of particles. At finite temperature, the latter guaranties the Gibbs hypothesis, namely that the Gibbs operator $\mathrm e^{-\beta H_n}$ is trace class for \(\beta>0\), so that
\begin{align*}
\sigma_\beta(H_n):=\frac{e^{-\beta H_n}}{\mathcal{Z}_\beta(H_n)}\qquad \text{ with }\qquad \mathcal{Z}_\beta(H_n):=\Tr(e^{-\beta H_n})
\end{align*}
is a well-defined Gibbs state on $L^2((\mathbb{R}^d)^n)$. Notably, the Hamiltonian $H_n$
is bounded from below\footnote{In this setting, \(h_0\) increases polynomially with the particle number \(n\). For simplicity, we nevertheless keep the notation \(h_0\).} $H_n\ge -h_0 1$, $h_0\ge 0$. The free energy of the system is described by the functional 
\begin{align*}
F_\beta(H_n):=-\frac{1}{\beta}\log\mathcal{Z}_\beta(H_n).
\end{align*}

\subsection{Hamiltonian truncation}
\label{sec:HamiltonianTruncation}
Our first result establishes that the free energy of \(H_n\) is well approximated by a finite-rank low-energy truncation. This reduction is the starting point for the Gibbs sampling scheme constructed later in the paper to estimate the free energy. Concretely, for \(M\in\mathbb N\), let \(P_M\) be the spectral projector onto the low-energy one-body subspace spanned by the eigenfunctions corresponding to the first \(M\) eigenvalues \((\lambda_k)_{k=1}^M\) of $h:=-\Delta+|x|^2
\quad\text{on }L^2(\mathbb R^d)$, and define the corresponding \(n\)-body projector $\Pi_M:=P_M^{\otimes n}$. We then consider the truncated form 
\begin{align*}
W_{n,M}[\psi]:=W_n[\Pi_M\psi],\qquad \text{ and }\qquad H_{n,M}:=H_{0,n}+W_{n,M}\,.
\end{align*}
The Hamiltonian $H_{n,M}$ admits a finite free energy $F_{\beta}(H_{n,M})$ and Gibbs state $\sigma_\beta(H_{n,M})$. In Section~\ref{perturbationpartitionfunction}, we prove that this free energy efficiently approximates that of the untruncated model:
\begin{theorem}[Free energy perturbation bound]\label{thmfreenregyblablaIntro}
Fix $d\in\{2,3\}$, $\beta>0$. There exists a constant $C_\beta<\infty$, independent of $n,M\ge 1$, such that
\[
|F_\beta(H_n)-F_{\beta}(H_{n,M})|
\le C_\beta\,\left(n^{3}\alpha^{\max}_n + n^5 \left(\alpha^{\max}_n\right)^3\right)M^{-\frac{1}{4d}},
\]
where we defined $\alpha^{\max}_{n} :=\max_{1\le i<j\le n} |\alpha_{n,i,j}|.$
\end{theorem}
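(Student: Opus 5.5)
We interpolate between $H_{n,M}$ and $H_n$ and control the derivative of the free energy along the path. Set $H(t):=H_{0,n}+W_{n,M}+t\,(W_n-W_{n,M})$ for $t\in[0,1]$. Since the free energy is concave in the Hamiltonian, its derivative is the Gibbs expectation of the perturbation (Hellmann--Feynman/Duhamel):
\[
\frac{d}{dt}F_\beta(H(t))=\Tr\bigl(\sigma_\beta(H(t))\,(W_n-W_{n,M})\bigr).
\]
Hence
\[
|F_\beta(H_n)-F_\beta(H_{n,M})|\le\sup_{t\in[0,1]}\bigl|\Tr\bigl(\sigma_\beta(H(t))\,(W_n-W_{n,M})\bigr)\bigr|.
\]
To justify this for forms, we first approximate $w_d$ by bounded potentials and then pass to the limit using KLMN and form-boundedness of $w_d$ with respect to $-\Delta$ in $d\in\{2,3\}$. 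Alternatively, the Peierls--Bogoliubov and Gibbs variational inequalities give the two one-sided bounds directly, each with the expectation taken in one of the two endpoint Gibbs states.

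Next, write $Q:=1-\Pi_M$ and expand the difference of forms:
\[
W_n[\psi]-W_n[\Pi_M\psi]=2\,\mathrm{Re}\langle Q\psi,W_n\Pi_M\psi\rangle+W_n[Q\psi].
\]
For each pair $(i,j)$ we use the operator inequality $|w_d(x_i-x_j)|\le \epsilon(-\Delta_i)+C_\epsilon$, which holds in $d=3$ by Hardy. In $d=2$ the logarithm needs in addition $\log_+|y|\le |x_i|+|x_j|$, which the trap controls. Together with Cauchy--Schwarz on the cross term, this gives a bound of the form
\[
|\Tr(\sigma(W_n-W_{n,M}))|\lesssim \alpha^{\max}_n\sum_{i<j}\Tr\bigl(\sigma (1+h_i+h_j)\bigr)^{1/2}\,\Tr\bigl(\sigma\,Q(1+h_i+h_j)Q\bigr)^{1/2}
\]
up to symmetrization. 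Since $Q\le\sum_k Q_k$, where $Q_k$ is the one-body projection above level $M$ in slot $k$, the small factor is governed by $\Tr(\sigma Q_k)$, and on $\operatorname{ran}Q_k$ we have $h_k\ge\lambda_{M+1}\sim M^{1/d}$ by Weyl's law for the harmonic oscillator. We then split $\Tr(\sigma\,Q_k h_k)\le\lambda_{M+1}^{-1}\Tr(\sigma h_k^2)$, or interpolate with a lower power of $h_k$, and this is the step that produces a power of $M^{-1/d}$. The losses from the square roots and from the interpolation are what degrade the rate to $M^{-1/(4d)}$.

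The main obstacle is uniform a priori moment bounds of the form
\[
\Tr\bigl(\sigma_\beta(H(t))\,h_k^{s}\bigr)\le C_\beta\,\mathrm{poly}(n,\alpha^{\max}_n)
\]
for $s\le 1$ or $s$ slightly larger, uniformly in $t$ and $M$. For $s=1$ we would use the Gibbs variational principle: comparing with the perturbation $\epsilon h_k$ gives $\epsilon\,\Tr(\sigma h_k)\le F_\beta(H(t))-F_\beta(H(t)+\epsilon h_k)$, and both free energies are bounded by those of product harmonic oscillators after form-bounding the interaction by $\tfrac12 H_{0,n}+C n^2\alpha^{\max}_n$. This yields polynomial factors in $n$ and $\alpha^{\max}_n$. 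Collecting them, namely $n^2$ pairs, $n$ slots in $Q$, and the energy bound carrying $n\alpha^{\max}_n$ or $n^2(\alpha^{\max}_n)^2$ from the lower bound $h_0$, should reproduce $n^3\alpha^{\max}_n+n^5(\alpha^{\max}_n)^3$. If the higher moments needed for the interpolation are not available, we would fall back to fractional powers $h_k^{1/2}$, using $\Tr(\sigma Q_k)\le\lambda_{M+1}^{-1/2}\Tr(\sigma h_k^{1/2})$, and accept the weaker exponent, which is consistent with the stated $1/(4d)$.
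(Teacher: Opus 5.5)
Your reduction to Gibbs averages of $W_n-W_{n,M}$ is fine; the paper uses your ``alternative'', the two Peierls--Bogoliubov endpoint bounds. The gap is in the step that is supposed to produce decay in $M$. After your symmetric Cauchy--Schwarz, the factor that must be small is $\Tr\bigl(\sigma\,(1+h_i+h_j)Q\bigr)$, and the weight $1+h_i+h_j$ is still attached to $Q$. So it is not governed by $\Tr(\sigma Q_k)$.

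Using $Q\le\sum_k Q_k$ (everything commutes) leaves terms $\Tr\bigl(\sigma\,(1+h_i+h_j)Q_k\bigr)$. Extracting a negative power of $\lambda_{M+1}$ from these needs moments of order strictly greater than one, such as $\Tr(\sigma h_k^2)$ or $\Tr(\sigma h_ih_k)$, uniformly in $n,M,t$. First moments cannot suffice. Put mass $\lambda_{M+1}^{-1}$ on a state with $h_i=\lambda_{M+1}$: then $\Tr(\sigma H_{0,n})\le nd+1$ but $\Tr(\sigma h_iQ_i)=1$ for every $M$.

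You only sketch the $s=1$ moment bound, and your variational trick cannot reach $s>1$, since $H(t)-\epsilon h_k^s$ is unbounded below. Even for $s=1$ you must compare with $H(t)-\epsilon h_k$, not $H(t)+\epsilon h_k$, to get an upper bound on $\Tr(\sigma h_k)$. Your fallback $\Tr(\sigma Q_k)\le\lambda_{M+1}^{-1/2}\Tr(\sigma h_k^{1/2})$ does not remove the weight, so the argument is not closed. Higher-moment bounds may well be provable, but in your route they are the heart of the matter and are not supplied.

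The missing idea is an asymmetric estimate in which $Q$ appears without any weight. Hardy's inequality, and its logarithmic analogue in $d=2$, gives the operator-level bound $\|w_d(x_i-x_j)\phi\|\lesssim\langle\phi,(h_i+h_j+1)\phi\rangle^{1/2}$. Hence $|W_n[\phi]|\lesssim B_n\|\phi\|\,\langle\phi,(H_{0,n}+1)\phi\rangle^{1/2}$. Take $\phi=Q\psi$ and use $\|Q\psi\|^2\le(\lambda_{M+1}+1)^{-1}\langle\psi,(H_{0,n}+1)\psi\rangle$. Here it is essential that the full $H_{0,n}$, not $h_i+h_j$, is $\ge\lambda_{M+1}$ on $\operatorname{ran}Q$. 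Adding Cauchy--Schwarz for the cross term, the paper gets the pointwise form bound $|(W_n-W_{n,M})[\psi]|\lesssim B_n(\lambda_{M+1}+1)^{-1/4}\langle\psi,(H_{0,n}+1)\psi\rangle$.

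Averaging this bound only requires $\Tr(\sigma H_{0,n})\le C_\beta\bigl(n+(A_n+1)B_n\bigr)$ for both endpoint Gibbs states. The paper derives this from $\Tr(\sigma_\beta(K)K)\le 2F_\beta(K)-F_{\beta/2}(K)$, the Bogoliubov upper bound against the free Gibbs state, and the lower bound $H\ge\frac12H_{0,n}-CA_nB_n$. Your sign-corrected variational argument would also work at this step.

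Note that the additive constant in that lower bound is of order $A_nB_n\lesssim n^3(\alpha^{\max}_n)^2$, not $n^2\alpha^{\max}_n$. Together with $B_n\le n^2\alpha^{\max}_n$ and $\lambda_{M+1}\gtrsim M^{1/d}$, this is what produces $n^3\alpha^{\max}_n+n^5(\alpha^{\max}_n)^3$. The intermediate term $n^4(\alpha^{\max}_n)^2$ is absorbed by AM--GM.
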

\noindent Therefore, given $\alpha^{\max}_n=\operatorname{poly}(n),$ it is enough to estimate the free energy $F_\beta(H_{n,M})$ of the truncated Hamiltonian $H_{n,M}$ up to accuracy $\eps>0$ for $M=\operatorname{poly}(n,\varepsilon^{-1})$. To do so, we consider the free energy difference
\begin{align*}
\Delta F_\beta:=F_\beta(H_{n,M})-F_\beta(H_{0,n}).
\end{align*}
Given the path $H(s):=(1-s)H_{0,n}+sH_{n,M}$, $s\in[0,1]$, we have
\begin{align}
\Delta F_\beta=\int_0^1 \frac{d}{ds} F_\beta(H(s))\,ds= \int_0^1\,\Tr\Big(\sigma_\beta(H(s))\Pi_M W_n \Pi_M \Big)\, ds.\nonumber
\end{align}
Therefore, it suffices to estimate the thermal averages of $\Pi_M W_n\Pi_M$ with respect to the Gibbs state $\sigma_\beta(H_{n,M})$ over enough $s\in[0,1]$ in order to recover a good approximation of $\Delta F_\beta$. Since $F_\beta(H_{0,n})={dn}\beta^{-1}\log(2\sinh\beta)$ is known, this directly yields an additive approximation of $F_{\beta}(H_{n})$.

In addition to the free energies, we establish in Section~\ref{sec:EntropyTraceDistanceConv} that the Gibbs state themselves is also efficiently approximated in trace distance.
\begin{theorem}[Trace norm perturbation bound]
\label{thm:trace-norm-perturbation-boundIntro}
Fix \(d\in\{2,3\}\), \(\beta>0\). Then there exists \(C_\beta<\infty\), independent of \(n,M\), such that
\begin{equation}
\|\sigma_\beta(H_n)-\sigma_\beta(H_{n,M})\|_1 \le C_\beta \sqrt{n^3\alpha^{\max}_n + n^5(\alpha^{\max}_n)^3}\ M^{-\frac{1}{8d}}.
\end{equation}
\end{theorem}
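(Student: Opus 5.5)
The plan is to deduce the trace-norm bound from the free-energy bound of Theorem~\ref{thmfreenregyblablaIntro} via relative entropy and Pinsker's inequality. The exponent $M^{-1/(8d)}$ and the square root of the prefactor are exactly what one obtains by taking the square root of the bound in Theorem~\ref{thmfreenregyblablaIntro}. For two Gibbs states with Hamiltonians $A=H_n$ and $B=H_{n,M}$ (both bounded below, $e^{-\beta\cdot}$ trace class) we have the Gibbs variational identity
\[
D\big(\sigma_\beta(B)\,\|\,\sigma_\beta(A)\big)=\beta\Tr\big(\sigma_\beta(B)(A-B)\big)+\beta\big(F_\beta(B)-F_\beta(A)\big)\cdot(-1)\cdot(-1),
\]
that is, $\beta^{-1}D(\sigma_B\|\sigma_A)=\Tr(\sigma_B(A-B))-F_\beta(B)+F_\beta(A)$. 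Since $A-B=W_n-\Pi_M W_n\Pi_M$ is only a form, this must be interpreted in the form sense. Pinsker then gives $\|\sigma_A-\sigma_B\|_1\le\sqrt{2D(\sigma_B\|\sigma_A)}$.

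It therefore suffices to show that
\[
\big|\Tr\big(\sigma_\beta(H_{n,M})(W_n-W_{n,M})\big)\big|\lesssim n^3\alpha_n^{\max}+n^5(\alpha_n^{\max})^3
\]
times $M^{-1/(4d)}$. The free-energy difference is already controlled by Theorem~\ref{thmfreenregyblablaIntro}. The first step is to observe that the proof of Theorem~\ref{thmfreenregyblablaIntro}, in Section~\ref{perturbationpartitionfunction}, presumably proceeds by exactly this kind of thermal expectation. By Peierls--Bogoliubov / Gibbs variational bounds, $F_\beta(A)-F_\beta(B)$ is sandwiched between $\Tr(\sigma_B(A-B))$ and $\Tr(\sigma_A(A-B))$. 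So I would reuse the estimate established there for $\Tr(\sigma_\beta(H_{n,M})(W_n-W_{n,M}))$. If only the sandwich is stated, I would use convexity instead. Writing $D(\sigma_B\|\sigma_A)+D(\sigma_A\|\sigma_B)=\beta\Tr((\sigma_B-\sigma_A)(A-B))$, both thermal expectations of $W_n-W_{n,M}$ appear, and each is bounded as in that section.

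The concrete estimate on the thermal expectation would go as follows. Split $W_{ij}$ relative to $\Pi_M$ as $W-\Pi W\Pi=(1-\Pi)W+\Pi W(1-\Pi)$. Use the form bound $|w_d|\le \epsilon(-\Delta)+C_\epsilon$ (Hardy/Kato in $d=3$, with the $\log$ handled by the trap in $d=2$). Combine this with Cauchy--Schwarz and with $\|(1-P_M)h^{-1/2}\|\lesssim \lambda_M^{-1/2}\sim M^{-1/(2d)}$ (Weyl's law for the harmonic oscillator). Then control moments $\Tr(\sigma_B H_{0,n}^k)$ polynomially in $n$ and $\alpha_n^{\max}$ via the bound $H_n\ge -h_0$ with $h_0=\mathrm{poly}(n,\alpha)$. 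Summing over the $n^2$ pairs and over the $n$ particles on which $1-P_M$ acts yields the $n^3$ and $n^5\alpha^3$ factors.

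The main obstacle is making the relative-entropy identity rigorous with unbounded, form-defined, singular differences $A-B$. Domain issues arise, and one must check that $\Tr(\sigma_B|W_n|)<\infty$. I would handle this by approximating with bounded cutoffs $w_d\wedge R$, applying the finite identity, and passing to the limit by monotone convergence together with lower semicontinuity of relative entropy.
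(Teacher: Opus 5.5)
Your proposal is correct and is essentially the paper's argument. The paper uses the Bogoliubov sandwich to bound both relative entropies by $\beta\bigl(\Tr(\sigma_\beta(H_{n,M})\Delta_{n,M})-\Tr(\sigma_\beta(H_n)\Delta_{n,M})\bigr)$, which is exactly your symmetrized identity; it then controls each thermal expectation by the form bound of Lemma~\ref{lem:product-cutoff-riesz-no-scaling} and the energy bound $\Tr(\sigma_\beta(\cdot)H_{0,n})\le C_\beta(n+(A_n+1)B_n)$ of Lemma~\ref{lem:uniform-H0-bound-abstract-detailed}, and finishes with Pinsker.

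Two minor caveats. First, your second rendering of the identity flips the sign of the free-energy term, which is harmless once you take absolute values. Second, the energy control cannot come from $H_n\ge -h_0$ alone. It needs the $M$-uniform bound $H\ge \tfrac12 H_{0,n}-CA_nB_n$ together with the thermodynamic inequality of Lemma~\ref{lem:thermodynamic-inequality} and two-sided free-energy bounds, and this is precisely what produces the $n^5(\alpha^{\max}_n)^3$ term.
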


\subsection{Quantum Gibbs sampling} 
To sample from the Gibbs state $\sigma_\beta(H_{n,M})$, we introduce and analyse a new quantum algorithm based on recent advances in quantum Gibbs sampling via quantum Markov semigroups \cite{RallWangWocjan2023,chen2023quantum,gilyen2024quantum,chen2023efficient,ding2025efficient,GalkowskiZworskiDavies}. In particular, in our previous work \cite{BeckerRouzeSalzmannToAppearcmp}, we extended this framework to infinite-dimensional quantum systems, which is the relevant setting here. As a separate application, we used it in \cite{BeckerRouzeSalzmannBose} to establish efficient quantum Gibbs sampling and free energy estimation for Bose-Hubbard models.

To introduce the generator of the semigroup, we make use of quantum Sobolev spaces \cite{gondolf2024energy}: given $\delta_1,\delta_2\ge 0$ and $\widetilde{H}_{n,M}:=H_{n,M}+h_0+1$,
\begin{align*}
D(\mathcal{W}_{H_{n,M}}^{\delta_1,\delta_2}):=\Big\{\widetilde{H}_{n,M}^{-\delta_1}a \widetilde{H}_{n,M}^{-\delta_2}\,\Big|\,a\in\mathscr{T}_1(\mathcal{H}_n)\Big\}\text{ with norms }
\|x\|_{\mathcal{W}_{H_{n,M}}^{\delta_1,\delta_2}}:=\|\mathcal{W}_{H_{n,M}}^{\delta_1,\delta_2}(x)\|_1,
\end{align*}
where $\mathcal{W}^{\delta_1,\delta_2}_{H_{n,M}}(x):=\widetilde{H}_{n,M}^{\delta_1} x \widetilde{H}_{n,M}^{\delta_2}$ and $\mathscr{T}_1(\mathcal{H}_n)$ denotes the Banach space of trace-class operators on $\mathcal{H}_n:=L^2((\mathbb{R}^d)^n)$. Given $h_{ij}:=-\Delta_{x_{ij}}+|x_{ij}|^2$, we consider the ladder operators on the form domain $Q(h_{ij})\equiv D((h_{ij}+1)^{1/2})$
\begin{align*}
a_{ij}:=\frac{x_{ij}+\partial_{x_{ij}}}{\sqrt{2}}\qquad \text{ and }\qquad a_{ij}^\dagger:=\frac{x_{ij}-\partial_{x_{ij}}}{\sqrt{2}}\qquad i\in[n],\,j\in[d].
\end{align*}

 We show in \cite[Proposition 4.3]{BeckerRouzeSalzmannToAppearcmp} that these operators induce filtered energy jumps
\begin{align}\label{jumpLalpha}
L^\alpha{\psi}:=\int_{-\infty}^\infty f(t) \,e^{itH_{n,M}}a^\sigma_{ij} e^{-itH_{n,M}}{\psi}\,dt\,,\quad
{\psi}\in Q(\widetilde{H}_{n,M}), ~~\alpha=(i,j,\sigma)\in [n]\times [d]\times \{\emptyset,\dagger\}.
\end{align}
for some Schwartz function $f\in\mathcal{S}(\mathbb{R},\mathbb{C})$, later referred to as the filter function of the sampler, with Fourier transform  satisfying the following detailed balance and boundedness conditions
\begin{align}
\label{eq:KMSFilterFunction}
\overline{\widehat{f}(\nu)}=\widehat{f}(-\nu)\,e^{-\beta\nu/2}\qquad \text{ and }\qquad \|\widehat{f}\|_{L^\infty},\quad  \|\nu\mapsto e^{\frac{\beta\nu}{2}}\widehat{f}(\nu)\|_{L^\infty} <\infty.
\end{align}
Next, given $\sigma_E\ge 0$, we introduce the operator  
\begin{align}\label{intrepGGG}
 G\psi:
&=-\sum_{\alpha\in\mathcal{A}}\int_{-\infty}^\infty g(t)\,e^{itH_{n,M}}(L^\alpha)^\dagger L^\alpha e^{-itH_{n,M}}{\psi}\,dt\,,\qquad {\psi}\in D(\widetilde{H}_{n,M}),
\end{align}
where $g(t)=\frac{1}{2\pi}\int_{\mathbb R}
\frac{e^{-\nu^2/8\sigma_E^2}}{1+e^{\beta\nu/2}}
e^{-i\nu t}\,d\nu$, and the sum is over tuples $\alpha=(i,j,\sigma)\in \mathcal{A}:=[n]\times[d]\times\{\emptyset,\dagger\}$. We also consider the map
\begin{align}\label{Phimapintrep}
\Phi(\rho):=\,
\sigma_E\,\sqrt{\frac{2}{\pi}}\sum_{\alpha\in\mathcal{A}}\int_{\mathbb{R}}e^{-2\sigma_E^2 s^2}\, \, X^{\alpha}_s \cdot \rho \cdot  (X_s^\alpha)^\dagger \,ds\,\qquad \rho\in D(\mathcal{W}_{H_{n,M}}^{\frac{1}{2},\frac{1}{2}}).
\end{align}
where $X^{\alpha}_s:=e^{isH_{n,M}}L^\alpha e^{-isH_{n,M}}$, and the dot products refer to the multiplication rule 
$X_s^\alpha\cdot \rho\cdot (X_s^\alpha)^\dagger \equiv X_s^\alpha\widetilde{H}_{n,M}^{-1}(X^\alpha_s\widetilde{H}_{n,M}^{-1}\mathcal{W}_{H_{n,M}}^{1,1}(\rho))^\dagger)^\dagger$. The following generation theorem is from \cite{BeckerRouzeSalzmannToAppearcmp}:

\begin{proposition}\cite[Theorem 2.1 and Corollary 4.4]{BeckerRouzeSalzmannToAppearcmp}
\label{condpropdefsampler}
 The following map defines the generator of a strongly continuous semigroup of completely positive, trace preserving maps over $\mathscr{T}_1(\cH_n)$, with a domain containing $D(\mathcal W_{H_{n,M}}^{\frac{1}{2},\frac{1}{2}}) \cap D(\mathcal W_{H_{n,M}}^{1,0}) \cap D(\mathcal W_{H_{n,M}}^{0,1})$:  
\begin{align}
\mathcal L_{\sigma_E,H_{n,M}}(\rho):=
\sum_{\alpha\in\mathcal{A}}\, G\cdot \rho+\rho\cdot G^\dagger +\Phi(\rho)\,.
\nonumber
\end{align}
Moreover, $\operatorname{ker}(\cL_{\sigma_E,H_{n,M}})=\mathbb{C}\, \sigma_\beta(H_{n,M})$.
\end{proposition}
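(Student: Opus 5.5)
The generation statement follows from \cite[Theorem 2.1 and Corollary 4.4]{BeckerRouzeSalzmannToAppearcmp}. The plan is therefore to check that $H_{n,M}$ meets the hypotheses of those results, and then to prove the kernel identification separately.

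\textbf{Step 1: $H_{n,M}$ is an admissible Hamiltonian.} The truncated interaction is $\Pi_M W_n\Pi_M$. Since $P_M$ projects onto finitely many Hermite functions, $\Pi_M\mathcal H_n$ is finite-dimensional. On that subspace $W_n$ is a bounded form, because $w_d(x_i-x_j)$ is form-bounded relative to $h$ in $d\in\{2,3\}$. Hence $W_{n,M}$ is a bounded finite-rank self-adjoint operator. Two consequences follow.
\begin{enumerate}
\item $H_{n,M}$ is self-adjoint on $D(H_{0,n})$ and is a bounded perturbation of the harmonic oscillator.
\item $\widetilde H_{n,M}$ and $H_{0,n}+1$ generate equivalent scales: $D(\widetilde H_{n,M}^{s})=D((H_{0,n}+1)^s)$ for $s\in[0,1]$, with equivalent norms.
\end{enumerate}
From these, the standard estimates used in \cite{BeckerRouzeSalzmannToAppearcmp} transfer to $H_{n,M}$. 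First, $a_{ij}^\sigma$ is bounded from $Q(\widetilde H_{n,M})$ to $\mathcal H_n$. Second, the conjugations $e^{itH_{n,M}}a^\sigma_{ij}e^{-itH_{n,M}}$ have relative bounds that grow at most polynomially in $t$. Proposition 4.3 of that paper then shows that $L^\alpha$, $G$ and $\Phi$ are well defined on the stated domains. We also need the KMS symmetry: conditions \eqref{eq:KMSFilterFunction} on $\widehat f$, together with the Gaussian-type weight defining $g$, give the operator-Fourier-transform relations. These make $\mathcal L$ satisfy the detailed balance (KMS) relation with respect to $\sigma_\beta(H_{n,M})$. With all this in place, Theorem 2.1 of that paper yields a strongly continuous CPTP semigroup whose domain contains the stated intersection of Sobolev domains.

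\textbf{Step 2: $\sigma_\beta(H_{n,M})$ lies in the kernel.} KMS-detailed balance gives $\mathcal L(\sigma_\beta)=0$ directly. One has to check that $\sigma_\beta$ lies in the domain, which holds because $e^{-\beta H_{n,M}}$ maps into every $D(\widetilde H_{n,M}^k)$.

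\textbf{Step 3: the kernel is one-dimensional (main obstacle).} Conjugate by $\sigma_\beta^{1/2}$ to obtain the KMS-symmetric Heisenberg generator $\mathcal L^\dagger$, which is self-adjoint on the KMS Hilbert space. Its Dirichlet form is a sum of squares:
\[
\mathcal E(X)\;\propto\;\sum_\alpha\int \big\|[\widehat L^\alpha_\nu,X]\big\|^2_{\sigma_\beta}\,d\mu(\nu),
\]
where $\widehat L^\alpha_\nu$ denote the frequency-resolved jump operators and $\mu$ is the weight coming from $\widehat f$. A kernel element therefore commutes with all Bohr-frequency components $(a^\sigma_{ij})_\nu$ of $a_{ij}^\sigma$ with respect to $H_{n,M}$. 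Positivity of the Gaussian filter at every frequency implies that $X$ commutes with each $e^{itH_{n,M}}a^\sigma_{ij}e^{-itH_{n,M}}$, and in particular with every $a_{ij}$ and $a_{ij}^\dagger$. These operators act irreducibly on $L^2(\mathbb R^{dn})$: the Stone–von Neumann / CCR irreducibility argument gives commutant $\mathbb C\1$, so $X\propto\1$.

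The technical care in Step 3 lies in the unbounded setting. The commutation must be made rigorous as a form identity on a core, for instance finite Hermite expansions, which are invariant under $e^{-\beta H_{n,M}}$ up to the finite-rank correction. One also has to pass from weak-$*$ kernel elements of $\mathcal L^\dagger$ to trace-class kernel elements of $\mathcal L$ via duality. This is where Corollary 4.4 of \cite{BeckerRouzeSalzmannToAppearcmp} is invoked, and checking its hypotheses for the finite-rank perturbed $H_{n,M}$ is the crux.
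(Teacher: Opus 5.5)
The paper gives no argument for this proposition: both the generation statement and the identification $\ker\mathcal L_{\sigma_E,H_{n,M}}=\mathbb C\,\sigma_\beta(H_{n,M})$ are imported from \cite[Theorem 2.1 and Corollary 4.4]{BeckerRouzeSalzmannToAppearcmp}. Your Steps 1--2 match that: $W_{n,M}$ is bounded and finite rank, the Sobolev scales are equivalent, and you cite. The problem is the independent argument in Step 3.

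\textbf{Where Step 3 breaks.} From $\mathcal L^\dagger(X)=0$, via the dissipation function or your Dirichlet form, you get that $X$ commutes with the smeared jumps $X^\alpha_s=e^{isH_{n,M}}L^\alpha e^{-isH_{n,M}}$ for all $s$. This is where positivity of the Gaussian weight $e^{-2\sigma_E^2s^2}$ enters. Separating Bohr frequencies, $X$ then commutes with each $\widehat f(\nu)\,(a^\sigma_{ij})_\nu$. To drop the factor $\widehat f(\nu)$ and reach $(a^\sigma_{ij})_\nu$, hence $a^\sigma_{ij}$, you need $\widehat f(\nu)\neq0$ at every Bohr frequency of $a^\sigma_{ij}$ relative to $H_{n,M}$. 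That is not among the hypotheses. The only Gaussians in the construction sit in $g$ and in $\Phi$. The filter $\widehat f$ is arbitrary subject to \eqref{eq:KMSFilterFunction}. Your phrase ``positivity of the Gaussian filter at every frequency'' conflates the two, even though your own Dirichlet form has its weight coming from $\widehat f$.

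\textbf{Why this is not cosmetic.} Write $H_{n,M}=\Pi_M^\perp H_{0,n}\Pi_M^\perp\oplus\Pi_M H_{n,M}\Pi_M$. The Bohr frequencies linking the two blocks form the discrete symmetric set $\{\pm(\mu-\lambda_j)\}$, with $\mu\in nd+2\mathbb N_0$ and $\lambda_j$ the finitely many eigenvalues of the $\Pi_M$-block. For generic couplings $\lambda_j\notin nd+2\mathbb Z$. Then this set avoids $\pm2$, and no spectral projection of $H_{n,M}$ straddles the two blocks. Now take $\widehat f(\nu)=e^{-\beta\nu/4}\phi(\nu)$ with $\phi$ real, even and Schwartz, vanishing on that set but not at $\pm2$. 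This satisfies the KMS and boundedness conditions. Yet $L^\alpha$, $G$ and $\Phi$ all commute with $\Pi_M$, so $\mathcal L(\Pi_M\rho\Pi_M)=\Pi_M\mathcal L(\rho)\Pi_M$. Hence $\Pi_M\sigma_\beta(H_{n,M})\Pi_M$ is a second, linearly independent kernel element.

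\textbf{What you need to add.} The kernel statement depends on a nondegeneracy assumption on the filter, for instance $\widehat f$ nowhere vanishing, as for Gaussian or Metropolis-type KMS filters. This must be part of the hypotheses behind the cited corollary, and your Step 3 should state and use it explicitly.

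Granting that assumption, your route is a sound self-contained alternative to the bare citation, modulo the domain issues you already flag. It runs as follows: commutation with all $a_{ij},a^\dagger_{ij}$; irreducibility of the Fock representation; then Frigerio's theorem (faithful invariant state plus trivial fixed-point algebra gives a unique normal invariant state). Finally, invariant trace-class operators split into invariant states, which gives $\ker\mathcal L=\mathbb C\,\sigma_\beta(H_{n,M})$.
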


\subsection{Mixing time via spectral gap analysis}

\noindent Next, as proved in \cite{BeckerRouzeSalzmannToAppearcmp}, the generator $\mathcal L_{\sigma_E,H_{n,M}}$ gives rise to the self-adjoint generator $L_{\sigma_E,H_{n,M}}$ of a strongly continuous, symmetric semigroup of contractions on the Hilbert space $\mathscr{T}_2(\cH_n)$ of Hilbert-Schmidt operators on $\cH_n$ with the associated norm $\|.\|_2$: introducing the embedding
\[
\iota_2:\mathscr T_2(\mathcal H_n)\to \mathscr T_1(\mathcal H_n),
\qquad
\iota_2(x)=\sigma_\beta(H_{n,M})^{1/4}x\sigma_\beta(H_{n,M})^{1/4},
\]
the corresponding semigroups are intertwined according to
\begin{align}
e^{t\mathcal L_{\sigma_E,H_{n,M}}}\circ \iota_2(x)
=
\iota_2\circ e^{tL_{\sigma_E,H_{n,M}}}(x).
\end{align}
 Here, we utilize the spectral properties of the operator $L_{\sigma_E,H_{n,M}}$ in order to control the mixing time: given a well-chosen initial state $\rho_{\operatorname{ini}}$ of the system with $\rho_{\operatorname{ini}}\le \mathfrak{c}\,\sigma_\beta(H_{n,M})$ and $\varepsilon\in (0,1)$,
\begin{align}
t_{\operatorname{mix}}(\varepsilon)&:=\inf\big\{t\ge 0:\,\big\|e^{t\mathcal{L}_{\sigma_E,H_{n,M}}}(\,\rho_{\operatorname{ini}}\,)-\sigma_\beta(H_{n,M})\big\|_1\le \varepsilon\big\}\le  \frac{2\log\big({\mathfrak{c}}/{\varepsilon}\big)}{\operatorname{gap}(L_{\sigma_E,H_{n,M}})}~~,\label{mixingtimeintro}
\end{align}
where $\operatorname{gap}(L_{\sigma_E,H_{n,M}})$ denotes the spectral gap of the generator $L_{\sigma_E,H_{n,M}}$. 
In particular, in \cite[Proposition 4.2]{BeckerRouzeSalzmannToAppearcmp} (see also \cite{slezak2026polynomial}), it was shown that the larger $\sigma_E$, the smaller the gap. Thus, it appears to be enough to control the gap of the generator $L_{H_{n,M}}\equiv L_{\infty,H_{n,M}}$. The latter takes the following form on $\mathscr F:=\operatorname{span}\{|E\rangle\langle E'|:E,E'\in\operatorname{Sp}(H_{n,M})\}$: denoting $\Gamma_\tau(X):=e^{\tau H}Xe^{-\tau H}$ and $A^\alpha=a_{ij}^{\sigma}$ for any $\alpha=(i,j,\sigma)\in\mathcal{A}$,
\begin{align}\label{LHnmgenerator}
L_{H_{n,M}}(X)
=
-i(B_+X-XB_-)
+\sum_{\alpha\in\mathcal A}
\Big(L^\alpha_+\,X\,(L^\alpha_-)^\dagger-\tfrac12 K^\alpha_+X-\tfrac12 XK^\alpha_-\Big),
\qquad X\in\mathscr F.
\end{align}
with 
\[
B:=\frac{i}{2}\sum_{\alpha\in\mathcal A}\sum_{E,F,G\in\operatorname{Sp}(H)}
\tanh\!\Big(\frac{\beta(F-E)}{4}\Big)\,
\overline{\widehat f(G-F)}\,\widehat f(G-E)\,
P_{F}(A^\alpha)^\dagger P_GA^\alpha P_E
\]
and
\[
L^\alpha_\pm:=\Gamma_{\pm\beta/4}(L^\alpha),
\qquad
K^\alpha_\pm:=\Gamma_{\pm\beta/4}\big((L^\alpha)^\dagger L^\alpha\big),\qquad B_\pm:=\Gamma_{\pm\beta/4}(B).
\] 
We refer to \cite{BeckerRouzeSalzmannToAppearcmp} for a full treatment of these generators in the case of unbounded Hamiltonians on separable Hilbert spaces. Our next main result, proven in Section~\ref{sec:GapForFiniteRankPerturb}, establishes that the spectral gap of $L_{H_{n,M}}$ is always positive:
\begin{theorem}\label{mainthmpositivegap}
For any $\sigma_E\in (0,\infty]$, $\operatorname{gap}(L_{\sigma_E,H_{n,M}})>0$.
\end{theorem}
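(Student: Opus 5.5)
Since the gap decreases as $\sigma_E$ grows (\cite[Proposition 4.2]{BeckerRouzeSalzmannToAppearcmp}), I will only treat $\sigma_E=\infty$, i.e. show $\gap(L_{H_{n,M}})>0$. The idea is to use that $H_{n,M}$ differs from the solvable $H_{0,n}$ by a finite-rank, bounded perturbation, and to prove the gap by perturbation from the harmonic case.

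First, $\Pi_M W_n\Pi_M$ is bounded. It acts on $\operatorname{Ran}\Pi_M$, which is finite-dimensional with dimension $M^n$, and its entries are finite because $w_d(x_i-x_j)$ is form-bounded relative to $h$, and Hermite functions lie in the relevant Sobolev spaces. Hence $H_{n,M}=H_{0,n}+V$ with $V$ finite rank and $\|V\|<\infty$.

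Second, I will show $\gap(L_{H_{0,n}})>0$ for the unperturbed generator. Here the Bohr frequencies are integers and $e^{itH_{0,n}}a_{ij}e^{-itH_{0,n}}=e^{-2it}a_{ij}$, so $L^\alpha=\widehat f(\mp 2)\,a^\sigma_{ij}$, up to normalization. The generator then decouples into a sum over the $dn$ modes of single-mode quantum Ornstein--Uhlenbeck generators, whose spectra can be computed explicitly and which have a gap bounded below by $c\min(|\widehat f(2)|^2,|\widehat f(-2)|^2)>0$. Tensorization of the gap under sums of commuting generators gives the same bound in the $n$-body case.

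Third, and this is the main obstacle, I must transfer the gap to $H_{n,M}$. A direct norm-perturbation of $L$ fails, because $L$ is unbounded and the dependence of $L^\alpha$ on $H$ enters through $\widehat f$ applied to Bohr frequencies. I would instead argue in two steps. In the first step, I show that $L_{H_{n,M}}$ has a gap at infinity, i.e. that its essential spectrum is bounded away from $0$. For this, decompose $\mathcal H_n=\operatorname{Ran}\Pi_M\oplus\operatorname{Ran}\Pi_M^\perp$. On $\operatorname{Ran}\Pi_M^\perp$ the Hamiltonian $H_{n,M}$ coincides with $H_{0,n}$, and $V$ mixes only finitely many eigenvectors. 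Then $L^\alpha_{H_{n,M}}-L^\alpha_{H_{0,n}}$ is relatively compact with respect to the number operator, as is the corresponding difference of Dirichlet forms, since each $a_{ij}$ maps $\operatorname{Ran}\Pi_M$ into a finite-dimensional space. By Weyl's theorem for forms, the bottom of the essential spectrum of $-L_{H_{n,M}}$ therefore coincides with that of $-L_{H_{0,n}}$, which is positive. One should check that the eigenvalue at $0$ of the free generator is isolated with finite multiplicity, so the essential spectrum sits strictly above $0$. In the second step, the spectrum of $-L_{H_{n,M}}$ below this threshold consists of isolated eigenvalues of finite multiplicity. Since the kernel is exactly $\mathbb C\,\sigma_\beta(H_{n,M})^{1/2}$ by Proposition~\ref{condpropdefsampler}, transported via $\iota_2$, the infimum over the orthogonal complement is attained by an eigenvalue or equals the essential threshold. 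In either case it is strictly positive, which gives $\gap>0$, though without a quantitative bound.

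The delicate part is the compactness of the perturbation of the Dirichlet form. The quantity $\widehat f(G-E)$ for $E$ in the perturbed finite block versus $E$ in the free part must be controlled uniformly. I would handle this by writing the Dirichlet form as $\sum_\alpha\|\,[L^\alpha,\cdot\,]\|^2$-type terms, weighted by $\sigma^{1/4}$. I would then show that the terms involving $\Pi_M$ form a finite-rank-times-bounded perturbation, using the boundedness conditions \eqref{eq:KMSFilterFunction} on $\widehat f$ and $e^{\beta\nu/2}\widehat f$.
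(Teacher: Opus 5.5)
Your proposal is essentially the paper's argument. You reduce to $\sigma_E=\infty$ and use that the filtered jumps change only by finite-rank $\mathcal R^\alpha_\pm$ (Lemma~\ref{lem:conjugated-L-finite-rank-remainder}). This makes $L_{H_{n,M}}-L_{H_{0,n}}$ a sum of bounded terms and cross terms $\mathcal R^\alpha_+\,x\,(L^\alpha_{+,0})^\dagger$, $L^\alpha_{+,0}\,x\,(\mathcal R^\alpha_+)^\dagger$. You then get discreteness by perturbing the explicitly diagonalizable quantum Ornstein--Uhlenbeck generator, which has compact resolvent. The paper does this with a Neumann-series/relative-bound-$0$ argument in Lemma~\ref{lem:compact_resolv}; your Weyl-theorem version is equivalent here, and it shows the essential spectrum is in fact empty, not merely positive.

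One correction to your last paragraph: those cross terms are not ``finite-rank-times-bounded.'' They carry an unbounded factor $a_{ik}$ or $a_{ik}^\dagger$, and finite rank of $\mathcal R^\alpha$ alone does not make left or right multiplication compact on $\mathscr T_2$. Relative compactness comes instead from three ingredients:
\begin{itemize}
\item the bound $\|a_{ik}xZ\|_2\le\|Z\|\,\|\mathcal N_{ik}^{1/2}(x)\|_2$, as in the proof of Theorem~\ref{corr:Gaussian_gap};
\item the compact resolvent of $\mathcal N$ on $\mathscr T_2$;
\item the ladder-block comparison $-L_{H_{0,n}}\gtrsim\mathcal N-C$, which you need but did not state, to pass from control by $\mathcal N$ to control by the free generator.
\end{itemize}
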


A weak coupling scaling of the Coulomb energies that allows for the presence of a spectral gap that is uniform in the particle number is presented in Theorem \ref{thm:uniform_gap}.

\subsection{Circuit implementation}

\noindent In order to obtain an efficient qubit-based circuit implementation of the Gibbs sampler, we consider a finite-dimensional approximation of the generator $\cL_{\sigma_E,H_{n,M}}$, in which both the bare jump operators and the Hamiltonian are replaced by finite-dimensional truncations. As established in the abstract framework of \cite{BeckerRouzeSalzmannToAppearcmp}, such truncated generators and their dynamics approximate their untruncated counterparts well on suitably energy-constrained inputs, such as $\rho_{\operatorname{ini}}$, provided suitable assumptions on the Hamiltonian and the bare jump operators are satisfied. In Section~\ref{sec:GibbscircuitCoulomb}, we verify that these assumptions hold for the present model.

In the finite-dimensional setting, the integral representations \eqref{jumpLalpha}, \eqref{intrepGGG}, and \eqref{Phimapintrep} are well suited for implementation via block-encodings of the Heisenberg evolution and the truncated bare jump operators, combined with an appropriate time discretization, as established in \cite{chen2023efficient,ding2025efficient}. Combining these results, we show that the Gibbs states of $H_{n,M}$ and $H_n$ can be prepared efficiently on a qubit based quantum computer:

\begin{theorem}[Gibbs state preparation via quantum circuits]
\label{thm:GibbsPrepareCoulombIntro}
Let $d\in\{2,3\}$ and \\ $\eps,\beta,\sigma_E>0$. Consider  
\begin{align*}
H_{n,M}\equiv H_{0,n} + W_{n,M} \qquad\text{with}\qquad M = \operatorname{poly}(n,q(1/\eps))\,, \ \, \alpha^{\max}_{n} \lesssim 1,
\end{align*} 
 where $q:\R_+\to \R_+$ denotes some non-decreasing function which is lower bounded as $q(x)\gtrsim \log(x)$.
Denote $\lambda_2 := \operatorname{gap}(L_{\sigma_E,H_{n,M}})>0,$ where positivity of the gap follows by Theorem~\ref{mainthmpositivegap}.

Then we have that Gibbs state $\sigma_\beta(H_{n,M})$ can be prepared within $\eps$-trace distance on a quantum computer with $\mathcal{O}\left(n\log (n\,q(1/\eps))\,\log\log(1/\lambda_2)))\right)$ many qubits with circuit depth\footnote{Here, the $\mathcal{O},$ $\widetilde{\mathcal{O}}$ and $\Omega$ notations hide constants independent of the displayed parameters and $\widetilde{\mathcal{O}}$ additionally suppresses subdominant $\operatorname{poly}\log$ factors in $1/\lambda_2$.}
\begin{align*}
  \widetilde{ \mathcal{O}}\left(\frac{1}{\lambda_2}\operatorname{poly}\left(n, q(1/\eps)\right)\right).
\end{align*}
Hence, using Theorem~\ref{thm:trace-norm-perturbation-boundIntro} and $M = \Theta\left(\left(\frac{n^{5/2}}{\eps}\right)^{8d}\right),$ this provides a preparation procedure for $\sigma_\beta(H_n)$ with the same complexities as above corresponding to the specific choice $q(x) = x^{8d}.$ 
\end{theorem}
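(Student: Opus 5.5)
The plan is to build the circuit on two ingredients: a finite-dimensional, qubit-encoded truncation of the generator $\mathcal L_{\sigma_E,H_{n,M}}$ whose dynamics stay close to the true semigroup on energy-constrained inputs, and the mixing-time bound \eqref{mixingtimeintro} with the gap $\lambda_2>0$ supplied by Theorem~\ref{mainthmpositivegap}.

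\textbf{Choice of initial state.} I would take $\rho_{\operatorname{ini}}$ to be a product Gibbs state of the harmonic oscillator $H_{0,n}$, possibly restricted to a finite occupation cutoff. The key point is that $W_{n,M}=\Pi_M W_n\Pi_M$ is bounded with norm $\lesssim n^2\alpha_n^{\max}\|P_M w_d P_M\|\le \operatorname{poly}(n,M)$. Indeed, $w_d$ is form-bounded relative to $h$, so on the first $M$ levels it is bounded by $\lambda_M\sim M^{1/d}$. Golden--Thompson together with $\|e^{\pm\beta W_{n,M}}\|\le e^{\beta\|W_{n,M}\|}$ then gives $\rho_{\operatorname{ini}}\le\mathfrak c\,\sigma_\beta(H_{n,M})$ with $\log\mathfrak c=\mathcal O(\beta\|W_{n,M}\|)=\operatorname{poly}(n,q(1/\eps))$. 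With this, \eqref{mixingtimeintro} gives
\[
t_{\operatorname{mix}}(\eps)\lesssim \lambda_2^{-1}\operatorname{poly}(n,q(1/\eps)).
\]

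\textbf{Finite truncation and its error.} Each of the $nd$ oscillator modes is cut off at occupation $N=\operatorname{poly}(n,q(1/\eps))$. This requires $\mathcal O(nd\log N)=\mathcal O(n\log(nq(1/\eps)))$ qubits; the remaining $\log\log(1/\lambda_2)$ factor accounts for ancillas in the time discretization. I would replace $a_{ij}^\sigma$ by its truncation and $H_{n,M}$ by its compression. Then I would invoke the abstract approximation theorem of \cite{BeckerRouzeSalzmannToAppearcmp}: the truncated semigroup approximates the true one up to time $t$ on inputs with bounded energy moments, provided (i) $H_{n,M}$ is relatively bounded with respect to $H_{0,n}$ and conversely (here automatic, since $W_{n,M}$ is bounded), (ii) the jumps satisfy $\|a^\sigma(\widetilde H+1)^{-1/2}\|\le 1$, and (iii) the semigroup preserves energy moments with at most polynomial growth, $\Tr(\widetilde H^k e^{t\mathcal L}\rho)\le e^{c_kt}(\cdots)$. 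Condition (iii) would be verified from the commutation relations $[H_{0,n},a]=-2a$, the boundedness of $W_{n,M}$, and the Gaussian decay of $\hat f$ and $g$ in the Bohr frequencies. Markov's inequality on the moments then bounds the leakage above the cutoff by $e^{c t_{\operatorname{mix}}}N^{-k}$, which forces $N$ to be polynomial.

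\textbf{Circuit implementation.} On the truncated space I would block-encode $H_{n,M}$. The oscillator part is diagonal in the number basis. For the Coulomb part, $P_Mw_dP_M$ is an $M^2\times M^2$ matrix whose entries are computable, and it appears in $n^2$ pairs, giving normalization $\operatorname{poly}(n,M)$. The jumps are block-encoded by the standard $\sqrt{k}$ shift. I would then apply the Lindbladian simulation algorithms of \cite{chen2023efficient,ding2025efficient}, which discretize the Gaussian-filtered integrals \eqref{jumpLalpha}, \eqref{intrepGGG}, and \eqref{Phimapintrep} using Hamiltonian simulation. Their cost is $\widetilde{\mathcal O}(t\,\|\text{block-encodings}\|\cdot\operatorname{polylog})$, which with $t=t_{\operatorname{mix}}$ gives the stated depth. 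The final claim about $H_n$ follows from Theorem~\ref{thm:trace-norm-perturbation-boundIntro} with the stated $M$.

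\textbf{Main obstacle.} I expect the hardest step to be the moment bound (iii) uniformly in the truncation. The moments must grow at most like $e^{ct}$, with $c$ polynomial rather than exponential in $n$, because that growth rate determines the cutoff $N$ and hence the qubit count. I would first try a Lyapunov argument for $\mathcal L^\dagger(\widetilde H_{0}^k)$, bounding commutators of the filtered jumps with $H_{0,n}$ through Gaussian tails of $\hat f$.
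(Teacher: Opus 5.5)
There is a genuine gap in your truncation step. You bound the leakage above the cutoff by combining moment bounds of the form $\Tr(\widetilde H^k e^{t\mathcal L}\rho)\le e^{c_kt}(\cdots)$ with Markov's inequality at $t=t_{\operatorname{mix}}\sim\lambda_2^{-1}\operatorname{poly}(n,q(1/\eps))$. But requiring $e^{c\,t_{\operatorname{mix}}}N^{-k}\le\eps$ forces $\log N\gtrsim c\,t_{\operatorname{mix}}/k$. For any fixed $k$, the cutoff is therefore exponential in $\lambda_2^{-1}\operatorname{poly}(n,q(1/\eps))$, not polynomial.

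This breaks the claimed qubit count: you would need on the order of $n\,t_{\operatorname{mix}}$ qubits instead of $\mathcal{O}(n\log(nq(1/\eps))\log\log(1/\lambda_2))$. It also breaks the depth bound, since the block-encoding normalizations ($\sim\sqrt N$) and the cost of simulating $e^{itH_{\le N}}$ ($\sim|t|\operatorname{poly}(N)$) become exponential in $1/\lambda_2$. So even a complete solution of what you call the main obstacle (moment growth $e^{ct}$ with $c=\operatorname{poly}(n)$) would not prove the statement.

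The missing idea is that no moment-growth estimate is needed. Since $e^{t\mathcal L_{\sigma_E,H_{n,M}}}$ is positive and leaves $\sigma_\beta(H_{n,M})$ invariant, the domination $\rho_{\operatorname{ini}}\le\mathfrak c\,\sigma_\beta(H_{n,M})$ propagates to $e^{t\mathcal L_{\sigma_E,H_{n,M}}}(\rho_{\operatorname{ini}})\le\mathfrak c\,\sigma_\beta(H_{n,M})$ for all $t\ge0$. The evolving state therefore has (even stretched-exponential) energy moments bounded uniformly in time by $\mathfrak c\,E_{\operatorname{Gibbs}}$. Here $E_{\operatorname{Gibbs}}=\Tr(e^{4N_{\mathcal A}^\kappa}\sigma_\beta(H_{n,M}))\le\exp(\operatorname{poly}(n,q(1/\eps)))$ by Lemma~\ref{lem:EGibbsBoundedperturbation}. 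A Duhamel argument with the contractive truncated semigroup then shows the truncation error grows only linearly in $t$.

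This is why the paper can take $\widetilde M=\operatorname{poly}(n,\log(1/\lambda_2),q(1/\eps))$. It is also where the $\log\log(1/\lambda_2)$ in the qubit count comes from, not from time-discretization ancillas.

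Note also the paper's choice of weight $N_{\mathcal A}=\tfrac12(H_{n,M}-dn+\|W_{n,M}\|)$. Being a function of $H_{n,M}$, it commutes with the Heisenberg evolution inside the filtered jumps, see \eqref{eq:ExpBound2}. It also equals $\Ntot+\|W_{n,M}\|/2$ off $\operatorname{Ran}\Pi_M$, which gives the $e^{-\widetilde M^\kappa}$ tail bounds of Lemma~\ref{lem:VerificationCondFiniteDimImpl}. With moments of $\widetilde H_0$ you would additionally have to commute $H_{0,n}$ through $e^{itH_{n,M}}$.

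A secondary issue concerns $\mathfrak c$. Golden--Thompson is a trace inequality and cannot yield the operator inequality $\rho_{\operatorname{ini}}\le\mathfrak c\,\sigma_\beta(H_{n,M})$. Moreover, $\log\mathfrak c=\mathcal O(\beta\|W_{n,M}\|)$ is false in general, since $x\mapsto e^{-\beta x}$ is not operator monotone.

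What works is the block decomposition $H_{n,M}=H_{0,n}\Pi_M^\perp\oplus(\Pi_MH_{0,n}\Pi_M+W_{n,M})$. For $\rho_{\operatorname{ini}}=\sigma_\beta(H_{0,n})$, the constant off $\operatorname{Ran}\Pi_M$ is $\mathcal Z_\beta(H_{n,M})/\mathcal Z_\beta(H_{0,n})\le e^{\beta\|W_{n,M}\|}$. On the finite block you additionally pick up a factor $e^{\beta\cdot(\text{spread of }H_{0,n}\text{ on }\operatorname{Ran}\Pi_M)}$. Compare Lemma~\ref{lem:cScalingFiniteRank}, where the paper uses the simpler vacuum $\ket{0}^{\otimes dn}\in\operatorname{Ran}\Pi_M$ and gets $e^{\beta(2\|W_{n,M}\|+nM)}(2\sinh\beta)^{-dn}$. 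Since only $\log\mathfrak c$ enters, this is still $\operatorname{poly}(n,q(1/\eps))$, so this part of your argument survives once it is justified this way.
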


In our next main result, proven in Section~\ref{sec:eff_impl}, we combine Theorems~\ref{thmfreenregyblablaIntro} and~\ref{thm:GibbsPrepareCoulombIntro} to provide an efficient quantum algorithm for estimating the free energy of $H_n:$
\begin{theorem}[Quantum algorithm for free energy estimation]
The free energy $F_\beta(H_n)$ can be estimated with accuracy $\eps>0$ and probability of failure bounded by $\delta>0$ on a quantum computer with $\mathcal{O}\left(n\log( n/\eps)\,\log\log(1/\lambda^{\min}_{2})\right)$ many qubits with total runtime of order 
\begin{align*}
    \widetilde{ \mathcal{O}}\left(\frac{1}{\lambda^{\min}_{2}}\log\left(1/\delta\right)\operatorname{poly}\left(n\,,\,1/\eps\right)\right)
\end{align*}
where $\lambda^{\min}_{2}:= \min_{s\in[0,1]} \lambda_2(s)>0$ and $\lambda_2(s)\equiv\operatorname{gap}(L_{\sigma_E,H(s)})>0$ with   $H(s):=(1-s)H_{0,n}+sH_{n,M}$ for $\alpha^{\max}_n\lesssim 1$ and $M=\Theta\left(\left(\frac{n^{5}}{\eps}\right)^{4d}\right).$ 
\end{theorem}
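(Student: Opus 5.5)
The plan is to reduce the estimation of $F_\beta(H_n)$ to three error sources, each controlled by $\eps/3$: truncation of the Hamiltonian, discretization of the thermodynamic integral, and statistical/preparation errors in the thermal averages.

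\textbf{Truncation.} By Theorem~\ref{thmfreenregyblablaIntro} with $\alpha^{\max}_n\lesssim 1$, we have $|F_\beta(H_n)-F_\beta(H_{n,M})|\le C_\beta n^5 M^{-1/(4d)}$. Choosing $M=\Theta((n^5/\eps)^{4d})$ makes this at most $\eps/3$. Since $F_\beta(H_{0,n})=dn\beta^{-1}\log(2\sinh\beta)$ is explicit, it then suffices to estimate $\Delta F_\beta=\int_0^1 \Tr(\sigma_\beta(H(s))V)\,ds$ with $V:=\Pi_M W_n\Pi_M$.

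\textbf{Discretization of the integral.} The map $s\mapsto \Tr(\sigma_\beta(H(s))V)=\frac{d}{ds}F_\beta(H(s))$ has derivative $-\beta\,\mathrm{Var}$-type (Kubo--Mori variance) of $V$. So it is nonincreasing and Lipschitz with constant at most $\beta\|V\|^2$. Hence a Riemann/midpoint rule with $K=O(\beta\|V\|^2/\eps)$ nodes $s_k$ gives error at most $\eps/3$. Here $\|V\|$ is finite but must be bounded polynomially. On $\operatorname{Ran}P_M\otimes\operatorname{Ran}P_M$, $w_d(x_i-x_j)$ is relatively bounded by the one-body harmonic oscillator: for $d=3$ by Hardy/Kato, and for $d=2$ since $|\log|y||$ is form-bounded by $-\Delta+|y|^2$ near $0$ and grows like $\log$ at infinity. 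Thus $\|P_M^{\otimes2}w_dP_M^{\otimes2}\|\lesssim \lambda_M\lesssim M^{1/d}$. This gives $\|V\|\lesssim n^2 M^{1/d}=\operatorname{poly}(n,1/\eps)$.

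\textbf{Estimating each average.} For each $s_k$, we apply Theorem~\ref{thm:GibbsPrepareCoulombIntro}, whose proof applies verbatim to $H(s)$ since $H(s)=H_{0,n}+sW_{n,M}$ is of the same form with $\alpha$ replaced by $s\alpha$. This prepares $\tilde\sigma_k$ with $\|\tilde\sigma_k-\sigma_\beta(H(s_k))\|_1\le\eta$ using depth $\widetilde{\mathcal O}(\lambda_2(s_k)^{-1}\operatorname{poly}(n,\log(1/\eta)))\le \widetilde{\mathcal O}((\lambda_2^{\min})^{-1}\operatorname{poly})$. Positivity of $\lambda^{\min}_2$ follows from Theorem~\ref{mainthmpositivegap} together with continuity of the gap in $s$ on the compact interval $[0,1]$; alternatively, one can simply take it as the defining quantity. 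We then measure $V$, which is diagonalizable in the finite-dimensional truncated basis. Concretely, we block-encode $V/\|V\|$ and use Hadamard-test or amplitude estimation, whose bias is at most $\|V\|\eta$. Taking $\eta=\eps/(6\|V\|)$ and $N=O(\|V\|^2\eps^{-2}\log(K/\delta))$ samples per node, Hoeffding and a union bound give overall accuracy $\eps/3$ with failure probability $\le\delta$. The total runtime is $K\cdot N\cdot\text{depth}$, which is $\widetilde{\mathcal O}((\lambda_2^{\min})^{-1}\log(1/\delta)\operatorname{poly}(n,1/\eps))$. The qubit count is inherited from Theorem~\ref{thm:GibbsPrepareCoulombIntro} with $q(x)=\operatorname{poly}(x)$, giving $\log(nq(1/\eps))=O(\log(n/\eps))$.

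\textbf{Main obstacle.} I expect the main obstacle to be the block-encoding of $V=\Pi_M W_n\Pi_M$ in the finite qubit representation with polynomial cost. This requires computing or approximating the matrix elements $\langle k\ell|w_d(x_1-x_2)|k'\ell'\rangle$ in the Hermite basis to precision $\operatorname{poly}(\eps/n)$, for instance via Talmi--Moshinsky relative/center-of-mass transforms. I would first try to reuse the Hamiltonian-simulation oracle already needed in Theorem~\ref{thm:GibbsPrepareCoulombIntro}, since $e^{-itH(s)}$ there presupposes access to $W_{n,M}$. Given that access, the expectation can be obtained by differentiating $\langle e^{-itW_{n,M}}\rangle$, or more directly through the same sparse-access oracle.
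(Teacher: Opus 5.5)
Your proposal follows the paper's proof essentially step for step: truncation via Theorem~\ref{thmfreenregyblablaIntro}, thermodynamic integration along $H(s)$ with a Riemann sum controlled by $|f'(s)|\le\beta\|W_{n,M}\|^2$ and $\|W_{n,M}\|\lesssim n^2M^{1/d}$, Gibbs preparation at each node via $\alpha_{n,i,j}\to s\,\alpha_{n,i,j}$ to trace accuracy of order $\eps/\|W_{n,M}\|$, and Hoeffding plus a union bound. The only real difference is the measurement step: the paper does not block-encode all of $W_{n,M}$, but writes $W_{n,M}=\sum_{i<j}\alpha_{n,i,j}\Pi_M W_{ij}\Pi_M$ and estimates each two-body term by measuring in its eigenbasis, which acts nontrivially on only two $M$-dimensional registers, so the obstacle you anticipate does not really arise.
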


\subsection{Related work}

\paragraph{\textbf{Classical systems}}
Given potential $V_n$ which associates to each $n$-particle configuration $(x,p)\in\mathbb{R}^{3n\times 3n}$ an energy $V_n(x,p)$, the goal of classical molecular dynamics is to compute averages and free energies with respect to the Boltzmann-Gibbs measure $d\mu_\beta(x,p):=e^{-\beta V_n(x,p)}/\mathcal{Z}_\beta(V) dx$. Markov chain Monte Carlo methods have been employed to estimate $F_\beta(V_n)$ \cite{rousset2010free,chafai2019simulating,chafai2021coulomb}. A typical example is the over-damped Langevin dynamics, whose generator on $L^2(\mu_\beta(V_n))$ takes the form $\beta^{-1}\Delta -\nabla V_n\cdot \nabla$. Our generators $\cL_{\sigma_E,H_{n,M}}$ can be understood as quantum extensions of such methods for the preparation of $\sigma_\beta(H_n)$, where one promotes the momentum $p$ to the operator $-i\nabla$. In fact, both generators are known to coincide with that of the Ornstein Uhlenbeck semigroup in the case of $V_n=H_n+\Delta=|x|^2$ when restricted e.g.~to the commutative algebra $\mathcal{S}(\mathbb{R}^n)$ \cite{OU}. For both the classical Langevin dynamics and our quantum dynamics, the rate of convergence is controlled by the spectral gap of the corresponding generator. When considering $V_n(x)=W_n(x)$ the potential associated to Coulomb interactions, the positivity of the spectral gap was proved in \cite{Bolley2018} in the two-dimensional setting, using local Poincaré inequalities and comparison to the uniform distribution on compact domains (see also \cite{chafai2021aspects,Bourgade2019,Akemann2019,Chafa2020,lu2020geometric,Duong2024} for more recent results on such dynamics). However, to the best of our knowledge, the dependency of the gap on the number of particles for a given inverse temperature $\beta$ remains open. This contrasts with non-singular interaction settings which have extensively been studied in relation with McKean--Vlasov equations \cite{Malrieu2001,Mlard1996,Sznitman1991}.

\smallskip

\paragraph{\textbf{Hybrid classical-quantum descriptions}}
Recent works have been concerned with the computation of free energy differences for the description of chemical systems within the Born-Oppenheimer approximation where the wavefunctions of the nuclei 
are considered independent of the wavefunctions of the electrons \cite{xu2023slow,ries2022relative,santagati2024drug}. Molecular dynamics simulations introduce a further approximation by treating the nuclei as classical particles while retaining the quantum description of the electrons. Quantum simulation algorithms within these approximations and beyond were recently explored in \cite{kassal2011simulating,ollitrault2020nonadiabatic,fedorov2021ab,sokolov2021microcanonical,o2019calculating,o2022efficient,steudtner2023fault}. For many practical applications, we need to compute the
properties of the chemical systems in the canonical ensemble \cite{xu2023slow,santagati2024drug,tuckerman2000understanding,manathunga2022computer}, which involves preparing the classical  nuclei according to the Boltzmann distribution at a certain $\beta$. 

Inspired by  \cite{joseph2020koopman,jin2023time},
in \cite{simon2024improved,huang2025fullqubit,GuntherEtAl2025}, the authors proposed a method to simulate the classical evolution of the nuclei on a quantum computer based on Koopman and von Neumann 
Liouvillian formalism \cite{koopman1931hamiltonian}, whose generator depends explicitly on the ground state energy of the electrons treated quantumly using quantized simulation methods  \cite{babbush2019quantum,su2021fault}.
While these quantum algorithms scale favourably with the inverse precision and number of particles, reaching e.g.~a $\mathcal{O}(n^7/\varepsilon)$-runtime, they also rely on the so-called equilibration time. While the latter can scale as $\mathcal{O}(1/\varepsilon)$ with additional weighting over time averages for some completely integrable systems \cite{cances2005long}, this parameter is often unknown and may potentially be unbounded.

In contrast to these prior works, our algorithm is based on the quantum Gibbs sampling of a controllable truncation of the Hamiltonian, which can be regarded as a direct quantum analogue of Monte Carlo methods for classical free energies. In particular, our method does not rely on the Born-Oppenheimer approximation, and replaces heuristic equilibration assumptions with a guaranteed convergence to the target Gibbs state for any fixed number of particles, due to the positivity of the gap proved in Theorem \ref{mainthmpositivegap}. It remains nonetheless an important open question for which parameters the spectral gap can be shown to scale inverse polynomially with the system size, hence yielding an efficient end-to-end method for the computation of free energy differences for interacting quantum particles.

\section{Finite rank perturbation analysis}

\subsection{Partition functions}\label{perturbationpartitionfunction}

\noindent In this section we construct a general perturbative framework for the truncation of the interaction to low-energy one-body modes. 
We start with the following abstract result: we recall that, given a Hamiltonian $H$ satisfying the Gibbs Hypothesis, we denote $\mathcal{Z}_\beta(H):=\Tr(e^{-\beta H})$,  $F_\beta(H):=-\beta^{-1}\log\mathcal{Z}_\beta(H)$ and $\sigma_\beta(H)=e^{-\beta H}/\mathcal{Z}_\beta(H)$.
We also recall that the form domain of a self-adjoint operator $H\ge -h_0$ corresponds to the domain $D((H+h_0+1)^{1/2})$.

\begin{lemma}[Perturbation bound for free energies under relative form convergence]
\label{lem:free-energy-relative-form-convergence}
 Let \(H_{0}\) be a self-adjoint operator on \(\mathcal H\), bounded from below, $H_0\ge -h_0$, with compact resolvent. Let \(W\) and \(W'\) be symmetric quadratic forms on the form domain \(Q(H_{0})\), and define the form sums
\[
H:=H_{0}+ W,
\qquad
H':=H_{0}+ W'.
\]
Assume further that the forms \(W\) and \(W'\) are \(H_{0}\)-form bounded with relative bound strictly smaller than \(1\). Let
\begin{equation}
\label{eq:DeltaNM}
\Delta:=H-H'=W-W'
\end{equation}
in the sense of quadratic forms, and assume that there exist numbers \(\varepsilon\ge0\) and \(c\ge0\), such that
\begin{equation}
\label{eq:relative-form-smallness-delta}
|\Delta[\psi]|
\le
\varepsilon \,\langle \psi,H_{0}\psi\rangle + c\|\psi\|^2
\qquad\text{ for all }\qquad \psi\in Q(H_{0}).
\end{equation}
Then, the Gibbs states associated to $H$, resp.~$H'$, as $\sigma_\beta(H)$, resp.~$\sigma_\beta(H')$ are well-defined as trace-class operators, and
\begin{equation}
\label{eq:sharper-max-estimate}
|F_\beta(H)-F_\beta(H')|
\le
{\varepsilon}
\max\bigl\{\Tr(\sigma_\beta(H) H_{0}),\Tr(\sigma_\beta(H')H_{0})\bigr\}
+{c}.
\end{equation}
\end{lemma}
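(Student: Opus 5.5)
We prove the bound with the Gibbs variational principle (Peierls--Bogoliubov), applied once in each direction. We use it in the following form: for any density operator \(\rho\) with finite entropy and with \(\rho^{1/2}\) mapping into \(Q(H_0)\) (so that \(\Tr(\rho H)\) is defined via the form),
\[
F_\beta(H)\le \Tr(\rho H)-\beta^{-1}S(\rho),
\]
with equality when \(\rho=\sigma_\beta(H)\).

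\emph{Well-posedness.} By the KLMN theorem, \(H\) and \(H'\) are self-adjoint, bounded below, and have form domain \(Q(H_0)\). The relative form bound \(a<1\) gives \(H\ge (1-a)H_0-C\) as forms. Since \(H_0\) has compact resolvent and satisfies the Gibbs hypothesis in our application, min--max comparison of eigenvalues gives
\[
\Tr e^{-\beta H}\le e^{\beta C}\,\Tr e^{-\beta(1-a)H_0}.
\]
The right-hand side is finite, but only provided \(e^{-\beta' H_0}\) is trace class for all \(\beta'>0\); this holds for the harmonic oscillator, and I would state it as an implicit assumption. Hence \(\sigma_\beta(H)\) and \(\sigma_\beta(H')\) are trace class.

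\emph{Regularity of the trial state.} We also need \(\Tr(\sigma_\beta(H)H_0)<\infty\). This follows from the form bound \(H_0\le (1-a)^{-1}(H+C)\) together with the fact that \(\Tr(e^{-\beta H}(H+C))<\infty\), by the spectral theorem. If the maximum in \eqref{eq:sharper-max-estimate} is infinite, there is nothing to prove anyway.

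\emph{Main step.} Take \(\rho=\sigma_\beta(H)\) as trial state for \(H'\). Writing \(\rho=\sum_k p_k|\phi_k\rangle\langle\phi_k|\) with \(\phi_k\in Q(H_0)\), we have \(\Tr(\rho H')=\Tr(\rho H)-\sum_k p_k\Delta[\phi_k]\). Then
\[
F_\beta(H')\le \Tr(\rho H)-\beta^{-1}S(\rho)-\sum_k p_k\Delta[\phi_k]=F_\beta(H)-\sum_k p_k\Delta[\phi_k].
\]
By \eqref{eq:relative-form-smallness-delta},
\[
\Bigl|\sum_k p_k\Delta[\phi_k]\Bigr|\le \varepsilon\,\Tr(\sigma_\beta(H)H_0)+c .
\]
Therefore \(F_\beta(H')-F_\beta(H)\le \varepsilon\,\Tr(\sigma_\beta(H)H_0)+c\). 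Exchanging the roles of \(H\) and \(H'\) (with \(\Delta\mapsto-\Delta\)) gives the reverse inequality with \(\sigma_\beta(H')\). Taking the maximum yields \eqref{eq:sharper-max-estimate}.

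\emph{Main obstacle.} The delicate point is justifying the Gibbs variational inequality for unbounded operators with trial states defined only through forms. I would handle this via Klein's inequality, or the nonnegativity of the relative entropy \(D(\rho\|\sigma_\beta(H'))\ge 0\). Expanding \(\log\sigma_\beta(H')=-\beta H'-\log\mathcal Z_\beta(H')\) on the eigenbasis of \(\rho\), using the form of \(H'\) evaluated on the \(\phi_k\), gives
\[
0\le D(\rho\|\sigma_\beta(H'))=-S(\rho)+\beta\Tr(\rho H')+\log\mathcal Z_\beta(H'),
\]
which is exactly the required inequality. Justifying this expansion needs \(\sum_k p_k\langle\phi_k,H_0\phi_k\rangle<\infty\) and absolute convergence of the sums. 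Both follow from the regularity step above, where \(\rho=\sigma_\beta(H)\) has finite entropy and finite \(H_0\)-energy. If needed, one can first approximate \(\rho\) by finite-rank truncations in its eigenbasis and then pass to the limit using lower semicontinuity of the relative entropy.
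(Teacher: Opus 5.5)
Your proof is correct and is essentially the paper's argument. The Gibbs variational principle with trial state \(\sigma_\beta(H)\), and then \(\sigma_\beta(H')\), is exactly the Bogoliubov inequality \(\Tr(\sigma_\beta(H)\Delta)\le F_\beta(H)-F_\beta(H')\le \Tr(\sigma_\beta(H')\Delta)\) that the paper uses, followed by the same eigenbasis estimate via the form bound on \(\Delta\).

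Your extra care is justified on two points the paper's proof glosses over. First, compact resolvent alone does not make \(e^{-\beta H}\) trace class; one needs, e.g., \(e^{-\beta(1-a)H_0}\) trace class, as for the oscillator. Second, the paper applies Bogoliubov's inequality to the unbounded form \(\Delta\) without the relative-entropy justification you sketch.
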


\begin{proof}

Since the forms \(W\) and \(W'\) are \(H_{0}\)-form bounded with a relative bound strictly smaller than \(1\), \(H\) and \(H'\) are self-adjoint and bounded from below on \(Q(H_{0})\) with a compact resolvent. Hence, the Gibbs operators \(e^{-\beta H}\) and \(e^{-\beta H'}\) are trace class. Next, by the Bogoliubov inequality \cite{Dereziski2003}
\begin{equation}
\label{eq:PB}
\log \frac{\Tr(e^{X+Y})}{\Tr(e^{X})}
\ge
\frac{\Tr(e^{X}Y)}{\Tr(e^{X})}
\end{equation}
for self-adjoint \(X,Y\) with $X:=-\beta H$, $Y:=-\beta(H'-H)$ as well as for $X:=-\beta H'$, $Y:=-\beta(H-H')$, we get 
\[
\Tr(\sigma_\beta(H)\Delta)
\le \frac1\beta\log\frac{\mathcal{Z}_\beta(H')}{\mathcal{Z}_\beta(H)}
\le \Tr(\sigma_\beta(H')\Delta).
\]
Hence
\begin{equation}
\label{eq:free-energy-max-step}
|F_\beta(H)-F_{\beta}(H')|
\le
\max\Bigl\{
|\Tr(\sigma_\beta(H)\Delta)|,
|\Tr(\sigma_\beta(H')\Delta)|
\Bigr\}.
\end{equation}
Next, applying \eqref{eq:relative-form-smallness-delta} in the spectral decompositions of \(\sigma_\beta(H)\) and \(\sigma_\beta(H')\) gives
\[
|\Tr(\sigma_\beta(H)\Delta)|
\le \varepsilon\,\Tr(\sigma_\beta(H)H_{0})+c\qquad \text{ and }\qquad 
|\Tr(\sigma_\beta(H')\Delta)|
\le \varepsilon\,\Tr(\sigma_\beta(H')H_{0})+c.
\]
Inserting these estimates into \eqref{eq:free-energy-max-step}, we obtain
\[
|F_\beta(H)-F_\beta(H')|
\le
{\varepsilon}
\max\bigl\{\Tr(\sigma_\beta(H)H_{0}),\Tr(\sigma_\beta(H')H_{0})\bigr\}
+c.
\]
\end{proof}

\noindent For now on, we consider the $n$-particle Hamiltonian
\[
H_n:=H_{0,n}+W_n
\]
on \(L^2((\R^d)^n)\), where $h:=-\Delta+|x|^2$ over $L^2(\mathbb{R}^d)$, $H_{0,n}:=\sum_{i\in[n]}h_i$ and for $\psi\in Q(H_{0,n})$, denote
\[
W_n[\psi]=\sum_{1\le i<j\le n}\alpha_{n,i,j}W_{ij}[\psi],
\]
where
\[
W_{ij}[\psi]:=\int_{\R^{nd}}w_d(x_i-x_j)|\psi(x)|^2 \ dx,\qquad
w_d(y):=
\begin{cases}
-\log|y|, & d=2,\\[0.3em]
|y|^{-1}, & d=3.
\end{cases}
\]
With the convention \(\alpha_{n,j,i}:=\alpha_{n,i,j}\) for \(j>i\), let further
\[
A_n:=\max_{i\in[n]}\sum_{j\in[n]\setminus\{i\}}|\alpha_{n,i,j}|,
\qquad
B_n:=\sum_{1\le i<j\le n}|\alpha_{n,i,j}|.
\]
Let further $P_M$ be the spectral projector onto the first $M$ eigenvalues \((\lambda_k)_{k=1}^M\) of \(h\). Define \(\Pi_M:=P_M^{\otimes n}\), \(\Pi_M^\perp:=1-\Pi_M\), and consider the form
\[W_{n,M}[\psi]:=W_n[\Pi_M\psi]\]
and the associated Hamiltonian $H_{n,M}$. In the next result, we establish relative form boundedness of $\Delta_{n,M}:=W_n-W_{n,M}$ with respect to $H_{0,n}$:

\begin{lemma}[Product cutoff estimate without scaling assumption]
\label{lem:product-cutoff-riesz-no-scaling}
Given $d\in\{2,3\}$, for all \(n,M\) and \(\psi\in Q(H_{0,n})\),
\begin{align}
|W_n[\psi]|&\lesssim \,B_n\langle\psi,(H_{0,n}+1)\psi\rangle,\label{eq111}\\
|W_n[\Pi_M^\perp\psi]|&\lesssim\,B_n(\lambda_{M+1}+1)^{-1/2}\langle\psi,(H_{0,n}+1)\psi\rangle,\label{eq222}\\
|\Delta_{n,M}[\psi]|&\lesssim\,B_n(\lambda_{M+1}+1)^{-1/4}\langle\psi,(H_{0,n}+1)\psi\rangle.\label{eq333}
\end{align}
\end{lemma}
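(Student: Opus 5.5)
We reduce all three bounds to a one-pair estimate. We then sum over pairs using $|W_n[\psi]|\le\sum_{i<j}|\alpha_{n,i,j}|\,|W_{ij}[\psi]|$, together with the trivial bound $h_i+h_j+1\le H_{0,n}+1$. This summation produces the factor $B_n$.

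\emph{Step 1: one-pair bound.} For a fixed pair $(i,j)$ we aim to show
\[
|W_{ij}[\psi]|\lesssim \langle\psi,(h_i+1)\psi\rangle .
\]
For $d=3$ this follows from Hardy's inequality $|y|^{-1}\le \epsilon(-\Delta_y)+C\epsilon^{-1}$ applied in the variable $x_i$ with $x_j$ frozen (translation invariance), together with $-\Delta_{x_i}\le h_i$.

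For $d=2$ the difficulty is the logarithm. We split $-\log|y|=-\log|y|\,\1_{|y|<1}+(-\log|y|)\1_{|y|\ge1}$.
\begin{itemize}
\item The short-range part is in $L^p$ for every $p<\infty$. By Sobolev, $\|\phi\|_{L^{q}}^2\lesssim\langle\phi,(-\Delta+1)\phi\rangle$ for $q<\infty$, and Hölder then bounds it by $-\Delta_{x_i}+1$.
\item The long-range part satisfies $|\log|y||\le \log(1+|x_i|+|x_j|)\lesssim 1+|x_i|^2+|x_j|^2$. This is bounded by $h_i+h_j+1$.
\end{itemize}
Either way we obtain $|W_{ij}[\psi]|\lesssim\langle\psi,(h_i+h_j+1)\psi\rangle$, and summing over pairs gives \eqref{eq111}.

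\emph{Step 2: the projected bound \eqref{eq222}.} We refine Step 1 to use only a half power of the one-body energy. For $d=3$, Kato's inequality $|y|^{-1}\lesssim(-\Delta_y)^{1/2}$ gives $|W_{ij}[\phi]|\lesssim\langle\phi,(h_i+1)^{1/2}\phi\rangle$. For $d=2$, the short-range part is handled by fractional Sobolev ($H^{1/2}\subset L^4$ in $2$D, then Hölder with $\log\in L^2_{loc}$), and the long-range part is controlled by $|x|\lesssim(h+1)^{1/2}$ via $\log(1+t)\lesssim 1+t$.

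We then write $\Pi_M^\perp=\sum_k Q_k$, where $Q_k$ has $P_M^\perp$ in slot $k$, and apply this with $\phi=\Pi_M^\perp\psi$. Here we must be careful: $\Pi_M^\perp\psi$ is not an eigen-restriction of $h_i$ for one fixed $i$. To organize this, we decompose $\psi$ into orthogonal pieces according to which coordinates lie in $\operatorname{Ran}P_M^\perp$. We then use symmetrically both $h_i$ and $h_j$ together with the lower bound $\langle\phi,(h_k+1)\phi\rangle\ge(\lambda_{M+1}+1)\|\phi\|^2$ on the pieces with $P_M^\perp$ in slot $k$. Interpolation gives
\[
(h+1)^{1/2}\le(\lambda_{M+1}+1)^{-1/2}(h+1)\quad\text{on }\operatorname{Ran}P_M^\perp,
\]
which yields \eqref{eq222}.

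I expect this to be the main obstacle. Pieces in which the high mode sits in a coordinate $k\notin\{i,j\}$ are not suppressed by $W_{ij}$ directly. For these I would instead use that $W_{ij}$ commutes with $P_M^\perp$ in slot $k$ (being a multiplication operator independent of $x_k$), and then bound $\langle\phi,(h_k+1)\phi\rangle/(\lambda_{M+1}+1)$ against the bound of Step 1. If this decomposition fails, I would retreat to a cruder exponent, which is still compatible with the final $-1/4$.

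\emph{Step 3: the difference \eqref{eq333}.} We expand
\[
\Delta_{n,M}[\psi]=W_n[\Pi_M^\perp\psi]+2\operatorname{Re}W_n(\Pi_M\psi,\Pi_M^\perp\psi).
\]
For the cross term we use Cauchy--Schwarz for the form $|W_{ij}|$, which is positive after splitting into positive and negative parts. This gives
\[
|W(\Pi_M\psi,\Pi_M^\perp\psi)|\le |W|[\Pi_M\psi]^{1/2}\,|W|[\Pi_M^\perp\psi]^{1/2}.
\]
We bound the first factor by \eqref{eq111} and the second by \eqref{eq222}. Since $\Pi_M$ commutes with $H_{0,n}$, the energy of each projected piece is at most that of $\psi$. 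The geometric mean then produces $B_n(\lambda_{M+1}+1)^{-1/4}\langle\psi,(H_{0,n}+1)\psi\rangle$. Finally, the term $W_n[\Pi_M^\perp\psi]$ is smaller than this, which completes \eqref{eq333}.
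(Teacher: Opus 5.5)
Your Steps 1 and 3 are sound and follow the paper. The paper uses Hardy in the relative coordinate and, for $d=2$, the split $|\log r|\lesssim 1+r^{-\varepsilon}+r^{\varepsilon}$. Your single-variable Hardy/Sobolev bound with the cut at $|y|=1$ works equally well. Your cross-term treatment via Cauchy--Schwarz for the form with kernel $|w_d|$ and coefficients $|\alpha_{n,i,j}|$ is exactly the paper's.

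The gap is in Step 2. You leave it open, and the workarounds you sketch do not work as stated.
\begin{itemize}
\item Take a piece $\phi$ where slots $i,j$ lie in $\operatorname{Ran}P_M$ and the high mode sits in some $k\notin\{i,j\}$. Combining $\|\phi\|^2\le(\lambda_{M+1}+1)^{-1}\langle\phi,(h_k+1)\phi\rangle$ with the Step 1 bound gives only $|W_{ij}[\phi]|\lesssim(2\lambda_M+1)\|\phi\|^2\lesssim\langle\phi,(h_k+1)\phi\rangle$, i.e.\ no decay at all.
\item That $W_{ij}$ commutes with $P_M^\perp$ in slot $k$ does not remove the cross terms between such pieces and those with a high mode in slot $i$ or $j$. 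The reason is that $W_{ij}$ does not commute with projections acting on slots $i,j$.
\item Your fallback is also incorrect. Through the geometric mean in Step 3, the exponent $-1/4$ in \eqref{eq333} is exactly the square root of the $-1/2$ in \eqref{eq222}. Any loss in \eqref{eq222} therefore propagates, and \eqref{eq222} is itself part of the claim.
\end{itemize}

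The obstacle comes only from localizing to $h_i+h_j$, whereas the right-hand side of \eqref{eq222} involves the full $H_{0,n}$. All $h_l\ge0$, and every vector in $\operatorname{Ran}\Pi_M^\perp$ is spanned by joint eigenvectors with at least one slot in $\operatorname{Ran}P_M^\perp$. Hence $H_{0,n}\ge\lambda_{M+1}$ on $\operatorname{Ran}\Pi_M^\perp$, so $\|\Pi_M^\perp\psi\|^2\le(\lambda_{M+1}+1)^{-1}\langle\psi,(H_{0,n}+1)\psi\rangle$ regardless of which slot is high.

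The paper writes the half-power bound in Cauchy--Schwarz form, $|W_{ij}[\phi]|\lesssim\|\phi\|\,\langle\phi,(h_i+h_j+1)\phi\rangle^{1/2}\le\|\phi\|\,\langle\phi,(H_{0,n}+1)\phi\rangle^{1/2}$. Your Kato and $H^{1/2}\subset L^4$ bounds give this via $\langle\phi,(h_i+1)^{1/2}\phi\rangle\le\|\phi\|\,\|(h_i+1)^{1/2}\phi\|$. The paper then takes $\phi=\Pi_M^\perp\psi$, uses $[\Pi_M^\perp,H_{0,n}]=0$ to bound the energy of $\phi$ by that of $\psi$, and sums over pairs to get $B_n$.

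Equivalently, in your operator form: all comparison operators are functions of the commuting $h_l$, and $(h_i+1)^{1/2}\le(H_{0,n}+1)^{1/2}\le(\lambda_{M+1}+1)^{-1/2}(H_{0,n}+1)$ on $\operatorname{Ran}\Pi_M^\perp$. No decomposition into pieces is needed. If you want to keep one, first replace $W_{ij}$ by its diagonal comparison operator so that no cross terms arise. Then, on slot-$i$-low pieces, use $(h_i+1)^{1/2}\le(\lambda_M+1)^{1/2}\le(\lambda_{M+1}+1)^{-1/2}(h_k+1)$.
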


\begin{proof}
 Fix $1\le i<j\le n$ and $d=3$. Introduce the relative variable $r:=2^{-1/2}{(x_i-x_j)}$, so that ${|x_i-x_j|^{-1}}=2^{-1/2}{|r|^{-1}}$. By Hardy's inequality in $\mathbb R^3$,
\[
\left\|\frac{u}{|r|}\right\|^2\le 4\|\nabla_r u\|^2.
\]
Hence, by Cauchy--Schwarz and Young's inequalities, for every \(\eta>0\),
\begin{align*}
\left\langle u,\frac{1}{|x_i-x_j|}u\right\rangle
=
2^{-1/2}\left\langle u,\frac{1}{|r|}u\right\rangle
\le
\sqrt2\,\|u\|\,\|\nabla_r u\|
&\le
\eta\|\nabla_r u\|^2+ (2\eta)^{-1}\|u\|^2\\
&= \eta\|\nabla_r u\|^2+ C_\eta\|u\|^2,
\end{align*}
where we defined $C_\eta = (2\eta)^{-1}$ in the last equality.
Since
\[
\|\nabla_r u\|^2\le \langle u,(-\Delta_{x_i}-\Delta_{x_j})u\rangle
\le
\langle u,\bigl(-\Delta_{x_i}-\Delta_{x_j}+|x_i|^2+|x_j|^2\bigr)u\rangle,
\]
we obtain
\begin{equation}
\frac{1}{|x_i-x_j|}
\le
\eta\bigl(-\Delta_{x_i}-\Delta_{x_j}+|x_i|^2+|x_j|^2\bigr)+(2\eta)^{-1} \label{usefulestimate111}
\end{equation}
as quadratic forms. Taking \(\eta=1\), and using \(h_i+h_j+1\le 2(H_{0,n}+1)\), we get
\[
|W_n[\psi]|
\lesssim
\sum_{1\le i<j\le n}|\alpha_{n,i,j}|\langle\psi,(h_i+h_j+1)\psi\rangle
\lesssim
B_n\langle\psi,(H_{0,n}+1)\psi\rangle.
\]
In dimension \(d=2\), let \(u\in Q(h_1+h_2)\). Fix \(0<\varepsilon<1\). Since
\[
|\log r|\le C_\varepsilon\bigl(1+r^{-\varepsilon}+r^\varepsilon\bigr),\qquad r>0,
\]
it suffices to estimate these three terms separately. First,
\[
\int_{\R^2\times\R^2}|u(x,y)|^2\,dx\,dy=\|u\|^2
\le
\langle u,(h_1+h_2+1)u\rangle.
\]
Next, using the change of variables \(r=x-y\), \(s=x+y\), we obtain
\[
\int_{\R^2\times\R^2}|x-y|^{-\varepsilon}|u(x,y)|^2\,dx\,dy
=
\frac{1}{4}\int_{\R^2\times\R^2}|r|^{-\varepsilon}|u(r,s)|^2\,dr\,ds.
\]
Set
\[
F(r):=\|u(r,\cdot)\|_{L^2(\R^2_s)}.
\]
Then
\[
\int_{\R^2\times\R^2}|r|^{-\varepsilon}|u(r,s)|^2\,dr\,ds
=\int_{\R^2}|r|^{-\varepsilon}F(r)^2\,dr.
\]
Choose \(p>1\) such that \(\varepsilon p<2\), and let \(q=\frac p{p-1}\) its Hölder conjugate. Since
\(|r|^{-\varepsilon}\in L^p(B_1(\R^2))\) for $B_1(\mathbb{R}^2)$ the unit ball in $\mathbb{R}^2$, Hölder's inequality in the \(r\)-variable yields
\[
\int_{|r|\le1}|r|^{-\varepsilon}F(r)^2\,dr
\le
\bigl\||r|^{-\varepsilon}\bigr\|_{L^p(B_1(\mathbb{R}^2))}\|F^2\|_{L^q(\R^2)}
=
C\|F\|_{L^{2q}(\R^2)}^2.
\]
Now \(2q<\infty\), so by the Sobolev embedding
\[
H^1(\R^2_r;L^2(\R^2_s))\hookrightarrow L^{2q}(\R^2_r;L^2(\R^2_s))
\]
we obtain
\[
\|F\|_{L^{2q}(\R^2)}=\|u\|_{L^{2q}(\mathbb{R}_r^2,L^2(\mathbb{R}_s^2))}
\lesssim\|u\|_{H^1(\R^2_r;L^2(\R^2_s))}.
\]
Hence
\[
\int_{|r|\le1}\int_{\R^2}|r|^{-\varepsilon}|u(r,s)|^2\,ds\,dr
\lesssim
\|u\|_{H^1(\R^2_r;L^2(\R^2_s))}^2.
\]
Since \(|r|^{-\varepsilon}\le1\) on \(\R^2\setminus B_1(\mathbb{R}^2)\), we also control
\[
\int_{|r|>1}\int_{\R^2}|r|^{-\varepsilon}|u(r,s)|^2\,ds\,dr
\le
\|u\|_{L^2(\R^4)}^2.
\]
Therefore
\[
\int_{\R^2\times\R^2}|r|^{-\varepsilon}|u(r,s)|^2\,dr\,ds
\lesssim
\Bigl(\|u\|_{H^1(\R^2_r;L^2(\R^2_s))}^2+\|u\|_{L^2(\R^4)}^2\Bigr).
\]
By the Sobolev embedding \(H^1(\R^4)\hookrightarrow L^4(\R^4)\),
\[
\|u\|_{L^4(\R^4)}^2
\lesssim\|u\|_{H^1(\R^4)}^2
\lesssim
\langle u,(h_1+h_2+1)u\rangle.
\]
Finally, since \(0<\varepsilon<2\), there exist a constants $C_\varepsilon,C_\varepsilon'>0$ such that
\[
|x-y|^\varepsilon\le C_{\varepsilon}\bigl(1+|x|^\varepsilon+|y|^\varepsilon\bigr)
\le C'_{\varepsilon}\bigl(1+|x|^2+|y|^2\bigr),
\]
and therefore
\[
\int_{\R^2\times\R^2}|x-y|^\varepsilon |u(x,y)|^2\,dx\,dy
\le
C_\varepsilon'\langle u,(h_1+h_2+1)u\rangle.
\]
Putting the three estimates together yields
\[
\int_{\R^2\times\R^2}|\log|x-y||\,|u(x,y)|^2\,dx\,dy
\lesssim
\langle u,(h_1+h_2+1)u\rangle.
\]
Summing over all combinations of $x=x_i,y=x_j$ yields then 
\[
|W_n[\psi]|\lesssim B_n\langle\psi,(H_{0,n}+1)\psi\rangle.
\]
This ends the proof of \eqref{eq111}. For \eqref{eq222}, we may proceed for both $d=2,3$ as follows:
instead of bounding the interaction form entirely by the energy right away, we first apply the Cauchy--Schwarz inequality to the interaction potential. For $d=3$, using Hardy's inequality $\| |x_i-x_j|^{-1}\phi \| \lesssim \langle \phi, (h_i+h_j)\phi \rangle^{1/2}$, and for $d=2$ using the analogous interpolation from the Gagliardo--Nirenberg inequality derived above, we obtain the fractional form bound for any $\phi \in Q(H_{0,n})$:
\[
|W_n[\phi]| \lesssim B_n \|\phi\| \langle \phi, (H_{0,n}+1)\phi \rangle^{1/2}.
\]
We evaluate this bound on the high-energy projection $\phi = \Pi_M^\perp\psi$. Since $(H_{0,n}+1)\Pi_M^\perp\ge(\lambda_{M+1}+1)\Pi_M^\perp$, the norm decays as:
\[ 
\|\Pi_M^\perp\psi\| \le (\lambda_{M+1}+1)^{-1/2}\langle\psi,(H_{0,n}+1)\psi\rangle^{1/2}.
\]
Furthermore, since $[\Pi_M^\perp,H_{0,n}]=0$ and $\Pi_M^\perp \le 1$, we have $\langle \Pi_M^\perp\psi, (H_{0,n}+1)\Pi_M^\perp\psi\rangle \le \langle \psi, (H_{0,n}+1)\psi\rangle$. Substituting these into our fractional form bound yields
\begin{align}
|W_n[\Pi_M^\perp\psi]| &\lesssim B_n \|\Pi_M^\perp\psi\| \langle \Pi_M^\perp\psi, (H_{0,n}+1)\Pi_M^\perp\psi \rangle^{1/2}\nonumber\\
&\le B_n \left( (\lambda_{M+1}+1)^{-1/2} \langle \psi, (H_{0,n}+1)\psi \rangle^{1/2} \right) \langle \psi, (H_{0,n}+1)\psi \rangle^{1/2}\nonumber\\
&= (\lambda_{M+1}+1)^{-1/2}B_n\langle \psi,(H_{0,n}+1)\psi\rangle,\label{eq2222}
\end{align}
which ends the proof of \eqref{eq222}.

For \eqref{eq333}, we associate to \(W_n\) the sesquilinear form
\[
W_n(\varphi,\chi)
:=
\sum_{1\le i<j\le n}
\alpha_{n,i,j}\int_{(\R^d)^n}w_d(x_i-x_j)\,\varphi(x)\,\overline{\chi(x)}\,dx,
\]
so that \(W_n[\psi]=W_n(\psi,\psi)\). Hence, writing \(P:=\Pi_M\) and \(Q:=\Pi_M^\perp\),
\[
W_n[\psi]-W_{n,M}[\psi]
=
W_n[P\psi+Q\psi]-W_n[P\psi]
=
W_n[Q\psi]+2\Re W_n(P\psi,Q\psi).
\]
Similarly, if \(\widetilde W_n\) denotes the positive sesquilinear form with kernel
\(|w_d|\) and coefficients \(|\alpha_{n,i,j}|\), then combining the Cauchy-Schwarz inequality, \eqref{eq111} applied to \(P\psi\), and \eqref{eq222} applied to \(Q\psi\), we get
\begin{align}
|W_n(P\psi,Q\psi)|&\le \widetilde{W}_n[P\psi]^{1/2}\widetilde{W}_n[Q\psi]^{1/2}\nonumber\\
&\lesssim B_n(\lambda_{M+1}+1)^{-1/4}\langle \psi,(H_{0,n}+1)\psi\rangle.\nonumber
\end{align}
Finally, combining the last bound with the estimate for \(W_n[Q\psi]\) derived in \eqref{eq222} yields the claimed bound \eqref{eq333}.
\end{proof}

\noindent \noindent Combining Lemmas \ref{lem:free-energy-relative-form-convergence} and \ref{lem:product-cutoff-riesz-no-scaling}, we get that the difference of free energies is controlled by
\begin{align}\label{almostthereyes}
\frac{|F_\beta(H_n)-F_\beta(H_{n,M})|}{B_n}\lesssim (\lambda_{M+1}+1)^{-\frac{1}{4}} \max \left\{\Tr(\sigma_\beta(H_n)H_{0,n}),\Tr(\sigma_\beta(H_{n,M})H_{0,n})\right\}.
\end{align}
It remains to control each of the average energies appearing in the previous bound. We first bound them, via Lemma \ref{lem:thermodynamic-inequality}, by the difference between the free energies of the perturbed and unperturbed models. We then show in Lemmas \ref{lem:lower-bound-partition-gibbs} and \ref{lowerboundsnow} how to control this free-energy difference. The final energy bounds are stated in Lemma \ref{lem:uniform-H0-bound-abstract-detailed}.

\begin{lemma}
\label{lem:thermodynamic-inequality}
Let \(K\) be a self-adjoint operator on a Hilbert space, bounded from below with smooth partition function. Then, for every \(t>0\),
\begin{equation}
\label{eq:UK-logZ}
\Tr(\sigma_{t}(K)K)\le 2F_t(K)-F_{t/2}(K).
\end{equation}
\end{lemma}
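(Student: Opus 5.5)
The plan is to use convexity of $\log\mathcal Z$ in the inverse temperature. Write $\Phi(s):=\log\mathcal{Z}_s(K)=\log\Tr(e^{-sK})$ for $s>0$. Differentiating under the trace, which the smoothness hypothesis on the partition function permits, gives
\[
\Phi'(s)=-\Tr(\sigma_s(K)K).
\]
A second differentiation gives
\[
\Phi''(s)=\Tr(\sigma_s(K)K^2)-\Tr(\sigma_s(K)K)^2\ge 0,
\]
since this is the variance of $K$ in the state $\sigma_s(K)$. So $\Phi$ is convex on $(0,\infty)$, and the thermal energy $U(s):=\Tr(\sigma_s(K)K)=-\Phi'(s)$ is non-increasing in $s$.

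Next, apply convexity on the interval $[t/2,t]$. The mean value theorem (or the fundamental theorem of calculus together with monotonicity of $\Phi'$) gives
\[
\Phi(t)-\Phi(t/2)=\int_{t/2}^{t}\Phi'(s)\,ds\ge \tfrac{t}{2}\,\Phi'(t/2)\ge\tfrac t2\,\Phi'(t)\cdot 1 ,
\]
but the inequality we actually need runs the other way. Since $\Phi'$ is non-decreasing, $\Phi'(s)\le\Phi'(t)$ for all $s\le t$. Hence
\[
\Phi(t)-\Phi(t/2)\le \tfrac t2\,\Phi'(t)=-\tfrac t2\,U(t).
\]
Rearranging,
\[
U(t)\le \tfrac{2}{t}\bigl(\Phi(t/2)-\Phi(t)\bigr).
\]

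Finally, translate into free energies using $\Phi(t)=-tF_t(K)$ and $\Phi(t/2)=-\tfrac t2F_{t/2}(K)$. Substituting,
\[
\tfrac2t\bigl(\Phi(t/2)-\Phi(t)\bigr)=\tfrac2t\bigl(-\tfrac t2F_{t/2}(K)+tF_t(K)\bigr)=2F_t(K)-F_{t/2}(K),
\]
which is exactly \eqref{eq:UK-logZ}.

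The only delicate point is justifying the differentiation of $\Tr(e^{-sK})$ when $K$ is unbounded. Since $K$ is bounded below and $e^{-sK}$ is trace class for all $s>0$, one can argue spectrally: the sum $\sum_k e^{-sE_k}$ converges locally uniformly in $s>0$, and so do the sums weighted by $E_k$ and $E_k^2$, because $E_k^2e^{-sE_k}\le C_\delta e^{-(s-\delta)E_k}$. This allows termwise differentiation. The statement's assumption of a smooth partition function already covers this, so I expect no real obstacle beyond this bookkeeping.
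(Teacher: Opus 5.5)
Your argument is correct and is essentially the paper's proof: both show that $U(s)=\Tr(\sigma_s(K)K)$ is non-increasing because its derivative is minus the variance of $K$ in $\sigma_s(K)$, and then bound $U(t)$ by its average over $[t/2,t]$, which equals $\tfrac{2}{t}\bigl(\log\mathcal Z_{t/2}(K)-\log\mathcal Z_t(K)\bigr)=2F_t(K)-F_{t/2}(K)$. One cosmetic point: in the aside you discard, the last step $\tfrac t2\Phi'(t/2)\ge\tfrac t2\Phi'(t)$ is false, because $\Phi'$ is non-decreasing and so the inequality goes the other way; that line should simply be deleted.
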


\begin{proof}
Writing $U_K(t)=\Tr(\sigma_t(K)K)$,
we immediately get
\[
-\partial_t\log \mathcal Z_t(K)
=
-\frac{\partial_t\mathcal Z_t(K)}{\mathcal Z_t(K)}
=
\frac{\Tr(Ke^{-tK})}{\Tr(e^{-tK})}
=
U_K(t).
\]
Moreover, by the quotient rule,
\begin{align}
U_K'(t)
&=
\frac{\Tr(Ke^{-tK})^2-\Tr(K^2e^{-tK})\Tr(e^{-tK})}{\Tr(e^{-tK})^2}=
-\Tr(\sigma_{t}(K)K^2)+\bigl(\Tr(\sigma_{t}(K)K)\bigr)^2.
\end{align}
 Expanding \((K-U_K(t))^2\) yields $\Tr\bigl(\sigma_{t}(K)(K-U_K(t))^2\bigr)
=
\Tr(\sigma_{t}(K)K^2)-U_K(t)^2$. Thus
\[
U_K'(t)
=
-\Tr\bigl(\sigma_{t}(K)(K-U_K(t))^2\bigr)\le 0,
\]
This proves that \(t\mapsto U_K(t)\) is decreasing. Finally, fix \(t>0\). Since \(U_K\) is decreasing on \((0,\infty)\), we have
\[
U_K(t)\le \frac{2}{t}\int_{t/2}^t U_K(s)\,ds.
\]
Using \(U_K(s)=-\partial_s\log \mathcal Z_s(K)\), we obtain
\[
\frac{2}{t}\int_{t/2}^t U_K(s)\,ds
=
-\frac{2}{t}\int_{t/2}^t \partial_s\log \mathcal Z_s(K)\,ds
=
\frac{2}{t}\Bigl(\log \mathcal Z_{t/2}(K)-\log \mathcal Z_t(K)\Bigr),
\]
which is exactly \eqref{eq:UK-logZ}.
\end{proof}
\noindent Next, we control the free energy difference obtained in the previous Lemma. In what follows, we
denote the one-body $\rho_t(x):=\langle x|\sigma_t(h)|x\rangle$. Since \(h=-\Delta+|x|^2\), the kernel $\rho_t(x):=\langle x|\sigma_t(h)|x\rangle$ of \(e^{-th}\) is given by the Mehler formula. In particular, \(\rho_t(x)\) is smooth, strictly positive, and has Gaussian decay as \(|x|\to\infty\):
\[
\rho_t(x)\le C_t e^{-c_t|x|^2}
\]
for suitable constants \(c_t,C_t>0\). Therefore, for any $d\in\{2,3\}$,
\[
\widetilde I_{d,t}
:=
\int_{\mathbb R^{2d}}|w_d(x-y)|\,\rho_t(x)\rho_t(y)\,dx\,dy<\infty.
\]

\begin{lemma}[Upper bounds on the free energy via free Gibbs states]
\label{lem:lower-bound-partition-gibbs}Given $d\in\{2,3\}$, fix $t>0$. Then
\begin{equation}
\label{eq:lower-ZN-lemma}
F_t(H_n)\le nF_t(h)+B_n\widetilde I_{d,t},
\qquad \text{ and }\qquad 
F_t(H_{n,M})\le nF_t(h)+B_n\widetilde I_{d,t}.
\end{equation}
\end{lemma}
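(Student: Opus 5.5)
The plan is to use the Gibbs variational principle (equivalently, the Bogoliubov inequality \eqref{eq:PB} already used in Lemma \ref{lem:free-energy-relative-form-convergence}) with the free product Gibbs state as trial state. Set $\sigma_0:=\sigma_t(h)^{\otimes n}=\sigma_t(H_{0,n})$. Applying \eqref{eq:PB} with $X=-tH_{0,n}$ and $Y=-tW_n$ (resp.\ $Y=-tW_{n,M}$) gives
\[
F_t(H_n)\le F_t(H_{0,n})+\Tr(\sigma_0 W_n),\qquad F_t(H_{n,M})\le F_t(H_{0,n})+\Tr(\sigma_0 W_{n,M}).
\]
Here $\Tr(\sigma_0W_n)$ is understood as $\sum_k p_k W_n[\phi_k]$ over the eigenbasis of $\sigma_0$. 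This is legitimate because \eqref{eq111} shows $W_n$ is $H_{0,n}$-form bounded and $\Tr(\sigma_0 H_{0,n})<\infty$. Since $F_t(H_{0,n})=nF_t(h)$ by the tensor structure, it remains to bound the two interaction averages by $B_n\widetilde I_{d,t}$.

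\textbf{The untruncated case.} The one-particle reduced densities of the product state $\sigma_0$ are $\rho_t$, and its two-particle density on coordinates $(x_i,x_j)$ is $\rho_t(x_i)\rho_t(x_j)$. Hence
\[
\Tr(\sigma_0 W_{ij})=\int w_d(x-y)\rho_t(x)\rho_t(y)\,dx\,dy\le\widetilde I_{d,t},
\]
and summing with $|\alpha_{n,i,j}|$ gives $\Tr(\sigma_0 W_n)\le B_n\widetilde I_{d,t}$. To make this rigorous I would write $\Tr(\sigma_0 W_{ij})=\int w_d(x_i-x_j)\langle x|\sigma_0|x\rangle\,dx$ via Tonelli applied to the positive and negative parts of $w_d$, using the smooth Mehler kernel.

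\textbf{The truncated case.} This is the main obstacle. Because $\Pi_M$ commutes with $H_{0,n}$, it also commutes with $\sigma_0$, so
\[
\Tr(\sigma_0 W_{n,M})=\Tr(\sigma_0\Pi_M W_n\Pi_M)=\Tr(\Pi_M\sigma_0\Pi_M W_n).
\]
Moreover $\Pi_M\sigma_0\Pi_M=(P_M\sigma_t(h)P_M)^{\otimes n}$ is again a product of one-body operators. Its diagonal kernel is $\tilde\rho(x):=\langle x|P_M\sigma_t(h)P_M|x\rangle=\sum_{k\le M}p_k|\phi_k(x)|^2\le\rho_t(x)$, with the $\phi_k$ the eigenfunctions of $h$. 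The same computation therefore yields
\[
\Tr(\sigma_0W_{n,M})\le\sum_{i<j}|\alpha_{n,i,j}|\int|w_d(x-y)|\,\tilde\rho(x)\tilde\rho(y)\,dx\,dy\le B_n\widetilde I_{d,t}
\]
by pointwise monotonicity. One subtlety remains: \eqref{eq:PB} requires the operator to be self-adjoint, and $W_{n,M}$ is form bounded through \eqref{eq111} applied to $\Pi_M\psi$. To handle it, I would either approximate by bounded cutoffs of $w_d$ and pass to the limit by monotone convergence, or apply the Gibbs variational principle directly at the level of forms.
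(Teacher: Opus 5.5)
Your proposal is correct and follows the paper's proof essentially step by step: the Bogoliubov inequality with the free product Gibbs state $\sigma_t(H_{0,n})$ as reference, the two-body marginal $\rho_t(x)\rho_t(y)$ bounded against $|w_d|$ and summed to give $B_n\widetilde I_{d,t}$, and, for the truncated case, the commutation of $\Pi_M$ with $\sigma_t(H_{0,n})$. The paper closes the truncated case with the operator inequality $0\le\Pi_M\sigma_t(H_{0,n})\Pi_M\le\sigma_t(H_{0,n})$ together with $|\Tr(AB)|\le\Tr(A|B|)$, rather than your pointwise bound $\tilde\rho\le\rho_t$. Your marginal computation actually carries an extra factor $\Tr(P_M\sigma_t(h))^{n-2}\le 1$, which is harmless. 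Also, since $\Pi_M$ has finite rank, $W_{n,M}=\Pi_M W_n\Pi_M$ is a bounded operator, so your closing self-adjointness concern only applies to $W_n$.
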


\begin{proof}

By a use of the Bogoliubov inequality \eqref{eq:PB}, with $X=-tH_{0,n}$ and $Y=-t(H_n-H_{0,n})$,
\[
F_t(H_{n})
\le
{n}F_t(h)+\Tr(\sigma_t(H_{0,n}) W_n).
\]
Next, we estimate the interaction term: for each pair \(i<j\) one has
\[
\Tr\bigl(\sigma_t(H_{0,n})\, w_d(x_i-x_j)\bigr)
=
\int_{\R^{2d}} w_d(x-y)\rho_t(x)\rho_t(y)\,dx\,dy,
\]
independently of $i<j$. Hence
\begin{align}
\Tr(\sigma_t(H_{0,n}) W_n)
&=
\sum_{1\le i<j\le n}\alpha_{n,i,j}
\int_{\R^{2d}} w_d(x-y)\rho_t(x)\rho_t(y)\,dx\,dy\nonumber\\
&\le
\sum_{1\le i<j\le n}|\alpha_{n,i,j}|
\int_{\R^{2d}} |w_d(x-y)|\,\rho_t(x)\rho_t(y)\,dx\,dy
\nonumber\\
&=
B_n\widetilde I_{d,t}.\nonumber
\end{align}
The proof for \(H_{n,M}\) is almost identical: by \eqref{eq:PB},
\[
F_t(H_{n,M})\le nF_t(h)+\Tr(\sigma_t(H_{0,n}) W_{n,M}).
\]
Then, since \(\Pi_M\) commutes with \(H_{0,n}\), \(\Pi_M\) also commutes with \(\sigma_t(H_{0,n})\) and
\[
\Tr(\sigma_t(H_{0,n}) W_{n,M})
=
\sum_{1\le i<j\le n}\alpha_{n,i,j}\Tr\bigl(\Pi_M\sigma_t(H_{0,n})\Pi_M\, w_d(x_i-x_j)\bigr).
\]
Since \(\Pi_M\sigma_t(H_{0,n})\Pi_M\ge0\) is trace class, we may use $\bigl|\Tr(AB)\bigr|\le \Tr(A|B|)$
 for $A\ge0$ trace class to get
\[
\bigl|\Tr\bigl(\Pi_M\sigma_t(H_{0,n})\Pi_M\, w_d(x_i-x_j)\bigr)\bigr|
\le
\Tr\bigl(\Pi_M\sigma_t(H_{0,n})\Pi_M\, |w_d(x_i-x_j)|\bigr).
\]
Again, since \(\Pi_M\) commutes with \(\sigma_t(H_{0,n})\), one has
\[
0\le \Pi_M\sigma_t(H_{0,n})\Pi_M=\sigma_t(H_{0,n})\Pi_M\le \sigma_t(H_{0,n}).
\]
Hence
\[
\Tr\bigl(\Pi_M\sigma_t(H_{0,n})\Pi_M\, |w_d(x_i-x_j)|\bigr)
\le
\Tr\bigl(\sigma_t(H_{0,n})\, |w_d(x_i-x_j)|\bigr)
=
\widetilde I_{d,t}
\]
which proves the second bound.
\end{proof}
\noindent Similarly, we derive lower bounds on the free energies $F_t(H_n)$ and $F_t(H_{n,M})$:
\begin{lemma}[Lower bounds on the free energy via free Gibbs states]\label{lowerboundsnow}
For $t>0$ and $d\in\{2,3\}$ we have
\begin{align}
F_t(H_n),\, F_t(H_{n,M})\ge 
 nF_{t/2}(h)- CA_n B_n\,.
\end{align}
For some universal constant $C\ge 0.$
\end{lemma}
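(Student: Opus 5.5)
The plan is to prove an operator lower bound $H_n\ge (1-\theta)H_{0,n}-C_\theta A_nB_n$ with a small $\theta\in(0,1)$ chosen in terms of $t$. Monotonicity of the trace then converts this into a free-energy bound, and an explicit harmonic-oscillator computation turns $(1-\theta)F_{t(1-\theta)}(h)$ into $F_{t/2}(h)$ without losing the factor $n$. The constant $C$ will depend on $t$ (and $d$) but not on $n,M$ or the couplings; this is the sense in which I read ``universal''.

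\textbf{Step 1 (form bound).} From \eqref{usefulestimate111} in $d=3$ we have $|w_3(x_i-x_j)|\le \eta(h_i+h_j)+(2\eta)^{-1}$ for every $\eta>0$. In $d=2$ I would redo the three-term splitting $|\log r|\lesssim 1+r^{-\epsilon}+r^{\epsilon}$ from the proof of Lemma~\ref{lem:product-cutoff-riesz-no-scaling} with a parameter $\eta$. Each piece is dominated by a positive power of $h_i+h_j$ strictly below $1$: for $r^{-\epsilon}$ via the Sobolev step, for $r^\epsilon$ via $|x|^\epsilon\le \eta|x|^2+C\eta^{-\epsilon/(2-\epsilon)}$. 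Young's inequality then gives $|w_2(x_i-x_j)|\le \eta(h_i+h_j)+C\eta^{-p}$ for some fixed $p\ge 1$. Multiplying by $|\alpha_{n,i,j}|$ and summing,
\[
\sum_{i<j}|\alpha_{n,i,j}|\,|w_d(x_i-x_j)|\le \eta A_n H_{0,n}+C\eta^{-p}B_n ,
\]
since each $h_i$ appears with total weight $\sum_{j\ne i}|\alpha_{n,i,j}|\le A_n$. Choosing $\eta=\theta/A_n$ gives $W_n\ge-\theta H_{0,n}-C_\theta A_n^pB_n$. For $d=3$ this is exactly $C_\theta A_nB_n$ with $C_\theta=1/(2\theta)$. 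In $d=2$ I would try to keep $p=1$ by tracking the constants in the splitting. For $W_{n,M}$ the same bound holds: apply it to $\Pi_M\psi$ and use $\langle\Pi_M\psi,H_{0,n}\Pi_M\psi\rangle\le\langle\psi,H_{0,n}\psi\rangle$ together with $\|\Pi_M\psi\|\le\|\psi\|$.

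\textbf{Step 2 (free energy).} Operator monotonicity of $\Tr e^{-tK}$ gives
\[
F_t(H_n)\ge F_t\bigl((1-\theta)H_{0,n}\bigr)-C_\theta A_nB_n=\frac{n}{t}\,g\bigl(t(1-\theta)\bigr)-C_\theta A_nB_n,
\]
where $g(\beta):=-\log\mathcal Z_\beta(h)=d\log(2\sinh\beta)$, using $\spec(h)=\{d+2k\}$. The same inequality holds for $F_t(H_{n,M})$. The target is $nF_{t/2}(h)=\frac{2n}{t}g(t/2)$, so it suffices to find $\theta$ with $g(t(1-\theta))\ge 2g(t/2)$. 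At $\theta=0$ this reads $2\sinh t\ge (2\sinh(t/2))^2$, which is equivalent to $\cosh(t/2)\ge\sinh(t/2)$ and hence holds strictly. By continuity some $\theta=\theta(t)\in(0,1)$ works, and fixing it fixes $C=C_{\theta(t)}$. If $g(t/2)\le 0$ one can instead simply take $\theta=1/2$.

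\textbf{Main obstacle.} The delicate point is Step 1 in $d=2$: one must make sure the constant in front of $B_n$ scales like $\eta^{-1}$, so that the result is $A_nB_n$ and not a higher power of $A_n$. I would first try to bound the log singularity directly by Hardy-type control of the relative variable, $\int|r|^{-\epsilon}|u|^2\le \eta\|\nabla_r u\|^2+C\eta^{-\epsilon/(2-\epsilon)}\|u\|^2$. Since $\epsilon/(2-\epsilon)<1$, any exponent below $1$ is absorbed using $A_n\ge\alpha_{n,i,j}$ bounds, after possibly enlarging $C$ by a factor depending only on a normalization.
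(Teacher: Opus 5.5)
Your route is the paper's: a form bound $W_n\ge-\theta H_{0,n}-C_\theta A_nB_n$ with $\eta=\theta/A_n$, followed by monotonicity of $\Tr e^{-tK}$. The $d=3$ case and the passage to $H_{n,M}$ are fine.

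\textbf{The gap: Step~1 for $d=2$.} It sits at the opposite end from the one you flagged. Large couplings correspond to $\eta\to0$. There a constant $C\eta^{-q}$ with $q<1$ is harmless, since $\eta^{-q}\le\eta^{-1}$ for $\eta\le1$. The problematic regime is small coupling, $A_n<\theta$, i.e.\ $\eta>1$. There the splitting $|\log r|\lesssim 1+r^{-\epsilon}+r^{\epsilon}$ plus Young only gives $|w_2|\le\eta(h_i+h_j)+C(1+\eta^{-q})$ with $q=\epsilon/(2-\epsilon)$. After summation this yields $W_n\ge-\theta H_{0,n}-C\bigl(B_n+\theta^{-q}A_n^{q}B_n\bigr)$, and neither term is $\lesssim A_nB_n$ as $A_n\to0$.

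Your proposed absorption (``any exponent below $1$ is absorbed using $A_n\ge\alpha_{n,i,j}$'') fails here. Both $A_n^q\le A_n$ and $1\le A_n$ hold only when $A_n\ge1$, and $A_n$ is given, so there is no normalization to choose.

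The missing ingredient is the positive ground energy of the trap, $h_i+h_j\ge 2d$. For $\eta\ge1$, apply your bound with $\eta/2$ in place of $\eta$. Then use $\frac{\eta}{2}(h_i+h_j)\ge d\eta$, so the leftover constant $C(1+(\eta/2)^{-q})-d\eta$ is $\le C'\eta^{-1}$. This gives $|w_2(x_i-x_j)|\le\eta(h_i+h_j)+C\eta^{-1}$ for \emph{all} $\eta>0$, which is the content of Lemma~\ref{lem:two-sided-log-form-bound}. Alternatively, keep a strictly positive margin in Step~2 and use $B_n\le\frac n2A_n$ to absorb $CB_n$ into the $O(n)$ slack when $A_n$ is small. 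With either fix your argument closes.

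\textbf{Step~2 improves on the paper.} The paper takes $\theta=\frac12$ and concludes $F_t(H_n)\ge nF_{t/2}(h)-CA_nB_n$ from $\Tr e^{-tH_n}\le e^{tCA_nB_n}\Tr e^{-tH_{0,n}/2}$. But $-\frac1t\log\Tr e^{-tH_{0,n}/2}=\frac n2F_{t/2}(h)$, so that argument yields the stated bound only when $F_{t/2}(h)\le0$, i.e.\ $2\sinh(t/2)\le1$ — exactly your fallback case. Your continuity argument via $\cosh(t/2)>\sinh(t/2)$ covers the remaining $t$.

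\textbf{The $t$-dependence of $C$ is necessary.} Take $n=2$, $d=3$ and $\alpha_{2,1,2}=-a$. A $t$-independent constant would give, as $t\to\infty$, $E_0(H_2)\ge 2d-Ca^2$. Testing with the Gaussian ground state gives $E_0(H_2)\le 2d-c_0a$ with $c_0>0$, a contradiction for small $a$. Consistently, your $C_{\theta(t)}\sim1/\theta(t)$ blows up as $t\to\infty$. Downstream the lemma is only used at a fixed $t$ proportional to $\beta$, so a $\beta$-dependent constant is harmless.
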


\begin{proof}

We start with the case \(d=3\), so $w_3(x)=|x|^{-1}\ge0$. For every \(\eta>0\), the estimate proved in \eqref{usefulestimate111} gives
\[
\frac{1}{|x_i-x_j|}
\le
\eta\bigl(-\Delta_{x_i}-\Delta_{x_j}+|x_i|^2+|x_j|^2\bigr)+(2\eta)^{-1}
\]
as quadratic forms. Therefore
\[
\alpha_{n,i,j}|x_i-x_j|^{-1}
\ge
-|\alpha_{n,i,j}|\Bigl(\eta(h_i+h_j)+(2\eta)^{-1}\Bigr).
\]
Summing over all pairs \(1\le i<j\le n\), we obtain
\[
W_n
\ge
-\eta\sum_{1\le i<j\le n}|\alpha_{n,i,j}|(h_i+h_j)-(2\eta)^{-1} B_n.
\]
Now
\[
\sum_{1\le i<j\le n}|\alpha_{n,i,j}|(h_i+h_j)
=
\sum_{i=1}^n\Bigl(\sum_{j\in[n]\setminus\{i\}}|\alpha_{n,i,j}|\Bigr)h_i
\le
A_n H_{0,n}.
\]
Choosing \(\eta = \frac{1}{2A_n},\) we get
\begin{equation}
\label{eq:HN-lower-d3-general}
H_n=H_{0,n}+W_n \ge (1- \eta A_n)H_{0,n} - (2\eta)^{-1}B_n = \frac{1}{2}H_{0,n} - A_nB_n
\end{equation}
Exactly the same estimate holds for \(H_{n,M}\). Indeed, since \(\Pi_M\) commutes with \(H_{0,n}\), from the same two-body bound we obtain
\[
\Pi_M\,|x_i-x_j|^{-1}\,\Pi_M
\le
\eta\,\Pi_M(h_i+h_j)\Pi_M+(2\eta)^{-1}\Pi_M
\le
\eta(h_i+h_j)+(2\eta)^{-1},
\]
because \(0\le \Pi_M\le 1\) and \(\Pi_M(h_i+h_j)\Pi_M\le h_i+h_j\) on the form domain. Summing again over pairs yields and taking the same choice for $\eta$ yields 
\begin{equation}
\label{eq:HNM-lower-d3-general}
H_{n,M}\ge \frac{1}{2}H_{0,n}- A_n B_n.
\end{equation}
Now \eqref{eq:HN-lower-d3-general} together with the monotonicity of $A\mapsto \Tr(e^A)$ implies
\[
\Tr(e^{-tH_n}),\,\Tr(e^{-tH_{n,M}})\le e^{t A_n B_n}\Tr(e^{-t H_{0,n}/2}),
\]
which directly yields
\[
F_t(H_n),\,F_t(H_{n,M})\ge - A_n B_n+ nF_{t/2}(h).
\]

\noindent
Now assume \(d=2\), so
$w_2(x)=-\log|x|$. This kernel is not pointwise nonnegative, so the previous argument must be modified. We use the following bound, whose proof we postpone to Lemma \ref{lem:two-sided-log-form-bound} below: for any $\eta>0$ we have that 
\begin{equation}
|\,\log|x-y|\,|
\le
\eta(h_x+h_y)+C \,\eta^{-1}
\end{equation}
as quadratic forms on \(L^2(\R^2\times\R^2)\) and for some universal constant $C\ge 0.$
\noindent As a consequence, if
\[
W_n=\sum_{1\le i<j\le n}\alpha_{n,i,j}\,(-\log|x_i-x_j|),
\qquad
B_n:=\sum_{1\le i<j\le n}|\alpha_{n,i,j}|,
\]
then for every \(\eta>0\),
\begin{equation}
\label{eq:signed-log-lower-bound}
W_n
\ge
-\eta\sum_{1\le i<j\le n}|\alpha_{n,i,j}|(h_i+h_j)-C \,\eta^{-1} B_n\ge -\eta A_n H_{0,n}-C \,\eta^{-1}B_n
\end{equation}
as quadratic forms. Thus, we obtain
\begin{equation}
\label{eq:HN-lower-d2}
H_n=H_{0,n}+W_n \ge
\big(1-\eta A_n\big)H_{0,n}-C \,\eta^{-1} B_n.
\end{equation}
Choosing once again $\eta = \frac{1}{2A_n}$ we see that \eqref{eq:HN-lower-d2} becomes
\begin{align}\label{fordequals2222}
H_n\ge  \frac{1}{2}H_{0,n}- 2C A_nB_n.
\end{align}
Again, exactly the same estimate holds for \(H_{n,M}\), using that \(\Pi_M\) commutes with \(H_{0,n}\). Thus, \eqref{eq:HN-lower-d2} together with the monotonicity of $A\mapsto \Tr(e^A)$ implies
\[
\Tr(e^{-tH_n}),\,\Tr(e^{-tH_{n,M}})\le e^{t 2C A_nB_n}\Tr(e^{-t H_{0,n}/2}),
\]
which directly yields
\begin{equation}
\label{eq:ZN-upper-d2-detailed}
F_t(H_n),\,F_t(H_{n,M})\ge -2CA_n B_n    + nF_{t/2}(h).
\end{equation}

\end{proof}

\begin{lemma}[Two-sided logarithmic form bound in dimension \(2\)]
\label{lem:two-sided-log-form-bound}
For \(\eta>0\) we have 
\begin{equation}
\label{eq:two-sided-log-form-bound}
\bigl|\langle u,(-\log|x-y|)u\rangle\bigr|
\le
\eta\,\langle u,(h_x+h_y)u\rangle + C \,\eta^{-1}\|u\|^2
\qquad\text{ for all }u\in Q(h_x+h_y),
\end{equation}
where \(h=-\Delta+|x|^2\) on \(L^2(\R^2)\) and for some universal constant $C\ge 0.$  Equivalently,
\begin{equation}
\label{eq:two-sided-log-operator-bound}
-|\,\log|x-y|\,|
\le
-\log|x-y|
\le
|\,\log|x-y|\,|
\le
\eta(h_x+h_y)+C \,\eta^{-1}
\end{equation}
as quadratic forms on \(L^2(\R^2\times\R^2)\).
\end{lemma}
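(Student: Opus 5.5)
The plan is to split $|\log|x-y||=(\log|x-y|)_++(\log|x-y|)_-$ and handle the two singular regimes separately. The large-distance part $(\log|x-y|)_+$ is controlled by the trap. The short-distance part $(\log\frac1{|x-y|})_+$ is controlled by the kinetic energy via a scaling argument. Because of the $C\eta^{-1}$ term, the regime $\eta\to0$ is the one that needs care. The regime $\eta\to\infty$ follows from the spectral lower bound $h_x+h_y\ge 4$: in $d=2$ the bottom of the spectrum of $h$ is $2$.

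\emph{Step 1 (reduction to $\eta\le\eta_0$).} Suppose the bound holds with some constant $C_0$ at a fixed $\eta_0$, for instance $\eta_0=1$, i.e.\ $|\log|x-y||\le h_x+h_y+C_0$. Then for $\eta\ge \eta_1:=1+C_0/4$, using $C_0\le \frac{C_0}{4}(h_x+h_y)$, we get $|\log|x-y||\le \eta_1(h_x+h_y)\le \eta(h_x+h_y)$. This is the claim for $\eta\ge\eta_1$, and the constant term is not even needed there. For $\eta\in[1,\eta_1]$ the claim follows from the $\eta=1$ case after enlarging $C$ by the factor $\eta_1$. It therefore suffices to prove the bound for $0<\eta\le 1$ with a uniform constant.

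\emph{Step 2 (large distances).} Use $(\log r)_+\le r$ and $|x-y|\le |x|+|y|$. Young's inequality gives $|x|\le \eta|x|^2+\frac1{4\eta}$, and $|x|^2\le h_x$. Hence $(\log|x-y|)_+\le \eta(h_x+h_y)+\frac1{2\eta}$ as multiplication operators, and therefore as forms.

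\emph{Step 3 (short distances, the main step).} Pass to relative and centre-of-mass coordinates $r=(x-y)/\sqrt2$ and $s=(x+y)/\sqrt2$, so that $-\Delta_x-\Delta_y=-\Delta_r-\Delta_s\ge -\Delta_r$. It then suffices to prove the one-variable bound on $L^2(\R^2_r)$
\[
V(r):=\bigl(\log\tfrac{1}{\sqrt2|r|}\bigr)_+\le \eta(-\Delta_r)+C\eta^{-1},\qquad 0<\eta\le1,
\]
and to integrate it in $s$. For $\eta=1$ this is a standard consequence of $V\in L^p_{\rm loc}$ with $V$ compactly supported. It can be proved exactly as in the $|r|^{-\varepsilon}$ argument of Lemma~\ref{lem:product-cutoff-riesz-no-scaling}, using H\"older and the Sobolev embedding $H^1(\R^2)\hookrightarrow L^{2q}$ together with an $\epsilon$-interpolation. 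This gives $-\Delta-V\ge -c_0$.

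For general $\eta$, rescale $r=\sqrt\eta\,\tilde r$. Under this unitary dilation $-\eta\Delta_r$ becomes $-\Delta_{\tilde r}$, while $V(\sqrt\eta\,\tilde r)\le \frac12\log\frac1\eta+V(\tilde r)$, because $(a+b)_+\le a_++b_+$ and $\log\frac1{\sqrt\eta}\ge0$. Hence $-\eta\Delta_r-V\ge -c_0-\frac12\log\frac1\eta$. Finally, $\frac12\log\frac1\eta\le \frac{1}{2\eta}$ for $\eta\le1$, so $V\le\eta(-\Delta_r)+(c_0+\frac12)\eta^{-1}$.

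\emph{Step 4 (conclusion).} Add Steps 2 and 3 and use $-\Delta_x-\Delta_y\le h_x+h_y$. This gives $|\log|x-y||\le 2\eta(h_x+h_y)+C'\eta^{-1}$ for $\eta\le 1$. Replacing $\eta$ by $\eta/2$ yields \eqref{eq:two-sided-log-form-bound} for $\eta\le 2$ with $C=4C'$, and Step 1 covers the remaining range. The operator chain \eqref{eq:two-sided-log-operator-bound} is then immediate, since $\pm\log|x-y|\le|\log|x-y||$ pointwise.

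The only delicate point is Step 3. The singularity must cost only $\log(1/\eta)$ rather than a power of $\eta^{-1}$, and the dilation argument is what delivers this.
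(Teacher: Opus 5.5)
Your proof is correct. Its overall architecture matches the paper's: relative coordinates, the trap controls large separations, and the kinetic energy controls small ones. The key short-distance step, however, is handled differently.

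The paper bounds $|\log|r||\lesssim |r|^{-1/2}$ on $|r|\le 1$ and applies H\"older with $|r|^{-1/2}\in L^2(B_1)$. It then uses the Gagliardo--Nirenberg (Ladyzhenskaya) inequality $\|f\|_{L^4}^2\lesssim \|\nabla f\|_2\|f\|_2$ and Young's inequality with parameter $\eta$. This yields the $\eta^{-1}$ constant for the singular part directly and uniformly for all $\eta>0$.

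You instead prove a single-scale bound $V\le -\Delta+c_0$ and transfer it to every $\eta\le 1$ by dilation. This uses that the logarithm changes only by the additive shift $\frac12\log\frac1\eta$ under rescaling. Your route is softer, since it needs only qualitative form-boundedness at one scale. It also shows the singularity really costs only $\log(1/\eta)$. But it gives nothing useful for $\eta>1$, which is why you need the reduction in Step~1 via $h_x+h_y\ge 4$.

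That same reduction is implicitly required in the paper's argument. There it absorbs the additive constant in $|\log|x-y||\le C+|\log|r||$ (with $C=\log\sqrt2$) when $\eta$ is large, a point the paper passes over. By treating $(\log|x-y|)_+$ in the original variables and defining $V(r)=(\log\frac{1}{\sqrt2|r|})_+$ exactly, you avoid that constant altogether.

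Your large-distance estimate, via $(\log t)_+\le t$ and $|x|\le \eta|x|^2+\frac{1}{4\eta}$, is the same idea as the paper's bound $\log t\le \eta t^2+\eta^{-1}$.
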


\begin{proof}
Let \(u\in Q(h_x+h_y)\) and introduce the orthogonal change of variables
$r:=2^{-1/2}(x-y)$, $s:=2^{-1/2}(x+y)$, and define $v(r,s):=u\big(\frac{r+s}{\sqrt2},\frac{s-r}{\sqrt2}\big)$. Moreover, the map \((x,y)\mapsto(r,s)\) satisfies
\[
\|v\|_{L^2(\R^4)}=\|u\|_{L^2(\R^4)},
\qquad
-\Delta_x-\Delta_y=-\Delta_r-\Delta_s,
\qquad
|x|^2+|y|^2=|r|^2+|s|^2.
\]
Hence
\begin{equation}
\label{eq:energy-rs}
\langle u,(h_x+h_y)u\rangle
=
\int_{\R^4}\Bigl(|\nabla_r v|^2+|\nabla_s v|^2+(|r|^2+|s|^2)|v|^2\Bigr)\,dr\,ds.
\end{equation}
Also, since \(|x-y|=\sqrt2\,|r|\),
\[
|\log|x-y||
\le
C+|\log|r||.
\]
Therefore it suffices to prove that for every \(\eta>0\),
\begin{equation}
\label{eq:goal-r}
\int_{\R^4}|\log|r||\,|v(r,s)|^2\,dr\,ds
\le
\eta\,\langle u,(h_x+h_y)u\rangle + C \eta^{-1}\|u\|^2.
\end{equation}
We split the integral into the regions \(|r|\le 1\) and \(|r|>1\). For the region \(|r|>1\), we use that for every \(\eta>0\) we can bound
\[
\log t= \frac{1}{2}\log\left(\eta t^2/2\right) + \frac{1}{2}\log(2/\eta) \le \eta t^2 +  \eta^{-1},
\]
where for the last inequality we have used the elementary bound $\log x \le x.$
Hence
\begin{align}
\int_{|r|>1}|\log|r||\,|v|^2=
\int_{|r|>1}\log|r|\,|v|^2\le
\eta\int_{\R^4}|r|^2|v|^2\,dr\,ds
+
\eta^{-1}\|v\|^2.
\label{eq:large-r-log}
\end{align}
For the region \(|r|\le1\), fix \(0<\delta<2\). Since
\[
-\log t\le C_\delta\, t^{-\delta}
\qquad (0<t\le1),
\]
we obtain
\[
\int_{|r|\le1}|\log|r||\,|v(r,s)|^2\,dr\,ds
\le
C_\delta\int_{\R^2_s}\int_{|r|\le1}|r|^{-\delta}|v(r,s)|^2\,dr\,ds.
\]
Fix \(s\in\R^2\), and set \(f_s(r):=v(r,s)\). Choose \(p>1\) such that \(\delta p<2\), and let \(q=\frac{p}{p-1}\). Since \(|r|^{-\delta}\in L^p(B_1(\R^2))\), Hölder's inequality gives
\[
\int_{|r|\le1}|r|^{-\delta}|f_s(r)|^2\,dr
\le
\||r|^{-\delta}\|_{L^p(B_1(\R^2))}\|f_s\|_{L^{2q}(\R^2)}^2.
\]
Now by the Gagliardo--Nirenberg inequality in dimension \(2\),
\[
\|f_s\|_{L^{2q}(\R^2)}^2
\le
C_q\|\nabla_r f_s\|_{L^2(\R^2)}^{2/p}\|f_s\|_{L^2(\R^2)}^{2(1-1/p)}.
\]
Using Young's inequality, for every \(\eta>0\),
\[
\|\nabla_r f_s\|_{L^2(\mathbb{R}^2)}^{2/p}\|f_s\|_{L^2(\mathbb{R}^2)}^{2(1-1/p)}
\le
\frac{\eta}{C_\delta C_q\||r|^{-\delta}\|_{L^p(B_1(\R^2))}}\,\|\nabla_r f_s\|_{L^2(\mathbb{R}^2)}^2 + c_{p,\delta} \,\eta^{-1/(p-1)}\|f_s\|_{L^2(\mathbb{R}^2)}^2.
\]
Therefore, choosing $\delta = 1/2$ and $p= 2,$ yields after integrating in \(s\),
\begin{equation}
\label{eq:small-r-log}
\int_{|r|\le1}|\log|r||\,|v(r,s)|^2\,dr\,ds
\le
\eta\|\nabla_r v\|_{L^2(\R^4)}^2 + C'\, \eta^{-1}\|v\|_{L^2(\R^4)}^2.
\end{equation}
Combining \eqref{eq:large-r-log} and \eqref{eq:small-r-log}, and using \eqref{eq:energy-rs}, we get
\begin{align*}
\int_{\R^4}|\log|r||\,|v(r,s)|^2\,dr\,ds
&\le
\eta\Bigl(\|\nabla_r v\|_{L^2(\R^4)}^2+\||r|v\|_{L^2(\R^4)}^2\Bigr)+ C \eta^{-1}\|v\|^2\\
&\le
\eta\,\langle u,(h_x+h_y)u\rangle + C \eta^{-1}\|u\|^2,
\end{align*}
which proves \eqref{eq:goal-r}, hence \eqref{eq:two-sided-log-form-bound}. The operator inequalities in \eqref{eq:two-sided-log-operator-bound} follow immediately.
\end{proof}

\noindent We are ready to prove the required bounds on the average energy appearing in \eqref{almostthereyes}:

\begin{lemma}[Bounding the average oscillator energy]
\label{lem:uniform-H0-bound-abstract-detailed}
Fix \(d\in\{2,3\}\). Then there exists a constants \(C_\beta<\infty\) and independent of \(n\) and \(M\), such that
\begin{align*}
\Tr(\sigma_\beta(H_{n}) H_{0,n})&\le C_\beta\,\left(n + (A_n+1)B_n\right), \\
\Tr(\sigma_\beta(H_{n,M}) H_{0,n})&\le C_\beta\,\left(n + (A_n+1)B_n\right).
\end{align*}
\end{lemma}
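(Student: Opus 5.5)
We bound $\Tr(\sigma_\beta(K)H_{0,n})$ for $K\in\{H_n,H_{n,M}\}$ by first converting it into a bound on the internal energy $\Tr(\sigma_\beta(K)K)$, and then controlling that energy through free energies using Lemmas \ref{lem:thermodynamic-inequality}, \ref{lem:lower-bound-partition-gibbs} and \ref{lowerboundsnow}.

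\textbf{Step 1: operator lower bound.} From the proof of Lemma \ref{lowerboundsnow}, see \eqref{eq:HN-lower-d3-general}, \eqref{eq:HNM-lower-d3-general} and \eqref{fordequals2222}, we have as quadratic forms
\[
K\ge \tfrac12 H_{0,n}-C A_nB_n .
\]
Taking the expectation in the Gibbs state gives
\[
\Tr(\sigma_\beta(K)H_{0,n})\le 2\Tr(\sigma_\beta(K)K)+2CA_nB_n .
\]
This is legitimate because $\sigma_\beta(K)$ has range in the form domain and its spectral decomposition has finite energy, since $e^{-\beta K}$ maps into $D(K)$.

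\textbf{Step 2: energy via free energies.} Apply Lemma \ref{lem:thermodynamic-inequality} with $t=\beta$:
\[
\Tr(\sigma_\beta(K)K)\le 2F_\beta(K)-F_{\beta/2}(K).
\]
The smoothness of $t\mapsto\mathcal Z_t(K)$ follows from the trace-class property of $e^{-tK}$ for every $t>0$, which in turn follows from the relative form bound. For the first term, Lemma \ref{lem:lower-bound-partition-gibbs} gives
\[
F_\beta(K)\le nF_\beta(h)+B_n\widetilde I_{d,\beta}.
\]
For the second term, Lemma \ref{lowerboundsnow} with $t=\beta/2$ gives
\[
-F_{\beta/2}(K)\le -nF_{\beta/4}(h)+CA_nB_n .
\]
Combining,
\[
\Tr(\sigma_\beta(K)K)\le n\bigl(2F_\beta(h)-F_{\beta/4}(h)\bigr)+2\widetilde I_{d,\beta}B_n+CA_nB_n .
\]

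\textbf{Step 3: collect constants.} The one-body free energy is explicit,
\[
F_t(h)=d\,t^{-1}\log\bigl(2\sinh t\bigr)
\]
(or, in the paper's normalization, the sum over the spectrum of $h$). Hence $|2F_\beta(h)-F_{\beta/4}(h)|$ is a finite constant depending only on $\beta$ and $d$. Likewise $\widetilde I_{d,\beta}<\infty$ by the Gaussian decay of $\rho_\beta$. Inserting these into Step 1 yields
\[
\Tr(\sigma_\beta(K)H_{0,n})\le C_\beta\bigl(n+(A_n+1)B_n\bigr),
\]
with $C_\beta$ independent of $n$ and $M$. The argument is identical for $K=H_{n,M}$, since both cited lemmas cover it. A negative value of $2F_\beta(h)-F_{\beta/4}(h)$ only helps; we bound it by its absolute value.

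\textbf{Main obstacle.} The delicate point is justifying Step 1 at the level of Gibbs expectations for unbounded forms, namely that $\Tr(\sigma_\beta(K)H_{0,n})$ is finite and that the form inequality can be traced against $\sigma_\beta(K)$. We would handle this by writing $\sigma_\beta(K)=\sum_k p_k|\phi_k\rangle\langle\phi_k|$ with $\phi_k\in D(K)\subset Q(H_{0,n})$, applying the form inequality to each $\phi_k$, and summing. Convergence of the sum is guaranteed by
\[
\sum_k p_k\,\langle\phi_k,(K+CA_nB_n)\phi_k\rangle<\infty ,
\]
which holds because $\sum_k e^{-\beta E_k}|E_k|<\infty$, so monotone convergence applies.
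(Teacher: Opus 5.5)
Your proposal is correct and follows essentially the same route as the paper: Lemma~\ref{lem:thermodynamic-inequality} at $t=\beta$, the upper bound of Lemma~\ref{lem:lower-bound-partition-gibbs} for $F_\beta(K)$, the lower bound of Lemma~\ref{lowerboundsnow} at $t=\beta/2$ for $F_{\beta/2}(K)$, and then the form bound $K\ge \tfrac12 H_{0,n}-CA_nB_n$ to pass from $\Tr(\sigma_\beta(K)K)$ to $\Tr(\sigma_\beta(K)H_{0,n})$. Your justification for tracing the form inequality against the Gibbs state through its spectral decomposition is a detail the paper leaves implicit.
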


\begin{proof}
 Using the upper bound for \(F_\beta(H_n)\) derived in Lemma \ref{lem:lower-bound-partition-gibbs} and the lower bound for \(F_\beta(H_n)\) from Lemma \ref{lowerboundsnow}, combined with Lemma \ref{lem:thermodynamic-inequality} with \(K=H_n\), we get 
\[
\Tr(\sigma_\beta(H_n) H_n)
\le
2nF_\beta(h)+2B_n\widetilde{I}_{d,\beta}-nF_{\beta/4}(h)+ CA_n B_n.
\]
Finally, from \eqref{fordequals2222} and \eqref{eq:HN-lower-d3-general} we have
\begin{equation}
\label{eq:HN-lower-d3}
H_n=H_{0,n}+W_n \ge
\frac{1}{2} H_{0,n}- CA_nB_n
\end{equation}
from which we can conclude 
\begin{align*}
    \Tr(\sigma_\beta(H_n)H_{0,n})\le 2\Tr(\sigma_\beta(H_n)H_{n}) + 2CA_nB_n \le4nF_\beta(h)+4B_n\widetilde{I}_{d,\beta}-2nF_{\beta/4}(h)+ 4CA_n B_n. 
\end{align*}
 Once again, the same proof yields an identical bound for $\sigma_\beta(H_{n,M})$ and concludes the proof. 
\end{proof}

\noindent We are now ready to state and prove the main theorem of this section:

\begin{theorem}[Free energy perturbation bound]\label{thmfreenregyblabla}
Fix $d\in\{2,3\}$, $\beta>0$. There exists a constant $C_\beta<\infty$, independent of $n,M\ge 1$, such that
\[
|F_\beta(H_n)-F_{\beta}(H_{n,M})|
\le C_\beta\,\left(n^{3}\alpha^{\max}_n + n^5 \left(\alpha^{\max}_n\right)^3\right)M^{-\frac{1}{4d}},
\]
where we defined $\alpha^{\max}_{n} :=\max_{1\le i<j\le n} |\alpha_{n,i,j}|.$
\end{theorem}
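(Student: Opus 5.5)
The plan is to combine the estimate \eqref{almostthereyes} with Lemma~\ref{lem:uniform-H0-bound-abstract-detailed}. Then we make the combinatorial quantities $A_n,B_n$ and the eigenvalue $\lambda_{M+1}$ explicit in terms of $n$, $\alpha^{\max}_n$ and $M$.

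\textbf{Step 1: the constants of Lemma~\ref{lem:free-energy-relative-form-convergence}.} By \eqref{eq333} of Lemma~\ref{lem:product-cutoff-riesz-no-scaling}, the difference $\Delta_{n,M}=W_n-W_{n,M}$ satisfies \eqref{eq:relative-form-smallness-delta} with
\[
\varepsilon=c\simeq B_n(\lambda_{M+1}+1)^{-1/4}.
\]
For the hypotheses of Lemma~\ref{lem:free-energy-relative-form-convergence}, we need $W_n$ and $W_{n,M}$ to be $H_{0,n}$-form bounded with relative bound $<1$. This follows from the lower bounds \eqref{eq:HN-lower-d3-general}, \eqref{eq:HNM-lower-d3-general}, \eqref{fordequals2222}, or directly from Lemma~\ref{lem:two-sided-log-form-bound} and \eqref{usefulestimate111} with $\eta$ small. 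Lemma~\ref{lem:free-energy-relative-form-convergence} then gives
\[
|F_\beta(H_n)-F_\beta(H_{n,M})|\lesssim B_n(\lambda_{M+1}+1)^{-1/4}\Bigl(\max\{\Tr(\sigma_\beta(H_n)H_{0,n}),\Tr(\sigma_\beta(H_{n,M})H_{0,n})\}+1\Bigr).
\]

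\textbf{Step 2: insert the energy bounds.} Lemma~\ref{lem:uniform-H0-bound-abstract-detailed} bounds the maximum by $C_\beta(n+(A_n+1)B_n)$. We use the trivial bounds
\[
A_n\le (n-1)\alpha^{\max}_n,\qquad B_n\le \tfrac{n(n-1)}{2}\alpha^{\max}_n.
\]
The prefactor then becomes
\[
B_n\bigl(n+1+(A_n+1)B_n\bigr)\lesssim n^3\alpha^{\max}_n+n^4(\alpha^{\max}_n)^2+n^5(\alpha^{\max}_n)^3.
\]
The middle term is absorbed by the outer two, since $n^4a^2\le \tfrac12(n^3a+n^5a^3)$ by AM--GM. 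This gives exactly the polynomial $n^3\alpha^{\max}_n+n^5(\alpha^{\max}_n)^3$ in the statement.

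\textbf{Step 3: Weyl asymptotics for $h$.} The spectrum of $h=-\Delta+|x|^2$ on $\R^d$ is $\{2|k|_1+d: k\in\N_0^d\}$. The number of eigenvalues (with multiplicity) not exceeding $\lambda$ is therefore $\asymp \lambda^d$. Consequently $\lambda_{M+1}\gtrsim M^{1/d}$ uniformly in $M\ge1$: if $\lambda_{M+1}=2m+d$, then at least $M$ multi-indices satisfy $|k|_1\le m$, so $M\le\binom{m+d}{d}\lesssim (m+1)^d$. Hence
\[
(\lambda_{M+1}+1)^{-1/4}\lesssim M^{-1/(4d)},
\]
and combining with Step 2 finishes the proof.

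The main obstacle is Step 1: checking that the hypotheses of Lemma~\ref{lem:free-energy-relative-form-convergence} hold with relative bound strictly less than $1$, and with the correct form of the additive constant $c$. Beyond that, the proof is bookkeeping of earlier lemmas plus the elementary counting of harmonic oscillator eigenvalues.
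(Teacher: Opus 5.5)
Your proposal is correct and follows essentially the same route as the paper: combine Lemma~\ref{lem:free-energy-relative-form-convergence} with \eqref{eq333} (which is the paper's estimate \eqref{almostthereyes}), the energy bound of Lemma~\ref{lem:uniform-H0-bound-abstract-detailed}, $\lambda_{M+1}\gtrsim M^{1/d}$, and the trivial bounds on $A_n,B_n$. Your bookkeeping is slightly more careful than the paper's. You keep the additive constant $c$, and you explicitly absorb the cross term $n^4(\alpha^{\max}_n)^2$ via AM--GM, whereas the paper drops both silently.
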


\begin{proof}
From \eqref{almostthereyes}, Lemma \ref{lem:uniform-H0-bound-abstract-detailed} and the fact that $\lambda_M=\Omega(M^{1/d}),$ we see 
\begin{align*}
  |F_\beta(H_n)-F_{\beta}(H_{n,M})|
\le C_\beta \left(nB_n +(A_n+1)B^2_n \right)M^{-\frac{1}{4d}}.  
\end{align*}
Using that \(B_n\le n^{2} \alpha^{\max}_{n}\) and $A_n\le n \alpha^{\max}_{n}$ finishes the proof. 
\end{proof}

\subsection{Relative entropy and trace-distance convergence}
\label{sec:EntropyTraceDistanceConv}

\noindent In this section, we use the relative entropy to obtain a trace-distance bound on the distance between the two full and approximated Gibbs state. 
Given two density operators \(\rho,\sigma\) on a Hilbert space \(\mathcal H\), the (quantum) relative entropy of \(\rho\) with respect to \(\sigma\) is defined by
\[
D(\rho\|\sigma):=
\begin{cases}
\Tr\bigl(\rho(\log\rho-\log\sigma)\bigr), & \text{if }\mathrm{supp}(\rho)\subset \mathrm{supp}(\sigma),\\
+\infty, & \text{otherwise}.
\end{cases}
\]
It is nonnegative and vanishes if and only if \(\rho=\sigma\). Moreover, it controls the trace norm via Pinsker's inequality,
\[
\|\rho-\sigma\|_1^2 \le 2 D(\rho\|\sigma).
\]
\begin{lemma}[Relative entropy bound for Gibbs states under relative form perturbations]
\label{lem:relative-entropy-relative-form-convergence}
Let \(H_{0}\) be self-adjoint on \(\mathcal H\), bounded from below with compact resolvent. Let \(W,W'\) be symmetric quadratic forms on \(Q(H_0)\) with relative form bound strictly smaller than \(1\), and define \(H=H_0+W\), \(H'=H_0+W'\) such that the respective Gibbs states \(\sigma_\beta(H),\sigma_\beta(H')\) exist. Set \(\Delta:=H-H'\) and assume that for some \(\varepsilon,c\ge0\),
\begin{equation}
\label{eq:relative-form-smallness-delta-entropy}
|\Delta[\psi]|\le \varepsilon\langle\psi,H_0\psi\rangle+c\|\psi\|^2\qquad\text{for all }\psi\in Q(H_0).
\end{equation}
Then the Gibbs states \(\sigma_\beta(H),\sigma_\beta(H')\) satisfy the relative entropy bounds
\begin{equation}
D(\sigma_\beta(H)\|\sigma_\beta(H')),\,D(\sigma_\beta(H')\|\sigma_\beta(H))
\le \beta\Bigl(\varepsilon \Tr(\sigma_\beta(H)H_0)+\varepsilon \Tr(\sigma_\beta(H')H_0)+2c\Bigr),
\end{equation}
and consequently
\begin{equation}
\|\sigma_\beta(H)-\sigma_\beta(H')\|_1
\le \sqrt{2\beta\Bigl(\varepsilon \Tr(\sigma_\beta(H)H_0)+\varepsilon \Tr(\sigma_\beta(H')H_0)+2c\Bigr)}.
\end{equation}
\end{lemma}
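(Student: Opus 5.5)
The starting point is the standard identity for relative entropies between Gibbs states. Since \(\log\sigma_\beta(H')=-\beta H'-\log\mathcal Z_\beta(H')\), we formally have
\[
D(\sigma_\beta(H)\|\sigma_\beta(H'))=\Tr\bigl(\sigma_\beta(H)(-\beta H-\log\mathcal Z_\beta(H)+\beta H'+\log\mathcal Z_\beta(H'))\bigr)
=-\beta\Tr(\sigma_\beta(H)\Delta)+\beta\bigl(F_\beta(H')-F_\beta(H)\bigr)\cdot(-1)\cdot(-1),
\]
that is, \(D(\sigma_\beta(H)\|\sigma_\beta(H'))=\beta\bigl(F_\beta(H)-F_\beta(H')\bigr)-\beta\Tr(\sigma_\beta(H)\Delta)\); here \(F_\beta(H)-F_\beta(H')=\beta^{-1}\log(\mathcal Z_\beta(H')/\mathcal Z_\beta(H))\). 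The symmetric identity holds with the roles of \(H\) and \(H'\) exchanged (and \(\Delta\) replaced by \(-\Delta\)).

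To make this rigorous for unbounded operators, I will first note that \(\sigma_\beta(H)\) has finite \(H_0\)-energy: \(\Tr(\sigma_\beta(H)H_0)<\infty\), since \(H\ge(1-a)H_0-b\) and \(e^{-\beta H}\) decays fast enough. This follows, for example, from Lemma \ref{lem:thermodynamic-inequality} combined with the form lower bound. Then \(\Tr(\sigma_\beta(H)H)\), \(\Tr(\sigma_\beta(H)H')\) and \(\Tr(\sigma_\beta(H)\Delta)\) are all well defined by evaluating the forms in the eigenbasis of \(\sigma_\beta(H)\), which lies in \(Q(H_0)\). For \(\Tr(\sigma_\beta(H)\log\sigma_\beta(H'))\), a lower-semicontinuity/approximation argument is needed. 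I would compute it as \(\sum_k p_k\langle\phi_k,\log\sigma_\beta(H')\phi_k\rangle\), where \(\phi_k\) are the eigenvectors of \(\sigma_\beta(H)\), and use \(-\log\sigma_\beta(H')=\beta H'+\log\mathcal Z_\beta(H')\) as a form on \(Q(H')=Q(H_0)\). This identification of \(D\) with the form expression for states of finite energy is the main technical point. I would justify it via the spectral decomposition of \(H'\) and monotone convergence, since \(H'+h\ge 0\).

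With the identity in hand, I would apply the Bogoliubov bounds from the proof of Lemma \ref{lem:free-energy-relative-form-convergence}:
\[
\Tr(\sigma_\beta(H)\Delta)\le F_\beta(H)-F_\beta(H')\le\Tr(\sigma_\beta(H')\Delta).
\]
These give
\[
D(\sigma_\beta(H)\|\sigma_\beta(H'))\le\beta\bigl(\Tr(\sigma_\beta(H')\Delta)-\Tr(\sigma_\beta(H)\Delta)\bigr)\le\beta\bigl(|\Tr(\sigma_\beta(H')\Delta)|+|\Tr(\sigma_\beta(H)\Delta)|\bigr),
\]
and the same bound holds for the reversed relative entropy. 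Inserting \eqref{eq:relative-form-smallness-delta-entropy} into each eigenvector expansion, exactly as in Lemma \ref{lem:free-energy-relative-form-convergence}, yields the bound \(\beta(\varepsilon\Tr(\sigma_\beta(H)H_0)+\varepsilon\Tr(\sigma_\beta(H')H_0)+2c)\). Alternatively, one can add the two identities: \(D(\rho\|\sigma)+D(\sigma\|\rho)=\beta\Tr((\sigma_\beta(H')-\sigma_\beta(H))\Delta)\), and each term is nonnegative, which gives the same result. Pinsker's inequality then gives the trace-norm statement immediately.
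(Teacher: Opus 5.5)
Your proposal is correct and follows essentially the same route as the paper: the Gibbs-state identity $D(\sigma_\beta(H)\|\sigma_\beta(H'))=\beta\bigl(F_\beta(H)-F_\beta(H')-\Tr(\sigma_\beta(H)\Delta)\bigr)$ and its mirror, the Bogoliubov sandwich from Lemma~\ref{lem:free-energy-relative-form-convergence}, the form bound on $\Delta$ applied in the eigenbasis of each Gibbs state, and then Pinsker; your alternative of summing the two identities is the same argument, since Bogoliubov is equivalent to nonnegativity of relative entropy. Note that the intermediate expression in your first display puts the wrong sign on the free-energy difference, although the identity you state right after it is correct. Your remarks on finite energy and on the form interpretation of $\Tr(\sigma_\beta(H)\log\sigma_\beta(H'))$ add rigor that the paper omits, and where the energy is infinite the bound is vacuous anyway.
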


\begin{proof}
Self-adjointness, lower bounds, and compact resolvent follow from the form assumptions; hence, the Gibbs operators are trace class. Using \(\log\sigma_\beta(H)=-\beta H-\log\mathcal Z_\beta(H)\), we compute
\[\begin{split}
D(\sigma_\beta(H)\|\sigma_\beta(H'))&=\beta\Tr(\sigma_\beta(H)(H'-H))+\log\mathcal Z_\beta(H')-\log\mathcal Z_\beta(H)\\
&=\beta\bigl(F_\beta(H)-F_\beta(H')-\Tr(\sigma_\beta(H)\Delta)\bigr),
\end{split}
\]
and similarly
\[
D(\sigma_\beta(H')\|\sigma_\beta(H))=\beta\bigl(\Tr(\sigma_\beta(H')\Delta)-(F_\beta(H)-F_\beta(H'))\bigr).
\]
By the Bogoliubov inequality as in Lemma~\ref{lem:free-energy-relative-form-convergence},
\[
\Tr(\sigma_\beta(H)\Delta)\le F_\beta(H)-F_\beta(H')\le \Tr(\sigma_\beta(H')\Delta),
\]
hence both relative entropies are bounded by \(\beta(\Tr(\sigma_\beta(H')\Delta)-\Tr(\sigma_\beta(H)\Delta))\). Using the spectral decomposition \(\sigma_\beta(H)=\sum_k p_k|\psi_k\rangle\langle\psi_k|\) and using \eqref{eq:relative-form-smallness-delta-entropy},
\[
|\Tr(\sigma_\beta(H)\Delta)|=\Bigl|\sum_k p_k\Delta[\psi_k]\Bigr|\le \varepsilon\Tr(\sigma_\beta(H)H_0)+c,
\]
and analogously for \(\sigma_\beta(H')\). Therefore
\[
\Tr(\sigma_\beta(H')\Delta)-\Tr(\sigma_\beta(H)\Delta)\le \varepsilon\Tr(\sigma_\beta(H)H_0)+\varepsilon\Tr(\sigma_\beta(H')H_0)+2c,
\]
which yields the entropy bounds. The trace norm estimate follows from Pinsker \(\|\rho-\sigma\|_1^2\le2D(\rho\|\sigma)\).
\end{proof}
\noindent Specifying this to the case of Coulomb interactions yields then:
\begin{theorem}[Trace norm perturbation bound]
\label{thm:trace-norm-perturbation-bound}
Fix \(d\in\{2,3\}\), \(\beta>0\). Then there exists \(C_\beta<\infty\), independent of \(n,M\), such that
\begin{equation}
\|\sigma_\beta(H_n)-\sigma_\beta(H_{n,M})\|_1 \le C_\beta \sqrt{n^3\alpha^{\max}_n + n^5(\alpha^{\max}_n)^3}\ M^{-\frac{1}{8d}},
\end{equation}
where we defined $\alpha^{\max}_{n} :=\max_{1\le i<j\le n} |\alpha_{n,i,j}|.$
\end{theorem}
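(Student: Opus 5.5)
The plan is to apply Lemma~\ref{lem:relative-entropy-relative-form-convergence} with $H_0=H_{0,n}$, $W=W_n$, $W'=W_{n,M}$, so that $\Delta=\Delta_{n,M}$. We follow the same bookkeeping used for Theorem~\ref{thmfreenregyblabla}, replacing the Bogoliubov free-energy bound by the Pinsker/relative-entropy bound.

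\textbf{Step 1: check the hypotheses.} Both $W_n$ and $W_{n,M}$ must be $H_{0,n}$-form bounded with relative bound below $1$. For $W_n$ this follows from the $\eta$-dependent two-body bounds \eqref{usefulestimate111} and Lemma~\ref{lem:two-sided-log-form-bound}: summed over pairs, they give $|W_n|\le \eta A_n H_{0,n}+C\eta^{-1}B_n$ with $\eta A_n<1$. The same holds for $W_{n,M}$, because $\Pi_M$ commutes with $H_{0,n}$, exactly as in the proof of Lemma~\ref{lowerboundsnow}. Existence of the Gibbs states is part of the lemma's conclusion.

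\textbf{Step 2: the relative form bound on $\Delta_{n,M}$.} By \eqref{eq333} of Lemma~\ref{lem:product-cutoff-riesz-no-scaling}, the smallness condition \eqref{eq:relative-form-smallness-delta-entropy} holds with
\[
\varepsilon=c\lesssim B_n(\lambda_{M+1}+1)^{-1/4}.
\]
Since $\lambda_{M+1}\gtrsim M^{1/d}$ by Weyl asymptotics for the harmonic oscillator, this gives $\varepsilon=c\lesssim B_n M^{-1/(4d)}$.

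\textbf{Step 3: combine.} Lemma~\ref{lem:uniform-H0-bound-abstract-detailed} bounds both $\Tr(\sigma_\beta(H_n)H_{0,n})$ and $\Tr(\sigma_\beta(H_{n,M})H_{0,n})$ by $C_\beta(n+(A_n+1)B_n)$. The lemma then yields
\[
\|\sigma_\beta(H_n)-\sigma_\beta(H_{n,M})\|_1\le\sqrt{2\beta\,B_nM^{-1/(4d)}\bigl(2C_\beta(n+(A_n+1)B_n)+2\bigr)}.
\]
Use $B_n\le n^2\alpha^{\max}_n$ and $A_n\le n\alpha^{\max}_n$. The product $nB_n+A_nB_n^2+B_n^2+B_n$ is then at most a constant times $n^3\alpha^{\max}_n+n^5(\alpha^{\max}_n)^3+n^4(\alpha^{\max}_n)^2$. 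Taking the square root of $M^{-1/(4d)}$ gives the rate $M^{-1/(8d)}$.

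\textbf{Main obstacle.} The only non-routine point is the stray terms $B_n^2$ (from the ``$+1$'' in $A_n+1$) and $B_n$ (from the constant $c$). The term $B_n$ is at most $n^3\alpha^{\max}_n$ trivially. The term $n^4(\alpha^{\max}_n)^2$ is absorbed by AM--GM: $n^4a^2\le\tfrac12(n^3a+n^5a^3)$. This matches the treatment in Theorem~\ref{thmfreenregyblabla}, and we absorb the constants, including $\beta$, into $C_\beta$.
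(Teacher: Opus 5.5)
Your proposal is correct and follows the paper's proof essentially step for step. You use the cutoff bound \eqref{eq333} to get $\varepsilon=c\sim B_n(\lambda_{M+1}+1)^{-1/4}$, apply Lemma~\ref{lem:relative-entropy-relative-form-convergence} with $H_0=H_{0,n}$, bound the energies via Lemma~\ref{lem:uniform-H0-bound-abstract-detailed}, and finish with $B_n\le n^2\alpha^{\max}_n$, $A_n\le n\alpha^{\max}_n$ and $\lambda_{M+1}\gtrsim M^{1/d}$. Your explicit AM--GM treatment of the stray $B_n^2$ and $c$ terms is slightly more careful than the paper, which silently absorbs $n^2\alpha^{\max}_n$ when it writes the energy bound as $C_\beta(n+n^3(\alpha^{\max}_n)^2)$; the $c$ term can also be absorbed for free, since $H_{0,n}\ge dn$.
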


\begin{proof}
By Lemma~\ref{lem:product-cutoff-riesz-no-scaling}, we have
\[
|\Delta_{n,M}[\psi]|\lesssim B_n(\lambda_{M+1}+1)^{-1/4}\langle\psi,(H_{0,n}+1)\psi\rangle,
\]
hence there exists \(\varepsilon_{n,M},c_{n,M}>0\) such that
\[
|\Delta_{n,M}[\psi]|\le \varepsilon_{n,M}\langle\psi,H_{0,n}\psi\rangle + c_{n,M}\|\psi\|^2,
\quad
\varepsilon_{n,M}\sim c_{n,M}\sim B_n(\lambda_{M+1}+1)^{-1/4}.
\]
Applying Lemma~\ref{lem:relative-entropy-relative-form-convergence} with \(H=H_n\), \(H'=H_{n,M}\), \(H_0=H_{0,n}\), we obtain
\[
\|\sigma_\beta(H_n)-\sigma_\beta(H_{n,M})\|_1 \le \sqrt{2\beta\Bigl(\varepsilon_{n,M}\Tr(\sigma_\beta(H_n)H_{0,n})+\varepsilon_{n,M}\Tr(\sigma_\beta(H_{n,M})H_{0,n})+2c_{n,M}\Bigr)}.
\]
By Lemma~\ref{lem:uniform-H0-bound-abstract-detailed}, and the fact that $B_n \le n^2\alpha^{\max}_n$ and $A_n\le n \alpha^{\max}_n$ we see 
$$\Tr(\sigma_\beta(H_n)H_{0,n})\, , \,\Tr(\sigma_\beta(H_{n,M})H_{0,n}) \le C_\beta (n + n^3 (\alpha^{\max}_n)^2)\quad$$  and hence
\[
\|\sigma_\beta(H_n)-\sigma_\beta(H_{n,M})\|_1 \lesssim_\beta \Bigl((n^3\alpha^{\max}_n + n^5(\alpha^{\max}_n)^3)(\lambda_{M+1}+1)^{-1/4}\Bigr)^{1/2}.
\]
Using \(\lambda_{M+1}\gtrsim M^{1/d}\), we conclude
\[
\|\sigma_\beta(H_n)-\sigma_\beta(H_{n,M})\|_1 \le C_\beta \sqrt{n^3\alpha^{\max}_n + n^5(\alpha^{\max}_n)^3}\ M^{-\frac{1}{8d}}.
\]
\end{proof}

\section{Spectral gap analysis}
\label{section:Pert_Theory}

\noindent In this section, we study the spectral gap of the generators $L_{H_{n,M}}$. The proof of Theorem \ref{mainthmpositivegap} is based on a perturbative analysis of the reference generator $L_{H_{0,n}}$.

\subsection{Quadratic Hamiltonians}

\noindent In order to study the generator of the dynamics corresponding to $H_{0,n}$, we consider the quantum Ornstein–Uhlenbeck semigroup converging to the Gibbs state of the number operator $N:=a^\dagger a\equiv \frac{1}{2}(-\frac{d^2}{dx^2}+x^2-1)$ on $L^2(\mathbb{R})$, with associated ladder operators $a$ and $a^\dagger$, whose generator on $\mathscr{T}_1(L^2(\mathbb{R}))$ is given by \cite{OU}
\begin{align}
\label{eq:qOUgen}
\mathcal{L}_{N}(\rho)
=&|\widehat{f}(1)|^2\,\Big(a^\dagger \rho a-\frac{1}{2}\{aa^\dagger,\rho\}\Big)
+|\widehat{f}(-1)|^2\,\Big(a\rho a^\dagger-\frac{1}{2}\{a^\dagger a,\rho\}\Big).
\end{align}

\noindent This generator coincides with $\cL_{(H_{0,n=1}-1)/2}$ in the case of a single one-dimensional particle, i.e.~$d=1$. Denoting the birth rate $\nu_+:=|\widehat{f}(1)|^2$ and death rate $\nu_-:=|\widehat{f}(-1)|^2$, the associated self-adjoint generator on $\mathscr{T}_2(\cH)$ takes the form \cite{OU}:
\begin{align}
\label{eq:HSgenerator}
L_{N}(x)
=&
-\left( \frac{\nu_- + \nu_+}{2}(Nx + xN) + \nu_+ x \right)
+ \sqrt{\nu_+ \nu_-}( a x a^{\dagger} +  a^{\dagger} x a).
\end{align}
In later parts of the paper, we shall also use the lower bound
\begin{align}\label{ladderblockboundqOU}
L_{N} \;\ge\; L_{\operatorname{LB}},
\end{align}
in the sense of quadratic forms, as shown in \cite[Proof of Thm.~6.3]{OU}. Here $L_{\operatorname{LB}}$ is the so-called \textit{ladder-block} operator defined by
\begin{equation}
\label{eq:ladderblock}
L_{\operatorname{LB}}(x)
:= \frac{1}{2}(\nu_+ - \nu_-)^2 (N x + x N) \;-\; \nu_+(\nu_- - \nu_+)\, x.
\end{equation}
The operator $L_{\operatorname{LB}}$ admits the spectral decomposition
\begin{equation}
\label{eq:LBdecomp}
L_{\operatorname{LB}} = \sum_{k \ge 0} \kappa_k\, R_k\qquad \text{ where }
\kappa_k := \frac{1}{2}(\nu_+ - \nu_-)^2\, k \;-\; \nu_+(\nu_- - \nu_+),
\end{equation}
and $R_k$ is the orthogonal projection onto the subspace $\operatorname{span}\{\, |n\rangle\langle m| : n+m = k \,\}$, where $\ket{n}$ denotes the Fock state with $N\ket{n}=n\ket{n}$. This decomposition provides a simpler spectral structure than that of $L_{N}$, making it a useful comparison operator for later perturbative spectral analysis.

\medskip 

\noindent Next, to motivate our unperturbed Hamiltonian $H_{0,n}$ in the context of Coulomb interacting systems, we consider a model of $d=2$ dimensional particles in a constant magnetic field with a quadratic trapping potential
\begin{align}\label{H0unperturedH}
   H_0\equiv  h_i = (-i\nabla_{x_i} - \vec{A}(x_i))^2 + \langle x_i,Mx_i\rangle.
\end{align}
Furthermore, we set \( {A}(x_i) = \frac{1}{2}(-Bx_{i2},Bx_{i1})^{\mathsf T} \)  the vector potential for a constant magnetic field \(B\) and coordinates $x_i:=(x_{i1},x_{i2})$ in the orthogonal direction and assume that \( M \in \mathbb R^{2\times 2} \) is a positive-definite matrix.

\noindent Next, we study the spectral gap of $L_{H_0}$ for the Hamiltonian $H_0$ introduced in \eqref{H0unperturedH}. First, the spectral decomposition $H_0$ follows the method introduced in \cite{lin2002anisotropic}: we note that $H_0 = X^{\dagger} \mathcal H X$, where $X = (x_1,D_{x_1},x_2,D_{x_2})$, where $D_{x_i}:=-i \frac{\partial}{\partial x_j}$ and 
\[ \mathcal H = \begin{pmatrix} M_{11} + \frac{B^2}{4} &0 & M_{12} & -\frac{B}{2}  \\ 0 & 1&  \frac{B}{2} & 0\\  M_{12} &  \frac{B}{2}&  M_{22}+ \frac{B^2}{4} & 0 \\ -\frac{B}{2}& 0& 0& 1 \end{pmatrix}.\]
We then notice that $[X_{\alpha},X_{\beta}]= -\Sigma_{\alpha \beta}$, where $\Sigma = \operatorname{diag}(\sigma_2,\sigma_2).$ Next, we compute 
\[\begin{split} [X,H_0]&=\sum_{k} X_{k} \sum_{i,j} X_i \mathcal H_{ij} X_j e_k- \sum_{i,j} X_i \mathcal H_{ij} X_jX_{k}e_k \\
&= \sum_{i,j,k} (X_i  \mathcal H_{ij}X_{k} X_j +[X_k,X_i] \mathcal H_{ij}X_j)  e_k 
- \sum_{i,j} X_i \mathcal H_{ij} X_jX_{k}e_k \\ 
&= \sum_{i,j,k} X_i  \mathcal H_{ij}[X_{k}, X_j]  e_k -\Sigma_{ki} \mathcal H_{ij}X_j  e_k =- \sum_{i,j,k} X_i  \mathcal H_{ij}\Sigma_{kj}  e_k + \Sigma_{ki} \mathcal H_{ij}X_j  e_k \\
&=- \sum_{i,j,k}   e_k\Sigma_{kj}\mathcal H_{ji}X_i   + e_k\Sigma_{ki} \mathcal H_{ij}X_j = -2\Sigma \mathcal H X.\end{split} \]
Thus, we have $[iH_0,X]=2\Omega X$, where 
\[ \Omega:=i\Sigma \mathcal H=
\begin{pmatrix} 0 & 1 & -\frac{B}{2}& 0\\
 -M_{11}-\frac{B^2}{4} & 0 & -M_{12}& -\frac{B}{2} \\
 \frac{B}{2} & 0 & 0 & 1\\
-M_{12} & \frac{B}{2} & -M_{22}-\frac{B^2}{4} & 0
\end{pmatrix},\]
the characteristic polynomial of $\Omega$ reads
\[ \det(\Omega-\lambda I)=\lambda^4 + \lambda^2 ( \Tr(M) +B^2) + \operatorname{det}(M).\]

\noindent Since the characteristic polynomial is invariant under flips $\lambda \to -\lambda$, the eigenvalues of $\Omega$ are of the form 
\[ \operatorname{Sp}(\Omega) = \operatorname{diag}(\pm i \sigma_1 ,\pm i \sigma_2)\]
with
\[\begin{split}\sigma_1 &= \sqrt{\frac{\Tr(M)+B^2 + \sqrt{(\Tr(M)+B^2)^2-4\det(M)}}{2}} \,~, \\
\sigma_2 &=\sqrt{\frac{\Tr(M)+B^2 - \sqrt{(\Tr(M)+B^2)^2-4\det(M)}}{2}}\end{split}\]
which are both real and strictly positive since $\Tr(M),\det(M)>0.$ We then have left and right eigenvectors 
\[ \Omega^{\top}u_i = -i\sigma_i u_i \text{ and }\Omega v_i = -i\sigma_i v_i .\]

\noindent Taking the complex conjugate of this equation and using the fact that all entries in $\Omega$ are real, we find 
\[ \Omega^{\top}\overline{u_i} = i\sigma_i \overline{u_i} \text{ and } \Omega \overline{v_i} = i\sigma_i \overline{v_i},\]
We find that 
\[ i\sigma_i \langle v_j,u_i \rangle = \langle v_j,\Omega^{\top}u_i\rangle = \langle \Omega v_j,u_i\rangle=-i\sigma_j \langle v_j,u_i \rangle.  \]
Thus, $\langle v_j,u_i\rangle =0$, and by the same computation $\langle \overline{v_j},u_i\rangle =\delta_{ij}.$
In the following, we write $\Sigma_y = \Sigma$ and introduce $\Sigma_{z}:=\operatorname{diag}(\sigma_3,\sigma_3).$ We thus find that using $\Sigma_y^2 = 1$ and $\Sigma_y^{-1} = \Sigma_y =- \Sigma_y^{\top}$
\[\Sigma_y v_i = \frac{i\Sigma_y\Omega v_i}{\sigma_i} =-\frac{\mathcal H v_i}{\sigma_i} = -\frac{\mathcal H \Sigma_y (\Sigma_y v_i)}{\sigma_i} = -\frac{i\Omega^{\top} (\Sigma_y v_i)}{\sigma_i}.  \]
This shows that $\Sigma_y v_i$ is also an eigenvector with eigenvalue $i \sigma_i$ of $\Omega^{\top}$, and we conclude that $\overline{u_i} \propto \Sigma_y v_i.$ Thus, we may choose $\overline{u_i}:=-\Sigma_y v_i$, and therefore $\Sigma_y \overline{u_i} = -\Sigma_y^2 v_i = -v_i$, and normalize $u_i$ such that $\langle u_i,\overline{v_i} \rangle =-u_i^{\top}\Sigma_y \overline{u_i}=1.$
Defining $Q=(v_1,\overline{v_1},v_2,\overline{v_2})$ and $Q^{-1}= (u_1,\overline{u_1},u_2,\overline{u_2})^{\top}$, we find 
\[ Q^{-1}\Omega Q = \operatorname{diag}(-i\sigma_1,i\sigma_1,-i\sigma_2,i\sigma_2).\]
We also observe that
\[\begin{split} Q^{\dagger} &= (\overline{v_1}, v_1,\overline{v_2},v_2)^{\top} = (\Sigma_y \overline{u_1},-\Sigma_y u_1,\Sigma_y \overline{u_2},-\Sigma_y u_2)^{\top} =( -\overline{u_1}, u_1, -\overline{u_2},\Sigma_y u_2)^{\top}\Sigma_y= -\Sigma_z Q^{-1} \Sigma_y  .
\end{split} \]
We then define $a_i :=u_i^{\top} X$ and $a_i^{\dagger}=  u_i^{\dagger} X $ and observe that \[\begin{split} [a_i,a_j^{\dagger}]&=\sum_{n,m} (u_i)_n (\overline{u_j})_m [X_n,X_m] =- \sum_{n,m} (u_i)_n \Sigma_{nm}(\overline{u_j})_m =\delta_{ij} 
\end{split}\] and
\[\begin{split}  [a_i,a_j]&=\sum_{n,m} (u_i)_n (u_j)_m [X_n,X_m]=-\langle \Sigma_y\overline{u_i}, u_j \rangle =\langle v_i, u_j \rangle =0.  \end{split}\] 
Defining the operator-valued vector $A=(a_1,a_1^{\dagger},a_2,a_2^{\dagger})^{\top}$, we have, by construction, $A= Q^{-1}X$, and thus $X=QA.$ We may thus write the Hamiltonian as: 
\[ \begin{split} H_0 
&= X^{\dagger} \mathcal H X = A^{\dagger} Q^{\dagger} \mathcal H QA =-A^{\dagger}\Sigma_z Q^{-1}\Sigma_y \mathcal H QA = i A^{\dagger} \Sigma_z Q^{-1} \Omega Q A \\
& =i A^{\dagger} \Sigma_z \operatorname{diag}(-i\sigma_1,i\sigma_1,-i\sigma_2,i\sigma_2) A = A^{\dagger} \Sigma_z \operatorname{diag}(\sigma_1,-\sigma_1,\sigma_2,-\sigma_2)A \\
& = A^{\dagger} \operatorname{diag}(\sigma_1,\sigma_1,\sigma_2,\sigma_2)A = 2\sigma_1 (a_1^{\dagger}a_1+1) + 2\sigma_2 (a_2^{\dagger}a_2+1).\end{split} \]
Thus, we have found
\[ \operatorname{Sp}(H_0) = \sigma_1(2\mathbb N_0+1)+\sigma_2(2\mathbb N_0+1).\]
We thus see that 
\[H_{0,n} =\sum_{j\in[n]} h_j= \sum_{j\in[n]} \sum_{k\in[2]} h_{jk}\] with $h_{jk}=2 \sigma_k (a_{jk}^{\dagger}a_{jk}+1)$ satisfying 
$[h_{jk},h_{j'k'}]=\delta_{k,k'}\delta_{j,j'}$ where $j$ labels the particle. Hence, the generator $L_{H_{0,n}}$ defined by $H_{0,n}$, the Hamiltonian of the non-interacting particle, is a quantum Ornstein-Uhlenbeck process \cite{OU}. Combining with the spectral analysis carried in \cite{OU}, we find
\begin{theorem}
\label{thm:DFT2}
   The associated generator of the Gibbs sampler in $\mathscr{T}_2(L^2(\mathbb{R}^{2n}))$ is a family of quantum Ornstein-Uhlenbeck generators with coefficients $\nu_{-,k}:=\vert  \widehat{f}(-2\sigma_k)\vert^2 $, $\nu_{+,k}:=\vert  \widehat{f}(2\sigma_k)\vert^2$ and
with spectrum 
\[ \operatorname{Sp}(L_{H_{0,n}}) =- \sum_{k,j}\left\{ \ell\left(\frac{\nu_{-,k}-\nu_{+,k}}{2}\right) ; \ell \in \mathbb N_0\right\}. \]
\end{theorem}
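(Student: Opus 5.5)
We reduce the statement to the single-mode quantum Ornstein--Uhlenbeck analysis of \cite{OU}, applied mode by mode. The normal-mode decomposition above writes $H_{0,n}=\sum_{j\in[n]}\sum_{k\in[2]}h_{jk}$ with $h_{jk}=2\sigma_k(a_{jk}^\dagger a_{jk}+1)$, and the $2n$ modes commute pairwise. So $L^2(\mathbb R^{2n})\cong\bigotimes_{j,k}L^2(\mathbb R)$, and $\mathscr T_2$ factorises accordingly as a Hilbert tensor product.

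\textbf{Step 1: express the jumps in normal modes.} The original jumps $x_{j\ell}\pm\partial_{x_{j\ell}}$ are linear in $X$. Since $X=QA$, each is a linear combination of the $a_{jk},a_{jk}^\dagger$ for the same particle $j$. The Heisenberg evolution is $e^{itH_{0,n}}a_{jk}e^{-itH_{0,n}}=e^{-2i\sigma_k t}a_{jk}$ (from $[iH_0,X]=2\Omega X$ and the diagonalisation $Q^{-1}\Omega Q$). Hence the filtered jump in \eqref{jumpLalpha} is
\[
L^\alpha=\sum_k\big(c^\alpha_k\widehat f(-2\sigma_k)a_{jk}+d^\alpha_k\widehat f(2\sigma_k)a_{jk}^\dagger\big),
\]
up to the Fourier sign convention. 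Here $c^\alpha_k,d^\alpha_k$ are coefficients coming from $Q$.

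\textbf{Step 2: eliminate cross terms.} Summing $L^\alpha\rho(L^\alpha)^\dagger$ over $\alpha$ produces cross terms between different Bohr frequencies, such as $a_{jk}\rho a_{jk'}^\dagger$ and $a_{jk}\rho a_{jk}$. In the $\sigma_E=\infty$ generator \eqref{LHnmgenerator} these are removed by the secular structure: the Lindbladian only retains jumps with matched energy differences. I expect this to need $\sigma_1\neq\sigma_2$, or else an argument that degenerate cross terms only rotate within the degenerate pair. After this step the generator becomes
\[
\sum_{j,k}\Big(\tilde\nu_{+,k}\,\mathcal D[a_{jk}^\dagger]+\tilde\nu_{-,k}\,\mathcal D[a_{jk}]\Big),
\]
with effective rates $\tilde\nu_{\pm,k}=\gamma_k|\widehat f(\pm2\sigma_k)|^2$. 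The weight $\gamma_k=\sum_\alpha|c^\alpha_k|^2=\sum_\alpha|d^\alpha_k|^2$ is a normalisation fixed by $\langle u_i,\overline{v_i}\rangle=1$. I would absorb $\gamma_k$ into the convention defining $\nu_{\pm,k}$, or check that it equals one. The coherent term $B$ should also vanish, or commute with each mode's number operator, because the $\tanh$ factor is odd and the transitions here are number-changing.

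\textbf{Step 3: read off the spectrum.} After the $\mathscr T_2$ symmetrisation by $\sigma^{1/4}$, each mode contributes exactly the operator $L_N$ of \eqref{eq:HSgenerator} with rates $\nu_{\pm,k}$. The full generator is then the sum $\sum_{j,k}\mathbb 1\otimes\cdots\otimes L_N^{(jk)}\otimes\cdots\otimes\mathbb 1$. By \cite{OU}, $\operatorname{Sp}(L_N)=\{-\ell(\nu_--\nu_+)/2:\ell\in\mathbb N_0\}$. The spectrum of a sum of commuting self-adjoint operators acting on distinct tensor factors is the closure of the Minkowski sum of the individual spectra, which gives the stated formula. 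The main obstacle is Step 2: one must verify rigorously that the filtered sampler of \cite{BeckerRouzeSalzmannToAppearcmp} decouples into independent one-mode OU generators despite the mixing matrix $Q$, and that the resulting rates are exactly $|\widehat f(\pm2\sigma_k)|^2$.
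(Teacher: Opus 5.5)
Your Step 3 is exactly the paper's argument, but Steps 1--2 contain a genuine error, which comes from your choice of bare jumps. You take the Cartesian operators $(x_{j\ell}\pm\partial_{x_{j\ell}})/\sqrt2$, which in a general symplectic frame mix $a_{jk}$ and $a_{jk}^\dagger$. You then discard the resulting cross terms by appealing to a secular structure of the $\sigma_E=\infty$ generator, and that structure does not exist. Write $L^\alpha=\sum_\omega L^\alpha_\omega$ in Bohr components, with $e^{isH}L^\alpha_\omega e^{-isH}=e^{is\omega}L^\alpha_\omega$. The $s$-average in $\Phi$ then produces $\sum_{\omega,\omega'}e^{-(\omega-\omega')^2/(8\sigma_E^2)}L^\alpha_\omega\rho(L^\alpha_{\omega'})^\dagger$. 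Cross terms are therefore suppressed only as $\sigma_E\to0$. At $\sigma_E=\infty$ the weight is $1$, which is why \eqref{LHnmgenerator} contains the full product $L^\alpha_+X(L^\alpha_-)^\dagger$.

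Nor do the cross terms cancel in the sum over $\alpha$. Take a single mode $h=p^2+c^2x^2$ with $a=(\sqrt c\,x+ip/\sqrt c)/\sqrt2$. The Cartesian pair gives $L(x)\rho L(x)^\dagger+L(p)\rho L(p)^\dagger$. Its $a\rho a$ coefficient is $\frac12(c^{-1}-c)\,\widehat f(-2c)\overline{\widehat f(2c)}$, which is nonzero for $c\neq1$. Its $a\rho a^\dagger$ coefficient is $\frac12(c^{-1}+c)|\widehat f(-2c)|^2$, so your $\gamma_k$ is not $1$. Likewise $\sum_\alpha (L^\alpha)^\dagger L^\alpha$ then contains energy-shifting terms such as $a^2$, so $B\neq0$. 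With these jumps the generator is not the stated product of qOU generators. You also cannot absorb $\gamma_k$ into the rates, because the statement fixes $\nu_{\pm,k}=|\widehat f(\pm2\sigma_k)|^2$.

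The missing idea is that the sampler in the statement uses the normal-mode ladder operators $a_{jk}=u_k^{\top}X$ and $a_{jk}^\dagger=u_k^{\dagger}X$ themselves as bare jumps, normalised by $[a_{jk},a_{jk}^\dagger]=1$. These are exact eigenoperators of the Heisenberg evolution, $e^{itH_{0,n}}a_{jk}e^{-itH_{0,n}}=e^{-2i\sigma_k t}a_{jk}$. Hence each filtered jump is simply $\widehat f(-2\sigma_k)a_{jk}$ or $\widehat f(2\sigma_k)a_{jk}^\dagger$, and this has three consequences:
\begin{itemize}
\item there are no cross terms;
\item $(L^\alpha)^\dagger L^\alpha$ is a function of $a_{jk}^\dagger a_{jk}$, so only $E=F$ survives in \eqref{BHexpressionLa}, and $B=0$ because $\tanh 0=0$;
\item the rates are exactly $\nu_{\pm,k}$.
\end{itemize}
Each mode then contributes precisely \eqref{eq:HSgenerator} on its own tensor factor (compare \eqref{eq:HSgenerator-specialized}). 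Your Step 3 then completes the proof exactly as the paper does, via \cite{OU} and the spectrum of a sum over tensor factors.
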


\noindent From here onwards, we assume for simplicity that the unperturbed Hamiltonian $H_0$ is already in diagonal form, i.e. $H_{0,n}:=\sum_{j\in [n]}\sum_{k\in [d]}h_{jk}$ with $\sigma_k=1$ for all $k$.


\subsection{Finite rank perturbative stability}

\noindent In order to establish the spectral gap of $L_{H_{n,M}}$, we now show that finite-rank perturbations of the $H_{0,n}$ only produce finite-rank corrections in the structure of the $\mathscr{T}_2$ generator. In the next section, we will show that this is sufficient to preserve discreteness of the spectrum and hence a spectral gap. Our first Lemma below shows that perturbing the Hamiltonian in a finite-dimensional sector does not affect the jump operators on the infinite-dimensional bulk, and only creates a finite-rank correction.

\begin{lemma}[Finite-rank stability under exponential conjugation]
\label{lem:conjugated-L-finite-rank-remainder}
Let \(H_0\) be a self-adjoint operator on a Hilbert space \(\mathcal H\), bounded from below, with discrete spectrum. Let \(P\) be a finite-rank orthogonal projection such that $[P,H_0]=0,$
and set \(Q:=1-P\). Let \(R=PRP\) be bounded and self-adjoint, and define
\[
H:=H_0+R.
\]
Then \(H\) is reduced by the decomposition
\[
\mathcal H=Q\mathcal H\oplus P\mathcal H, \text{ such that }
H=QH_0Q\oplus (PH_0P+R).
\]
Let \(A^\alpha\) be a possibly unbounded operator such that
\[
P\mathcal H\subset D(A^\alpha)
\qquad\text{and}\qquad
PA^\alpha \text{ extends to a bounded operator on }\mathcal H.
\]
Fix \(s\in \mathbb R\). Assume that the operators
\[\begin{split}
L_s^\alpha(H)
&:=
\sum_{E,E'\in \spec(H)}
e^{s(E-E')}\widehat f(E-E')\,P_E(H)A^\alpha P_{E'}(H) \text{ and }\\
L_s^\alpha(H_0)
&:=
\sum_{E,E'\in \spec(H_0)}
e^{s(E-E')}\widehat f(E-E')\,P_E(H_0)A^\alpha P_{E'}(H_0)
\end{split}
\]
are well-defined, and denote $L^\alpha(H):=L^\alpha_0(H)$, resp.~$L^\alpha(H_0):=L^\alpha_0(H_0)$. Then
\[
L_s^\alpha(H)=L_s^\alpha(H_0)+\mathcal R_s^\alpha,
\qquad
Q\mathcal R_s^\alpha Q=0,
\]
and \(\mathcal R_s^\alpha\) is finite rank. In particular, if the conjugated operators
\[
e^{sH}L^\alpha(H)e^{-sH},
\qquad
e^{sH_0}L^\alpha(H_0)e^{-sH_0}
\]
are well-defined and coincide with the above series expansions, then with \(\mathcal R_s^\alpha\) finite rank
\[
e^{sH}L^\alpha(H)e^{-sH}
=
e^{sH_0}L^\alpha(H_0)e^{-sH_0}
+\mathcal R_s^\alpha,
\qquad
Q\mathcal R_s^\alpha Q=0.
\]

\end{lemma}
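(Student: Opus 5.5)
We compare the spectral data of \(H\) and \(H_0\) block by block, using the reduction \(\mathcal H=Q\mathcal H\oplus P\mathcal H\). The first step is to check that \(H\) is reduced by this decomposition. Since \([P,H_0]=0\) and \(R=PRP\), the operator \(H\) commutes with \(P\). Its restriction to \(Q\mathcal H\) is \(QH_0Q\), and its restriction to \(P\mathcal H\) is the finite-dimensional self-adjoint operator \(PH_0P+R\).

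Consequently, every spectral projection of \(H\) splits as
\[
P_E(H)=P_E(QH_0Q)+P_E(PH_0P+R),
\]
where the first summand lives on \(Q\mathcal H\) and the second on \(P\mathcal H\). In the same way, \(P_E(H_0)=QP_E(H_0)+PP_E(H_0)\), and \(QP_E(H_0)=P_E(QH_0Q)\) because \(QH_0Q=QH_0\) on \(Q\mathcal H\). So the spectral projections of \(H\) and \(H_0\) agree on \(Q\mathcal H\), and they can differ only on the finite-dimensional space \(P\mathcal H\). Only finitely many \(E\) carry a nonzero \(P\)-part, namely the eigenvalues of \(PH_0P+R\) together with the finitely many eigenvalues of \(H_0\) whose eigenspaces meet \(P\mathcal H\).

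Next we insert \(1=P+Q\) on both sides of \(A^\alpha\) in each series and sort the terms into four blocks. In the \(QQ\) block every summand \(e^{s(E-E')}\widehat f(E-E')QP_E(H)A^\alpha P_{E'}(H)Q\) coincides with the corresponding \(H_0\) summand, by the identity above. Summing over \(E,E'\) gives \(QL_s^\alpha(H)Q=QL_s^\alpha(H_0)Q\), hence \(Q\mathcal R_s^\alpha Q=0\).

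The remaining blocks are \(PL_s^\alpha Q\), \(QL_s^\alpha P\) and \(PL_s^\alpha P\), for both \(H\) and \(H_0\). Each of them has either range contained in \(P\mathcal H\) or is nonzero only on \(P\mathcal H\), so each has rank at most \(\dim P\mathcal H\). The only concern is that these blocks are actual operators rather than formal series.

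\begin{itemize}
\item For \(QL^\alpha_sP\), the sum over \(E'\) is finite, because \(P_{E'}(H)P\ne0\) for only finitely many \(E'\). Each term \(P_E A^\alpha P_{E'}P\) makes sense because \(P\mathcal H\subset D(A^\alpha)\). The range of \(QL^\alpha_sP\) is contained in the image of \(P\mathcal H\), which is finite-dimensional, so the operator has rank at most \(\dim P\mathcal H\).
\item For \(PL^\alpha_sQ\), the sum over \(E\) is finite, and \(PA^\alpha\) is bounded by assumption. The range lies in \(P\mathcal H\), so the rank is again at most \(\dim P\mathcal H\).
\item The block \(PL^\alpha_sP\) is a finite sum with range in \(P\mathcal H\).
\end{itemize}

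Setting \(\mathcal R_s^\alpha:=L^\alpha_s(H)-L^\alpha_s(H_0)\), all of its blocks except \(QQ\) are therefore finite rank, and its \(QQ\) block vanishes. Hence \(\mathcal R_s^\alpha\) is finite rank. The \emph{main obstacle} is this convergence and domain bookkeeping. We should verify that splitting the well-defined series into blocks is legitimate, i.e. that \(QL^\alpha_sQ\) is itself given by the restricted series, and that \(P_{E'}(H)P\ne0\) for only finitely many \(E'\), so that \(QL_s^\alpha P\) and \(PL_s^\alpha Q\) really are finite sums. We would handle this by applying the operators to finite linear combinations of eigenvectors, which form a core, and passing to the limit using the assumed well-definedness.

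Finally, the conjugation statement follows termwise. On each spectral piece, \(e^{sH}P_E(H)\,\cdot\,P_{E'}(H)e^{-sH}\) multiplies by \(e^{s(E-E')}\). So, whenever the conjugated operators are given by their series expansions, they equal \(L_s^\alpha(H)\) and \(L_s^\alpha(H_0)\) respectively, and the first part of the lemma gives the identity with the same finite-rank \(\mathcal R_s^\alpha\).
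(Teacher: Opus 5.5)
Your proposal is correct and follows essentially the same route as the paper. You split along $Q\mathcal H\oplus P\mathcal H$, cancel the $QQ$ block because the spectral projections of $H$ and $H_0$ coincide on $Q\mathcal H$, get finite rank of the $QP$, $PQ$ and $PP$ blocks from $P\mathcal H\subset D(A^\alpha)$ and the boundedness of $PA^\alpha$, and obtain the conjugated identity termwise, while relying (as the paper does) on the assumed well-definedness of the series for the convergence bookkeeping. Your splitting $P_E(H)=P_E(QH_0Q)+P_E(PH_0P+R)$ is slightly more careful than the paper, which lists the spectral projections of $H_0$ and $H$ as disjoint families even when an eigenvalue appears in both blocks.
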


\begin{proof}
Since \([P,H_0]=0\), the subspaces \(P\mathcal H\) and \(Q\mathcal H\) reduce \(H_0\). Since \(R=PRP\), one has $RQ=QR=0.$
Hence \(H=H_0+R\) is block diagonal with respect to the orthogonal decomposition
$\mathcal H=Q\mathcal H\oplus P\mathcal H,$
namely
\[
H=QH_0Q\oplus (PH_0P+R).
\]
We now describe the spectral projections of \(H_0\) and \(H\) relative to this decomposition. Since \(Q\mathcal H\) reduces \(H_0\), we may write
\[
QH_0Q=\sum_{\mu\in \spec(QH_0Q)} \mu\,Q_\mu,
\]
where the \(Q_\mu\) are pairwise orthogonal spectral projections satisfying \(Q_\mu\le Q\). On the finite-dimensional space \(P\mathcal H\), the operator \(PH_0P\) has a spectral decomposition
\[
PH_0P=\sum_{\ell=1}^r \nu_\ell \widetilde\Pi_\ell,
\]
where \(\widetilde\Pi_\ell\le P\), the projections \(\widetilde\Pi_\ell\) are pairwise orthogonal, and
\[
\sum_{\ell=1}^r \widetilde\Pi_\ell=P.
\]
Likewise, since \(PH_0P+R\) acts on the finite-dimensional space \(P\mathcal H\), it has a spectral decomposition
\[
PH_0P+R=\sum_{j=1}^m \lambda_j \Pi_j,
\]
with \(\Pi_j\le P\), pairwise orthogonal, and
\[
\sum_{j=1}^m \Pi_j=P.
\]
It follows that the spectral projections of \(H_0\) are precisely $\{Q_\mu\}_{\mu\in \spec(QH_0Q)}\cup \{\widetilde\Pi_\ell\}_{\ell=1}^r,$
while the spectral projections of \(H\) are precisely $\{Q_\mu\}_{\mu\in \spec(QH_0Q)}\cup \{\Pi_j\}_{j=1}^m.$
Indeed, on \(Q\mathcal H\), the operators \(H\) and \(H_0\) coincide, whereas on \(P\mathcal H\) the operator \(H_0\) is replaced by \(PH_0P+R\).
Now expand \(L_s^\alpha(H)\) according to this decomposition. Since the spectral projections of \(H\) are \(\{Q_\mu\}\cup\{\Pi_j\}\), we obtain
\[
\begin{aligned}
L_s^\alpha(H)
&=
\sum_{\mu,\nu} e^{s(\mu-\nu)}\widehat f(\mu-\nu)\,Q_\mu A^\alpha Q_\nu 
+\sum_{\mu,j} e^{s(\mu-\lambda_j)}\widehat f(\mu-\lambda_j)\,Q_\mu A^\alpha \Pi_j 
\\&\quad  +\sum_{j,\mu} e^{s(\lambda_j-\mu)}\widehat f(\lambda_j-\mu)\,\Pi_j A^\alpha Q_\mu \qquad +\sum_{j,k} e^{s(\lambda_j-\lambda_k)}\widehat f(\lambda_j-\lambda_k)\,\Pi_j A^\alpha \Pi_k .
\end{aligned}
\]
Similarly,
\[
\begin{aligned}
L_s^\alpha(H_0)
&=
\sum_{\mu,\nu} e^{s(\mu-\nu)}\widehat f(\mu-\nu)\,Q_\mu A^\alpha Q_\nu 
+\sum_{\mu,\ell} e^{s(\mu-\nu_\ell)}\widehat f(\mu-\nu_\ell)\,Q_\mu A^\alpha \widetilde\Pi_\ell \\
&\quad +\sum_{\ell,\mu} e^{s(\nu_\ell-\mu)}\widehat f(\nu_\ell-\mu)\,\widetilde\Pi_\ell A^\alpha Q_\mu
+\sum_{\ell,\ell'} e^{s(\nu_\ell-\nu_{\ell'})}\widehat f(\nu_\ell-\nu_{\ell'})\,\widetilde\Pi_\ell A^\alpha \widetilde\Pi_{\ell'} .
\end{aligned}
\]
Subtracting these two expressions yields
\[
L_s^\alpha(H)-L_s^\alpha(H_0)=\mathcal R_s^\alpha,
\]
where
\[
\begin{aligned}
\mathcal R_s^\alpha
&=
\sum_{\mu,j} e^{s(\mu-\lambda_j)}\widehat f(\mu-\lambda_j)\,Q_\mu A^\alpha \Pi_j
-\sum_{\mu,\ell} e^{s(\mu-\nu_\ell)}\widehat f(\mu-\nu_\ell)\,Q_\mu A^\alpha \widetilde\Pi_\ell \\
&\quad
+\sum_{j,\mu} e^{s(\lambda_j-\mu)}\widehat f(\lambda_j-\mu)\,\Pi_j A^\alpha Q_\mu
-\sum_{\ell,\mu} e^{s(\nu_\ell-\mu)}\widehat f(\nu_\ell-\mu)\,\widetilde\Pi_\ell A^\alpha Q_\mu \\
&\quad
+\sum_{j,k} e^{s(\lambda_j-\lambda_k)}\widehat f(\lambda_j-\lambda_k)\,\Pi_j A^\alpha \Pi_k
-\sum_{\ell,\ell'} e^{s(\nu_\ell-\nu_{\ell'})}\widehat f(\nu_\ell-\nu_{\ell'})\,\widetilde\Pi_\ell A^\alpha \widetilde\Pi_{\ell'}.
\end{aligned}
\]
The \(Q\)-\(Q\) block cancels completely, because \(H\) and \(H_0\) have the same spectral projections on \(Q\mathcal H\). Therefore
\[
Q\mathcal R_s^\alpha Q=0.
\]
It remains to show that \(\mathcal R_s^\alpha\) is finite rank. Since \(P\mathcal H\subset D(A^\alpha)\), the operator \(A^\alpha P\) is well-defined on all of \(\mathcal H\). Because \(P\mathcal H\) is finite-dimensional, the range of \(A^\alpha P\) is contained in the finite-dimensional space \(A^\alpha(P\mathcal H)\). Hence \(A^\alpha P\) is finite rank. Since \(\Pi_j,\widetilde\Pi_\ell\le P\), it follows that
\[
A^\alpha \Pi_j=(A^\alpha P)\Pi_j,
\qquad
A^\alpha \widetilde\Pi_\ell=(A^\alpha P)\widetilde\Pi_\ell
\]
are finite-rank operators. Consequently,
\[
Q_\mu A^\alpha \Pi_j,
\qquad
Q_\mu A^\alpha \widetilde\Pi_\ell
\]
are finite rank as well. Next, by assumption, the densely defined operator \(PA^\alpha\) extends to a bounded operator on \(\mathcal H\), which we again denote by \(PA^\alpha\). Since \(\Pi_j,\widetilde\Pi_\ell\le P\), we have on \(D(A^\alpha)\),
\[
\Pi_j A^\alpha=\Pi_j(PA^\alpha),
\qquad
\widetilde\Pi_\ell A^\alpha=\widetilde\Pi_\ell(PA^\alpha).
\]
Hence \(\Pi_j A^\alpha\) and \(\widetilde\Pi_\ell A^\alpha\) extend continuously to all of \(\mathcal H\). Their ranges are contained in
\[
\Pi_j\mathcal H\subset P\mathcal H,
\qquad
\widetilde\Pi_\ell\mathcal H\subset P\mathcal H,
\]
which are finite-dimensional. Thus \(\Pi_j A^\alpha\) and \(\widetilde\Pi_\ell A^\alpha\) are finite rank. Therefore
\[
\Pi_j A^\alpha Q_\mu,\qquad
\widetilde\Pi_\ell A^\alpha Q_\mu,\qquad
\Pi_j A^\alpha \Pi_k,\qquad
\widetilde\Pi_\ell A^\alpha \widetilde\Pi_{\ell'}
\]
are all finite rank. Since there are only finitely many indices \(j,k,\ell,\ell'\), the last two lines involve only finitely many finite-rank terms. For the mixed terms, note that
\[
\sum_j \Pi_j=P,
\qquad
\sum_\ell \widetilde\Pi_\ell=P.
\]
Hence
\[
\sum_{\mu,j} e^{s(\mu-\lambda_j)}\widehat f(\mu-\lambda_j)\,Q_\mu A^\alpha \Pi_j
=
\Big(\sum_{\mu} e^{s\mu} \widehat f(\mu-\lambda_1)Q_\mu,\ldots\Big)
\]
is more cleanly viewed term-wise as a sum of operators, each factoring through the finite-dimensional space \(P\mathcal H\). Indeed, for each fixed \(j\),
\[
\sum_\mu e^{s(\mu-\lambda_j)}\widehat f(\mu-\lambda_j)\,Q_\mu A^\alpha \Pi_j
=
\Big(\sum_\mu e^{s(\mu-\lambda_j)}\widehat f(\mu-\lambda_j)\,Q_\mu\Big)A^\alpha \Pi_j.
\]
By well-definedness of \(L_s^\alpha(H)\), this operator is well-defined, and since \(A^\alpha \Pi_j\) is finite rank, it is finite rank. Summing over the finitely many \(j\) shows that
\[
\sum_{\mu,j} e^{s(\mu-\lambda_j)}\widehat f(\mu-\lambda_j)\,Q_\mu A^\alpha \Pi_j
\]
is finite rank. The same argument applies to
\[
\sum_{\mu,\ell} e^{s(\mu-\nu_\ell)}\widehat f(\mu-\nu_\ell)\,Q_\mu A^\alpha \widetilde\Pi_\ell,
\]
and similarly, using that \(\Pi_j A^\alpha\) and \(\widetilde\Pi_\ell A^\alpha\) are finite rank, to the terms
\[
\sum_{j,\mu} e^{s(\lambda_j-\mu)}\widehat f(\lambda_j-\mu)\,\Pi_j A^\alpha Q_\mu,
\qquad
\sum_{\ell,\mu} e^{s(\nu_\ell-\mu)}\widehat f(\nu_\ell-\mu)\,\widetilde\Pi_\ell A^\alpha Q_\mu.
\]
Therefore, every group of terms in \(\mathcal R_s^\alpha\) is finite rank, and hence \(\mathcal R_s^\alpha\) itself is finite rank. Finally, suppose that the conjugated operators \(e^{sH}L^\alpha(H)e^{-sH}\) and \(e^{sH_0}L^\alpha(H_0)e^{-sH_0}\) are well-defined and agree with the above spectral expansions. Then, using
\[
e^{sH}P_E(H)=e^{sE}P_E(H),
\qquad
P_{E'}(H)e^{-sH}=e^{-sE'}P_{E'}(H),
\]
we obtain
\[
e^{sH}L^\alpha(H)e^{-sH}
=
\sum_{E,E'} e^{s(E-E')}\widehat f(E-E')\,P_E(H)A^\alpha P_{E'}(H)
=
L_s^\alpha(H),
\]
and similarly
\[
e^{sH_0}L^\alpha(H_0)e^{-sH_0}=L_s^\alpha(H_0).
\]
The stated identity, therefore, follows from the first part of the proof. 
\end{proof}

\noindent Next, we establish that the finite-rank stability of Lemma \ref{lem:conjugated-L-finite-rank-remainder} propagates to the quadratic expressions appearing in the generator $L_{H_{n,M}}$ (cf.~\eqref{LHnmgenerator}). This is crucial since the generator depends on terms such as \((L^\alpha)^\dagger L^\alpha\) and \(L^\alpha (L^\alpha)^\dagger\).

\begin{corollary}[Finite-rank stability of quadratic expressions]
\label{cor:quadratic-finite-rank}
Under the assumptions of Lemma~\ref{lem:conjugated-L-finite-rank-remainder}, set \(L:=L_s^\alpha(H)\), \(L_0:=L_s^\alpha(H_0)\), and \(\mathcal R:=\mathcal R_s^\alpha\), so that \(L=L_0+\mathcal R\) with \(\mathcal R\) finite rank. Assume there exists a dense subspace \(\mathcal D\subset\mathcal H\) with \(\mathcal D\subset D(L)\cap D(L_0)\), \(Lx=L_0x+\mathcal R x\) for all \(x\in\mathcal D\), and such that all expressions below are well-defined on \(\mathcal D\). If moreover \(\operatorname{Ran}(\mathcal R)\subset D(L^\dagger)\cap D(L_0^\dagger)\), then on \(\mathcal D\)
\[
L^\dagger L-L_0^\dagger L_0=L_0^\dagger \mathcal R+\mathcal R^\dagger L_0+\mathcal R^\dagger \mathcal R,
\]
and \(K:=L^\dagger L-L_0^\dagger L_0\) is finite rank on \(\mathcal D\). Likewise, if \(\operatorname{Ran}(\mathcal R^\dagger)\subset D(L)\cap D(L_0)\), then on every dense subspace \(\mathcal D'\subset D(L^\dagger)\cap D(L_0^\dagger)\) such that \(L^\dagger x=L_0^\dagger x+\mathcal R^\dagger x\) for all \(x\in\mathcal D'\) and all expressions are well-defined,
\[
LL^\dagger-L_0L_0^\dagger=L_0\mathcal R^\dagger+\mathcal R\mathcal R^\dagger+\mathcal R L_0^\dagger,
\]
and \(\widetilde K:=LL^\dagger-L_0L_0^\dagger\) is finite rank on \(\mathcal D'\).
\end{corollary}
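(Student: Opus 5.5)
The plan is a direct algebraic expansion followed by a rank count. First I derive the adjoint identity. Since $\mathcal R$ is finite rank, hence bounded, we have $(L_0+\mathcal R)^\dagger\supseteq L_0^\dagger+\mathcal R^\dagger$ on $D(L_0^\dagger)$. Consequently, for $y\in D(L^\dagger)\cap D(L_0^\dagger)$ we get $L^\dagger y=L_0^\dagger y+\mathcal R^\dagger y$. This is checked by pairing with $x\in D(L)\cap D(L_0)$, which is dense because it contains $\mathcal D$:
\[
\langle L x,y\rangle=\langle L_0x,y\rangle+\langle\mathcal R x,y\rangle=\langle x,L_0^\dagger y\rangle+\langle x,\mathcal R^\dagger y\rangle .
\]

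Next, for $x\in\mathcal D$ I expand $L^\dagger Lx=L^\dagger(L_0x+\mathcal Rx)$. By the well-definedness hypothesis, $L_0x\in D(L^\dagger)\cap D(L_0^\dagger)$, so $L^\dagger L_0x=L_0^\dagger L_0x+\mathcal R^\dagger L_0x$. By the range hypothesis $\operatorname{Ran}(\mathcal R)\subset D(L^\dagger)\cap D(L_0^\dagger)$, we also get $L^\dagger\mathcal Rx=L_0^\dagger\mathcal Rx+\mathcal R^\dagger\mathcal Rx$. Adding these gives the claimed identity $L^\dagger L-L_0^\dagger L_0=L_0^\dagger\mathcal R+\mathcal R^\dagger L_0+\mathcal R^\dagger\mathcal R$ on $\mathcal D$.

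For finite rank, each of the three terms is handled separately.
\begin{itemize}
\item $\mathcal R^\dagger\mathcal R$ has rank at most $\operatorname{rank}\mathcal R$.
\item $\mathcal R^\dagger L_0$ has range inside $\operatorname{Ran}(\mathcal R^\dagger)$. This space is finite dimensional, since $\mathcal R$ bounded and finite rank gives $\mathcal R^\dagger$ finite rank, with $\operatorname{Ran}\mathcal R^\dagger=(\ker\mathcal R)^\perp$ of dimension equal to $\operatorname{rank}\mathcal R$.
\item $L_0^\dagger\mathcal R$ maps $\mathcal D$ into $L_0^\dagger(\operatorname{Ran}\mathcal R)$. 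This is the image of a finite-dimensional subspace of $D(L_0^\dagger)$ under a linear map, so it is finite dimensional.
\end{itemize}
Hence $K$ restricted to $\mathcal D$ has finite-dimensional range.

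The second statement is symmetric. For $x\in\mathcal D'$, write $LL^\dagger x=L(L_0^\dagger x+\mathcal R^\dagger x)$. Use $\operatorname{Ran}(\mathcal R^\dagger)\subset D(L)\cap D(L_0)$ together with $L=L_0+\mathcal R$ on $D(L)\cap D(L_0)$; the latter follows from the defining identity on $\mathcal D$, or more precisely from $\mathcal R=L-L_0$ as given by Lemma~\ref{lem:conjugated-L-finite-rank-remainder} on the common domain. Also use that $L_0^\dagger x$ lies in the common domain by the well-definedness assumption. This yields $L_0L_0^\dagger+\mathcal RL_0^\dagger+L_0\mathcal R^\dagger+\mathcal R\mathcal R^\dagger$, and the same three rank arguments apply with $\mathcal R$ and $\mathcal R^\dagger$ interchanged.

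The only delicate point is the domain bookkeeping: justifying $L=L_0+\mathcal R$ on vectors such as $\mathcal R^\dagger y$ and $L_0^\dagger x$, which need not lie in $\mathcal D$. I would handle this by invoking the explicit spectral-series identity $L-L_0=\mathcal R$ from Lemma~\ref{lem:conjugated-L-finite-rank-remainder}, which holds on the whole common domain. Beyond that, I would rely on the standing assumption that "all expressions are well-defined".
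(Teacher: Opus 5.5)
Your proposal is correct and follows the paper's proof essentially step for step. Both expand $L^\dagger(L_0x+\mathcal Rx)$ using $L^\dagger=L_0^\dagger+\mathcal R^\dagger$ on $D(L^\dagger)\cap D(L_0^\dagger)$ together with the range hypothesis, count ranks term by term, and argue symmetrically for $LL^\dagger$. You additionally justify the adjoint identity by pairing against the dense set $\mathcal D$, which suffices on its own. You also make explicit that the second part needs $L=L_0+\mathcal R$ on all of $D(L)\cap D(L_0)$, which you take from Lemma~\ref{lem:conjugated-L-finite-rank-remainder}; the paper uses both points tacitly.
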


\begin{proof}
Since \(\mathcal R\) is finite rank, it is bounded, hence \(\mathcal R^\dagger\) is bounded. For \(x\in\mathcal D\) one has \(Lx=L_0x+\mathcal R x\), thus \(L^\dagger Lx=L^\dagger L_0x+L^\dagger \mathcal R x\), and subtracting \(L_0^\dagger L_0x\) gives \((L^\dagger L-L_0^\dagger L_0)x=(L^\dagger-L_0^\dagger)L_0x+L^\dagger \mathcal R x\). On \(D(L^\dagger)\cap D(L_0^\dagger)\) one has \(L^\dagger=L_0^\dagger+\mathcal R^\dagger\), and since \(\operatorname{Ran}(\mathcal R)\subset D(L^\dagger)\cap D(L_0^\dagger)\), this yields \(L^\dagger L_0x=L_0^\dagger L_0x+\mathcal R^\dagger L_0x\) and \(L^\dagger \mathcal R x=L_0^\dagger \mathcal R x+\mathcal R^\dagger \mathcal R x\), hence \(L^\dagger L-L_0^\dagger L_0=L_0^\dagger \mathcal R+\mathcal R^\dagger L_0+\mathcal R^\dagger \mathcal R\) on \(\mathcal D\). Finite-rankness follows since \(\operatorname{Ran}(\mathcal R)\) is finite-dimensional and contained in \(D(L_0^\dagger)\), so \(L_0^\dagger \mathcal R\) is finite rank; \(\mathcal R^\dagger L_0\) is finite rank; and \(\mathcal R^\dagger \mathcal R\) is finite rank.

For the second identity, let \(x\in\mathcal D'\). Then \(L^\dagger x=L_0^\dagger x+\mathcal R^\dagger x\), hence \(LL^\dagger x=LL_0^\dagger x+L\mathcal R^\dagger x\), and subtracting \(L_0L_0^\dagger x\) gives \((LL^\dagger-L_0L_0^\dagger)x=(L-L_0)L_0^\dagger x+L\mathcal R^\dagger x=\mathcal R L_0^\dagger x+L\mathcal R^\dagger x\). Since \(\operatorname{Ran}(\mathcal R^\dagger)\subset D(L)\cap D(L_0)\), one has \(L\mathcal R^\dagger x=L_0\mathcal R^\dagger x+\mathcal R\mathcal R^\dagger x\), hence \(LL^\dagger-L_0L_0^\dagger=L_0\mathcal R^\dagger+\mathcal R\mathcal R^\dagger+\mathcal R L_0^\dagger\) on \(\mathcal D'\). Again, each term is of finite rank, so \(\widetilde K\) is of finite rank.
\end{proof}

\noindent We can now combine these two ingredients to study the full perturbed generator $L_{H_{n,M}}$: all modifications remain confined to finite-rank coefficients. We recall that conjugation by $e^{\pm\beta H/4}$ leads to the generator on $\mathscr{T}_2(\cH)$: on $\mathscr F:=\operatorname{span}\{|E\rangle\langle E'|:E,E'\in\operatorname{Sp}(H)\}$, denoting $\Gamma_\tau(X):=e^{\tau H}Xe^{-\tau H}$ and a finite set of bare jumps $\{A^\alpha\}_{\alpha\in\mathcal{A}}$,
\begin{align*}
L_{H}
=
-i(B_{+,H}\bullet -\bullet B_{-,H})
+\sum_{\alpha\in\mathcal A}
\Big(L^\alpha_{+,H}\,\bullet\,(L^\alpha_{-,H})^\dagger-\tfrac12 K^\alpha_{+,H}\bullet -\tfrac12 \bullet K^\alpha_{-,H}\Big),
\end{align*}
with 
\begin{align}
B_H&:=\frac{i}{2}\sum_{\alpha\in\mathcal A}\sum_{E,F,G\in\operatorname{Sp}(H)}
\tanh\!\Big(\frac{\beta(F-E)}{4}\Big)\,
\overline{\widehat f(G-F)}\,\widehat f(G-E)\,
P_{F}(A^\alpha)^\dagger P_GA^\alpha P_E\nonumber \\
&=\frac{i}{2}\sum_{\alpha\in\mathcal A}\sum_{E,F\in\operatorname{Sp}(H)}
\tanh\!\Big(\frac{\beta(F-E)}{4}\Big)\,\,
P_{F}L^\alpha(H)^\dagger L^\alpha(H) P_E\label{BHexpressionLa}
\end{align}
and
\begin{align}
L^\alpha_{\pm,H}:=\Gamma_{\pm\beta/4}(L^\alpha(H)),
\quad
K^\alpha_{\pm,H}:=\Gamma_{\pm\beta/4}\big((L^\alpha(H))^\dagger L^\alpha(H)\big),\quad B_{\pm,H}:=\Gamma_{\pm\beta/4}(B_H).\label{pmallneededops}
\end{align} 
Analogously, we denote the same operators associated with $H_0$ by $L_{\pm,0}^\alpha$ and $B_{\pm,0}$. Invoking Lemma \ref{lem:conjugated-L-finite-rank-remainder}, we further denote $L_{\pm,H}^\alpha=L_{\pm,0}^\alpha+\mathcal R_\pm^\alpha$ with \(\mathcal R_\pm^\alpha\) having finite rank.

\begin{proposition}[Finite-rank perturbation of the conjugated generator]
\label{prop:conjugated-generator-perturbation}
  With the notations of the above paragraph,
\[
L_{\operatorname{pert}}=L_{H}-L_{H_0}=-i\big((B_{+,H}-B_{+,0})\bullet -\bullet (B_{-,H}-B_{-,0})\big)+\sum_{\alpha\in\mathcal A}\big(\mathfrak G_\alpha -\tfrac12\mathfrak A_\alpha -\tfrac12\mathfrak B_\alpha\big),
\]
where
\[
\mathfrak G_\alpha=\mathcal R_+^\alpha \bullet  (L_{+,0}^\alpha)^\dagger+L_{+,0}^\alpha \bullet (\mathcal R_+^\alpha)^\dagger+\mathcal R_+^\alpha \bullet  (\mathcal R_+^\alpha)^\dagger,
\]
\[
\mathfrak A_\alpha=\big((L_{-,H}^\alpha)^\dagger L_{+,H}^\alpha-(L_{-,0}^\alpha)^\dagger L_{+,0}^\alpha\big)\bullet \qquad \text{ and }\qquad
\mathfrak B_\alpha(x)=\bullet \big(L_{-,H}^\alpha (L_{+,H}^\alpha)^\dagger-L_{-,0}^\alpha (L_{+,0}^\alpha)^\dagger\big).
\]
Moreover, on any common domain where all products are defined,
\begin{equation}
    \begin{split}
    \label{eq:anti-comm_terms}
(L_{-,H}^\alpha)^\dagger L_{+,H}^\alpha-(L_{-,0}^\alpha)^\dagger L_{+,0}^\alpha&=(L_{-,0}^\alpha)^\dagger \mathcal R_+^\alpha+(\mathcal R_-^\alpha)^\dagger L_{+,0}^\alpha+(\mathcal R_-^\alpha)^\dagger \mathcal R_+^\alpha,\\
L_{-,H}^\alpha (L_{+,H}^\alpha)^\dagger-L_{-,0}^\alpha (L_{+,0}^\alpha)^\dagger&=L_{-,0}^\alpha (\mathcal R_+^\alpha)^\dagger+\mathcal R_-^\alpha (L_{+,0}^\alpha)^\dagger+\mathcal R_-^\alpha (\mathcal R_+^\alpha)^\dagger.
    \end{split}
\end{equation}
In particular, all coefficient differences are finite rank under the same range assumptions as in Corollary~\ref{cor:quadratic-finite-rank}.
\end{proposition}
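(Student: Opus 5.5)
The plan is to subtract the two generators term by term on the common core $\mathscr F$, inserting the decompositions $L^\alpha_{\pm,H}=L^\alpha_{\pm,0}+\mathcal R^\alpha_\pm$ obtained from Lemma~\ref{lem:conjugated-L-finite-rank-remainder} with $s=\pm\beta/4$. The Hamiltonian part of $L_H$ is already linear in $B_{\pm,H}$, so its difference is simply $-i\big((B_{+,H}-B_{+,0})\bullet-\bullet(B_{-,H}-B_{-,0})\big)$ and needs no further work. For the jump part, I expand the sandwich term bilinearly:
\[
L^\alpha_{+,H}X(L^\alpha_{-,H})^\dagger-L^\alpha_{+,0}X(L^\alpha_{-,0})^\dagger .
\]
This gives three cross and remainder terms, which I collect into $\mathfrak G_\alpha$. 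One bookkeeping point needs care: the statement writes $\mathfrak G_\alpha$ with $(L^\alpha_{+,0})^\dagger$ and $(\mathcal R^\alpha_+)^\dagger$ on the right. So I will first check, via the KMS relation \eqref{eq:KMSFilterFunction} for $\widehat f$, how $\Gamma_{\pm\beta/4}$ intertwines $L^\alpha$ and its adjoint in the $\mathscr T_2$ picture. Concretely, I will verify that the right factor $(L^\alpha_{-,H})^\dagger$ can be rewritten through the $+$-conjugated objects (or a corresponding relabeling of adjoint jumps $\sigma\leftrightarrow\dagger$ within the sum over $\alpha$), so that the decomposition of the right factor involves $\mathcal R_+^\alpha$. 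If this identification fails, the identity still holds with $\mathcal R_-^\alpha$ in place of $\mathcal R_+^\alpha$ on the right, and the finite-rank conclusion is unaffected.

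The anticommutator terms are handled the same way. I write $K^\alpha_{\pm,H}=\Gamma_{\pm\beta/4}(L^{\alpha\dagger}L^\alpha)$ as the product $(L^\alpha_{\mp,H})^\dagger L^\alpha_{\pm,H}$, using $\Gamma_\tau(XY)=\Gamma_\tau(X)\Gamma_\tau(Y)$ and $\Gamma_\tau(X^\dagger)=\Gamma_{-\tau}(X)^\dagger$ on spectral matrix units. I then substitute $L_{\pm,H}=L_{\pm,0}+\mathcal R_\pm$. This is exactly the computation in Corollary~\ref{cor:quadratic-finite-rank}, but applied to two different factors $L_-$ and $L_+$ instead of the pair $L$, $L$. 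Expanding gives the two identities \eqref{eq:anti-comm_terms}, and hence the operators $\mathfrak A_\alpha$ and $\mathfrak B_\alpha$.

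For the finite-rank claim, the $\mathcal R$-containing products are finite rank by the range assumptions of Corollary~\ref{cor:quadratic-finite-rank}: $\operatorname{Ran}\mathcal R_\pm$ lies in the domains of the adjoints, and $\mathcal R^\dagger$ is bounded. For $B$, I use \eqref{BHexpressionLa}, i.e.\ $B_H=\frac i2\sum_{E,F}\tanh(\beta(F-E)/4)P_F L^{\alpha\dagger}L^\alpha P_E$. The spectral projections of $H$ and $H_0$ coincide on $Q\mathcal H$, and Corollary~\ref{cor:quadratic-finite-rank} gives $L^\dagger L=L_0^\dagger L_0+K$ with $K$ finite rank. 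Splitting $B_H-B_{H_0}$ into its $QQ$ block (which involves only $K$ and the common projections $Q_\mu$) and the blocks touching $P$ (which factor through the finite-dimensional space $P\mathcal H$) shows that $B_H-B_{H_0}$ is finite rank. Conjugating by $e^{\pm\beta H/4}$ preserves this.

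I expect the main obstacle to be domain bookkeeping rather than the algebra. On $\mathscr F$ all the products must make sense, and the identity $\Gamma_\tau(L^{\alpha\dagger}L^\alpha)=(L^\alpha_{-\tau})^\dagger L^\alpha_{\tau}$ must hold. I will verify this identity on rank-one spectral units $|E\rangle\langle E'|$, where every expression reduces to convergent spectral sums, as guaranteed by the well-definedness hypotheses of Lemma~\ref{lem:conjugated-L-finite-rank-remainder}.
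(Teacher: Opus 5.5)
Your overall route matches the paper's: subtract the two generators and expand $L^\alpha_{\pm,H}=L^\alpha_{\pm,0}+\mathcal R^\alpha_\pm$ bilinearly. There is, however, a genuine gap in the gain term.

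\textbf{The gain term.} You start from the displayed term $L^\alpha_{+,H}\bullet(L^\alpha_{-,H})^\dagger$ and hope to trade $(L^\alpha_{-})^\dagger$ for $(L^\alpha_{+})^\dagger$ using the KMS relation or a relabeling $\sigma\leftrightarrow\dagger$. That step cannot succeed. For $H_{0,n}$ the conjugation factors cancel, and
\[
\sum_{\alpha\in\mathcal A}L^\alpha_{+,0}\,x\,(L^\alpha_{-,0})^\dagger=\sum_{i,k}\big(\nu_-\,a_{ik}xa_{ik}^\dagger+\nu_+\,a_{ik}^\dagger xa_{ik}\big),
\]
whereas \eqref{eq:HSgenerator-specialized} has $\sqrt{\nu_+\nu_-}$ in front of both terms. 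KMS forces $\nu_-=e^{2\beta}\nu_+$, so the two differ, and no relabeling can produce the geometric-mean coefficient. The left-hand side is not even symmetric on $\mathscr T_2$. Your fallback, with $L^\alpha_{-,0}$ and $\mathcal R^\alpha_-$ on the right, is therefore a different identity, not the stated $\mathfrak G_\alpha$.

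The missing step is the one the paper's proof takes: derive the generator from its definition,
\[
L_H(x)=e^{\beta H/4}\,\mathcal L_H\big(e^{-\beta H/4}xe^{-\beta H/4}\big)\,e^{\beta H/4}.
\]
The gain term is then $\Gamma_{\beta/4}(L^\alpha)\,x\,\Gamma_{-\beta/4}((L^\alpha)^\dagger)=L^\alpha_{+,H}\,x\,(L^\alpha_{+,H})^\dagger$. This uses only the identity $\Gamma_{-\tau}(X^\dagger)=\Gamma_{\tau}(X)^\dagger$, which you already invoke. Expanding $L^\alpha_{+,H}=L^\alpha_{+,0}+\mathcal R^\alpha_+$ on both sides gives exactly $\mathfrak G_\alpha$, with no KMS input. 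The $(L^\alpha_-)^\dagger$ in the displayed generator should be read as $(L^\alpha_+)^\dagger$, as the paper's proof itself computes.

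\textbf{The term $\mathfrak B_\alpha$.} You pass over a similar mismatch here. Your factorization $K^\alpha_{-}=(L^\alpha_{+})^\dagger L^\alpha_{-}$ is correct. Its expansion, however, is
\[
(L^\alpha_{+,0})^\dagger\mathcal R^\alpha_-+(\mathcal R^\alpha_+)^\dagger L^\alpha_{-,0}+(\mathcal R^\alpha_+)^\dagger\mathcal R^\alpha_-,
\]
not the second line of \eqref{eq:anti-comm_terms}. That line concerns $L^\alpha_{-}(L^\alpha_{+})^\dagger=\Gamma_{-\beta/4}\big(L^\alpha(L^\alpha)^\dagger\big)$, which is how the paper's proof identifies the right-multiplication coefficient. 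The anticommutator in $\mathcal L_H$, though, involves $(L^\alpha)^\dagger L^\alpha$. So do not claim your expansion reproduces the stated $\mathfrak B_\alpha$. The stated line is pure bilinear algebra and true as written, and finite rank holds for either ordering, so its use in Theorem~\ref{corr:Gaussian_gap} is unaffected.

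\textbf{The coherent term.} Your argument that $B_H-B_{H_0}$ is finite rank (which the paper's proof does not address) needs one more input. The $QQ$ block is the Schur product
\[
\tfrac i2\sum_{\mu,\nu}\tanh\big(\beta(\mu-\nu)/4\big)\,Q_\mu KQ_\nu ,
\]
and such a product of a finite-rank $K$ is in general not finite rank, because the kernel $\tanh(c(\mu-\nu))$ has infinite rank. It is finite rank here because $\mathcal R^\alpha$, and hence $K$, is supported on finitely many eigenspaces of $H_{0,n}$. This holds since $\Pi_M\mathcal H_n$ is, and $a_{ik},a_{ik}^\dagger$ shift the oscillator energy by a fixed amount. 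Finite rank of $K$ alone does not give this.
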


\begin{proof}
Insert \(\rho=e^{-\beta H/4}xe^{-\beta H/4}\) into \(\mathcal L_{\widehat f,H}\) and conjugate by \(e^{\beta H/4}\). The commutator gives \(e^{\beta H/4}[B_H,\rho]e^{\beta H/4}=B_{+,H}x-xB_{-,H}\). The gain term becomes \(e^{\beta H/4}L^\alpha(H)\rho (L^\alpha(H))^\dagger e^{\beta H/4}=L_{+,H}^\alpha x (L_{+,H}^\alpha)^\dagger\). For the anticommutator, note \((L_{-,H}^\alpha)^\dagger=e^{\beta H/4}(L^\alpha(H))^\dagger e^{-\beta H/4}\), hence
\begin{align*}
&(L_{-,H}^\alpha)^\dagger L_{+,H}^\alpha=e^{\beta H/4}(L^\alpha(H))^\dagger L^\alpha(H)e^{-\beta H/4}\\
&  L_{-,H}^\alpha (L_{+,H}^\alpha)^\dagger=e^{-\beta H/4}L^\alpha(H)(L^\alpha(H))^\dagger e^{\beta H/4}
\end{align*}
which yields the displayed form. The same identities hold for \(H_0\), and subtraction gives the difference formula. Using \(L_{+,H}^\alpha=L_{+,0}^\alpha+\mathcal R_+^\alpha\), expand \((L_{+,H}^\alpha)x(L_{+,H}^\alpha)^\dagger\) to obtain \(\mathfrak G_\alpha\). Similarly, combining \(L_{-,H}^\alpha=L_{-,0}^\alpha+\mathcal R_-^\alpha\) and \(L_{+,H}^\alpha=L_{+,0}^\alpha+\mathcal R_+^\alpha\) gives the expansions for \(\mathfrak A_\alpha\) and \(\mathfrak B_\alpha\). The explicit identities for the coefficient differences follow by direct expansion on a common domain where all products are defined. Since \(\mathcal R_\pm^\alpha\) are finite rank, all purely quadratic terms are finite rank, and the mixed terms are finite rank under the same domain and range assumptions as in Corollary~\ref{cor:quadratic-finite-rank}.
\end{proof}

\subsection{Spectral gap for finite-rank perturbation of quadratic Hamiltonians}
\label{sec:GapForFiniteRankPerturb}

\noindent Next, we make use of the structural result of Proposition \ref{prop:conjugated-generator-perturbation} in the special case of a Gaussian reference Hamiltonian, i.e.~$H_0\equiv H_{0,n}$, $H\equiv H_{n,M}=H_{0,n}+ W_{n,M}$, and the bare jumps are chosen as $A^\alpha:=a_{ij}^\sigma$, for $\alpha=(i,j,\sigma)\in[n]\times [d]\times \{\emptyset,\dagger\}$, to show the positivity of the spectral gap of $L_{H_{n,M}}$. The proof uses the combined fact that the unperturbed generator is explicit, and the perturbation inherits a very rigid structure.

\medskip

\noindent First, we can easily verify that the unperturbed filtered jumps introduced in \eqref{pmallneededops} for $H=H_{0,n}$ take the form, for any $\alpha=(i,k,\sigma)\in\mathcal{A}$,
\begin{align*}
L_{\pm,0}^{\alpha}=\left\{\begin{aligned}
&\widehat f(-2)\,e^{\mp\beta/4}a_{ik}&\sigma=\emptyset\\
&\widehat{f}(2)\,e^{\mp\beta/4}a^\dagger_{ik}&\sigma=\dagger
\end{aligned}\,\,\,\right. .
\end{align*}
Moreover, by the detailed balance condition \eqref{eq:KMSFilterFunction}, $\widehat f(-2)e^{-\beta/2}=\widehat f(2)e^{\beta/2}$, and thus, denoting $\nu_{-}:=\vert  \widehat{f}(-2)\vert^2 $, $\nu_{+}:=\vert  \widehat{f}(2)\vert^2$ and $n_{ik}=a_{ik}^\dagger a_{ik}$,
\begin{align}
\label{eq:HSgenerator-specialized}
L_{H_{0,n}}
=\sum_{i\in[n]}\sum_{k\in[d]}
-\Big(\frac{\nu_-+\nu_+}{2}(n_{ik} \bullet +\bullet n_{ik})+\nu_+\bullet \Big)
+\sqrt{\nu_+\nu_-}\big(a_{ik} \bullet  a_{ik}^\dagger+a_{ik}^\dagger \bullet a_{ik}\big).
\end{align}
Therefore, from the observation made in \eqref{ladderblockboundqOU}, we see that
\begin{align}
L_{H_{0,n}}\ge \sum_{j\in[n]}\sum_{k\in[d]} \frac{1}{2}(\nu_+ - \nu_-)^2 (n_{jk} \bullet + \bullet n_{jk}) \;-\; \nu_+(\nu_- - \nu_+)\, \bullet\equiv L_{\operatorname{LB},n}\,.
\end{align}
\noindent Next, we consider general non-positive perturbations of the Gaussian generator $L_{H_{0,n}}$ of the form
\[
L' = L_{H_{0,n}} + L_{\operatorname{pert}}\le 0,
\]
To prove the existence of a spectral gap for $L'$, we use the following lemma with the operator $L_{\mathrm{LB}}$ in \eqref{eq:ladderblock}, along with the many-particle operator $L_{\text{LB},n}=\oplus_{i\in[n]}\oplus_{k\in[d]} L_{\text{LB}}.$

\begin{lemma}
\label{lem:compact_resolv}
If $\Vert L_{\operatorname{pert}} (L_{\operatorname{LB},n}-\lambda )^{-1} \Vert <1$
for some $\lambda\in\mathbb{C}$, then $L'$ has a discrete spectrum and hence is gapped. In particular, it is enough to assume that $L_{\operatorname{pert}}$ is
 $L_{\mathrm{LB},n}$-bounded with relative bound $0$, i.e. $D(L_{\mathrm{LB},n})\subset D(L_{\mathrm{pert}})$
and for every $\varepsilon>0$ there exists $C_\varepsilon\ge 0$ such that
\[
\|L_{\mathrm{pert}}x\|_2
\le
\varepsilon \|L_{\mathrm{LB},n}x\|_2+C_\varepsilon \|x\|_2,
\qquad x\in D(L_{\mathrm{LB},n}).
\]

\end{lemma}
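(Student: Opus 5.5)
The plan is to prove that $L'$ has compact resolvent. This gives discrete spectrum. Since $L'\le 0$ is self-adjoint with one-dimensional kernel spanned by $\sigma_\beta^{1/2}$ (Proposition~\ref{condpropdefsampler} transported through $\iota_2$), discreteness then yields a strictly positive gap.

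\textbf{Step 1: $L_{\operatorname{LB},n}$ has compact resolvent.} By \eqref{eq:LBdecomp}, each one-mode block $L_{\operatorname{LB}}$ acts on $|p\rangle\langle q|$ as multiplication by $\kappa_{p+q}$, which is affine in $p+q$ with positive slope $\tfrac12(\nu_+-\nu_-)^2$. Here we use $\nu_-\neq\nu_+$, which follows from the KMS relation $\nu_-=e^{\beta}\nu_+$. The operator $L_{\operatorname{LB},n}$ is the sum of these blocks over all $nd$ modes. It is therefore diagonal in the orthonormal basis of $\mathscr T_2$ given by products of Fock matrix units $\bigotimes_{j,k}|p_{jk}\rangle\langle q_{jk}|$. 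Its eigenvalue on such a unit is $\tfrac12(\nu_+-\nu_-)^2\sum_{j,k}(p_{jk}+q_{jk})$ plus a constant. Every eigenspace is finite dimensional and the eigenvalues tend to $+\infty$, so $(L_{\operatorname{LB},n}-\lambda)^{-1}$ is compact for every $\lambda\notin\operatorname{Sp}(L_{\operatorname{LB},n})$.

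\textbf{Step 2: transfer compactness to $L'$.} The natural tool is a Kato--Rellich / second resolvent identity argument. I expect this to be the main obstacle, because $L'$ is compared with $L_{\operatorname{LB},n}$ and not with $L_{H_{0,n}}$. By \eqref{ladderblockboundqOU}, $L_{H_{0,n}}$ is only bounded below by $L_{\operatorname{LB},n}$ in form sense. The hypothesis should therefore be read with $L_{\operatorname{LB},n}$ as the reference operator for the whole of $L'$, with $L_{H_{0,n}}-L_{\operatorname{LB},n}$ absorbed into $L_{\operatorname{pert}}$. I would arrange this by setting $T:=L'-L_{\operatorname{LB},n}$ and requiring $\|T(L_{\operatorname{LB},n}-\lambda)^{-1}\|<1$.

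Then
\[
L'-\lambda=\bigl(1+T(L_{\operatorname{LB},n}-\lambda)^{-1}\bigr)(L_{\operatorname{LB},n}-\lambda)
\]
on $D(L_{\operatorname{LB},n})$. A Neumann series inverts the first factor, giving
\[
(L'-\lambda)^{-1}=(L_{\operatorname{LB},n}-\lambda)^{-1}\bigl(1+T(L_{\operatorname{LB},n}-\lambda)^{-1}\bigr)^{-1}.
\]
This is a compact operator times a bounded one, hence compact. One must also check that this inverse agrees with the resolvent of the self-adjoint realization of $L'$. This holds because $L'$ restricted to $D(L_{\operatorname{LB},n})$ is closed with nonempty resolvent set, and since it is symmetric it is self-adjoint.

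Compactness of the resolvent at a single point implies compactness at every point of the resolvent set. Hence $\operatorname{Sp}(L')$ consists of isolated eigenvalues of finite multiplicity accumulating only at $-\infty$. As $L'\le0$ and $0$ is a simple eigenvalue, $\sup(\operatorname{Sp}(L')\setminus\{0\})<0$, which is exactly the gap statement.

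\textbf{Step 3: the relative bound $0$ case.} If $\|Tx\|\le\varepsilon\|L_{\operatorname{LB},n}x\|+C_\varepsilon\|x\|$, apply it to $x=(L_{\operatorname{LB},n}-\lambda)^{-1}y$ with $\lambda=-\mu$, $\mu>0$ large. The spectral theorem for the self-adjoint operator $L_{\operatorname{LB},n}$ gives $\|L_{\operatorname{LB},n}(L_{\operatorname{LB},n}+\mu)^{-1}\|\le 1$ (shifting by a constant if the bottom of the spectrum is negative) and $\|(L_{\operatorname{LB},n}+\mu)^{-1}\|\le C/\mu$. Hence
\[
\|T(L_{\operatorname{LB},n}+\mu)^{-1}\|\le\varepsilon+C_\varepsilon C/\mu.
\]
Choosing $\varepsilon=1/4$ and then $\mu$ large makes this smaller than $1$, so the first criterion applies.
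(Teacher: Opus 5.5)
There is a genuine gap. In Step 2 you prove a different lemma from the one stated. The lemma's hypothesis is on $L_{\operatorname{pert}}=L'-L_{H_{0,n}}$, but you assume $\|T(L_{\operatorname{LB},n}-\lambda)^{-1}\|<1$ for $T=L'-L_{\operatorname{LB},n}=(L_{H_{0,n}}-L_{\operatorname{LB},n})+L_{\operatorname{pert}}$. This new hypothesis is essentially never available.

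With the signs as written ($L'\le 0$, $\operatorname{Sp}(L_{\operatorname{LB},n})$ unbounded above), suppose $\|T(L_{\operatorname{LB},n}-\lambda)^{-1}\|=q<1$. For a normalized eigenvector $e$ of $L_{\operatorname{LB},n}$ with eigenvalue $\kappa$ this gives $\|Te\|\le q\kappa+q|\lambda|$. Hence $\langle e,L'e\rangle\ge(1-q)\kappa-q|\lambda|\to+\infty$, which contradicts $L'\le 0$.

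Even after fixing signs, $L_{H_{0,n}}-L_{\operatorname{LB},n}$ is of the same order as $L_{\operatorname{LB},n}$. The operator $L_{\operatorname{LB},n}$ is diagonal in the matrix units. By contrast, $L_{H_{0,n}}$ contains $\sqrt{\nu_+\nu_-}\,(a_{ik}\bullet a_{ik}^\dagger+a_{ik}^\dagger\bullet a_{ik})$, which sends $|p\rangle\langle p|$ to $p\,|p-1\rangle\langle p-1|+(p+1)\,|p+1\rangle\langle p+1|$. That image has norm $\gtrsim p$ and is orthogonal to $|p\rangle\langle p|$. So whenever $L_{\operatorname{pert}}$ has relative bound $0$ (the case of interest), $T$ has a strictly positive relative bound, and your Step 3 can never be invoked. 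In particular, Theorem~\ref{corr:Gaussian_gap} verifies relative bound $0$ only for $L_{H_{n,M}}-L_{H_{0,n}}$, so it could not use your version.

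The missing idea is how to exploit the form bound \eqref{ladderblockboundqOU}, which you correctly flagged as the obstacle. The paper does not absorb $L_{H_{0,n}}-L_{\operatorname{LB},n}$ into the perturbation.

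\begin{itemize}
\item It runs your Neumann-series argument on the auxiliary operator $L'_{\operatorname{LB},n}:=L_{\operatorname{LB},n}+L_{\operatorname{pert}}$. This is exactly what the stated hypothesis controls, so $L'_{\operatorname{LB},n}$ has compact resolvent.
\item It then transfers discreteness to $L'$ by monotonicity. The operators $L'$ and $L'_{\operatorname{LB},n}$ differ by $L_{H_{0,n}}-L_{\operatorname{LB},n}$, which is sign-definite as a quadratic form. By min-max, the suitably signed eigenvalue sequence of $L'$ dominates that of $L'_{\operatorname{LB},n}$, which diverges. Hence $\operatorname{Sp}_{\mathrm{ess}}(L')=\emptyset$.
\end{itemize}

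To be consistent with $L'\le 0$, read the comparison as $-L_{H_{0,n}}\ge L_{\operatorname{LB},n}$, i.e.\ $-L'\ge L_{\operatorname{LB},n}-L_{\operatorname{pert}}$. Then apply the Neumann series to $L_{\operatorname{LB},n}-L_{\operatorname{pert}}$. The hypothesis is unaffected, since $\|(-L_{\operatorname{pert}})(L_{\operatorname{LB},n}-\lambda)^{-1}\|=\|L_{\operatorname{pert}}(L_{\operatorname{LB},n}-\lambda)^{-1}\|$. With this change, your Steps 1 and 3 go through with $L_{\operatorname{pert}}$ in place of $T$.

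A minor point: the KMS relation at frequency $2$ gives $\nu_-=e^{2\beta}\nu_+$, not $e^{\beta}\nu_+$.
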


\begin{proof}
Since $L'\le 0$, it suffices to show that it has a compact resolvent. 
For this, we use that by \eqref{ladderblockboundqOU} $L_{H_{0,1}} \ge L_{\text{LB}}$ implies $L_{H_{0,n}}\ge L_{\text{LB},n}$. We then define $L_{\text{LB},n}':=L_{\text{LB},n} + L_{\text{pert}}$ and have the resolvent identity
\[
 (L_{\text{LB},n}'-\lambda)^{-1}
 =(L_{\text{LB},n}-\lambda )^{-1}
 (1 +L_{\operatorname{pert}} (L_{\text{LB},n}-\lambda )^{-1} )^{-1}
\]
if $\Vert L_{\operatorname{pert}} (L_{\operatorname{LB},n}-\lambda )^{-1} \Vert <1$ for some $\lambda.$ Thus $L_{\text{LB},n}'$ has compact resolvent as long as the resolvent of $L_{\operatorname{LB},n}$ is defined, since the resolvent of $L_{\operatorname{LB},n}$ is compact.
Below, we use the essential spectrum of a self-adjoint operator $(T,D(T))$, denoted $\operatorname{Sp}_{\text{ess}}(T)$, which is the set of real numbers $\lambda\in\mathbb{R}$ such that $T-\lambda I$ is not a Fredholm operator. We recall that an operator is Fredholm if it is closed with a finite dimensional kernel and cokernel. By the monotonicity of operators, we have
\[
\inf \operatorname{Sp}_{\text{ess}}(L')
\ge
\inf \operatorname{Sp}_{\text{ess}}(L'_{\operatorname{LB},n})=\infty.
\]
Thus, the spectrum of $L'$ is purely discrete.

Next, to see the second condition, fix $\varepsilon>0$, and choose $C_\varepsilon\ge0$ such that
\[
\|L_{\mathrm{pert}}x\|
\le
\varepsilon \|L_{\mathrm{LB},n}x\|+C_\varepsilon \|x\|,
\qquad x\in D(L_{\mathrm{LB},n}).
\]
For $y\in \mathscr T_2(\mathcal H)$, let $x:=(L_{\mathrm{LB},n}-i\mu)^{-1}y.$
Since $L_{\mathrm{LB},n}$ is self-adjoint, we have $x\in D(L_{\mathrm{LB},n})$, and therefore
\[
\|L_{\mathrm{pert}}(L_{\mathrm{LB},n}-i\mu)^{-1}y\|
=
\|L_{\mathrm{pert}}x\|
\le
\varepsilon \|L_{\mathrm{LB},n}x\|+C_\varepsilon \|x\|.
\]
Thus
\[
\|L_{\mathrm{pert}}(L_{\mathrm{LB},n}-i\mu)^{-1}y\|
\le
\varepsilon \|L_{\mathrm{LB},n}(L_{\mathrm{LB},n}-i\mu)^{-1}y\|
+
C_\varepsilon \|(L_{\mathrm{LB},n}-i\mu)^{-1}y\|.
\]
Taking the supremum over $\|y\|=1$, we obtain
\[
\|L_{\mathrm{pert}}(L_{\mathrm{LB},n}-i\mu)^{-1}\|
\le
\varepsilon \|L_{\mathrm{LB},n}(L_{\mathrm{LB},n}-i\mu)^{-1}\|
+
C_\varepsilon \|(L_{\mathrm{LB},n}-i\mu)^{-1}\|.
\]
Since $L_{\mathrm{LB},n}$ is self-adjoint,
\[
\|(L_{\mathrm{LB},n}-i\mu)^{-1}\|\le \frac1{|\mu|},
\]
and
\[
\|L_{\mathrm{LB},n}(L_{\mathrm{LB},n}-i\mu)^{-1}\|
=
\sup_{\lambda\in\spec(L_{\mathrm{LB},n})}\frac{|\lambda|}{|\lambda-i\mu|}
\le 1.
\]
Hence
\[
\|L_{\mathrm{pert}}(L_{\mathrm{LB},n}-i\mu)^{-1}\|
\le
\varepsilon+\frac{C_\varepsilon}{|\mu|}.
\]
Choose $\varepsilon<1$. Then for $|\mu|$ sufficiently large we have
\[
\varepsilon+\frac{C_\varepsilon}{|\mu|}<1,
\]
which proves
\[
\|L_{\mathrm{pert}}(L_{\mathrm{LB},n}-i\mu)^{-1}\|<1.
\]
\end{proof}

\noindent Next, we specialize our discussion to the case of $L'=L_{H_{n,M}}$. The following constitutes the main result of this section:

\begin{theorem}  
\label{corr:Gaussian_gap}
 The generator $L_{H_{n,M}}$ exhibits discrete spectrum and thus $\operatorname{gap}(L_{H_{n,M}})>0$.
\end{theorem}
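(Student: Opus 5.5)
We apply Lemma~\ref{lem:compact_resolv} with $L'=L_{H_{n,M}}=L_{H_{0,n}}+L_{\operatorname{pert}}$, taking $H_0=H_{0,n}$, $P=\Pi_M$ and $R=\Pi_M W_n\Pi_M$. First we check that this choice satisfies the hypotheses of Lemma~\ref{lem:conjugated-L-finite-rank-remainder}. The projection $\Pi_M$ has finite rank $M^n$ and commutes with $H_{0,n}$. The operator $R$ is bounded on the finite-dimensional space $\Pi_M\mathcal H_n$, since $W_n$ is $H_{0,n}$-form bounded by \eqref{eq111}. Moreover, $\Pi_M\mathcal H_n$ is spanned by Hermite functions, so it lies in the domain of every $a_{ij}^\sigma$. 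Finally, $\Pi_M a_{ij}^\sigma=((a_{ij}^{\sigma})^\dagger\Pi_M)^\dagger$ is bounded, because $(a_{ij}^\sigma)^\dagger\Pi_M$ is finite rank. We also note that $L'\le0$, since it is the self-adjoint generator of a symmetric contraction semigroup.

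Lemma~\ref{lem:conjugated-L-finite-rank-remainder} then gives $L^\alpha_{\pm,H}=L^\alpha_{\pm,0}+\mathcal R^\alpha_\pm$. The remainder $\mathcal R^\alpha_\pm$ is finite rank and satisfies $Q\mathcal R^\alpha_\pm Q=0$. Its range and corange lie in
\[
V:=\Pi_M\mathcal H_n+\textstyle\sum_\alpha a^\alpha\Pi_M\mathcal H_n+\sum_\alpha(a^\alpha)^\dagger\Pi_M\mathcal H_n,
\]
a finite span of Hermite functions with one-body level at most $M+1$. Concretely, $\Pi_{M'}\mathcal R^\alpha_\pm\Pi_{M'}=\mathcal R^\alpha_\pm$ for some finite $M'$. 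All range conditions of Corollary~\ref{cor:quadratic-finite-rank} therefore hold, because Hermite functions lie in every domain.

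The same block structure applies to $B_H$. On the $Q$-$Q$ block it coincides with $B_0$, since the spectral projections agree there and $L^\alpha(H)$ agrees with $L^\alpha(H_0)$ up to $\mathcal R^\alpha$. Hence $B_{\pm,H}-B_{\pm,0}$ is finite rank, supported on $\Pi_{M''}$ for some finite $M''$. Here we must also use the $\tanh$-weighted expression \eqref{BHexpressionLa}. In the terms $P_F L^\dagger L P_E$, the quadratic part $L_0^\dagger L_0$ contributes only to $Q$-blocks via $Q_\mu$ and to the finitely many $P$-blocks. Every mixed block therefore factors through $\Pi_M$.

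Proposition~\ref{prop:conjugated-generator-perturbation} now shows that $L_{\operatorname{pert}}$ is a finite sum of superoperators of three types:
\[
x\mapsto Sx,\qquad x\mapsto xT,\qquad x\mapsto Sx T',
\]
where $S,T$ are finite-rank operators supported in $\Pi_{M''}$, and $T'$ is either finite rank or equal to $L^\alpha_{\pm,0}$, which is proportional to $a_{ik}$ or $a^\dagger_{ik}$. The main obstacle is proving that this unbounded class is $L_{\operatorname{LB},n}$-bounded with relative bound $0$. The left and right multiplications by finite-rank bounded operators are bounded on $\mathscr T_2$, so their relative bound is trivially $0$. For mixed terms such as $L^\alpha_{+,0}x(\mathcal R^\alpha_+)^\dagger$ or $(L^\alpha_{-,0})^\dagger\mathcal R^\alpha_+ x$, we write $\mathcal R=\Pi_{M''}\mathcal R\Pi_{M''}$ and move the projection next to the ladder operator. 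We have $\|a\Pi_{M''}\|\le\sqrt{M''+1}$, and $\Pi_{M''}a$ is bounded likewise. Hence $x\mapsto a x\mathcal R^\dagger=a(x\Pi_{M''})\mathcal R^\dagger$ seems at first to need $\|a\,x\Pi_{M''}\|_2$, which is unbounded in $x$. To control it, we use that $\mathcal R^\dagger$ is finite rank: $x\mathcal R^\dagger=\sum_k (x\phi_k)\langle\psi_k|$ with $\phi_k\in V$. This gives
\[
\|a x\mathcal R^\dagger\|_2\lesssim\sum_k\|a x\phi_k\|.
\]
Moreover, $\|ax\phi_k\|^2\le\langle x\phi_k,(N+1)x\phi_k\rangle\le\|(N+1)^{1/2}x\|_2^2\|\phi_k\|^2$. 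Since $L_{\operatorname{LB},n}$ is a diagonal operator with eigenvalues growing linearly in $\sum(n_{jk}+m_{jk})$, we have
\[
\|(N_{\operatorname{tot}}+1)^{1/2}x\|_2^2\le\langle x,(|L_{\operatorname{LB},n}|+C)x\rangle\le\varepsilon\|L_{\operatorname{LB},n}x\|_2^2+C_\varepsilon\|x\|_2^2 .
\]
This gives relative bound $0$, and the anticommutator terms $\mathfrak A_\alpha,\mathfrak B_\alpha$ are handled identically. Lemma~\ref{lem:compact_resolv} then yields compact resolvent, hence discrete spectrum. Combined with $\operatorname{ker}=\mathbb C\,\sigma_\beta(H_{n,M})^{1/2}$ from Proposition~\ref{condpropdefsampler}, this shows $\operatorname{gap}(L_{H_{n,M}})>0$.
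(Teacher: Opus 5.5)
Your proposal is correct and follows essentially the paper's route. You apply Lemma~\ref{lem:compact_resolv} to the finite-rank decomposition of Proposition~\ref{prop:conjugated-generator-perturbation}: the finite-rank left and right multiplications are bounded, and the mixed ladder terms are controlled by $\|(N_{\operatorname{tot}}+1)^{1/2}x\|_2$, which gives relative bound $0$ with respect to $L_{\mathrm{LB},n}$. Your extra steps fill in details the paper leaves implicit: you show $\mathcal R^\alpha_\pm=\Pi_{M'}\mathcal R^\alpha_\pm\Pi_{M'}$, you treat $B_H$ explicitly, and you invoke the one-dimensional kernel. For the key estimate, the paper is more direct, using $\|a x Z\|_2\le\|Z\|\,\|a x\|_2$ without decomposing the finite-rank factor, so your detour through the vectors $\phi_k$ is unnecessary.
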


\begin{proof}
We start by specializing the structural result of Proposition \ref{prop:conjugated-generator-perturbation} to the present Gaussian unperturbed setting: we recall that \(H_0\equiv H_{0,n}=nd+2\sum_{i\in[n]}\sum_{k\in[d]} n_{ik}\) and \(H\equiv H_{n,M}=H_{0,n}+W_{n,M}\), where \(W_{n,M}=\Pi_M W_n\Pi_M\) is bounded and self-adjoint for $\Pi_M:=P_M^{\otimes n}$ with \([\Pi_M,H_{0,n}]=0\). 
Moreover,
\[
L_{H_{n,M}}-L_{H_{0,n}}=-i\big((B_{+,H_{n,M}}-B_{+,0})\bullet -\bullet (B_{-,H_{n,M}}-B_{-,0})\big)+\sum_{\alpha\in\mathcal A}\big(\mathfrak G_\alpha -\tfrac12\mathfrak A_\alpha -\tfrac12\mathfrak B_\alpha\big),
\]
where
\begin{align*}
&\mathfrak G_\alpha=\mathcal R_+^\alpha \bullet  (L_{+,0}^\alpha)^\dagger+L_{+,0}^\alpha \bullet (\mathcal R_+^\alpha)^\dagger+\mathcal R_+^\alpha \bullet  (\mathcal R_+^\alpha)^\dagger\equiv \mathfrak{G}_\alpha^{(1)}+\mathfrak{G}_\alpha^{(2)}+\mathfrak{G}_\alpha^{(3)}\\
&\mathfrak A_\alpha=\big((L_{-,H_{n,M}}^\alpha)^\dagger L_{+,H_{n,M}}^\alpha-(L_{-,0}^\alpha)^\dagger L_{+,0}^\alpha\big)\bullet \\
&\mathfrak B_\alpha=\bullet \big(L_{-,H_{n,M}}^\alpha (L_{+,H_{n,M}}^\alpha)^\dagger-L_{-,0}^\alpha (L_{+,0}^\alpha)^\dagger\big).
\end{align*}
Moreover, on any common domain where all products are defined,
\[
A_\alpha:=(L_{-,H_{n,M}}^\alpha)^\dagger L_{+,H_{n,M}}^\alpha-(L_{-,0}^\alpha)^\dagger L_{+,0}^\alpha=(L_{-,0}^\alpha)^\dagger \mathcal R_+^\alpha+(\mathcal R_-^\alpha)^\dagger L_{+,0}^\alpha+(\mathcal R_-^\alpha)^\dagger \mathcal R_+^\alpha,
\]
\[
B_\alpha:=L_{-,H_{n,M}}^\alpha (L_{+,H_{n,M}}^\alpha)^\dagger-L_{-,0}^\alpha (L_{+,0}^\alpha)^\dagger=L_{-,0}^\alpha (\mathcal R_+^\alpha)^\dagger+\mathcal R_-^\alpha (L_{+,0}^\alpha)^\dagger+\mathcal R_-^\alpha (\mathcal R_+^\alpha)^\dagger.
\]
Next, we verify the relative boundedness condition of Lemma \ref{lem:compact_resolv}. We first argue that for each $\alpha$, $\mathfrak{A}_\alpha$ and $\mathfrak{B}_\alpha$ are both relatively bounded with respect to $L_{\operatorname{LB}}$ with relative bound $0$: indeed, since they correspond to left and right multiplication operators with respect to a finite rank operator by Corollary~\ref{cor:quadratic-finite-rank}, we get directly that the anticommutator parts $A_\alpha$ and $B_\alpha$ are all finite rank, and thus for all $x\in\mathscr{T}_2(\mathcal{H}_n)$,
\begin{align*}
\|\mathfrak{A}_\alpha x\|_2\le \|A_\alpha\|\,\|x\|_2\qquad \text{ and }\qquad \|\mathfrak{B}_\alpha x\|_2\le \|x\|_2\,\|B_\alpha\|\,.
\end{align*}
The same type of bound also holds for the last summand $\mathfrak{G}_\alpha^{(3)}$ of $\mathfrak{G}_\alpha$. The same argument also holds for the coherent term by \eqref{BHexpressionLa}.

In contrast to this, the perturbation of the jump operator is still unbounded. The corresponding unbounded contributions $\mathfrak{G}_\alpha^{(1)}$ and $\mathfrak{G}_\alpha^{(2)}$ are of the form $Z_1xZ_2,$ where either $Z_1,Z_2$ is always a finite rank operator and the other one is proportional to $a_{ik}$ or $a_{ik}^{\dagger}$. Fortunately, terms of the latter type are still zero bounded with respect to $L_{\operatorname{LB},n}$, so that Lemma \ref{lem:compact_resolv} can be applied. To show this, we make use of the operator 
\begin{align*}
\mathcal{N}:=\sum_{i\in[n]}\sum_{k\in[d]} n_{ik} \bullet + \bullet n_{ik}\equiv \sum_{i\in[n]}\sum_{k\in[d]}\mathcal{N}_{ik}.
\end{align*}
Then, consider for instance a term proportional to $a_{ik}x Z$ for some bounded operator $Z$ and $x\in D(\mathcal{N}_{ik}^{1/2})$. Then, 
\[
\|a_{ik}xZ\|_{2}\le \|Z\|\,\|a_{ij}x\|_{2}.
\]
Moreover,
\[
\|a_{ik}x\|_{2}^2
=\operatorname{Tr}\bigl((a_{ik}x)^\dagger(a_{ik}x)\bigr)
=\operatorname{Tr}(x^\dagger a_{ik}^\dagger a_{ik} x)
=\operatorname{Tr}(x^\dagger n_{ik} x).
\]
Since
\[
\langle x,\mathcal N_{ik} (x)\rangle
=\operatorname{Tr}\bigl(x^\dagger (n_{ik}x+xn_{ik})\bigr)
=\operatorname{Tr}(x^\dagger n_{ik}x)+\operatorname{Tr}(x^\dagger xn_{ik}),
\]
and both terms on the right are nonnegative, it follows that
\[
\operatorname{Tr}(x^\dagger n_{ik}x)\le \langle x,\mathcal N_{ik}(x)\rangle_{2}.
\]
Therefore
\[
\|a_{ik}xZ\|_{2}^2
\le \|Z\|^2\,\|a_{ik}x\|_{2}^2
=\|Z\|^2\,\langle x,\mathcal{N}_{ik}(x)\rangle
 \quad \Longrightarrow \quad \|a_{ik}xZ\|_2\le \|Z\|\,\|\mathcal{N}_{ik}^{1/2}(x)\|_2\,.
\]
Next, by the scalar inequality $\sqrt t\le \varepsilon t+\frac1{4\varepsilon}$, $t\ge0$ and functional calculus, since $\mathcal{N}_{ik}\ge 0$,
\[
\|\mathcal N_{ik}^{1/2}(x)\|_{2}\le \varepsilon \|\mathcal N_{ik} (x)\|_{2}+\frac1{4\varepsilon}\|x\|_{2},
\qquad x\in D(\mathcal N_{ik}).
\]
Hence
\[
\|a_{ik}xZ\|_{2}
\le \|Z\|\,\|\mathcal N_{ik}^{1/2}(x)\|_{2}
\le \varepsilon \|Z\|\,\|\mathcal N_{ik} (x)\|_{2}+\frac{\|Z\|}{4\varepsilon}\|x\|_{2}.
\]
Finally, since $\mathcal{N}_{ik}\le \mathcal{N}$, and since $L_{\operatorname{LB},n}= \frac{1}{2}(\nu_++\nu_-)^2\mathcal{N}-nd\nu_+(\nu_--\nu_+)\bullet $, we get constants $c_{ik}$ and $d_{ik}$ such that for any $\epsilon<1$,
\begin{align*}
\|a_{ik}xZ\|_2\le \varepsilon \,c_{ik}\|L_{\operatorname{LB},n}(x)\|_2\,+d_{ik}\Big(1+\frac{1}{\epsilon}\Big)\,\|x\|_2.
\end{align*}
The same control can be found for any of the terms $\mathfrak{G}_\alpha^{(1)}$ and $\mathfrak{G}_\alpha^{(2)}$. Thus, after summation and up to re-normalizing $\varepsilon$ accordingly, we find that the conditions of Lemma  \ref{lem:compact_resolv} are satisfied, and thus $L_{H_{n,M}}$ is gapped. 
 
\end{proof}

\subsection{Uniform gaps for weakly interacting gases}

In the weakly interacting regime, where \(B_n=\mathcal O(\varepsilon)\), obtained by choosing coupling strengths \(\lvert \alpha_{n,i,j}\rvert=\mathcal O(\varepsilon/n^2)\), the generator \(L_{\sigma_E,H_{nM}}\) has a positive spectral gap that is bounded uniformly in the particle number \(n\). Consequently, in this regime, the algorithm approximates the free energy densities efficiently and uniformly across all particle numbers. For our discussion of uniform gaps, we need a standard observation based on the min-max principle.
\begin{proposition}[Stability of low eigenvalues under form-small perturbations]
\label{prop:gap-form-small}
Let $A$ be a self-adjoint operator on a Hilbert space $\mathcal H$, bounded from below, with a compact resolvent. Let $C$ be a symmetric quadratic form with domain $Q(A)$ such that for some $\varepsilon\in(0,1)$, one has
\begin{equation}
\label{eq:form-small-eps}
|\langle u,Cu\rangle|
\le \varepsilon \langle u,(A+1)u\rangle ,
\qquad u\in Q(A).
\end{equation}
Then the form sum $A+C$ is well-defined, self-adjoint, bounded from below, and has compact resolvent. Moreover, if
\[
\lambda_1(A)\le \lambda_2(A)\le \cdots,\qquad
\lambda_1(A+C)\le \lambda_2(A+C)\le \cdots
\]
denote the eigenvalues counted with multiplicity, then for every $n\in \mathbb N$,
\begin{equation}
\label{eq:eigenvalue-comparison}
(1-\varepsilon)\lambda_n(A)-\varepsilon
\le \lambda_n(A+C)
\le (1+\varepsilon)\lambda_n(A)+\varepsilon.
\end{equation}
In particular, for the first spectral gap $\gap(A):=\lambda_2(A)-\lambda_1(A)$, $\gap(A+C):=\lambda_2(A+C)-\lambda_1(A+C)$, one has
\begin{equation}
\label{eq:gap-lower}
\gap(A)-\varepsilon\bigl(\lambda_1(A)+\lambda_2(A)+2\bigr)\le  \gap(A+C)\le \gap(A)+\varepsilon\bigl(\lambda_1(A)+\lambda_2(A)+2\bigr).
\end{equation}
\end{proposition}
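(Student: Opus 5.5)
The plan is to apply the KLMN theorem followed by the min-max (Courant--Fischer) principle, written in its form-domain version. Because $A$ is bounded from below, I first shift it so that $A+1\ge c>0$; then $\langle u,(A+1)u\rangle$ is a closed positive form that controls the $Q(A)$-norm. By \eqref{eq:form-small-eps}, $C$ is $A$-form bounded with relative bound $\varepsilon<1$. KLMN therefore shows that the form $\mathfrak a+\mathfrak c$, with $\mathfrak a[u]=\langle u,Au\rangle$, is closed and bounded from below on $Q(A)$. It thus defines a unique self-adjoint operator $A+C$ that is bounded from below, whose form domain is exactly $Q(A)$, and whose form norm is equivalent to that of $A$.

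The next step is compactness of the resolvent. I would use that an operator that is bounded from below has compact resolvent if and only if the embedding of its form domain, equipped with the form norm, into $\mathcal H$ is compact. The form norms of $A$ and $A+C$ are equivalent: from \eqref{eq:form-small-eps},
\[
(1-\varepsilon)\langle u,(A+1)u\rangle\le\langle u,(A+C+1)u\rangle\le(1+\varepsilon)\langle u,(A+1)u\rangle .
\]
Hence compactness of $(A+1)^{-1}$ transfers to $A+C$. Alternatively, one can argue that the eigenvalues obtained below tend to infinity. In either case the spectrum of $A+C$ is discrete and its eigenvalues can be listed with multiplicity.

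For the eigenvalue comparison \eqref{eq:eigenvalue-comparison}, I write the form inequality as
\[
(1-\varepsilon)\langle u,Au\rangle-\varepsilon\|u\|^2\le\langle u,(A+C)u\rangle\le(1+\varepsilon)\langle u,Au\rangle+\varepsilon\|u\|^2 ,
\]
valid for all $u\in Q(A)$. I then insert this into the min-max formula
\[
\lambda_n(T)=\inf_{\dim V=n,\ V\subset Q(T)}\ \sup_{u\in V,\ \|u\|=1}\langle u,Tu\rangle .
\]
Here it matters that both operators share the form domain $Q(A)$, so the infimum runs over the same family of subspaces $V$. The map $t\mapsto(1\pm\varepsilon)t\pm\varepsilon$ is increasing because $1-\varepsilon>0$, so it commutes with both $\sup$ and $\inf$, and this gives the two bounds directly.

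The gap estimate \eqref{eq:gap-lower} then follows by subtraction. For the lower bound,
\[
\lambda_2(A+C)-\lambda_1(A+C)\ge(1-\varepsilon)\lambda_2(A)-\varepsilon-(1+\varepsilon)\lambda_1(A)-\varepsilon=\gap(A)-\varepsilon(\lambda_1(A)+\lambda_2(A)+2),
\]
and the upper bound is obtained symmetrically.

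The main obstacle is the justification of the form-domain version of min-max together with equality of the form domains. Both come out of KLMN; once they are in place, everything else is a monotone one-line manipulation. A minor point is the sign convention: if $\lambda_1(A)$ is negative, the displayed constant $\lambda_1+\lambda_2+2$ remains the natural quantity produced by the subtraction, so no additional case analysis is needed.
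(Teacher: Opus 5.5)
Your proposal is correct and follows the same route as the paper. KLMN defines $A+C$ on the common form domain $Q(A)$. The two-sided inequality $(1-\varepsilon)\langle u,Au\rangle-\varepsilon\|u\|^2\le\langle u,(A+C)u\rangle\le(1+\varepsilon)\langle u,Au\rangle+\varepsilon\|u\|^2$ then goes into the form-domain min-max principle, and subtracting the $\lambda_1$ and $\lambda_2$ bounds gives the gap estimate. Your justification of the compact resolvent, via equivalence of the form norms and the compact-embedding criterion, is if anything more careful than the paper's one-line remark that the form domain is unchanged.
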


\begin{proof}
By \eqref{eq:form-small-eps},
\begin{equation}
\label{eq:bound}
(1-\varepsilon)\langle u,Au\rangle-\varepsilon\|u\|^2
\le \langle u,(A+C)u\rangle
\le (1+\varepsilon)\langle u,Au\rangle+\varepsilon\|u\|^2,
\qquad u\in Q(A).
\end{equation}
Since $\varepsilon<1$, the form of $A+C$ is closed and lower bounded on $Q(A)$ by the KLMN theorem \cite{Schmdgen2012}, hence it defines a self-adjoint operator bounded from below. Because the form domain is the same as that of $A$, the compactness of the resolvent is preserved.
Now, apply the min-max principle. For any $n\in\mathbb N$,
\[
\lambda_n(A+C)
=
\inf_{\substack{\dim V=n\\V \subset Q(A)}}\ \sup_{\substack{u\in V\\ \|u\|=1}}
\langle u,(A+C)u\rangle.
\]
If we use the upper quadratic form bound \eqref{eq:bound}, we see that 
\[
\lambda_n(A+C)
\le
\inf_{\dim V=n}\ \sup_{\substack{u\in V\\ \|u\|=1}}
\bigl((1+\varepsilon)\langle u,Au\rangle+\varepsilon\bigr)
=
(1+\varepsilon)\lambda_n(A)+\varepsilon,
\]
while the lower quadratic form bound yields
\[
\lambda_n(A+C)
\ge
\inf_{\dim V=n}\ \sup_{\substack{u\in V\\ \|u\|=1}}
\bigl((1-\varepsilon)\langle u,Au\rangle-\varepsilon\bigr)
=
(1-\varepsilon)\lambda_n(A)-\varepsilon.
\]
This proves \eqref{eq:eigenvalue-comparison}.
To estimate the gap, combine the lower bound for $\lambda_2(A+C)$ with the upper bound for $\lambda_1(A+C)$:
\[
\gap(A+C)
=
\lambda_2(A+C)-\lambda_1(A+C)
\ge
\bigl((1-\varepsilon)\lambda_2(A)-\varepsilon\bigr)
-
\bigl((1+\varepsilon)\lambda_1(A)+\varepsilon\bigr).
\]
After rearranging,
\[
\gap(A+C)\ge \gap(A)-\varepsilon(\lambda_1(A)+\lambda_2(A)+2).
\]
 The upper bound follows analogously from the upper bound for $\lambda_2(A+C)$ and the lower bound for $\lambda_1(A+C)$.
\end{proof}

\noindent Now, by Theorem~\ref{thmfreenregyblabla}, the free energy densities 
\[
f_{\beta}(H_n):=\frac{F_{\beta}(H_n)}{n}
\qquad \text{ and }\qquad
f_{\beta}(H_{n,M}):=\frac{F_{\beta}(H_{n,M})}{n}
\]
satisfy, for $d\in\{2,3\}$, $\beta>0$, and interaction strengths obeying for some $\gamma\ge 1$ 
\begin{equation}
\label{eq:i_strengths}
|\alpha_{n,i,j}|\lesssim n^{-\gamma}\qquad(1\le i<j\le n),
\end{equation}
the estimate
\[
\bigl| f_{\beta}(H_n)-f_{\beta}(H_{n,M}) \bigr|
\le C_\beta\, n^{2-\gamma} M^{-\frac{1}{4d}}.
\]
In particular, if $\gamma\ge 2$, the right-hand side can be made arbitrarily small by choosing $M$ sufficiently large, independently of $n$. We may thus choose $\gamma=2$ in the sequel. It is meaningful to study free energy densities, as $F_{\beta}(H_{0,n})=n F_{\beta}(h),$ i.e. the densities are the unit-size objects in our scaling. 
For our argument, it is convenient to write 
\[H_{n,M}=H_{0,n} + W_{n,M}=H_{0,n}+\Pi_M \sum_{i<j} \alpha_{n,i,j} V_{ij} \Pi_M \text{ with } V_{ij}(x)=\begin{cases}
-\log \vert x_i-x_j \vert &\text{ if }d=2\\
    \vert x_i-x_j \vert^{-1} &\text{ if }d=3
\end{cases} \]
such that in this scaling $\Vert W_{n,M}\Vert$ is uniformly bounded in $n$.

\begin{proposition}[Locality yields an \(n^{-1}\) bound for the truncated remainder]
\label{prop:locality-remainder-1overn}

Fix a site \(i\in\{1,\dots,n\}\), and let \(A_i^\alpha\) be a local creation or annihilation operator acting only on the \(i\)-th tensor factor. For \(t\in\mathbb R\), define
\[
A_i^\alpha(t):=e^{itH_{0,n}}A_i^\alpha e^{-itH_{0,n}}.
\]
Assume that
\begin{equation}
\label{eq:uniform-local-commutator-bound}
\sup_{\substack{1\le i\neq j\le n\\ t\in\mathbb R}}
\big\|[\Pi_MV_{ij}\Pi_M,A_i^\alpha(t)]\big\|
\le C_M,
\end{equation}
for some constant \(C_M<\infty\) depending only on \(M\). Assume moreover that uniformly in j
\begin{equation}
\label{eq:mean-field-row-sum}
\sup_{1\le i\le n}\sum_{j\neq i}|\alpha_{n,i,j}|
\le \frac{\kappa}{n}
\end{equation}
for some \(\kappa>0\). Let \(f\in L^1(\mathbb R,(1+|t|)\,dt)\), we consider $L^\alpha_i(H):=\int_{\mathbb R} f(t)\,e^{itH}A_i^\alpha e^{-itH}\,dt$
for \(H=H_{0,n}\) and \(H=H_{n,M}\). Then there exists a constant \(C_{M,f}>0\), independent of \(n\) and \(i\), such that
\[
\big\|L_i^\alpha(H_{n,M})-L_i^\alpha(H_{0,n})\big\|
\le \frac{C_{M,f}}{n} \text{ with }C_{M,f}:=\kappa C_M \int_{\mathbb R}|t|\,|f(t)|\,dt.
\]

If, in addition, for some \(s\in\mathbb R\) the conjugated operators $L_{i,s}^\alpha(H):=e^{sH}L_i^\alpha(H)e^{-sH}$
are well-defined for \(H=H_{0,n},H_{n,M}\), and if
\begin{equation}
\label{eq:uniform-local-commutator-bound-conjugated}
\sup_{\substack{ 1\le i\neq j\le n\\ t\in\mathbb R}}
\big\|[\Pi_MV_{ij}\Pi_M,e^{itH_{0,n}}(e^{sH_{0,n}}A_i^\alpha e^{-sH_{0,n}})e^{-itH_{0,n}}]\big\|
\le C_{M,s},
\end{equation}
then
\[
\big\|L_{i,s}^\alpha(H_{n,M})-L_{i,s}^\alpha(H_{0,n})\big\|
\le \frac{C_{M,s,f}}{n},
\qquad
C_{M,s,f}:=\kappa C_{M,s}\int_{\mathbb R}|t|\,|f(t)|\,dt.
\]
\end{proposition}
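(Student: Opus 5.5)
The plan is a Duhamel (interaction-picture) argument. Write $W:=W_{n,M}=\sum_{j<k}\alpha_{n,j,k}\Pi_M V_{jk}\Pi_M$, which is bounded, so $H_{n,M}=H_{0,n}+W$ and both unitary groups are well defined. For fixed $t$, differentiate $s\mapsto e^{isH_{n,M}}e^{-isH_{0,n}}A_i^\alpha(s) e^{isH_{0,n}}\dots$. More concretely, use the standard identity
\[
e^{itH_{n,M}}A_i^\alpha e^{-itH_{n,M}}-e^{itH_{0,n}}A_i^\alpha e^{-itH_{0,n}}
= i\int_0^t e^{isH_{n,M}}\bigl[W, A_i^\alpha(t-s)\bigr]e^{-isH_{n,M}}\,ds ,
\]
obtained by differentiating $s\mapsto e^{isH_{n,M}}A_i^\alpha(t-s)e^{-isH_{n,M}}$. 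This is first verified on a core (finite-particle-number vectors, i.e.\ finite Hermite expansions), where $A_i^\alpha$ is defined and the derivative is legitimate. It then extends, since the right-hand side will turn out to be bounded.

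The key structural step is locality. Every pair $(j,k)$ with $i\notin\{j,k\}$ gives a term $\Pi_MV_{jk}\Pi_M$ that commutes with $A_i^\alpha(t-s)$. Here $\Pi_M=P_M^{\otimes n}$ and $H_{0,n}$ is a sum of one-body terms, so $A_i^\alpha(t-s)=e^{i(t-s)h_i}A_i^\alpha e^{-i(t-s)h_i}$ acts only on factor $i$. The operator $\Pi_MV_{jk}\Pi_M$ acts on factor $i$ only through $P_M$, however, and this does not commute with $A_i^\alpha$. So the naive locality claim fails, and I would handle this by appealing directly to the assumption: the hypothesis \eqref{eq:uniform-local-commutator-bound} is stated for pairs containing $i$. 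For the remaining pairs I would read the hypothesis (as the paper intends, via the row-sum assumption \eqref{eq:mean-field-row-sum}) as meaning that only the terms $j\neq i$ paired with $i$ contribute.

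I expect this to be the main obstacle. Either one adopts the convention that $V_{jk}$ for pairs not containing $i$ is truncated only on factors $j,k$, i.e.\ replaces $\Pi_M$ by $P_{M,j}P_{M,k}$ locally, or one notes that the problem reduces to bounding the commutator of $A_i^\alpha$ with $P_M^{(i)}$. I would state this reading explicitly. With it,
\[
\bigl\|[W,A_i^\alpha(t-s)]\bigr\|\le\sum_{j\neq i}|\alpha_{n,i,j}|\,\bigl\|[\Pi_MV_{ij}\Pi_M,A_i^\alpha(t-s)]\bigr\|\le \frac{\kappa}{n}C_M ,
\]
uniformly in $s,t$, using unitarity of $e^{isH_{n,M}}$.

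Integrating in $s$ over $[0,t]$ gives the bound $|t|\kappa C_M/n$ for the difference of Heisenberg evolutions. Multiplying by $|f(t)|$ and integrating, using $f\in L^1((1+|t|)dt)$ and the triangle inequality for Bochner integrals on the core followed by closure, yields $\|L_i^\alpha(H_{n,M})-L_i^\alpha(H_{0,n})\|\le \kappa C_M\int|t||f(t)|dt/n$.

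For the conjugated version, run the same Duhamel identity with $A_i^\alpha$ replaced by $e^{sH_{0,n}}A_i^\alpha e^{-sH_{0,n}}$. This is still local on factor $i$, and for ladder operators it is just $e^{\pm 2s}A_i^\alpha$. A further step is needed to compare $e^{sH_{n,M}}\cdot e^{-sH_{n,M}}$ with $e^{sH_{0,n}}\cdot e^{-sH_{0,n}}$. Since $W$ is bounded, I would write $L^\alpha_{i,s}(H)$ in its spectral/time-integral representation and use hypothesis \eqref{eq:uniform-local-commutator-bound-conjugated} in place of \eqref{eq:uniform-local-commutator-bound}. The resulting bound has the same form with $C_{M,s}$ replacing $C_M$.
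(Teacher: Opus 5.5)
Your Duhamel identity and the row-sum bound on the commutator are exactly the paper's argument. You are also right about its weak point. The paper discards the pairs with $i\notin\{j,k\}$ by asserting $[\Pi_MV_{jk}\Pi_M,A_i^\alpha(t)]=0$, and this is false for $\Pi_M=P_M^{\otimes n}$. Indeed, $\Pi_MV_{jk}\Pi_M=P_M^{(i)}\otimes X_{jk}$, where $X_{jk}\neq 0$ is the compression of $V_{jk}$ on the other $n-1$ factors, so $[\Pi_MV_{jk}\Pi_M,A_i^\alpha]=[P_M,A^\alpha]^{(i)}\otimes X_{jk}\neq 0$.

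Your proposal does not repair this gap. Declaring that only pairs containing $i$ contribute, or replacing $\Pi_M$ by $P_M^{(j)}P_M^{(k)}$, changes the model rather than proving the statement. Your other option, reducing to $[A_i^\alpha,P_M^{(i)}]$, fails quantitatively. The discarded terms add up to $[P_M,A^\alpha]^{(i)}\otimes\sum_{j<k,\;i\notin\{j,k\}}\alpha_{n,j,k}X_{jk}$. The row-sum hypothesis only gives $\sum_{j<k}|\alpha_{n,j,k}|\le\kappa/2$, so this commutator is $O(\kappa)$, not $O(\kappa/n)$.

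This is not just a lossy estimate: with the literal $\Pi_M$ the conclusion fails. Take the following data:
\begin{itemize}
\item $d=3$, $\alpha_{n,j,k}=\kappa/n^2$ for all pairs, and $A_i^\alpha=a_{i1}$;
\item $P_M$ the spectral projection of $h$ onto $(-\infty,\Lambda]$;
\item $e$ an $h$-eigenvector at the next level with $a_1e\neq0$, where $a_1$ is the one-body ladder operator in direction $1$;
\item $\xi=e\otimes\phi_0^{\otimes(n-1)}$, with $\phi_0$ the one-body ground state.
\end{itemize}
Since $\Pi_M\xi=0$, $\xi$ is an eigenvector of both $H_{n,M}$ and $H_{0,n}$ with the same eigenvalue $E$. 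Hence $L_i^\alpha(H_{0,n})\xi=\widehat f(-2)\zeta$, where $\zeta:=a_{i1}\xi\in\operatorname{ran}\Pi_M$. On the other hand, $\|(H_{n,M}-E+2-\Delta)\zeta\|=\|(W_{n,M}-\Delta)\zeta\|=O(n^{-1/2})\|\zeta\|$, because only the $O(n^3)$ overlapping pairs of pairs contribute to the variance of $W_{n,M}$ in this product state. Here $\Delta\to\frac{\kappa}{2}\langle\phi_0\otimes\phi_0,|x-y|^{-1}\phi_0\otimes\phi_0\rangle>0$ is the mean-field energy of the other $n-1$ particles. Therefore $L_i^\alpha(H_{n,M})\xi=\widehat f(-2+\Delta)\zeta+O(n^{-1/2})$. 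The difference is of order $|\widehat f(-2+\Delta)-\widehat f(-2)|$, which is nonzero for a generic (e.g.\ Gaussian) filter.

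So the first claim can only be proved after replacing $\Pi_MV_{jk}\Pi_M$ throughout by $P_M^{(j)}P_M^{(k)}V_{jk}P_M^{(j)}P_M^{(k)}$. In that model your argument, which is the paper's, goes through verbatim. However, that interaction is neither finite rank nor reduced by $\Pi_M$, so it does not fit the finite-rank framework of the section.

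Independently, your conjugated part leaves open the step you yourself flag. The conjugated hypothesis only controls the real-time Duhamel term for the fixed operator $e^{sH_{0,n}}A_i^\alpha e^{-sH_{0,n}}$. But $L^\alpha_{i,s}(H_{n,M})$ contains $e^{sH_{n,M}}A_i^\alpha e^{-sH_{n,M}}$, and the paper bridges this only by an unproved ``identification''. A route that works, in the pairwise model, is to note that $e^{sH}L_i^\alpha(H)e^{-sH}$ is the jump built with the filter $f_s$, $\widehat{f_s}(\nu)=e^{s\nu}\widehat f(\nu)$. The first part then applies whenever $f_s\in L^1((1+|t|)\,dt)$, giving the constant $\kappa C_M\int|t|\,|f_s(t)|\,dt$ rather than $C_{M,s,f}$.
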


\begin{proof}
We first prove the non-conjugated statement. By Duhamel's formula, for every \(t\in\mathbb R\),
\[
e^{itH_{n,M}}A_i^\alpha e^{-itH_{n,M}}-e^{itH_{0,n}}A_i^\alpha e^{-itH_{0,n}}
=
i\int_0^t e^{i(t-u)H_{n,M}}[W_{n,M},A_i^\alpha(u)]e^{-i(t-u)H_{n,M}}\,du
\]
for \(t\ge 0\), while for \(t\le 0\) the same identity holds with the integral taken over \([t,0]\). Hence, in either case,
\[
\big\|e^{itH_{n,M}}A_i^\alpha e^{-itH_{n,M}}-e^{itH_{0,n}}A_i^\alpha e^{-itH_{0,n}}\big\|
\le \int_0^{|t|}\big\|[W_{n,M},A_i^\alpha(\tau)]\big\|\,d\tau.
\]
It is therefore enough to estimate the commutator \([W_{n,M},A_i^\alpha(t)]\).

Since \(A_i^\alpha\), and thus \(A_i^\alpha(t)\), acts only on the \(i\)-th tensor factor and \(V_{jk}\) acts only on the \(j\)-th and \(k\)-th factors, we have
\[
[\Pi_MV_{jk}\Pi_M,A_i^\alpha(t)]=0
\qquad\text{whenever }i\notin\{j,k\}.
\]
Therefore, only the interaction terms touching site \(i\) survive:
\[
[W_{n,M},A_i^\alpha(t)]
=
\sum_{j\neq i}\alpha_{n,i,j}
\,[\Pi_MV_{ij}\Pi_M,A_i^\alpha(t)].
\]
Taking norms and using \eqref{eq:uniform-local-commutator-bound}, we obtain
\[
\big\|[W_{n,M},A_i^\alpha(t)]\big\|
\le
\sum_{j\neq i}|\alpha_{n,i,j}|\,
\big\|[\Pi_MV_{ij}\Pi_M,A_i^\alpha(t)]\big\|
\le
C_M\sum_{j\neq i}|\alpha_{n,i,j}|.
\]
By \eqref{eq:mean-field-row-sum},
\[
\big\|[W_{n,M},A_i^\alpha(t)]\big\|
\le \frac{\kappa C_M}{n},
\qquad t\in\mathbb R.
\]
Substituting this into the Duhamel estimate yields
\[
\big\|e^{itH_{n,M}}A_i^\alpha e^{-itH_{n,M}}-e^{itH_{0,n}}A_i^\alpha e^{-itH_{0,n}}\big\|
\le \frac{\kappa C_M}{n}\,|t|.
\]
Multiplying by \(|f(t)|\) and integrating over \(t\in\mathbb R\), we get
\[
\begin{aligned}
\big\|L_i^\alpha(H_{n,M})-L_i^\alpha(H_{0,n})\big\|
&\le
\int_{\mathbb R}|f(t)|\,
\big\|e^{itH_{n,M}}A_i^\alpha e^{-itH_{n,M}}-e^{itH_{0,n}}A_i^\alpha e^{-itH_{0,n}}\big\|\,dt\\
&\le
\frac{\kappa C_M}{n}\int_{\mathbb R}|t|\,|f(t)|\,dt.
\end{aligned}
\]
This proves the first claim.

For the conjugated statement, we proceed along the same lines and define
\[
B_{i,s}^\alpha:=e^{sH_{0,n}}A_i^\alpha e^{-sH_{0,n}}.
\]
Applying exactly the same argument to \(B_{i,s}^\alpha\) in place of \(A_i^\alpha\), and using the assumption \eqref{eq:uniform-local-commutator-bound-conjugated}, gives the analogous estimate
\[
\Big\|\int_{\mathbb R}f(t)\Big(e^{itH_{n,M}}B_{i,s}^\alpha e^{-itH_{n,M}}
-e^{itH_{0,n}}B_{i,s}^\alpha e^{-itH_{0,n}}\Big)\,dt\Big\|
\le \frac{\kappa C_{M,s}}{n}\int_{\mathbb R}|t|\,|f(t)|\,dt.
\]
Under the stated identification of the conjugated operators with these time-averaged expressions, this is precisely the desired estimate for the difference
\[
L_{i,s}^\alpha(H_{n,M})-L_{i,s}^\alpha(H_{0,n}).
\]
\end{proof}
In the next Lemma we directly verify the conditions of the previous result for our setting. 
\begin{lemma}[Verification of the local commutator bound for truncated Coulomb interactions]
\label{lem:local-commutator-harmonic-cutoff}
Let
\[
\mathcal H_n:=\bigotimes_{r=1}^n L^2(\mathbb R^d),\qquad
H_{0,n}:=\sum_{r=1}^n h_r,
\]
where each \(h_r\) is a d-dimensional harmonic oscillator, with $d \in \{2,3\}$ acting on the \(r\)-th tensor factor.
For \(1\le i<j\le n\), let
\[
V_{ij}(x):=\begin{cases}
-\log \vert x_i-x_j \vert &\text{ if }d=2\\
    \vert x_i-x_j \vert^{-1} &\text{ if }d=3.
\end{cases}
\]
Fix \(i\in\{1,\dots,n\}\), and let \(A_i^\alpha\) be a local creation or annihilation operator acting on the \(i\)-th tensor factor. Then, for every \(M\in\mathbb N\), there exists a constant \(C_M<\infty\), independent of \(n\), \(i\), \(j\), and \(t\), such that
\[
\sup_{\substack{1\le i\neq j\le n\\ t\in\mathbb R}}
\bigl\|[\Pi_MV_{ij}\Pi_M,A_i^\alpha(t)]\bigr\|
\le C_M.
\]
More precisely, one may take
\[\begin{split}
C_M&=
\Bigl\|
\Bigl[(P_M\otimes P_M)V_{ij}\,(P_M\otimes P_M),\,A^\alpha\otimes 1\Bigr]
\Bigr\|\le 2\,\bigl\|(P_M\otimes P_M)V_{ij}\,(P_M\otimes P_M)\bigr\|\,\|A^\alpha P_M\|\\
&\lesssim \bigl\|(P_M\otimes P_M)V_{ij}\,(P_M\otimes P_M)\bigr\|\ \sqrt{M+1}.
\end{split}
\]
\end{lemma}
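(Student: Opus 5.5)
The plan is to reduce the $n$-body commutator to a fixed two-body operator on $L^2(\mathbb R^d)\otimes L^2(\mathbb R^d)$, and then bound that by a product of two finite quantities that depend only on $M$. The time dependence is disposed of at the outset. Since $H_{0,n}=\sum_r h_r$ with each $h_r$ a harmonic oscillator in the diagonal normalization fixed above ($\sigma_k=1$), we have $[h_r,A_i^\alpha]=0$ for $r\neq i$ and $[h_i,a_{ik}]=-2a_{ik}$. Hence $A_i^\alpha(t)=e^{\mp 2it}A_i^\alpha$ is a pure phase multiple of $A_i^\alpha$, so the supremum over $t\in\mathbb R$ is simply $\|[\Pi_MV_{ij}\Pi_M,A_i^\alpha]\|$. (If one does not want to use the diagonal form, note that $e^{ith_i}$ commutes with $P_M$, so it can be pulled outside the commutator by unitary invariance of the norm, with the same result.)

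Next comes the tensor reduction. We write $\Pi_M=(P_M\otimes P_M)_{ij}\otimes \bigotimes_{r\neq i,j}P_M$. Because $V_{ij}$ is a multiplication operator in $x_i,x_j$ only, we get $\Pi_MV_{ij}\Pi_M=X_{ij}\otimes\bigotimes_{r\neq i,j}P_M$ with $X_{ij}:=(P_M\otimes P_M)V_{ij}(P_M\otimes P_M)$. Since $A_i^\alpha$ acts trivially on the factors $r\neq i,j$, the commutator equals $[X_{ij},A^\alpha\otimes 1]\otimes\bigotimes_{r\neq i,j}P_M$. Its norm is therefore at most $\|[X_{ij},A^\alpha\otimes1]\|$. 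Conjugating by the permutation unitary that moves factors $(i,j)$ to $(1,2)$ shows that this number is independent of $i,j$ and of $n$, and we take it as the definition of $C_M$.

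Then we estimate the two-body commutator. From $X=(P_M\otimes P_M)X(P_M\otimes P_M)$ we write
\[
XA-AX=X\,(P_MA\otimes 1)-(AP_M\otimes 1)\,X ,
\]
and so $\|[X,A\otimes1]\|\le \|X\|\,(\|P_MA\|+\|AP_M\|)=2\|X\|\,\|AP_M\|$, using $P_MA=(A^\dagger P_M)^\dagger$ and the fact that $a$ and $a^\dagger$ are interchanged by this adjoint. The ladder-operator factor is elementary: $\mathrm{Ran}\,P_M$ is spanned by Hermite states of total level at most $N_M$, where $N_M\lesssim M^{1/d}\le M$. For a single mode, $\|a_kP_M\|,\|a_k^\dagger P_M\|\le\sqrt{N_M+1}\le\sqrt{M+1}$. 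This also shows that $P_M\mathcal H\subset D(A)$, so the identity above holds on all of $\mathcal H$ and no domain issues arise.

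The main obstacle is finiteness of $\|X_{ij}\|$, since $V_{ij}$ is singular and, for $d=2$, also unbounded at infinity. Here we invoke \eqref{eq111} of Lemma~\ref{lem:product-cutoff-riesz-no-scaling} in its two-body form, and for $d=2$ the bound \eqref{eq:two-sided-log-operator-bound} of Lemma~\ref{lem:two-sided-log-form-bound}. These give $|\langle u,V_{ij}u\rangle|\lesssim\langle u,(h_1+h_2+1)u\rangle$. For $\psi,\phi\in\mathrm{Ran}(P_M\otimes P_M)$ we have $\langle\psi,(h_1+h_2+1)\psi\rangle\le(2\lambda_M+1)\|\psi\|^2$, and polarization of the symmetric form then gives $\|X_{ij}\|\lesssim \lambda_M+1<\infty$. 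This bound depends only on $M$, which yields the claimed $C_M\lesssim\|X_{ij}\|\sqrt{M+1}$.
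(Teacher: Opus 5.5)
Your proposal is correct and follows the paper's proof step by step. The shared steps are:
\begin{itemize}
\item the phase evolution $A_i^\alpha(t)=e^{\mp 2it}A_i^\alpha$;
\item the tensor reduction to the two-body compression $X_{ij}$;
\item the triangle-inequality bound by $2\|X_{ij}\|\,\|A^\alpha P_M\|$;
\item the ladder estimate $\sqrt{M+1}$.
\end{itemize}
Where you depart slightly, you improve on the paper: using \eqref{eq111} and \eqref{eq:two-sided-log-operator-bound} to get $\|X_{ij}\|\lesssim\lambda_M+1$ justifies finiteness more explicitly than the paper's bare appeal to finite-dimensionality of $E_M\otimes E_M$. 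That appeal tacitly needs the matrix elements of the singular $V_{ij}$ to be finite.

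The one thing to drop is your parenthetical alternative. $e^{ith_i}$ commutes with $P_M$ but not with $V_{ij}$, so conjugating it out does not remove the time dependence; it moves it onto $\Pi_M V_{ij}\Pi_M$. That preserves the upper bound $2\|X_{ij}\|\,\|A^\alpha P_M\|$, but not the exact commutator norm you claim is ``the same result''.
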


\begin{proof}
Since \(H_{0,n}=\sum_{r=1}^n h_r\) and \(A_i^\alpha\) acts only on the \(i\)-th tensor factor, its free Heisenberg evolution is determined entirely by the one-particle oscillator on that factor:
\[
A_i^\alpha(t)=e^{itH_{0,n}}A_i^\alpha e^{-itH_{0,n}}=e^{ith_i}A_i^\alpha e^{-ith_i}.
\]
For the harmonic oscillator, the creation and annihilation operators are eigenoperators of the Heisenberg dynamics. Thus, there exists \(\theta_\alpha\in\mathbb R\) such that
\[
A_i^\alpha(t)=e^{i\theta_\alpha t}A_i^\alpha,
\]
and therefore $\bigl\|[\Pi_MV_{ij}\Pi_M,A_i^\alpha(t)]\bigr\|
=
\bigl\|[\Pi_MV_{ij}\Pi_M,A_i^\alpha]\bigr\|.$
It follows that it is enough to estimate the commutator at time \(t=0\).

We now use the tensor-product structure. Since \(V_{ij}\) acts only on the \(i\)-th and \(j\)-th variables and \(\Pi_M=P_M^{\otimes n}\), the compressed interaction \(\Pi_MV_{ij}\Pi_M\) acts trivially on all tensor factors except the \(i\)-th and \(j\)-th. Likewise, \(A_i^\alpha\) acts only on the \(i\)-th tensor factor. Hence, after identifying the \(i\)- and \(j\)-factors with two copies of \(L^2(\mathbb R^d)\), one may write
\[
\Pi_MV_{ij}\Pi_M
=
I_{\widehat{ij}}\otimes
\bigl((P_M\otimes P_M)V_{ij}\,(P_M\otimes P_M)\bigr),
\]
and
\[
A_i^\alpha
=
I_{\widehat{ij}}\otimes (A^\alpha\otimes 1),
\]
where \(I_{\widehat{ij}}\) denotes the identity, on $\operatorname{ran}(\Pi_M)$, on all tensor factors other than \(i\) and \(j\). Consequently,
\[
[\Pi_MV_{ij}\Pi_M,A_i^\alpha]
=
I_{\widehat{ij}}\otimes
\Bigl[(P_M\otimes P_M)V_{ij}\,(P_M\otimes P_M),\,A^\alpha\otimes 1\Bigr].
\]
Taking norms yields
\[
\bigl\|[\Pi_MV_{ij}\Pi_M,A_i^\alpha]\bigr\|
=
\Bigl\|
\Bigl[(P_M\otimes P_M)V_{ij}\,(P_M\otimes P_M),\,A^\alpha\otimes 1\Bigr]
\Bigr\|.
\]
The right-hand side depends only on the two-particle compressed operator and is therefore independent of \(n\), \(i\), and \(j\). This proves the claimed reduction.

It remains to show that this norm is finite. Set $E_M:=P_ML^2(\mathbb R^d).$
Since \(P_M\) is a spectral projection of the harmonic oscillator onto energies \(\le M\), the space \(E_M\) is finite-dimensional. Therefore, \(E_M\otimes E_M\) is finite-dimensional as well, and
\[
(P_M\otimes P_M)V_{ij}\,(P_M\otimes P_M)
\]
is a bounded operator on \(E_M\otimes E_M\). Moreover, \(A^\alpha P_M\) is bounded because it acts on the finite-dimensional space \(E_M\). It follows that the commutator
\[
\Bigl[(P_M\otimes P_M)V_{ij}\,(P_M\otimes P_M),\,A^\alpha\otimes 1\Bigr]
\]
is bounded on \(E_M\otimes E_M\), and hence has finite norm. Defining
\[
C_M:=
\Bigl\|
\Bigl[(P_M\otimes P_M)V_{ij}\,(P_M\otimes P_M),\,A^\alpha\otimes 1\Bigr]
\Bigr\|,
\]
we conclude that
\[
\sup_{\substack{1\le i\neq j\le n\\ t\in\mathbb R}}
\bigl\|[\Pi_MV_{ij}\Pi_M,A_i^\alpha(t)]\bigr\|
\le C_M.
\]

To see where the explicit bound comes from, we write
\[
X_M:=(P_M\otimes P_M)V_{ij}\,(P_M\otimes P_M).
\]
Then
\[
[X_M,A^\alpha\otimes 1]=X_M(A^\alpha\otimes 1)-(A^\alpha\otimes 1)X_M,
\]
so by the triangle inequality,
\[
\|[X_M,A^\alpha\otimes 1]\|
\le \|X_M(A^\alpha\otimes 1)\|+\|(A^\alpha\otimes 1)X_M\|
\le 2\|X_M\|\,\|A^a P_M\|.
\]
This proves
\[
C_M\le 2\,\bigl\|(P_M\otimes P_M)V_{ij}\,(P_M\otimes P_M)\bigr\|\,\|A^\alpha P_M\|.
\]

Finally, for a one-particle creation or annihilation operator \(A^\alpha\), the standard ladder-operator bounds on the oscillator basis imply
\[
\|A^\alpha P_M\|\lesssim \sqrt{M+1},
\]
which gives the last claim.
\end{proof}

We have thus established that each perturbation of the $j$-th Lindblad operator is a finite-rank operator with a norm bound $\mathcal O(\sum_{i\neq j}|\alpha_{n,i,j}|).$ This allows us to conclude that the spectral gap of the unperturbed Lindbladian associated with $H_{0,n}$ persists, as long as the sum of their norms does not exceed the size of the gap. 
\begin{theorem}
\label{thm:uniform_gap}
Let $\alpha_{n,i,j} \lesssim \varepsilon(M)/ n^{2}$ for $\varepsilon(M) >0$ sufficiently small and $f \in L^1(\mathbb R; (1+|t|) \ dt)$, then the Lindbladian $L_{H_{n,M}}$ associated with the Hamiltonian $H_{n,M}$ has a uniform gap in $n$.
\end{theorem}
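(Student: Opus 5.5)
The plan is to compare $L_{H_{n,M}}$ with the explicit Ornstein--Uhlenbeck generator $L_{H_{0,n}}$ from \eqref{eq:HSgenerator-specialized}. The unperturbed generator is a direct sum of single-mode quantum OU generators. By Theorem~\ref{thm:DFT2}, its spectrum is $-\ell(\nu_--\nu_+)/2$, $\ell\in\mathbb N_0$, so it has an $n$-independent gap $g_0:=(\nu_--\nu_+)/2>0$ with one-dimensional kernel spanned by $\sigma_\beta(H_{0,n})^{1/2}$. We then control $L_{\operatorname{pert}}=L_{H_{n,M}}-L_{H_{0,n}}$ using the decomposition of Proposition~\ref{prop:conjugated-generator-perturbation}. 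The goal is a quadratic-form bound
\[
|\langle x,L_{\operatorname{pert}}x\rangle_2|\le \delta\,\langle x,(-L_{H_{0,n}}+1)x\rangle_2 ,
\]
with $\delta=\delta(\varepsilon(M),\beta,f)$ independent of $n$. From this bound, a min-max argument in the spirit of Proposition~\ref{prop:gap-form-small}, applied to $A=-L_{H_{0,n}}$ and $C=-L_{\operatorname{pert}}$, gives
\[
\operatorname{gap}(L_{H_{n,M}})\ge g_0-\delta(0+g_0+2),
\]
which is uniformly positive once $\varepsilon(M)$ is small. Here $\lambda_1(A)=0$ and $\lambda_2(A)=g_0$ do not depend on $n$, which is exactly what makes this estimate uniform.

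The first step is to quantify the jump perturbations. With $|\alpha_{n,i,j}|\lesssim\varepsilon(M)/n^2$, the row sums satisfy $\sum_{j\ne i}|\alpha_{n,i,j}|\le \kappa/n$ with $\kappa\lesssim\varepsilon(M)/n\le\varepsilon(M)$. Lemma~\ref{lem:local-commutator-harmonic-cutoff} verifies \eqref{eq:uniform-local-commutator-bound}, and the conjugated version \eqref{eq:uniform-local-commutator-bound-conjugated} holds with $s=\pm\beta/4$ because $e^{sH_{0,n}}A^\alpha_i e^{-sH_{0,n}}=e^{\pm 2s}A_i^\alpha$. Proposition~\ref{prop:locality-remainder-1overn} then gives, for every $\alpha$,
\[
\|\mathcal R^\alpha_\pm\|\le C_{M,\beta,f}\,\varepsilon(M)/n .
\]
We also note that $\mathcal R^\alpha_\pm$ is supported where $\Pi_M$ acts, so it is bounded even though $L^\alpha_{\pm,0}\propto a_{ik}^\sigma$ is not.

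The second step bounds the terms of $L_{\operatorname{pert}}$ one by one in form sense.
\begin{itemize}
\item The purely quadratic terms $\mathfrak G^{(3)}_\alpha$ and $(\mathcal R^\alpha_-)^\dagger\mathcal R^\alpha_+$ are $O(\varepsilon^2/n^2)$ in norm each.
\item The coherent difference $B_{\pm,H}-B_{\pm,0}$ is handled through \eqref{BHexpressionLa}. Summed over the $2dn$ indices $\alpha$, these pieces contribute at most $O(\varepsilon)$.
\item The mixed terms, such as $\mathcal R^\alpha_+\,x\,(L^\alpha_{+,0})^\dagger$ and $(L^\alpha_{-,0})^\dagger\mathcal R^\alpha_+ x$, involve one unbounded ladder operator. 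Following the argument in the proof of Theorem~\ref{corr:Gaussian_gap}, we use Cauchy--Schwarz to get
\[
|\langle x,\mathcal R\,x\,a_{ik}^\dagger\rangle_2|\le \|\mathcal R\|\,\|x\|_2\,\|\mathcal N_{ik}^{1/2}x\|_2 ,
\]
and then Young's inequality. Summing over $(i,k)$ and using $\sum_{ik}\mathcal N_{ik}=\mathcal N\lesssim -L_{\operatorname{LB},n}+nd$ gives a bound of order
\[
\frac{\varepsilon}{n}\sum_{i,k}\Bigl(\langle x,\mathcal N_{ik}x\rangle+\|x\|^2\Bigr)\lesssim \varepsilon\,\langle x,(-L_{H_{0,n}}+1)x\rangle .
\]
\end{itemize}
Here the $1/n$ factor from the locality bound exactly compensates the $n$ summands.

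The main obstacle is making the comparison $\mathcal N\lesssim -L_{H_{0,n}}+c$ uniform in $n$ without picking up additive constants of order $n$. The ladder-block bound \eqref{ladderblockboundqOU} introduces a shift $nd\,\nu_+(\nu_--\nu_+)$, and this would destroy uniformity. To avoid it, we first try to bound $\langle x,a^\dagger_{ik}a_{ik}x\rangle$ directly by the single-mode Dirichlet form $\langle x,-L^{(ik)}_{N}x\rangle$. This should work because each single-mode OU generator is dominated by its number operator up to a constant times the projection onto the complement of its own kernel. If that fails, the fallback is to shift $\varepsilon(M)$ to absorb a constant $c\,\varepsilon$ per mode times $1/n$ by using the stronger $1/n^2$ scaling of the couplings. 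The remaining routine points are domain issues, handled on the core $\mathscr F$, and showing that the kernel stays one-dimensional, which follows from Proposition~\ref{condpropdefsampler}.
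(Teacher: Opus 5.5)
Your proposal is correct and follows essentially the same route as the paper. It uses $O(\varepsilon(M)/n)$ bounds on the jump remainders from Proposition~\ref{prop:locality-remainder-1overn} and Lemma~\ref{lem:local-commutator-harmonic-cutoff}, norm bounds for the bounded pieces, and number-operator relative bounds for the unbounded mixed pieces as in Theorem~\ref{corr:Gaussian_gap}. Summing over the $2dn$ jump indices gives an $O(\varepsilon(M))$ form bound relative to $-L_{H_{0,n}}+1$, and Proposition~\ref{prop:gap-form-small} with the $n$-independent Ornstein--Uhlenbeck gap finishes the argument.

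The ``main obstacle'' you flag is not actually one. The ladder-block shift $nd\,\nu_+(\nu_--\nu_+)$ is multiplied by the per-mode coefficient $O(\varepsilon(M)/n)$, so it contributes only $O(\varepsilon(M))$. This is how the paper closes the argument, and your own displayed estimate already shows it. Also, the row sums give $\kappa\lesssim\varepsilon(M)$ rather than $\varepsilon(M)/n$, but this does not affect your conclusion.
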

\begin{proof}
When studying the Lindbladian associated with $H_{n,M}$, the perturbations to the anti-commutator and commutator terms in the Lindbladian associated with $H_{0,n}$ involve products of creation and annihilation operators with the finite rank operator difference of the respective Lindblad operators; see \eqref{eq:anti-comm_terms}. Since the norm of this difference for the $j$-th Lindblad operator is, by Proposition \ref{prop:locality-remainder-1overn} and Lemma \ref{lem:local-commutator-harmonic-cutoff}, bounded by $\mathcal O(\sum_{i\neq j}|\alpha_{n,i,j}|)=\mathcal O(\varepsilon(M)/n)$. Summing $n$ many such terms still leads only to a perturbation of size $\mathcal O(\varepsilon(M))$, which does not close the unit-size gap of the unperturbed Lindbladian by choosing $\varepsilon(M)$ sufficiently small.

It is only the perturbation of the jump operator that involves unbounded perturbations, but we may proceed as in the proof of Theorem \ref{corr:Gaussian_gap}:

Denoting the perturbation obtained from the jump operators of the $j$-th particle Lindblad operator by $C_j$, we see that each of them satisfies the relative bound with respect to $A_j(x)=\frac{1}{2}(\nu_+-\nu_-)^2\{N_j,x\}$ for $\varepsilon',\delta=\mathcal O(\varepsilon/n)$
\begin{equation}
|\langle u,C_ju\rangle|
\le \varepsilon' \langle u,A_j u\rangle + \delta \|u\|^2. 
\end{equation}
This computation makes up the last part of the proof of Theorem \ref{corr:Gaussian_gap}.
Then, from \cite[Proof of Thm.~6.3]{OU}, we may use that $Ax=L_{\operatorname{LB}}^{\oplus_{i=1}^n}x+n \nu_+(\nu_--\nu_+)x \le L_{H_{0,n}} $, we have 
\[\sum_{i} |\langle u,C_i u\rangle|
\le \varepsilon' \langle u,L_{H_{0,n}} u\rangle + \left(\delta n +  \varepsilon' n\nu_+ | \nu_--\nu_+| \right) \|u\|^2 \]
such that for $\eta = \max\{\varepsilon',\delta n +  \varepsilon' n\nu_+ |\nu_--\nu_+| \} =\mathcal O(\varepsilon)$ we have 
\[\sum_{i} |\langle u,C_i u\rangle| \le \eta \langle u,(L_{H_{0,n}}+1) u\rangle= \mathcal O( \varepsilon \langle u,(L_{H_{0,n}}+1) u\rangle). \]

We thus see that we can apply Proposition \ref{prop:gap-form-small} to the self-adjoint Lindbladian on the space of Hilbert-Schmidt operators, as the unperturbed operator $L_{H_{0,n}}$ exhibits a uniform gap, as it is just a $n$-fold direct sum, to conclude the existence of a uniform gap for the perturbed one as well. 
\end{proof}

\section{Algorithmic implementations}
\subsection{Preparing Gibbs states of trapped Coulomb gases}
\label{sec:GibbscircuitCoulomb}
In the following, we apply the general framework of \cite[Section 4.3 \& 4.5]{BeckerRouzeSalzmannToAppearcmp} to provide a finite-dimensional circuit implementation for the Gibbs state of the trapped Coulomb gases considered in this paper, see Theorem~\ref{thm:GibbsPrepareCoulomb} below. In particular, we employ the Gibbs sampler generated by $\cL_{\sigma_E,H}$ with bare jumps $\{A^\alpha\}_{\alpha\in\cA} \equiv \{a_{i,j},a_{i,j}^\dagger\}_{i\in[n],j\in[d]}.$  As argued in \cite[Theorem 4.12]{BeckerRouzeSalzmannToAppearcmp} the dynamics $e^{t\cL_{\sigma_E,H}}$ can be well-approximated by the finite-dimensional dynamics whose generator can be obtained by replacing the unbounded bare jumps and Hamiltonian with their truncated, finite-dimensional counterparts
\begin{align}
\label{eq:DefTruncOperators}
    a^{\le \widetilde M}_{i,j} := P^i_{\widetilde M} a_{i,j} P^i_{\widetilde M},\quad\quad \left(a_{i,j}^{\le\widetilde M}\right)^\dagger = P^i_{\widetilde M}\,a_{i,j}^\dagger P^i_{\widetilde M} =: \left(a_{i,j}^\dagger\right)^{\le \widetilde M}, \quad\quad H_{\le \widetilde M} := \Pi_{\widetilde M} \,H\,\Pi_{\widetilde M}
\end{align}
for some truncation level $\widetilde M\in \N$ and where\footnote{, note that in the notation of \cite{BeckerRouzeSalzmannToAppearcmp}, we 
hence take for $i\in[n],$ $j\in[d]$ and multi-index $\alpha = (i,j)$ the projection $\pi^{\alpha}_{\widetilde M} \equiv P^{i}_{\widetilde M+1},$, which is hence independent of the specific label $j\in[d].$
}  $P^i_{\widetilde M} :=\1_{L^2(\R^{d(i-1)})}\otimes P_{\widetilde M}\otimes  \1_{L^2(\R^{d(n-i-1)})}$ and $P_{\widetilde M}$ denote the projection onto the $\widetilde M$-first eigenstates of $h:=-\Delta + |x|^2$ on $L^{2}(\R^{d})$ and $\Pi_{\widetilde M} := \left(P_{\widetilde M}\right)^{\otimes n}.$

Here, we consider Hamiltonians that are given as finite-rank perturbations of the operator $H_{0,n}$, i.e.
\begin{align*}
    H:= H_{0,n} + V,
\end{align*}
where $V$ is self-adjoint and satisfies $P^{\otimes n}_M VP^{\otimes n}_M$ for some $M\le \widetilde M.$ Furthermore, we consider a self-adjoint positive semidefinite operator $\NA$, which is used in \cite{BeckerRouzeSalzmannToAppearcmp} to regularize within the finite dimensional implementation scheme, a shift of the Hamiltonian, namely
\begin{align*}
    \NA := \frac{1}{2}\left(H- dn + \|V\|\right) = \Ntot + \frac{1}{2}\left(V + \|V\|\right)   \ge 0,
\end{align*}
where for the last equality we denoted the total number operator $\Ntot = \sum_{i=1}^{n}\sum_{j=1}^d a^\dagger_{i,j}a_{i,j}.$
For these choices,  we see in the following lemma that the conditions on the bare jumps and the Hamiltonian employed in the efficient implementation theorem \cite[Theorem~4.12]{BeckerRouzeSalzmannToAppearcmp} are satisfied.
\begin{lemma}
\label{lem:VerificationCondFiniteDimImpl}
Let $n,d\in\N$ and $H_{0,n}:=\sum_{i\in[n]}h_i$,
where $h_i = -\Delta_i +|x_i|^2$ and $\Delta$ stand for $d$-dimensional Laplacian and $|x_i|^2=\sum_{j\in [d]}x_{ij}^2$. Let further $P_M$ be the spectral projection onto the first $M$ lowest eigenstates of the single mode number operator $h := -\Delta + |x|^2$ on $L^2(\R^d)$. 
Then, define
\begin{align*}
H = H_{0,n} + V,\qquad\quad
N_{\mathcal A}= \frac{1}{2}\left(H - dn + \|V\|\right)\ge 0,
\end{align*}
where $V$ is some finite-rank self-adjoint operator satisfying $P^{\otimes n}_{M}VP^{\otimes n}_{M}= V.$ 
Then, for $\kappa\in(0,1]$, there exists a constant \(C_\kappa>0\)
such that for all \(\widetilde M\in\mathbb N\) satisfying $\widetilde M \ge M+2$, we have
\begin{align}
\label{eq:NormTruncatedJump}
\left\|a_{i,j}^{\le \widetilde{M}}\right\|\,,\,\left\|\left(a_{i,j}^{\le \widetilde{M}}\right)^{\dagger}\right\|\le \sqrt{\widetilde M+1},
\end{align}
\begin{align}
\label{eq:AbstractFiniteTruncBareJumps}
\left\|e^{(l-1)N^\kappa}\bigl(a_{i,j}-(a_{i,j})^{\le \widetilde{M}}\bigr)e^{-lN_{\mathcal A}^\kappa}\right\|\,,\, \left\|e^{(l-1)N^\kappa}\bigl(a_{i,j}^\dagger-(a_{i,j}^{\le \widetilde{M}})^\dagger \bigr)e^{-lN_{\mathcal A}^\kappa}\right\|
\le C_{\kappa} \sqrt{\widetilde M+1}\,e^{-\widetilde M^\kappa}
\end{align}
for $l=1,2$ and furthermore
\begin{align}
\label{eq:AnniCraNormTrunc}
\|e^{\NA^\kappa}a_{i,j}^{\le \widetilde{M}}e^{-N_{\mathcal A}^\kappa}\|
\,,\,
\|e^{-N_{\mathcal A}^\kappa}(a_{i,j}^{\le \widetilde{M}})^\dagger e^{N_{\mathcal A}^\kappa}\| &\le C_{\kappa} \sqrt{\widetilde M} e^{(nM)^\kappa + (\|V\|/2)^\kappa},\\ 
\label{eq:CABound}
\|a_{i,j} e^{-N_{\mathcal A}^\kappa}\|\, , \,
\|e^{-N_{\mathcal A}^\kappa}a_{i,j}\|\,,\, \|e^{N_{\mathcal A}^\kappa}a_{i,j} e^{-2N_{\mathcal A}^\kappa}\|\,,\,\|e^{-2N_{\mathcal A}^\kappa}a_{i,j} e^{N_{\mathcal A}^\kappa}\|  &\le C_\kappa\, e^{2((nM)^\kappa + (\|V\|/2)^\kappa)}
\end{align}
Moreover, we have
\begin{align}
\label{eq:HamiltonianLeakTrunc}
\|(H-H_{\le \widetilde{M}})e^{-N_{\mathcal A}^\kappa}\|
\le C_\kappa (2\widetilde M+dn)e^{-\widetilde M^\kappa},
\end{align}
and for \(k=2,4\) and all \(t\in\mathbb R\),
\begin{align}
\label{eq:ExpBound2}
e^{-itH}e^{kN_{\mathcal A}^\kappa}e^{itH}=e^{kN_{\mathcal A}^\kappa}.
\end{align}
\bin{In particular, one may take
\[
q(\widetilde M)=C_{\kappa} \sqrt{\widetilde M+1},
\qquad
q'(\widetilde M)=2 \sqrt{\widetilde M+1},
\qquad
p_n(|\mathcal A|,M)=C_{\kappa,n}(\widetilde M+1),
\qquad
r=0.
\]
}
\end{lemma}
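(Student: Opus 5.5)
The plan is to reduce everything to the Fock-basis structure of $H_{0,n}$ and the fact that $V$ lives in the finite-dimensional block $\operatorname{ran}\Pi_M$. First, since $P_M$ projects onto the lowest $M$ eigenstates of $h$, and $\Pi_M$ commutes with $H_{0,n}$, the operators $H$, $H_{0,n}$ and $\Ntot$ are all block diagonal with respect to $\operatorname{ran}\Pi_M\oplus\operatorname{ran}\Pi_M^\perp$. On $\operatorname{ran}\Pi_M^\perp$ we have $\NA=\Ntot+\frac12\|V\|$, while on $\operatorname{ran}\Pi_M$ we have $0\le \NA\le \Ntot+\|V\|$, and there $\Ntot\le nM$. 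Hence
\[
\Ntot\le \NA\le \Ntot+\|V\|,
\]
and on $\operatorname{ran}\Pi_M$ also $\NA\le nM+\|V\|$. From subadditivity of $t\mapsto t^\kappa$ for $\kappa\in(0,1]$ we get $\NA^\kappa\le \Ntot^\kappa+\|V\|^\kappa$, with the constant $2^{\kappa}$ absorbed into $(\|V\|/2)^\kappa$ up to $C_\kappa$. Finally, \eqref{eq:ExpBound2} is immediate because $\NA$ is an affine function of $H$, so it commutes with $e^{itH}$.

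Next come the jump operator estimates. The bound \eqref{eq:NormTruncatedJump} is the standard $\|aP_{\widetilde M}\|\le\sqrt{\widetilde M}$, since the first $\widetilde M$ eigenstates of the $d$-dimensional $h$ have single-mode occupation at most $\widetilde M$.

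For \eqref{eq:AbstractFiniteTruncBareJumps}, write $a-a^{\le\widetilde M}=(1-P^i_{\widetilde M})a+P^i_{\widetilde M}a(1-P^i_{\widetilde M})$. Each piece is supported where particle $i$ has energy above the level indexed by $\widetilde M$, and this forces $\Ntot\gtrsim \widetilde M^{1/d}$. I would replace this by the cruder but cleaner requirement $\widetilde M\ge M+2$, which guarantees we are outside $\operatorname{ran}\Pi_M$, so there $\NA=\Ntot+\|V\|/2$ commutes with $\Ntot$ and is diagonal in the Fock basis. Then I use that $a$ shifts $\Ntot$ by $\pm1$, which lets me compute $e^{(l-1)\NA^\kappa}ae^{-l\NA^\kappa}$ on Fock vectors as $\sqrt{m}\,e^{(l-1)(m-1)^\kappa-l m^\kappa}$ times shifts. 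The factor $e^{(l-1)((m-1)^\kappa-m^\kappa)}$ is bounded, so the remaining $\sqrt{m}e^{-m^\kappa}$ restricted to $m\gtrsim\widetilde M$ gives $C_\kappa\sqrt{\widetilde M+1}e^{-\widetilde M^\kappa}$.

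\textbf{Main obstacle.} The real difficulty is matching the exponent $\widetilde M^\kappa$ with the true relation between the index $\widetilde M$ and the occupation number. The level index $\widetilde M$ corresponds to total occupation of order $\widetilde M^{1/d}$, not $\widetilde M$. I would resolve this either by interpreting $P_{\widetilde M}$ as the projection onto the occupation-number cutoff, consistent with the footnote on $\pi^\alpha_{\widetilde M}$, or by adjusting $C_\kappa$ and $\kappa$ accordingly.

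For \eqref{eq:AnniCraNormTrunc} and \eqref{eq:CABound}, I split $a=\Pi_M a\Pi_M+(\text{rest})$. On the rest, $\NA$ is diagonal and differs from $\Ntot$ by a constant, so the shift argument applies. The bulk bound for $ae^{-\NA^\kappa}$ follows from $\sup_m\sqrt m e^{-m^\kappa}<\infty$. The conjugation by $e^{\NA^\kappa}$ only contributes a factor $e^{(m+1)^\kappa-m^\kappa}\le e$. On the finite block, and on the transitions from it into $\operatorname{ran}\Pi_M^\perp$ and back, I crudely bound $e^{\pm\NA^\kappa}$ by $e^{(nM+\|V\|)^\kappa}\le e^{(nM)^\kappa+\|V\|^\kappa}$, and $\|a\|$ on these pieces by $\sqrt{\widetilde M}$ or $\sqrt{M+1}$. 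The $e^{-2\NA^\kappa}$ cases are handled identically, which yields the factor $2$ in the exponent.

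For \eqref{eq:HamiltonianLeakTrunc}, note that $V=\Pi_{\widetilde M}V\Pi_{\widetilde M}$. So $H-H_{\le\widetilde M}=H_{0,n}(1-\Pi_{\widetilde M})$, which is diagonal with eigenvalues $2\Ntot+dn$ on states where some particle exceeds the cutoff. Multiplying by $e^{-\NA^\kappa}=e^{-(\Ntot+\|V\|/2)^\kappa}$ there and maximizing $(2m+dn)e^{-m^\kappa}$ over $m\ge\widetilde M$ gives $C_\kappa(2\widetilde M+dn)e^{-\widetilde M^\kappa}$. This maximization is monotone only for $\widetilde M$ large, and the constant $C_\kappa$ absorbs the small-$\widetilde M$ regime.
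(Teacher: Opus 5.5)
Your proposal follows the paper's proof essentially step by step. It uses the same block decomposition along $\operatorname{ran}\Pi_M\oplus\operatorname{ran}\Pi_M^\perp$, with $\NA=\Ntot+\|V\|/2$ off the block and $\Pi_M\NA\Pi_M\le nM+\|V\|$ on it. It uses the same Fock-shift computation reducing the truncation estimates to $\sup_{m}\sqrt m\,e^{-m^\kappa}$ over occupations beyond the cutoff, the same splitting of $a_{i,j}$ into $\Pi_M$-block pieces for the conjugated bounds, and the same identity $H-H_{\le\widetilde M}=H_{0,n}(1-\Pi_{\widetilde M})$ for the leakage term.

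The obstacle you flag is genuine, and the paper silently adopts your first resolution: it bounds by $\sup_{r\ge\widetilde M+1}$, i.e.\ treats $P_{\widetilde M}$ as an occupation cutoff. Merely enlarging $C_\kappa$ would not repair the literal $d\ge 2$ statement. Separately, your claim that the factor $2^\kappa$ can be absorbed into $C_\kappa$ is wrong, because it sits in the exponent. The paper makes the same slip when passing from $\Pi_M\NA\Pi_M\le nM+\|V\|$ to $e^{(nM)^\kappa+(\|V\|/2)^\kappa}$; this is harmless since downstream only $\exp(\operatorname{poly}(n,M,\|V\|))$ bounds are used.
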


\begin{proof}
Let \(\{|\mathbf n\rangle\}_{\mathbf n\in\mathbb N^{dn}}\) be the occupation number basis,
where
\[
\mathbf n=(n_1,\dots,n_{dn}),
\qquad
|\mathbf n|:=\sum_{j=1}^{dn} n_j.
\]
Then
\[
H_{0,n}|\mathbf n\rangle=|\mathbf n|\,|\mathbf n\rangle,
\]
and for each site
\[
a_{i,j}|\mathbf n\rangle
=
\begin{cases}
\sqrt{n_{i,j}}\,|\mathbf n-e_{i,j}\rangle,& n_{i,j}\ge 1,\\[1mm]
0,& n_{i,j}=0,
\end{cases}
\qquad
a_{i,j}^\dagger|\mathbf n\rangle
=
\sqrt{n_{i,j}+1}\,|\mathbf n+e_{i,j}\rangle.
\]
Since \(n_{i,j}\le |\mathbf n|\), all relevant coefficients are controlled by the total
occupation number.

We first verify \eqref{eq:NormTruncatedJump}. On \(\operatorname{Ran} P_{\widetilde M}^{\otimes n}\), one has
\(n_{i,j}\le \widetilde{M}\), hence
\[
\|P_{\widetilde M}^{\otimes n}a_{i,j} P_{\widetilde M}^{\otimes n}\|\le \sqrt {\widetilde M},
\qquad
\|P_{\widetilde M}^{\otimes n}a_{i,j}^\dagger P_{\widetilde M}^{\otimes n}\|\le \sqrt{\widetilde M+1}.
\]
Thus
\[
\|(a_{i,j})^{\le \widetilde{M}}\|\le \sqrt{\widetilde M+1},
\]
so \eqref{eq:NormTruncatedJump} holds. We continue with our proof focussing  on verifying the conditions for \(a_{i,j}\) as the ones for $a_{i,j}^\dagger$ follow similarly.

Next, we verify \eqref{eq:ExpBound2}. Since by definition the operator \(H\) commutes with \(N_{\mathcal A}\),
hence also with \(N_{\mathcal A}^\kappa\) and \(e^{kN_{\mathcal A}^\kappa}\). Therefore, for all
\(k,t\in\mathbb R\) we have
\[
e^{-itH}e^{kN_{\mathcal A}^\kappa}e^{itH}=e^{kN_{\mathcal A}^\kappa}.
\]

We next verify \eqref{eq:AbstractFiniteTruncBareJumps}. For this 
set \(R^{\widetilde M}_{i,j}:=a_{i,j}-P^{i}_{\widetilde M}a_{i,j} P^{i}_{\widetilde M}\) and note that since $\widetilde M\ge M$ we have $R^{\le \widetilde M}_{i}P^{\otimes n}_{M} = 0.$
Therefore, using the block diagonality \begin{align*}
\NA &= P^{\otimes n}_{M}\NA P^{\otimes n}_{M} + (1-P^{\otimes n}_{M})\NA (1-P^{\otimes n}_{M}) \\&= P^{\otimes n}_{M}\NA P^{\otimes n}_{M} +(1-P^{\otimes n}_{M}) (\Ntot+\|V\|/2) (1-P^{\otimes n}_{M})
\end{align*}
we see for $l=1,2$ that 
\begin{align}
    R^{\le \widetilde M}_{i,j} e^{-l\NA^{\kappa}} = R^{\le \widetilde M}_{i,j} (1-P^{\otimes n}_{M})e^{-l(\Ntot+\|V\|/2)^{\kappa}}(1-P^{\otimes n}_{M}) 
\end{align}
and therefore
\[
\|R^{\widetilde M}_{i,j} e^{-N_{\mathcal A}^\kappa}\|
\lesssim \sup_{r\ge \widetilde M+1}\sqrt r\,e^{-(r+\|V\|/2)^\kappa} \le \sup_{r\ge \widetilde M+1}\sqrt r\,e^{-r^\kappa} \le C_{\kappa} \sqrt{\widetilde M+1}\, e^{-\widetilde M^\kappa},
\]
where we used in the last inequality that \(r\mapsto \sqrt r\,e^{-r^\kappa}\) is eventually decreasing.  Similarly we obtain
\[
\|e^{\NA^\kappa}R^{\widetilde M}_{i,j} e^{-2N_{\mathcal A}^\kappa}\|
\lesssim \sup_{r\ge \widetilde M+1}\sqrt r\,e^{(r-1+\|V\|/2)^\kappa-2(r+\|V\|/2)^\kappa} \le \sup_{r\ge \widetilde M+1}\sqrt r\,e^{-r^\kappa} \le C_{\kappa} \sqrt{\widetilde M+1} e^{-\widetilde M^\kappa}.
\]
The argument for the $a_{i,j}^{\dagger}$ is analogous and proves the claim for \(l=1,2\).

We continue by showing \eqref{eq:AnniCraNormTrunc} and writing
\[\begin{split}
&e^{N_{\mathcal A}^\kappa}P_{\widetilde M}^{\otimes n}a_{i,j} P_{\widetilde M}^{\otimes n}e^{-N_{\mathcal A}^\kappa}= e^{\NA^{\kappa}}P_{M}^{\otimes n}a_{i,j}P_{M}^{\otimes n}e^{-\NA^{\kappa}} +e^{\NA^{\kappa}}(P_{\widetilde M}^{\otimes n}-P_{M}^{\otimes n})a_{i,j} P_{M}^{\otimes n}e^{-\NA^{\kappa}} \\&\qquad\quad+ e^{\NA^{\kappa}}P^{\otimes n}_{M}a_{i,j} (P^{\otimes n}_{\widetilde M}-P_{M}^{\otimes n})e^{-\NA^{\kappa}} + e^{\NA^{\kappa}}(P^{\otimes n}_{\widetilde M}-P^{\otimes n}_{M})a_{i,j} (P^{\otimes n}_{\widetilde M}-P_{M}^{\otimes n})e^{-\NA^{\kappa}}\\&\quad= e^{\NA^{\kappa}}P_{M}^{\otimes n}a_{i,j}P_{M}^{\otimes n}e^{-\NA^{\kappa}} +(P_{\widetilde M}^{\otimes n}-P_{M}^{\otimes n})e^{(\Ntot + \|V\|/2)^{\kappa}}a_{i,j} P_{M}^{\otimes n}e^{-\NA^{\kappa}} \\&\qquad\quad+ e^{\NA^{\kappa}}P^{\otimes n}_{M}a_{i,j} e^{-(\Ntot + \|V\|/2)^{\kappa}}(P^{\otimes n}_{\widetilde M}-P_{M}^{\otimes n}) \\&\qquad\quad+ (P^{\otimes n}_{\widetilde M}-P^{\otimes n}_{M})e^{(\Ntot + \|V\|/2)^{\kappa}} a_{i,j} e^{-(\Ntot + \|V\|/2)^{\kappa}}  (P^{\otimes n}_{\widetilde M}-P_{M}^{\otimes n}),
\end{split}
\]
where for the last equality we have used that $\NA(P^{\otimes n}_{\widetilde M} - P^{\otimes n}_{M}) = (\Ntot + \|V\|/2)(P^{\otimes n}_{\widetilde M} - P^{\otimes n}_{M})= (P^{\otimes n}_{\widetilde M} - P^{\otimes n}_{M})(\Ntot + \|V\|/2).$ We bound the norm of each term individually using
\begin{align*}
P^{\otimes n}_{M}\NA P^{\otimes n}_{M} \le nM +\|V\|
\end{align*}
which gives 
\begin{align*}
&\left\|e^{\NA^{\kappa}}P_{M}^{\otimes n}a_{i,j}P_{M}^{\otimes n}e^{-\NA^{\kappa}}\right\| \le \left\|e^{\NA^{\kappa}}P_{M}^{\otimes n}\right\|\left\|P_{M}^{\otimes n}a_{i,j}P_{M}^{\otimes n}\right\| \le \sqrt{M} \,e^{(nM)^\kappa + (\|V\|/2)^\kappa},\\
&\left\|(P_{\widetilde M}^{\otimes n}-P_{M}^{\otimes n})e^{(\Ntot + \|V\|/2)^{\kappa}}a_{i,j} P_{M}^{\otimes n}e^{-\NA^{\kappa}}\right\| \le \sqrt{M}   \| e^{(\Ntot + \|V\|/2)^{\kappa}}P^{\otimes n}_{M - e_{i,j}}\| \le \sqrt{M} e^{(nM)^\kappa + (\|V\|/2)^{\kappa}},\\&\left\|
e^{\NA^{\kappa}}P^{\otimes n}_{M}a_{i,j} e^{-(\Ntot + \|V\|/2)^{\kappa}}(P^{\otimes n}_{\widetilde M}-P_{M}^{\otimes n})\right\| \le \sqrt{M}\, e^{(nM)^\kappa  +(\|V\|/2)^\kappa},\\ &\left\| e^{(\Ntot + \|V\|/2)^{\kappa}} a_{i,j} e^{-(\Ntot + \|V\|/2)^{\kappa}}  (P^{\otimes n}_{\widetilde M}-P_{M}^{\otimes n})\right\| =\left\| e^{(\Ntot + \|V\|/2)^{\kappa} -(\Ntot + \|V\|/2+1)^{\kappa}} a_{i,j}   (P^{\otimes n}_{\widetilde M}-P_{M}^{\otimes n})\right\|\\&\qquad\lesssim \sqrt{\widetilde M},
\end{align*}
where we denoted $P^{\otimes n}_{M -e_{i,j}} = P^{\otimes d(i-1) +j-1}_{M}\otimes P_{M -1}\otimes P^{\otimes dn -d(i-1) -j}_{M}$ and consistently used that $(x+y)^\kappa \le x^\kappa +y^\kappa$ for $x,y\ge 0.$

We continue with proving \eqref{eq:CABound}, which follows using a similar argument: We write $$a_{i,j}e^{-\NA^\kappa} =a_{i,j}P^{\otimes n}_{M}e^{-\NA^\kappa} + a_{i,j}(1-P^{\otimes n}_{M})e^{-\NA^\kappa} = a_{i,j}P^{\otimes n}_{M}e^{-\NA^\kappa} + a_{i,j}e^{-(\Ntot + \|V\|/2)^\kappa}(1-P^{\otimes n}_{M})$$ and bound each term individually as
\begin{align*}
 \left\| a_{i,j}P^{\otimes n}_{M}e^{-\NA^\kappa} \right\|  \le \sqrt{M},\qquad \left\|a_{i,j}e^{-(\Ntot + \|V\|/2)^\kappa}(1-P^{\otimes n}_{M})\right\| \le C_\kappa.
\end{align*}
Moreover, 
we write
\begin{align*}
e^{\NA^\kappa}a_{i,j}e^{-2\NA^\kappa} &= e^{\NA^\kappa}P^{\otimes n}_{M}a_{i,j}P^{\otimes n}_{M}e^{-2\NA^\kappa} + (1-P^{\otimes n}_{M}) e^{(\Ntot + \|V\|/2)^\kappa}a_{i,j}e^{-2\NA^\kappa}P^{\otimes n}_{M} \\&\qquad +e^{\NA^\kappa}P^{\otimes n}_{M}a_{i,j}(1-P^{\otimes n}_{M})e^{-2(\Ntot+ \|V\|/2)^\kappa}\\&\qquad  +(1-P^{\otimes n}_{M})e^{(\Ntot + \|V\|/2)^\kappa}a_{i,j}e^{-2(\Ntot + \|V\|/2)^\kappa}(1-P^{\otimes n}_{M})  
\end{align*}
and bound the operator norm of each of the four terms individually as 
\begin{align*}
&\|e^{\NA^\kappa}P^{\otimes n}_{M}a_{i,j}P^{\otimes n}_{M}e^{-2\NA^\kappa}\| \le \sqrt{M} e^{(nM)^\kappa + (\|V\|/2)^\kappa} 
\\ &\|e^{(\Ntot + \|V\|/2)^\kappa}a_{i,j}e^{-2\NA^\kappa}\| \lesssim e^{2((nM)^\kappa +(\|V\|/2)^\kappa)} \|e^{(H_{0,n} +\|V\|)^\kappa}a_{i,j}e^{-2(\Ntot + \|V\|/2)^\kappa}\| \le C_{\kappa}e^{2((nM)^\kappa +(\|V\|/2)^\kappa)}\\&\|e^{\NA^\kappa}P^{\otimes n}_{M}a_{i,j}(1-P^{\otimes n}_{M})e^{-2(H_{0,n}+\|V\|)^\kappa}\| \le C_\kappa e^{(nM)^\kappa + (\|V\|/2)^\kappa} \\&\|(1-P^{\otimes n}_{M})e^{(H_{0,n} +\|V\|)^\kappa}a_{i,j}e^{-2(\Ntot + \|V\|/2)^\kappa}(1-P^{\otimes n}_{M}) \| \le C_\kappa
\end{align*}
where for the second line we have used
\begin{align*} \left\|e^{2(\Ntot + \|V\|/2)^\kappa}e^{-2\NA^\kappa}\right\| &\le \left\|e^{2(\Ntot + \|V\|/2)^\kappa}P^{\otimes n}_{M}e^{-2\NA^\kappa}\right\| + \left\|e^{2(\Ntot + \|V\|/2)^\kappa}e^{-2(\Ntot + \|V\|/2)^\kappa}(1-P^{\otimes n}_{M})\right\| \\&  \le e^{2((nM)^\kappa + (\|V\|/2)^\kappa)} +1. 
\end{align*}
The remaining two terms in \eqref{eq:CABound} follow similarly.

Lastly we check \eqref{eq:HamiltonianLeakTrunc}. Since $\widetilde M \ge M$ we have 
\[
H_{\le \widetilde{M}}=P_{\widetilde M}^{\otimes n}HP_{\widetilde M}^{\otimes n}=
H_{0,n}P_{\widetilde M}^{\otimes n} + V,
\]
and hence
\[
H-H_{\le \widetilde{M}}=H_{0,n}(1-P_{\widetilde M}^{\otimes n}) 
\]
Therefore, we have 
\[\begin{split}
(H-H_{\le \widetilde{M}})e^{-N_{\mathcal A}^\kappa}&=H_{0,n}(1-P_{\widetilde M}^{\otimes n})e^{-N_{\mathcal A}^\kappa}=H_{0,n}(1-P_{\widetilde M}^{\otimes n})e^{-(\Ntot+\|V\|/2)^\kappa}\\
&= H_{0,n}e^{-(\Ntot+\|V\|/2)^\kappa}(1-P_{\widetilde M}^{\otimes n})  = (2\Ntot +dn)e^{-(\Ntot+\|V\|/2)^\kappa}(1-P_{\widetilde M}^{\otimes n})
\end{split}
\]
By functional calculus we hence see
\[
\|(H-H_{\le \widetilde{M}})e^{-N_{\mathcal A}^\kappa}\|
\le  \sup_{\mathbf{n}\in\N^n_0,\,|\mathbf{n}|>\widetilde M}(2|\mathbf{n}|+dn) e^{-|\mathbf{n}|^\kappa} \le C_\kappa (2\widetilde M + dn)e^{-\widetilde M^\kappa}.
\]
\end{proof}
We continue with three supporting lemmas to estimate the scaling of certain constants appearing in the implementation scheme of \cite[Theorem~4.12 \& Corollary~4.33]{BeckerRouzeSalzmannToAppearcmp} and then state and prove the main result of this section, Theorem~\ref{thm:GibbsPrepareCoulomb}. 
\begin{lemma}
\label{lem:PartionFunctionBound}
Let $d,n\in\N$ and $H =  H_{0,n} +V$ for some bounded self-adjoint operator $V.$ Then the partition function for $H$ and $\beta >0$ satisfies
\begin{align*}
    e^{-\beta\|V\|} (2\sinh(\beta))^{-dn}\le\mathcal{Z}_\beta(H) \le e^{\beta\|V\|} (2\sinh(\beta))^{-dn}.
\end{align*}
\end{lemma}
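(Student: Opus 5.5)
The plan is to compare $H$ with $H_{0,n}$ through operator monotonicity of the trace exponential, and then compute the free partition function explicitly. Since $V$ is bounded and self-adjoint, we have the operator inequalities
\[
H_{0,n}-\|V\|\;\le\; H\;\le\; H_{0,n}+\|V\|,
\]
in the sense of quadratic forms on $Q(H_{0,n})$. In particular $H$ is self-adjoint on $D(H_{0,n})$, bounded from below, and has compact resolvent, because $V$ is a bounded perturbation of an operator with compact resolvent.

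\textbf{Step 1 (monotonicity).} We invoke the monotonicity of $A\mapsto \Tr(e^{A})$, the same fact used in the proof of Lemma~\ref{lowerboundsnow}. Its justification is the min-max principle: $A\le B$ implies $\lambda_k(A)\le\lambda_k(B)$ for each $k$. Applied to $-\beta H$, this gives
\[
e^{-\beta\|V\|}\Tr\big(e^{-\beta H_{0,n}}\big)\;\le\;\mathcal Z_\beta(H)\;\le\; e^{\beta\|V\|}\Tr\big(e^{-\beta H_{0,n}}\big).
\]
For the finite-rank $V$ relevant later, one could alternatively use the Bogoliubov inequality \eqref{eq:PB} in both directions, but eigenvalue comparison is simpler and sharper.

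\textbf{Step 2 (free partition function).} By the tensor product structure $H_{0,n}=\sum_{i\in[n]}\sum_{j\in[d]}h_{ij}$ with $h_{ij}=-\partial_{x_{ij}}^2+x_{ij}^2$, the trace factorizes:
\[
\Tr\big(e^{-\beta H_{0,n}}\big)=\big(\Tr e^{-\beta h_{11}}\big)^{dn}.
\]
The one-dimensional oscillator has spectrum $\{2k+1\}_{k\in\mathbb N_0}$, so
\[
\Tr e^{-\beta h_{11}}=\sum_{k\ge0}e^{-\beta(2k+1)}=\frac{e^{-\beta}}{1-e^{-2\beta}}=\big(2\sinh\beta\big)^{-1}.
\]
This is consistent with the formula $F_\beta(H_{0,n})=dn\beta^{-1}\log(2\sinh\beta)$ quoted in the introduction. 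Inserting this into Step 1 yields the claim.

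There is no real obstacle here. The only point requiring care is that the eigenvalue comparison for unbounded operators is justified by the min-max principle on the common form domain. This holds because the perturbation $V$ is bounded, so $H$ and $H_{0,n}$ share the same form domain and both have discrete spectrum.
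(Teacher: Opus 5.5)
Your proposal is correct and follows essentially the same route as the paper. The paper also sandwiches $H$ between $H_{0,n}\pm\|V\|$ and compares eigenvalues via the min-max principle. It then evaluates $\Tr(e^{-\beta H_{0,n}})=\bigl(\sum_{k\ge 0}e^{-\beta(2k+1)}\bigr)^{dn}=(2\sinh\beta)^{-dn}$ exactly as you do.
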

\begin{proof}
 Note that as $V$ is bounded we have 
$H_{0,n} +\|V\|\ge H \ge H_{0,n} - \|V\|$ and, hence, by the min-max principle for eigenvalues 
\begin{align*}
    \lambda_j(H_{0,n}) + \|V\|\ge \lambda_j(H) \ge \lambda_j(H_{0,n}) - \|V\|
\end{align*}
where $\lambda_j(H)$ and $\lambda_j(H_{0,n})$ denote the $j^{th}$ smallest eigenvalue of $H$ and $H_{0,n}$ respectively. Therefore, we see
\begin{align*}
\mathcal{Z}_\beta(H) &= \Tr\left(e^{-\beta H}\right) \le e^{\beta\|V\|} \Tr\left(e^{\beta H_{0,n}}\right) ´= e^{\beta\|V\|}  \left(\sum_{k=0}^{\infty} e^{-\beta (2k+1)} \right)^{dn} =e^{\beta\|V\|} \left(\frac{e^{-\beta}}{1-e^{-2\beta}}\right)^{dn} \\&=e^{\beta\|V\|} (2\sinh(\beta))^{-dn}   
\end{align*}
and similarly 
\begin{align*}
\mathcal{Z}_\beta(H) \ge e^{-\beta\|V\|} (2\sinh(\beta))^{-dn}.
\end{align*}
\end{proof}
\begin{lemma}
\label{lem:cScalingFiniteRank}
Let $d,n\in\N$ and $H =  H_{0,n} +V$ for some finite-rank self-adjoint operator satisfying $V = P^{\otimes n}_{M}VP^{\otimes n}_{M}.$
Then for all normalised $\ket{\Psi} = P^{\otimes n}_{M}\ket{\Psi}$ we have
\begin{align*}
    \kb{\Psi} \le e^{\beta(2 \|V\|+nM)} (2\sinh(\beta))^{-dn} \ \sigma_\beta(H).
\end{align*}
In particular, the bound applies to  $\ket{\Psi} = \ket{0}^{\otimes dn},$ where $\ket{0}$ denotes the Fock state  satisfying $N\ket{0} =0$, where $N$ is the single mode number operator on $L^2(\R).$  
\end{lemma}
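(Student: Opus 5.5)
The plan is to prove the operator inequality by reducing it to a scalar bound on the finite-dimensional block $P^{\otimes n}_M\mathcal H_n$. By Lemma~\ref{lem:conjugated-L-finite-rank-remainder}, since $V=P^{\otimes n}_M V P^{\otimes n}_M$ and $P^{\otimes n}_M$ commutes with $H_{0,n}$, $H$ is reduced by $P^{\otimes n}_M\mathcal H_n\oplus (1-P^{\otimes n}_M)\mathcal H_n$. Consequently $e^{-\beta H}$ is block diagonal, and
\[
\sigma_\beta(H)\ \ge\ \frac{P^{\otimes n}_M e^{-\beta H}P^{\otimes n}_M}{\mathcal Z_\beta(H)}.
\]
Because $\ket{\Psi}$ lies in the range of $P^{\otimes n}_M$, it suffices to show that $e^{-\beta H}$, restricted to this finite-dimensional block, is bounded below by a multiple of the identity on that block. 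Then $\kb{\Psi}\le P^{\otimes n}_M$ finishes the argument.

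For the lower bound on the block, note that on $P^{\otimes n}_M\mathcal H_n$ we have $H\le H_{0,n}P^{\otimes n}_M+\|V\|$. The largest eigenvalue of $H_{0,n}$ on the range of $P^{\otimes n}_M$ is of order $nM$, since each of the $n$ particles sits in one of the first $M$ one-body modes. Matching the exponent claimed in the statement requires this to be at most $nM$ in the paper's normalisation; I will check how $\lambda_M$ for $h$ compares to $M$ (one has $\lambda_M\lesssim M^{1/d}\le M$, up to the $+d$ shift, which can be absorbed). Operator monotonicity of the exponential fails in general, but here everything takes place on a finite-dimensional space where the restricted $H$ has spectrum $\le nM+\|V\|$. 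By the spectral theorem applied to the restricted $H$ itself, this gives
\[
P^{\otimes n}_M e^{-\beta H}P^{\otimes n}_M\ \ge\ e^{-\beta(nM+\|V\|)}P^{\otimes n}_M.
\]

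Next, the partition function is bounded by Lemma~\ref{lem:PartionFunctionBound}: $\mathcal Z_\beta(H)\le e^{\beta\|V\|}(2\sinh\beta)^{-dn}$. Combining the two bounds gives
\[
\sigma_\beta(H)\ \ge\ e^{-\beta(2\|V\|+nM)}(2\sinh\beta)^{dn}\,P^{\otimes n}_M\ \ge\ e^{-\beta(2\|V\|+nM)}(2\sinh\beta)^{dn}\,\kb{\Psi}.
\]
Rearranging produces the stated constant. One caveat: this yields the factor $(2\sinh\beta)^{+dn}$, whereas the statement has $(2\sinh\beta)^{-dn}$. If $\beta$ is not small, I will either reconcile this with the intended convention or note that the stated bound is weaker or stronger accordingly.

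The last claim follows because the vacuum $\ket{0}^{\otimes dn}$ is the ground state of every $h_i$, so it lies in the range of $P_M^{\otimes n}$ for all $M\ge1$. The main obstacle is the bookkeeping of the top eigenvalue of $H_{0,n}$ on the truncated block versus $nM$, together with this sign of the $\sinh$ exponent. The operator-theoretic steps themselves are routine.
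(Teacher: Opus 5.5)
Your argument is correct and is essentially the paper's: both rest on $P^{\otimes n}_M$ reducing $H$, on the block bound $P^{\otimes n}_M H P^{\otimes n}_M\le nM+\|V\|$ (the paper also simply asserts $P^{\otimes n}_M H_{0,n} P^{\otimes n}_M\le nM$, so your bookkeeping caveat about $\lambda_M$ versus $M$ applies equally to its proof), and on the partition-function bound of Lemma~\ref{lem:PartionFunctionBound}; the paper evaluates the optimal rank-one constant $\mathcal Z_\beta(H)\langle\Psi|e^{\beta H}|\Psi\rangle$, whereas you prove the slightly stronger $P^{\otimes n}_M\le \mathfrak c\,\sigma_\beta(H)$. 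Your worry about the sign of the $\sinh$ exponent is unfounded: inverting $\sigma_\beta(H)\ge e^{-\beta(2\|V\|+nM)}(2\sinh\beta)^{dn}P^{\otimes n}_M$ gives exactly the stated constant $e^{\beta(2\|V\|+nM)}(2\sinh\beta)^{-dn}$.
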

\begin{proof}
As $\kb{\Psi}$ is rank one, it is clear that the smallest constant $\mathfrak{c}$ such that 
\begin{align*}
\kb{\Psi} \le  \mathfrak{c}\, \sigma_\beta(H)  \qquad\text{is given by}  \qquad
\mathfrak{c} = \bra{\Psi}\sigma^{-1}_\beta(H)\ket{\Psi} = \mathcal{Z}_\beta(H) \bra{\Psi}e^{\beta H}\ket{\Psi}.
\end{align*}
We focus on bounding the term $\bra{\Psi}e^{\beta H}\ket{\Psi}:$
For that, note that as $[H_{0,n},P^{\otimes n}_{M}] = [V,P^{\otimes n}_{M}] = 0$ and $P^{\otimes n}_{M} \ket{\Psi} = \ket{\Psi}$ we have $e^{\beta H}\ket{\Psi} = e^{\beta P^{\otimes n}_{M}HP^{\otimes n}_{M}}\ket{\Psi}.$ Using $P^{\otimes n}_{M}H_{0,n}P^{\otimes n}_{M} \le nM$ and therefore $\|P^{\otimes n}_{M}HP^{\otimes n}_{M}\| \le nM +\|V\|$ we conclude
\begin{align*}
\bra{\Psi}e^{\beta H}\ket{\Psi} \le e^{\beta(nM + \|V\|)}.
\end{align*}
Combining this with the upper bound of the partition function in Lemma~\ref{lem:PartionFunctionBound} finishes the proof.
\end{proof}

\begin{lemma}
\label{lem:EGibbsBoundedperturbation}
Let $H= H_{0,n} + V$ for some bounded self-adjoint operator $V.$ Define the self-adjoint and positive semidefinite operator $\NA = \frac{1}{2}\left(H -dn+ \|V\|\right).$
Then, for all $\kappa\in[0,1)$, we have 
\begin{align}
E_{\operatorname{Gibbs}} := \Tr\left(e^{4\NA^\kappa} \,\sigma_\beta(H)\right) \le C_{\beta,\kappa} \,e^{\beta(2\|V\|-dn/2)}  \left(\frac{\sinh(\beta)}{\sinh(\beta/2)}\right)^{dn},
\end{align}
for some constant $C_{\beta,\kappa}$ depending only on $\beta$ and $\kappa.$
\end{lemma}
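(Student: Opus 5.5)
The plan is to reduce everything to the free oscillator, using only the two-sided operator bounds $H_{0,n}-\|V\|\le H\le H_{0,n}+\|V\|$. Since $H_{0,n}=dn+2\Ntot$, we have
\[
\NA=\tfrac12(H-dn+\|V\|),\qquad \Ntot\le \NA\le \Ntot+\|V\|.
\]
First I split the Gibbs weight. Because $H$ and $\NA$ commute,
\[
e^{4\NA^\kappa}e^{-\beta H}=e^{4\NA^\kappa-2\beta \NA}\,e^{-\beta(dn-\|V\|)}.
\]
I would bound the operator $e^{4\NA^\kappa-\beta\NA}$ uniformly in norm by the scalar
\[
C_{\beta,\kappa}:=\sup_{x\ge0}e^{4x^\kappa-\beta x}<\infty,
\]
which is finite because $\kappa<1$. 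This leaves
\[
E_{\operatorname{Gibbs}}\le C_{\beta,\kappa}\,\mathcal Z_\beta(H)^{-1}e^{-\beta(dn-\|V\|)/2}\cdot e^{\beta(\|V\|-dn)/2}\Tr\big(e^{-\beta \NA}\big)\cdot(\text{bookkeeping}).
\]
The cleaner route is to write $e^{4\NA^\kappa}e^{-\beta H}=e^{4\NA^\kappa-\frac{\beta}{2}\NA}\,e^{-\frac{\beta}{2}H}\,e^{-\frac{\beta}{4}(dn-\|V\|)}$, using $\frac\beta2\NA=\frac\beta4(H-dn+\|V\|)$. Then
\[
\Tr\big(e^{4\NA^\kappa}e^{-\beta H}\big)\le C_{\beta,\kappa}\,e^{\frac\beta4(\|V\|-dn)}\,\Tr\big(e^{-\frac\beta2 H}\big),
\]
with $C_{\beta,\kappa}=\sup_{x\ge0}e^{4x^\kappa-\beta x/2}$.

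Next I control the two partition functions by Lemma~\ref{lem:PartionFunctionBound}. Applied at inverse temperature $\beta/2$ it gives
\[
\Tr\big(e^{-\frac\beta2 H}\big)\le e^{\beta\|V\|/2}(2\sinh(\beta/2))^{-dn},
\]
and applied at $\beta$ it gives
\[
\mathcal Z_\beta(H)^{-1}\le e^{\beta\|V\|}(2\sinh\beta)^{dn}.
\]
Dividing, I get
\[
E_{\operatorname{Gibbs}}\le C_{\beta,\kappa}\,e^{\beta(\frac14+\frac12+1)\|V\|}e^{-\beta dn/4}\Big(\frac{\sinh\beta}{\sinh(\beta/2)}\Big)^{dn}.
\]

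This has the right shape, but the exponents in $\|V\|$ and $dn$ differ from the stated $e^{\beta(2\|V\|-dn/2)}$. The $\|V\|$ exponent $7/4\le 2$ is fine. The $dn$ exponent $-dn/4$ is weaker than the claimed $-dn/2$, so I need to tune the splitting parameter. Splitting instead as $e^{4\NA^\kappa-\beta\NA}e^{-\beta H/2}\cdot e^{-\beta(dn-\|V\|)/2}$, using $\beta\NA=\frac\beta2(H-dn+\|V\|)$, gives
\[
E_{\operatorname{Gibbs}}\le C_{\beta,\kappa}\,e^{\beta\|V\|/2}e^{-\beta dn/2}\cdot e^{\beta\|V\|/2}e^{\beta\|V\|}\Big(\frac{\sinh\beta}{\sinh(\beta/2)}\Big)^{dn}.
\]
That is exactly $C_{\beta,\kappa}e^{\beta(2\|V\|-dn/2)}(\sinh\beta/\sinh(\beta/2))^{dn}$, with $C_{\beta,\kappa}=\sup_{x\ge0}e^{4x^\kappa-\beta x}$. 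This is the split I would commit to.

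The main point to check is the operator-inequality step. Since $\NA\ge0$ commutes with $e^{-\beta H/2}\ge0$, both are functions of $H$, so the trace of their product equals $\sum_k g(E_k)e^{-\beta E_k/2}$ with $g\le C_{\beta,\kappa}$. This needs only the spectral theorem, not any min-max argument.

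The potential obstacle is the well-definedness of the trace when $e^{4\NA^\kappa}$ is unbounded. This is handled because the combined function $e^{4x^\kappa-\beta x}$ is bounded, so the sum converges. Note also that $\kappa<1$ is essential: at $\kappa=1$ the supremum defining $C_{\beta,\kappa}$ would be infinite for $\beta<4$.
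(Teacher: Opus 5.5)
Your final split is correct and is exactly the paper's proof. Both bound $e^{4\NA^\kappa-\beta\NA}$ by a constant; the paper uses $4x^\kappa\le\beta x+C_\kappa\beta^{-\kappa/(1-\kappa)}$, which is equivalent to your supremum. Both then rewrite $e^{\beta\NA}e^{-\beta H}=e^{\beta(\|V\|-dn)/2}e^{-\beta H/2}$ and apply Lemma~\ref{lem:PartionFunctionBound} at $\beta/2$ for the numerator and at $\beta$ for the lower bound on $\mathcal Z_\beta(H)$.

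The discarded ``cleaner route'' has an algebra slip: that factorization leaves $e^{-3\beta H/4}$, not $e^{-\beta H/2}$. So the $-dn/4$ exponent you derived there is spurious, but this does not affect the argument you committed to.
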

\begin{proof}
Note that for all $x\ge 0,$ we have
\[
4x^\kappa \le \beta x + C_\kappa\, \beta^{-\frac{\kappa}{1-\kappa}}
\]
for some $C_\kappa\ge0$ only depending on $\kappa.$ Therefore, denoting $C_{\beta,\kappa} := e^{C_\kappa \beta^{-\frac{\kappa}{1-\kappa}}}$
we have $$
e^{4\NA^\kappa}
\le
C_{\beta,\kappa}\,e^{\beta\NA}=C_{\beta,\kappa}\,e^{\beta(\|V\|-dn)/2}\,e^{\beta H/2},
$$
and so
\[
\Tr\!\bigl(e^{4\NA^\kappa}e^{-\beta H}\bigr)
\le
C_{\beta,\kappa}\,e^{\beta (\|V\|-dn)/2}\,
\Tr\bigl(e^{-\beta H/2}\bigr) \le C_{\beta,\kappa} \,e^{\beta (\|V\|-dn/2)}\left(2\sinh(\beta/2)\right)^{-dn},
 \]
where in the last inequality, we estimated the partition function using Lemma~\ref{lem:PartionFunctionBound}.
Therefore
\begin{align*}
\label{eq:E_GibbsProofSF}
\Tr\bigl(e^{4\NA^\kappa}\sigma_\beta(H)\bigr)
\le
C_{\beta,\kappa} \,e^{\beta(\|V\|-dn/2)} \frac{\left(2\sinh(\beta/2)\right)^{-dn}}{\Tr\left(e^{-\beta H}\right)} \le C_{\beta,\kappa} \,e^{\beta(2\|V\|-dn/2)}  \left(\frac{\sinh(\beta)}{\sinh(\beta/2)}\right)^{dn},
\end{align*}
where in the last inequality, we used the lower bound on the partition function provided by Lemma~\ref{lem:PartionFunctionBound}.
\end{proof}

\begin{theorem}
\label{thm:GibbsPrepareCoulomb}
Let $d\in\{2,3\}$ and $\eps,\beta,\sigma_E>0$  and consider 
\begin{align*}
H_{n,M}\equiv H_{0,n} + W_{n,M} \quad\text{with}\quad M = \operatorname{poly}(n,q(1/\eps)),
\end{align*} 
where $W_{n,M}$ is defined in Section~\ref{sec:HamiltonianTruncation} and where we assume that $\max_{i,j}|\alpha_{n,i,j}| \le \operatorname{poly}(n)$ and $q:\R_+\to \R_+$ denotes some non-decreasing function that is lower bounded as $q(x)\gtrsim \log(x)$.
Denote $\lambda_2 := \operatorname{gap}(L_{\sigma_E,H_{n,M}})>0,$, where the positivity of the gap follows from Theorem~\ref{mainthmpositivegap}.

Then we have that the Gibbs state $\sigma_\beta(H_{n,M})$ can be prepared within $\eps$-trace distance on a quantum computer with $\mathcal{O}\left(n\log (n\,q(1/\eps))\,\log\log(1/\lambda_2)))\right)$ qubits and circuit depth
\begin{align*}
  \widetilde{ \mathcal{O}}\left(\frac{1}{\lambda_2}\operatorname{poly}\left(n, q(1/\eps)\right)\right).
\end{align*}
Hence, for $\alpha^{\max}_n\lesssim 1$ and using Theorem~\ref{thm:trace-norm-perturbation-boundIntro} and $M = \Theta\left(\left(\frac{n^{5/2}}{\eps}\right)^{8d}\right),$ this provides a preparation procedure for $\sigma_\beta(H_n)$ with the same complexities as above corresponding to the specific choice $q(x) = x^{8d}.$ 
Here, the $\mathcal{O},$ $\widetilde{\mathcal{O}}$ and $\Omega$ notations hide constants independent of the displayed parameters, and $\widetilde{\mathcal{O}}$ additionally suppresses subdominant $\operatorname{poly}\log$ factors in $1/\lambda_2$.
\end{theorem}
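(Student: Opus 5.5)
The plan is to apply the abstract implementation theorem \cite[Theorem~4.12 \& Corollary~4.33]{BeckerRouzeSalzmannToAppearcmp} with $H=H_{n,M}=H_{0,n}+V$, where $V:=W_{n,M}=\Pi_M W_n\Pi_M$. Its hypotheses are exactly the estimates collected in Lemma~\ref{lem:VerificationCondFiniteDimImpl}, taking $\NA=\frac12(H-dn+\|V\|)$ and a fixed $\kappa\in(0,1)$.

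\emph{Step 1: bound $\|V\|$ polynomially.} By Lemma~\ref{lem:product-cutoff-riesz-no-scaling}, \eqref{eq111}, and since $\Pi_M$ commutes with $H_{0,n}$ with $\Pi_M H_{0,n}\Pi_M\le C nM^{1/d}$ (or crudely $\le 2nM+dn$), we get
\[
\|V\|\lesssim B_n(nM+1)\le n^2\alpha_n^{\max}(nM+1)=\operatorname{poly}(n,q(1/\eps)).
\]
Hence every constant in Lemma~\ref{lem:VerificationCondFiniteDimImpl} grows at most like $e^{O((nM)^\kappa+\|V\|^\kappa)}$. Lemmas~\ref{lem:PartionFunctionBound}--\ref{lem:EGibbsBoundedperturbation} then give two further bounds. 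First, $\log\mathfrak c\le \beta(2\|V\|+nM)+dn\log(1/(2\sinh\beta))=\operatorname{poly}(n,q(1/\eps))$ for the initial state $\rho_{\rm ini}=\kb{0}^{\otimes dn}$. Second, $\log E_{\rm Gibbs}=\operatorname{poly}(n,q(1/\eps))$.

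\emph{Step 2: choose the truncation level and the evolution time.} Take $t=t_{\rm mix}(\eps/3)\le 2\log(3\mathfrak c/\eps)/\lambda_2$ by \eqref{mixingtimeintro}; the gap $\lambda_2$ is positive by Theorem~\ref{mainthmpositivegap}. Next choose the truncation $\widetilde M\ge M+2$ so that the truncation errors of the abstract theorem are at most $\eps/3$. Those errors are governed by $\sqrt{\widetilde M}e^{-\widetilde M^\kappa}$ and $(2\widetilde M+dn)e^{-\widetilde M^\kappa}$ (see \eqref{eq:AbstractFiniteTruncBareJumps} and \eqref{eq:HamiltonianLeakTrunc}), multiplied by $t$, by $E_{\rm Gibbs}$ and by the factors $e^{O((nM)^\kappa+\|V\|^\kappa)}$. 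Because all of these prefactors are $\exp(\operatorname{poly})$ or $\operatorname{poly}(1/\lambda_2)$, it suffices to take $\widetilde M^\kappa\gtrsim \operatorname{poly}(n,q(1/\eps))+\log\log(1/\lambda_2)$. Here the assumption $q(x)\gtrsim\log x$ absorbs the $\log(1/\eps)$ terms. This choice yields $\widetilde M=\operatorname{poly}(n,q(1/\eps),\log(1/\lambda_2))$.

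\emph{Step 3: count resources.} Encode each of the $dn$ modes truncated at level $\widetilde M$ in $O(\log\widetilde M)$ qubits. This gives $O(n\log(n\,q(1/\eps))\log\log(1/\lambda_2))$ qubits, up to ancillas for the block encodings. The truncated jumps have norm at most $\sqrt{\widetilde M+1}$ by \eqref{eq:NormTruncatedJump}. Block encodings of $H_{\le\widetilde M}$, which combines the diagonal $H_{0,n}$ with the matrix $V$ supported on $\operatorname{ran}\Pi_M$, together with the Hamiltonian simulation and time discretization of \cite{chen2023efficient,ding2025efficient}, give a per-unit-time cost of $\operatorname{poly}(n,\widetilde M,\|V\|)$. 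Multiplying by $t$ gives the stated depth $\widetilde{\mathcal O}(\lambda_2^{-1}\operatorname{poly}(n,q(1/\eps)))$.

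\emph{Step 4: final claim.} Combine the preparation error $\eps/2$ with Theorem~\ref{thm:trace-norm-perturbation-boundIntro}. With $\alpha^{\max}_n\lesssim1$ and $M=\Theta((n^{5/2}/\eps)^{8d})$, the bias satisfies $\|\sigma_\beta(H_n)-\sigma_\beta(H_{n,M})\|_1\le\eps/2$, which corresponds to $q(x)=x^{8d}$.

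The main obstacle is Step 2. One must track how the exponential prefactors $e^{2((nM)^\kappa+(\|V\|/2)^\kappa)}$ and $E_{\rm Gibbs}$ enter the error bound of \cite[Theorem~4.12]{BeckerRouzeSalzmannToAppearcmp}. The check needed is that they are beaten by $e^{-\widetilde M^\kappa}$ at only polynomial cost in $\widetilde M$, and not, say, by $\widetilde M$ exponential in $n$. A second delicate point is the block encoding of $V$ with normalization $\operatorname{poly}(n,M)$, which needs efficient access to the matrix elements of the Coulomb potential in the oscillator basis.
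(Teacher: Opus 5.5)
Your proposal is correct and follows essentially the same route as the paper. Both arguments bound $\|V\|$ polynomially via Lemma~\ref{lem:product-cutoff-riesz-no-scaling}, control $\mathfrak c$ and $E_{\rm Gibbs}$ through Lemmas~\ref{lem:cScalingFiniteRank} and~\ref{lem:EGibbsBoundedperturbation}, verify the hypotheses with Lemma~\ref{lem:VerificationCondFiniteDimImpl}, and enlarge $\widetilde M$ so that $e^{-\widetilde M^\kappa}$ absorbs the $\exp(\operatorname{poly})$ prefactors before invoking \cite[Corollary~4.33]{BeckerRouzeSalzmannToAppearcmp}. One small slip in Step 2: the error prefactor contains $t\sim\lambda_2^{-1}$, so you need $\widetilde M^\kappa\gtrsim\log(1/\lambda_2)$, not $\log\log(1/\lambda_2)$. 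Your stated $\widetilde M=\operatorname{poly}(n,q(1/\eps),\log(1/\lambda_2))$, and hence your qubit count, already reflects the correct condition.
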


\begin{proof}

We first use the fact that according to Lemma \ref{lem:product-cutoff-riesz-no-scaling} we have
\begin{align*} \vert W_{n,M}[\psi]\vert &= \vert W_{n}[\Pi_M \psi]\vert \lesssim \langle \Pi_M,(H_{0,n}+1)\Pi_M\rangle\sum_{1\le i<j\le n}|\alpha_{n,i,j}|\le 
\operatorname{poly}(n,M)\\
&=\operatorname{poly}(n,q(1/\eps))  
\end{align*}
and therefore, denoting the finite rank perturbation $V\equiv W_{n,M}$, we have that its operator norm is bounded as
\begin{align}
    \|V\| \le \operatorname{poly}(n,q(1/\eps)).
\end{align}

We prepare the Gibbs state $\sigma_\beta(H_{n,M})$ using a circuit implementation of the Gibbs sampler generated by $\cL_{\sigma_E,H_{n,M}}$, where the relevant complexity analysis has been carried out in \cite[Corollary 4.33]{BeckerRouzeSalzmannToAppearcmp}: 

We take the multi-mode vacuum state $\rho_{\operatorname{ini}}= \kb{0}^{\otimes dn}.$ as the initial state of the circuit. From Lemma~\ref{lem:cScalingFiniteRank} and Lemma~\ref{lem:EGibbsBoundedperturbation}, we see that the constants $\mathfrak{c}$ and $E_{\operatorname{Gibbs}}$ in  \cite[Corollary 4.33]{BeckerRouzeSalzmannToAppearcmp}
satisfy
\begin{align*}
    \mathfrak{c}\,,\, E_{\operatorname{Gibbs}} \le \exp(\operatorname{poly}(n,M,\|V\|)) = \exp(\operatorname{poly}(n,q(1/\eps))).
\end{align*}

The relevant assumptions in \cite[Theorem 4.12 and Corollary 4.33]{BeckerRouzeSalzmannToAppearcmp} on the bare jumps and the Hamiltonian have been verified in Lemma~\ref{lem:VerificationCondFiniteDimImpl}. 
To be precise, we need to note that in \cite[Theorem 4.12 and Corollary 4.33]{BeckerRouzeSalzmannToAppearcmp} the norms estimated in Lemma~\ref{lem:VerificationCondFiniteDimImpl} are only assumed to scale polynomially in $n.$ However, as becomes clear from revisiting \cite[Equation 4.38 \& Propositions 4.7 and 4.11]{BeckerRouzeSalzmannToAppearcmp}, exponential scaling in $n$ can also be dominated by the exponential decay $e^{-\widetilde M^\kappa}$ in the truncation parameter $\widetilde M$ found in Lemma~\ref{lem:VerificationCondFiniteDimImpl}. Hence, we can adjust the truncation level $\widetilde M$ compared to the one considered in \cite[Corollary 4.13]{BeckerRouzeSalzmannToAppearcmp} and take
\begin{align*}
    \widetilde M = \Theta\left(\operatorname{poly}\left(\log\left(\frac{ \exp(\operatorname{poly}(n,q(1/\eps)))\mathfrak{c} E_{\operatorname{Gibbs}} n}{\lambda_2\eps}\right)\right)\right) = \Theta\left(\operatorname{poly}\left(n,\log\left(1/\lambda_2\right),q(1/\eps)\right)\right),
\end{align*}
where in the first equality the $\exp(\operatorname{poly}(n,q(1/\eps)))$ includes the scalings of the norms considered in Lemma~\ref{lem:VerificationCondFiniteDimImpl}.

Using \cite[Remark 4.29 \&  4.30]{BeckerRouzeSalzmannToAppearcmp}, we see that the oracle access to block encodings of $a_i^{\le \widetilde M}/\sqrt{\widetilde M}$ and $\left(a_{i,j}^{\le \widetilde M}\right)^{\dagger}/\sqrt{\widetilde M}$, as well as Hamiltonian simulation $e^{it(H_{n,M})_{\le \widetilde M}}$ required in \cite[Corollary 4.33]{BeckerRouzeSalzmannToAppearcmp}, can both be obtained using a circuit depth of $\mathcal{O}\left(\operatorname{poly}(\log(\widetilde M),\log(1/\eps))\right)$  and \\$\mathcal{O}\left(|t|\operatorname{poly}(\widetilde M,\log(1/\eps))\right)$ respectively.

 We hence apply \cite[Corollary 4.33]{BeckerRouzeSalzmannToAppearcmp} to see that  $\sigma_\beta(H_{n,M})$ can be prepared within $\eps$-trace distance on a quantum computer with $\mathcal{O}(n\log(\widetilde M))=\mathcal{O}\left(n\log (n\,q(1/\eps))\,\log\log(1/\lambda_2)))\right)$ many qubits using a total Hamiltonian simulation time corresponding to the Hamiltonian $(H_{n,M})_{\le \widetilde M}$ of order
\begin{align*}
   \widetilde{ \mathcal{O}}\left(\frac{1}{\lambda_2}\operatorname{poly}\left(n, \log\left(\frac{\exp(\operatorname{poly}(n,q(1/\eps))\mathfrak{c}E_{\operatorname{Gibbs}}}{\eps}\right)\right)\right) =
   \widetilde{ \mathcal{O}}\left(\frac{1}{\lambda_2}\operatorname{poly}\left(n, q(1/\eps)\right)\right),
\end{align*} 
where we again adjusted the required Hamiltonian simulation time by including the \\$\exp(\operatorname{poly}(n,q(1/\eps)))$ in the first equality, in order to account for the scaling derived in Lemma~\ref{lem:VerificationCondFiniteDimImpl}.
Plugging this into the required estimate on the circuit depth for the Hamiltonian simulation of $(H_{n,M})_{\le \widetilde M}$ finishes the proof.
\end{proof}

\subsection{Estimating free energies}
\label{sec:eff_impl}

\noindent We close the paper with a direct application of our results to the estimation of the free energy in Coulomb-interacting systems. While the scheme is introduced here in a simplified framework, the main ideas carry over with little conceptual change to more interesting settings: we consider the Coulomb potential for $d\in\{2,3\}$, denote the Hamiltonians $H_{0,n}$, $H_n$, and $H_{n,M}$ as introduced in Section \ref{perturbationpartitionfunction}, and we denote  $\alpha^{\max}_n:=\max_{1\le i<j\le n}|\alpha_{n,i,j}|$. Our goal is to estimate the free energy $F_\beta(H_{n})$ at some given inverse temperature $\beta>0$. By Theorem \ref{thmfreenregyblabla}, we know that there exists a constant $C_\beta<\infty$, independent of $n,M\ge 1$, such that
\begin{align}
\label{eq:MScalingFreeEnergy}
|F_\beta(H_n)-F_{\beta}(H_{n,M})|
\le \varepsilon\qquad \text{ for }\qquad M=\Omega_\beta\left(\left(\frac{n^3\alpha^{\max}_n+n^{5}(\alpha^{\max}_n)^3}{\varepsilon}\right)^{4d}\right).
\end{align}
Therefore, it is enough to estimate the free energy $F_\beta(H_{n,M})$ of the truncated Hamiltonian $H_{n,M}$. Next, we consider the free energy difference
\begin{align*}
\Delta F_\beta:=F_\beta(H_{n,M})-F_\beta(H_{0,n}).
\end{align*}
Given the path $H(s):=(1-s)H_{0,n}+sH_{n,M}$, $s\in[0,1]$, we have
\begin{align}
\Delta F_\beta=\int_0^1 \frac{d}{ds} F_\beta(H(s))\,ds= \int_0^1\,\Tr\Big(\sigma_\beta(H(s))\Pi_M W_n \Pi_M \Big)\, ds.\label{integralpartitions}
\end{align}
Moreover, denoting $W_{n,M}:=\Pi_M W_n \Pi_M$ and using a standard Riemann-sum approximation,
 \begin{align}
 \label{eq:RiemannFreeEnergy}
\left|\int_0^1\!\Tr\Big(\sigma_\beta(H(s))W_{n,M} \Big)ds\!-\!\sum_{k=0}^{L-1}\frac{\Tr\Big(\sigma_\beta(H(k/L))W_{n,M} \Big)}{L}\right|&\!\lesssim \!\frac{1}{2L}\!\sup_{s\in [0,1]}\Big|\frac{d}{ds}\!\Tr\!\Big(\!\sigma_\beta(H(s))W_{n,M} \!\Big)\!\Big|
 \end{align}
In the next result, we further control the above bound. 
\begin{lemma}
\label{lem:riemann-derivative-bound}
Let
\[
f(s):=\Tr\!\big(\sigma_\beta(H(s))\,W_{n,M}\big),\qquad s\in[0,1].
\]
Then \(f\in C^1([0,1])\), \(f'(s)\le 0\) for all \(s\in[0,1]\), and there exists a constant \(C_\beta<\infty\), independent of \(n\), \(M\) and \(s\), such that
\begin{equation}
\label{eq:derivative-riemann-bound}
\sup_{s\in[0,1]}
\left|
\frac{d}{ds}\Tr\!\big(\sigma_\beta(H(s))\,W_{n,M}\big)
\right|
\lesssim
C_\beta\,\max_{i,j}|\alpha_{n,i,j}|^2\,n^4\,M^{2/d}.
\end{equation}
\end{lemma}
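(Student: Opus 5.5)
The plan is to compute $f'(s)$ explicitly with Duhamel's formula and then reduce everything to a bound on $\|W_{n,M}\|$. Since $H(s)=H_{0,n}+sW_{n,M}$ and $W_{n,M}=\Pi_M W_n\Pi_M$ is a bounded, finite-rank, self-adjoint perturbation commuting with $\Pi_M$, the map $s\mapsto e^{-\beta H(s)}$ is differentiable in trace norm. Duhamel's formula gives
\[
\frac{d}{ds}e^{-\beta H(s)}=-\int_0^\beta e^{-(\beta-\tau)H(s)}W_{n,M}e^{-\tau H(s)}\,d\tau .
\]
From this, $f\in C^1$. Writing $\langle X\rangle_s:=\Tr(\sigma_\beta(H(s))X)$, the derivative is a Kubo--Mori (Duhamel) variance:
\[
f'(s)=-\int_0^\beta \Tr\Big(\sigma_\beta(H(s))^{1-\tau/\beta}\,\widetilde W\,\sigma_\beta(H(s))^{\tau/\beta}\,\widetilde W\Big)\,d\tau ,\qquad \widetilde W:=W_{n,M}-\langle W_{n,M}\rangle_s .
\]
Each integrand is nonnegative. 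To see this, write $A=\sigma^{(1-\tau/\beta)/2}\widetilde W\sigma^{\tau/(2\beta)}$; the integrand is then $\Tr(A A^\dagger)\ge 0$ by cyclicity. Hence $f'(s)\le 0$.

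For the size bound, I will use the H\"older-type estimate that the Kubo--Mori variance is dominated by the ordinary variance. Concretely, the integrand is at most $\langle \widetilde W^2\rangle_s$ by the Lieb--Ando / log-convexity of $\tau\mapsto \Tr(\sigma^{1-t}X\sigma^{t}X)$, whose maximum is at $t\in\{0,1\}$. Alternatively one can simply use $|\Tr(\sigma^{1-t}X\sigma^tX)|\le\|X\|^2$. This gives
\[
|f'(s)|\le \beta\,\langle\widetilde W^2\rangle_s\le \beta\,\|W_{n,M}\|^2 .
\]
The problem therefore reduces to bounding $\|W_{n,M}\|$ uniformly in $s$. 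This is where the $n$- and $M$-dependence enters.

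To bound $\|W_{n,M}\|$ I will apply Lemma~\ref{lem:product-cutoff-riesz-no-scaling}, estimate \eqref{eq111}, to vectors $\Pi_M\psi$. This gives
\[
|W_{n,M}[\psi]|\lesssim B_n\langle\Pi_M\psi,(H_{0,n}+1)\Pi_M\psi\rangle .
\]
The crude bound $H_{0,n}\Pi_M\le n\lambda_M$ would give $nM^{1/d}$. A sharper bound uses locality: each pair term $\Pi_M w_d(x_i-x_j)\Pi_M$ only involves $h_i+h_j$ restricted to $P_M\otimes P_M$. Hence
\[
\|\Pi_M w_d(x_i-x_j)\Pi_M\|\lesssim \lambda_M+1\lesssim M^{1/d},
\]
using $\lambda_M=\Theta(M^{1/d})$ from Weyl asymptotics for the $d$-dimensional oscillator. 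Summing over pairs yields
\[
\|W_{n,M}\|\lesssim B_n M^{1/d}\le n^2\alpha^{\max}_n M^{1/d},
\]
and squaring gives exactly $(\alpha^{\max}_n)^2 n^4M^{2/d}$, with $C_\beta\sim\beta$.

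The main obstacle is the rigorous justification of Duhamel differentiation under the trace for the unbounded $H_{0,n}$. I would handle it by noting that $W_{n,M}$ is bounded. Then $e^{-\beta H(s)}$ is trace class with $\Tr e^{-\beta H(s)}\le e^{\beta\|W_{n,M}\|}\mathcal Z_\beta(H_{0,n})$ (Lemma~\ref{lem:PartionFunctionBound}), so the Dyson expansion converges in trace norm, and continuity of $f'$ follows from norm-continuity of $s\mapsto H(s)$ in the resolvent sense. A secondary point is making the two-body locality bound precise; I would use that $\Pi_M$ factorizes, so the pair term equals the identity tensored with a two-particle compressed operator whose norm is bounded via \eqref{eq111} for $n=2$ restricted to $\operatorname{Ran}(P_M\otimes P_M)$.
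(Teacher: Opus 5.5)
Your proposal is correct and follows essentially the same route as the paper. Duhamel's formula gives the centered Kubo--Mori expression for $f'(s)$, hence $f'(s)\le 0$ and $|f'(s)|\le\beta\|W_{n,M}\|^2$, and the two-body form bound restricted to $\operatorname{Ran}(\Pi_M)$ gives $\|W_{n,M}\|\lesssim \alpha^{\max}_n n^2(\lambda_M+1)$ with $\lambda_M=\mathcal O(M^{1/d})$; your log-convexity variance bound is only a harmless sharpening of the paper's H\"older step.
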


\begin{proof}
Set \(V:=W_{n,M}\). Since \(V\) is bounded, the map
$
s\longmapsto e^{-\beta(H_{0,n}+sV)}
$
is differentiable in trace norm, and therefore \(f\in C^1([0,1])\). By the Duhamel formula,
\[
\frac{d}{ds}e^{-\beta H(s)}
=
-\beta\int_0^1 e^{-(1-u)\beta H(s)}\,V\,e^{-u\beta H(s)}\,du.
\]
Writing \(Z(s):=\Tr(e^{-\beta H(s)})\), we obtain
\begin{align}
f'(s)
&=
\frac{d}{ds}\frac{\Tr(e^{-\beta H(s)}V)}{Z(s)}
\nonumber\\
&=
-\beta\int_0^1
\frac{\Tr\!\big(e^{-(1-u)\beta H(s)}V e^{-u\beta H(s)}V\big)}{Z(s)}\,du
+\beta\left(\frac{\Tr(e^{-\beta H(s)}V)}{Z(s)}\right)^2.
\label{eq:first-derivative-formula}
\end{align}
Equivalently,
\begin{equation}
\label{eq:centered-derivative-formula}
f'(s)
=
-\beta\int_0^1
\Tr\!\Big(
\sigma_\beta(H(s))^{1-u}\big(V-f(s)\big)\sigma_\beta(H(s))^{u}\big(V-f(s)\big)
\Big)\,du.
\end{equation}
Hence \(f'(s)\le 0\). Next, by \eqref{eq:first-derivative-formula} and Hölder's inequality in Schatten spaces,
\[
|f'(s)|
\le
\beta\int_0^1
\left|
\Tr\!\big(\sigma_\beta(H(s))^{1-u}V\sigma_\beta(H(s))^uV\big)
\right|\,du
\le
\beta\,\|V\|^2.
\]
It remains to estimate \(\|V\|\). Let \(\psi\in \mathrm{Ran}(\Pi_M)\). For each pair \(1\le i<j\le n\), the two-body form bound derived in the proof of Lemma \ref{lem:product-cutoff-riesz-no-scaling} gives
\[
\left|
\int w_2(x_i-x_j)|\psi(x)|^2\,dx
\right|
\lesssim
\langle \psi,(h_i+h_j+1)\psi\rangle.
\]
Since \(\psi\in \mathrm{Ran}(\Pi_M)\), one has \(h_k\le \lambda_M\) on the \(k\)-th factor; hence
\[
\langle \psi,(h_i+h_j+1)\psi\rangle
\le
(2\lambda_M+1)\|\psi\|^2
\lesssim
(\lambda_M+1)\|\psi\|^2.
\]
Summing over all pairs, we find $|\langle \psi,V\psi\rangle|
\lesssim
\max_{i,j}|\alpha_{n,i,j}|\binom{n}{2}(\lambda_M+1)\|\psi\|^2$ and therefore,
\begin{equation}
\label{eq:NormTruncatedCoulomb}
\|W_{n,M}\| =\|V\|
\lesssim
\max_{i,j}|\alpha_{n,i,j}|\,n(n-1)\,(\lambda_M+1).
\end{equation}
Combining this with the previous estimate yields
\[
|f'(s)|
\le
C_\beta\,\max_{i,j}|\alpha_{n,i,j}|^2\,n^2(n-1)^2\,(\lambda_M+1)^2,
\]
which, together with \(\lambda_M=\mathcal O(M^{1/d})\), proves \eqref{eq:derivative-riemann-bound}.
\end{proof}

\noindent Thus, combining \eqref{eq:RiemannFreeEnergy}, Lemma~\ref{lem:riemann-derivative-bound} and consider the scaling , we conclude
\begin{align}
\label{eq:RiemannConclusion}
\left|\int_0^1\!\Tr\Big(\sigma_\beta(H(s))W_{n,M} \Big)ds\!-\!\sum_{k=0}^{L-1}\frac{\Tr\Big(\sigma_\beta(H(k/L))W_{n,M} \Big)}{L}\right|&\!\lesssim \frac{1}{2L}C_\beta\,(\alpha^{\max}_n)^2 n^4\,M^{2/d}\le \varepsilon
\end{align}
for $L=\Omega_\beta\Big((\alpha^{\max}_n)^2n^4M^{2/d}\varepsilon^{-1} \Big)=\Omega_\beta\left(\left(n^{7/2}(\alpha^{\max}_n)^{5/4} + n^{11/2}(\alpha^{\max}_n)^{13/4}\right)^8\varepsilon^{-9}\right),$ where for the last equality we have used the scaling choice \eqref{eq:MScalingFreeEnergy}.

\begin{theorem}[Quantum algorithm for free energy estimation]
\label{thm:FreeEnergyCoulombQuantumAlgorithm}
The free energy $F_\beta(H_n)$ can be estimated with accuracy $\eps>0$ and a probability of failure bounded by $\delta>0$ on a quantum computer with $\mathcal{O}\left(n\log( n/\eps)\,\log\log(1/\lambda^{\min}_{2})\right)$ many qubits, with a total runtime of order 
\begin{align*}
    \widetilde{ \mathcal{O}}\left(\frac{1}{\lambda^{\min}_{2}}\log\left(1/\delta\right)\operatorname{poly}\left(n\,,\,1/\eps\right)\right)
\end{align*}
where $\lambda^{\min}_{2}:= \min_{s\in[0,1]} \lambda_2(s)>0$, $\lambda_2(s)\equiv\operatorname{gap}(L_{\sigma_E,H(s)})>0$, $H(s):=(1-s)H_{0,n}+sH_{n,M}$, with $M=\Theta\left(\left(\frac{n^3\alpha^{\max}_n+n^{5}(\alpha^{\max}_n)^3}{\varepsilon}\right)^{4d}\right),$ $\alpha^{\max}_n\le\operatorname{poly}(n)$ and where positivity of the spectral gap\footnote{Theorem~\ref{mainthmpositivegap} was only stated for $H_{n,M}.$ However, by replacing $\alpha_{n,i,j}$ with $s\alpha_{n,i,j},$, this shows positivity of the gap for all $s\in[0,1].$} follows by Theorem~\ref{mainthmpositivegap}.
 Moreover, the $\widetilde{\mathcal{O}}$ notation hides constants independent of the displayed parameters and additionally suppresses subdominant $\operatorname{poly}\log$ factors in the leading order.
\end{theorem}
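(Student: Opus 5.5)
The estimator splits the error budget $\varepsilon$ into three parts of size $\varepsilon/3$ each, coming from truncation, discretization and sampling. First, the choice of $M$ in the statement and \eqref{eq:MScalingFreeEnergy} (that is, Theorem~\ref{thmfreenregyblabla}) give $|F_\beta(H_n)-F_\beta(H_{n,M})|\le\varepsilon/3$. Since $F_\beta(H_{0,n})=dn\beta^{-1}\log(2\sinh\beta)$ is known exactly, it then suffices to estimate $\Delta F_\beta$ through the thermodynamic integration identity \eqref{integralpartitions}. Next, \eqref{eq:RiemannFreeEnergy} together with Lemma~\ref{lem:riemann-derivative-bound} replaces the integral by the Riemann sum $L^{-1}\sum_{k=0}^{L-1}f(k/L)$ with $f(s)=\Tr(\sigma_\beta(H(s))W_{n,M})$, at error $\varepsilon/3$. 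This requires $L=\Omega_\beta((\alpha^{\max}_n)^2n^4M^{2/d}/\varepsilon)$, which is $\operatorname{poly}(n,1/\varepsilon)$ because $M$ and $\alpha^{\max}_n$ are polynomial.

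For each node $s_k=k/L$ I would apply Theorem~\ref{thm:GibbsPrepareCoulomb} to $H(s_k)=H_{0,n}+s_kW_{n,M}$. This is legitimate because $s_kW_{n,M}$ is again a finite-rank perturbation supported on $\operatorname{Ran}\Pi_M$ with couplings $s_k\alpha_{n,i,j}$, so Lemma~\ref{lem:VerificationCondFiniteDimImpl} and Lemmas~\ref{lem:PartionFunctionBound}--\ref{lem:EGibbsBoundedperturbation} apply verbatim. The theorem prepares a state $\tilde\sigma_k$ with $\|\tilde\sigma_k-\sigma_\beta(H(s_k))\|_1\le\eta$ using depth $\widetilde{\mathcal O}(\lambda_2(s_k)^{-1}\operatorname{poly}(n,\log(1/\eta)))\le\widetilde{\mathcal O}((\lambda_2^{\min})^{-1}\operatorname{poly}(n,1/\varepsilon))$, with $q=\log$ for the inner precision. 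Positivity of $\lambda_2^{\min}$ follows from Theorem~\ref{mainthmpositivegap} applied to each $s$; I would take it as given that $s\mapsto\lambda_2(s)$ attains its minimum, or simply work with the infimum over the grid points.

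To estimate $\Tr(\tilde\sigma_kW_{n,M})$, I would avoid measuring $W_{n,M}$ directly. Instead I would use a block encoding of $W_{n,M}/\|W_{n,M}\|$, with $\|W_{n,M}\|\lesssim\alpha^{\max}_nn^2M^{1/d}$ by \eqref{eq:NormTruncatedCoulomb}, and apply a Hadamard-test-type measurement on the truncated register. This block encoding is available because $W_{n,M}$ acts on the finite-dimensional space $\operatorname{Ran}\Pi_M$ and is built from two-body matrix elements $(P_M\otimes P_M)w_d(P_M\otimes P_M)$, which can be computed classically to $\operatorname{poly}(M)$ precision. Each shot then yields an unbiased bounded estimate of $\Tr(\tilde\sigma_kW_{n,M})/\|W_{n,M}\|$. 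The state-preparation bias is at most $\eta\|W_{n,M}\|$, so I would choose $\eta=\varepsilon/(6\|W_{n,M}\|)$.

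Rather than estimating each node separately, I would sample $k$ uniformly from $\{0,\dots,L-1\}$, prepare $\tilde\sigma_k$, and perform one measurement. This gives a single random variable bounded by $\|W_{n,M}\|$ whose mean is the Riemann sum up to $\varepsilon/6$. Hoeffding's inequality, or median-of-means, then shows that $\mathcal O(\|W_{n,M}\|^2\varepsilon^{-2}\log(1/\delta))=\operatorname{poly}(n,1/\varepsilon)\log(1/\delta)$ repetitions achieve accuracy $\varepsilon/6$ with failure probability $\delta$. Multiplying the number of repetitions by the per-shot depth gives the claimed runtime. The qubit count is that of Theorem~\ref{thm:GibbsPrepareCoulomb} with $q(1/\varepsilon)=\operatorname{poly}(1/\varepsilon)$, plus $\mathcal O(\log L)$ ancillas, which is $\mathcal O(n\log(n/\varepsilon)\log\log(1/\lambda_2^{\min}))$.

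The main obstacle is making the per-node use of Theorem~\ref{thm:GibbsPrepareCoulomb} uniform in $s$: its hidden constants and truncation level $\widetilde M$ depend on $\|sW_{n,M}\|\le\|W_{n,M}\|$ and on $\lambda_2(s)$. Both are monotone in the right direction, since $\widetilde M$ is chosen with $\lambda_2^{\min}$ and $\|W_{n,M}\|$ in place of the node-specific values. A secondary difficulty is giving a concrete block-encoding cost for $W_{n,M}$, which I would bound crudely by $\operatorname{poly}(n,M)$ gates.
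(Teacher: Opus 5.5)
Your argument is correct and gives the stated bounds. The truncation and Riemann-sum stages match the paper's; the estimation stage takes a different route.

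\emph{The paper's estimator.} The paper estimates every node $f(k/L)$ separately, and within each node every pair term $\Tr(\sigma_\beta(H(k/L))\Pi_M W_{ij}\Pi_M)$ separately. It does this by projective measurements in the eigenbasis of the two-register operator $\Pi_M W_{ij}\Pi_M$, followed by Hoeffding's inequality and a union bound. This costs $\mathcal O(L\operatorname{poly}(n,M)\varepsilon^{-2}\log(L/\delta))$ Gibbs samples.

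\emph{Your estimator.} You draw the node $k$ uniformly at random and run a Hadamard test on a block encoding of $W_{n,M}$. A single bounded random variable then has the Riemann sum as its mean, up to the preparation bias $\eta\|W_{n,M}\|$. This removes the factor $L$ and the union bound from the sample count. The price is controlled access to a block encoding of the full interaction. In practice this is an LCU over the $\binom{n}{2}$ pair terms, with normalization $\sum_{i<j}|\alpha_{n,i,j}|\,\|(P_M\otimes P_M)w_d(P_M\otimes P_M)\|$ rather than exactly $\|W_{n,M}\|$. That normalization is still $\operatorname{poly}(n,M)$, so the conclusion is unaffected.

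\emph{Trade-offs.} The paper's route needs only measurements on two $M$-dimensional registers, and it yields each $f(k/L)$ individually. Your observation that only the finitely many grid values $\lambda_2(k/L)$ enter is more careful than the paper. The paper asserts $\min_{s\in[0,1]}\lambda_2(s)>0$ from pointwise positivity without justifying that the minimum is attained.

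\emph{One slip.} Theorem~\ref{thm:GibbsPrepareCoulomb} requires $M=\operatorname{poly}(n,q(1/\eta))$ and gives depth $\operatorname{poly}(n,q(1/\eta))/\lambda_2$. Since $M$ is already polynomial in $1/\varepsilon$, you cannot take $q=\log$. Take $q$ polynomial, as you in fact do later for the qubit count. The per-shot depth is then $\operatorname{poly}(n,M,\log(1/\eta))/\lambda_2(s_k)\le\operatorname{poly}(n,1/\varepsilon)/\lambda_2^{\min}$, which is the bound you end with anyway.
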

\begin{proof}

By Theorem \ref{thmfreenregyblabla}, we know we can achieve 
\[
|F_\beta(H_n)-F_{\beta}(H_{n,M})|
\le \varepsilon/4.
\]
with \begin{align}
\label{eq:MScalingProooof}
    M = \Theta\left(\left(\frac{n^3\alpha^{\max}_n+n^{5}(\alpha^{\max}_n)^3}{\varepsilon}\right)^{4d}\right)\le \operatorname{poly}(n,1/\eps).
\end{align}
Therefore, it is enough to estimate the free energy $F_\beta(H_{n,M})$ of the truncated Hamiltonian $H_{n,M}$. Note, further, that we can analytically compute $F_\beta(H_{0,n})$ as 
\begin{align*}
\Tr\left(e^{-\beta H_{0,n}}\right) = \left(\sum_{k=0}^\infty e^{-\beta(2k+1)}\right)^{dn}=\left(2\sinh(\beta)\right)^{-dn}\quad\text{and}\quad F_\beta(H_{0,n}) = \frac{dn}{\beta}\,\log\left(2\sinh(\beta)\right).
\end{align*}
Hence, in order to estimate $F_\beta(H_{n,M}) = \Delta F_\beta + F_\beta(H_{0,n}),$ it remains to estimate the difference of free energies $\Delta F_\beta:$

We write $W_{n,M} = \sum_{1\le i<j\le n} \alpha_{n,i,j}\Pi_MW_{i,j} \Pi_M$ and note that by Lemma~\ref{lem:product-cutoff-riesz-no-scaling}, we have $\|W_{n,M}\| \le \operatorname{poly}(n,M)$, and by an analogous argument for $i,j$
fixed $\|\Pi_MW_{i,j}\Pi_M\| \lesssim \operatorname{poly}(n,M).$

From this, and Hoeffding's inequality, we see that for all $k=0,\cdots,L-1$, we can estimate the expectation value $\Tr(\sigma_\beta(k/L)\Pi_MW_{i,j}\Pi_M)$ with accuracy $\eps/(4n^2)$ and probability of failure $\delta>0$ using a number of samples of $\sigma_\beta(H(k/L))$ and measurements in the eigenbasis of $\Pi_MW_{i,j}\Pi_M$ of order
 \begin{align}
 \label{eq:SampleComplexitySigma1}
\mathcal{O}\left(\,\frac{n^4\|\Pi_MW_{i,j}\Pi_M\|^2}{\eps^2}\log\left(\frac{1}{\delta}\right)\,\right) = \mathcal{O}\left(\,\frac{\operatorname{poly}(n,M)}{\eps^2}\log\left(\frac{1}{\delta}\right)\,\right).
 \end{align}
Therefore, using the union bound, we can estimate 
\begin{align}
\label{eq:Riemann}
   \frac{1}{L}\sum_{k=0}^{L-1}\Tr(\sigma_\beta(H(k/L)) W_{n,M}) =  \frac{1}{L}\sum_{k=0}^{L-1}\sum_{1\le i<j\le n}\alpha_{n,i,j}\Tr(\sigma_\beta(H(k/L)) \Pi_M W_{i,j})
\end{align} 
 with accuracy $\eps/4$ and probability of failure bounded by $\delta$ using 
 \begin{align}
\label{eq:SampleComplexitySigmaL}
      \mathcal{O}\left(\,\frac{L \operatorname{poly}(n,M)}{\eps^2}\log\left(\frac{L}{\delta}\right)\,\right).
 \end{align}
 many samples of the Gibbs states and measurements in the eigenbasis of $\Pi_MW_{i,j}\Pi_M$ for different $i,j$. Using \eqref{integralpartitions} and \eqref{eq:RiemannConclusion}, we can estimate $\Delta F_\beta$ by \eqref{eq:Riemann} with accuracy $\eps/4$ by taking 
 \begin{align}
 \label{eq:ScalingLandM}
     L = \operatorname{poly}(n,1/\eps) 
 \end{align}
 Lastly, using Theorem~\ref{thm:GibbsPrepareCoulomb} for varying\footnote{That is for $k=0,\cdots,L-1$ we pick $\alpha^{(k)}_{n,i,j} = \frac{k}{L}\alpha_{n,i,j}$ to prepare $\sigma_\beta(H(k/L)).$} $\alpha_{n,i,j}$, we can prepare for each $k=0,\cdots, L-1$ the Gibbs state $\sigma_\beta(H(k/L))$  with accuracy $\eps/(4\|W_{n,M}\|)$ in trace distance on a quantum computer with \\ $\mathcal{O}\left(n\log( n/\eps)\,\log\log(1/\lambda_2(k/L))\right)\le\mathcal{O}\left(n\log( n/\eps)\,\log\log(1/(\lambda^{\min}_2))\right) $ many qubits using a circuit depth of order
 \begin{align*}
\widetilde{ \mathcal{O}}\left(\frac{1}{\lambda_2(k/L)}\operatorname{poly}\left(n, 1/\eps\right)\right) \le \widetilde{ \mathcal{O}}\left(\frac{1}{\lambda^{\min}_2}\operatorname{poly}\left(n, 1/\eps\right)\right).
 \end{align*}
 Multiplying this with the required numbers of samples of the Gibbs states in \eqref{eq:SampleComplexitySigmaL} and inserting the scaling of $M$ and $L$ in \eqref{eq:MScalingProooof} and \eqref{eq:ScalingLandM} yields the result.
Note that, since \(W_{i,j}\) is a two-particle interaction, \(\Pi_M W_{i,j}\Pi_M\) acts nontrivially only on two \(M\)-dimensional registers. Therefore, the required measurements in the eigenbasis of \(\Pi_M W_{i,j}\Pi_M\) can be realized by an additional quantum circuit consisting solely of 2-qubit gates and having depth \(\mathcal{O}(M^2)=\mathcal{O}(\operatorname{poly}(n,1/\eps))\) (see e.g. \cite[Section~4.5.1]{Nielsen_Chuang_2010}), followed by computational basis measurements. Hence, the measurement step requires circuit depth of the same order as the previous steps.

\end{proof}

\smallsection{Acknowledgement}
SB would also like to acknowledge support from the SNF Grant PZ00P2\_216019. 
CR is supported by France 2030 under the
French National Research Agency award number ''ANR-22-EXES-0013''. RS acknowledges support by the European Research Council (ERC Grant Agreement No.~948139) and from the Excellence Cluster Matter and Light for Quantum Computing (ML4Q-2).

\bibliographystyle{amsplain}
\bibliography{ref}

\end{document}